\newcommand{\lyxmathsym}[1]{\ifmmode\begingroup\def\b@ld{bold}
  \text{\ifx\math@version\b@ld\bfseries\fi#1}\endgroup\else#1\fi}
\providecommand{\tabularnewline}{\\}
\providecommand{\algorithmname}{Algorithm}
\numberwithin{equation}{section}
\numberwithin{figure}{section}
\theoremstyle{plain}
\newtheorem{thm}{\protect\theoremname}
 \theoremstyle{definition}
 \newtheorem*{defn*}{\protect\definitionname}
  \theoremstyle{plain}
  \newtheorem{lem}{\protect\lemmaname}
  \theoremstyle{plain}
  \newtheorem*{thm*}{\protect\theoremname}
  \theoremstyle{plain}
  \newtheorem*{lem*}{\protect\lemmaname}
  \theoremstyle{definition}
  \newtheorem{defn}{\protect\definitionname}
  \theoremstyle{plain}
  \newtheorem{cor}{\protect\corollaryname}
  \theoremstyle{plain}
  \newtheorem{conjecture}{\protect\conjecturename}
  \theoremstyle{plain}
  \newtheorem*{prop*}{\protect\propositionname}
\newtheorem{lemma}{Lemma}
\newtheorem{theorem}{Theorem}
  \providecommand{\conjecturename}{Conjecture}
  \providecommand{\definitionname}{Definition}
  \providecommand{\lemmaname}{Lemma}
  \providecommand{\propositionname}{Proposition}
  \providecommand{\theoremname}{Theorem}
\providecommand{\corollaryname}{Corollary}
\providecommand{\theoremname}{Theorem}
\newcommand{\be}{\begin{equation}}
\newcommand{\ee}{\end{equation}}
\newcommand{\ba}{\begin{array}}
\newcommand{\ea}{\end{array}}
\newcommand{\bea}{\begin{eqnarray}}
\newcommand{\eea}{\end{eqnarray}}
\newcommand{\calH}{{\cal H }}
\newcommand{\calM}{{\cal M }}
\newcommand{\calT}{{\cal T }}
\newcommand{\calP}{{\cal P }}
\newcommand{\calD}{{\cal D }}
\newcommand{\CC}{\mathbb{C}}
\newcommand{\la}{\langle}
\newcommand{\ra}{\rangle}
\newcommand{\eff}{\mathrm{eff}}
\newcommand{\nn}{\nonumber}
\newcommand{\trace}{\mathop{\mathrm{Tr}}\nolimits}
\begin{document}

\title{Eigenvalues and Low Energy Eigenvectors of Quantum Many-Body Systems}

\author{Ramis Movassagh}
% If you wish to list your previous degrees on the cover page, use the 
% previous degrees command:
       \prevdegrees{B.Sc., Applied and Engineering Physics, Cornell University}
% You can use the \\ command to list multiple previous degrees
%       \prevdegrees{B.S., University of California (1978) \\
%                    S.M., Massachusetts Institute of Technology (1981)}
\department{Department of Mathematics}

% If the thesis is for two degrees simultaneously, list them both
% separated by \and like this:
 \degree{Doctor of Philosophy}
%\degree{Bachelor of Science in Computer Science and Engineering}

% As of the 2007-08 academic year, valid degree months are September, 
% February, or June.  The default is June.
\degreemonth{June}
\degreeyear{2012}
\thesisdate{April 23, 2012}

%% By default, the thesis will be copyrighted to MIT.  If you need to copyright
%% the thesis to yourself, just specify the `vi' documentclass option.  If for
%% some reason you want to exactly specify the copyright notice text, you can
%% use the \copyrightnoticetext command.  
%\copyrightnoticetext{\copyright IBM, 1990.  Do not open till Xmas.}

% If there is more than one supervisor, use the \supervisor command
% once for each.
\supervisor{Peter W. Shor}{Professor}

% This is the department committee chairman, not the thesis committee
% chairman.  You should replace this with your Department's Committee
% Chairman.
\chairman{Michel Goemans}{Chairman, Department Committee on Graduate Theses}

% Make the titlepage based on the above information.  If you need
% something special and can't use the standard form, you can specify
% the exact text of the titlepage yourself.  Put it in a titlepage
% environment and leave blank lines where you want vertical space.
% The spaces will be adjusted to fill the entire page.  The dotted
% lines for the signatures are made with the \signature command.
\maketitle

% The abstractpage environment sets up everything on the page except
% the text itself.  The title and other header material are put at the
% top of the page, and the supervisors are listed at the bottom.  A
% new page is begun both before and after.  Of course, an abstract may
% be more than one page itself.  If you need more control over the
% format of the page, you can use the abstract environment, which puts
% the word "Abstract" at the beginning and single spaces its text.

%% You can either \input (*not* \include) your abstract file, or you can put
%% the text of the abstract directly between the \begin{abstractpage} and
%% \end{abstractpage} commands.

% First copy: start a new page, and save the page number.
\cleardoublepage
% Uncomment the next line if you do NOT want a page number on your
% abstract and acknowledgments pages.
% \pagestyle{empty}
\setcounter{savepage}{\thepage}
\begin{abstractpage}
% $Log: abstract.tex,v $
% Revision 1.1  93/05/14  14:56:25  starflt
% Initial revision
% 
% Revision 1.1  90/05/04  10:41:01  lwvanels
% Initial revision
% 
%
%% The text of your abstract and nothing else (other than comments) goes here.
%% It will be single-spaced and the rest of the text that is supposed to go on
%% the abstract page will be generated by the abstractpage environment.  This
%% file should be \input (not \include 'd) from cover.tex.
I first give an overview of the thesis and Matrix Product States (MPS) representation of quantum spin systems on a line with an improvement on the notation.\\

The rest of this thesis is divided into two parts. The first part is devoted to eigenvalues of quantum many-body systems (QMBS). I introduce Isotropic Entanglement (IE), which draws from various tools in random matrix theory and free probability theory to accurately approximate the eigenvalue distribution of QMBS on a line with generic interactions. We then list some open related problems. Next, I discuss the eigenvalue distribution of one particle hopping random Schr\"{o}dinger operator  in one dimension from free probability theory in context of the Anderson model. \\

The second part is devoted to ground states and gap of QMBS. I first give the necessary background on frustration free Hamiltonians, real and imaginary time evolution of quantum spin systems on a line within MPS representation and a numerical implementation. I then prove the degeneracy and unfrustration condition for quantum spin chains with generic local interactions, including corrections to our earlier assertions. Following this, I summarize my efforts in proving lower bounds for the entanglement of the ground states, which includes partial new results, with the hope that they inspire future work resulting in solving the conjecture given therein. Next I discuss two interesting measure zero examples where the Hamiltonians are carefully constructed to give unique ground states with high entanglement. One of the examples  (i.e., $d=4$) has not appeared elsewhere. In particular, we calculate the Schmidt numbers exactly, entanglement entropies and introduce a novel technique for calculating the gap which may be of independent interest. The last chapter elaborates on one of the measure zero examples (i.e., $d=3$) which is the first example of a frustration free translation-invariant spin-1 chain that has a unique highly entangled ground state and exhibits  signatures of a critical behavior.

\end{abstractpage}

% Additional copy: start a new page, and reset the page number.  This way,
% the second copy of the abstract is not counted as separate pages.
% Uncomment the next 6 lines if you need two copies of the abstract
% page.
% \setcounter{page}{\thesavepage}
% \begin{abstractpage}
% \input{abstract}
% \end{abstractpage}

\cleardoublepage

\section*{Acknowledgments}
First and foremost I like to thank my advisor Peter W. Shor for his unconditional support, trust in my decisions and having provided an umbrella, as much as it was possible for him, under which I could work happily and freely. Interactions with him energize and inspire me.

I thank Alan Edelman, from whom I learned everything I know about random matrix theory, free probability theory and a good deal of linear algebra and Matlab. I thought I knew the latter two, till I met Alan. Above all I thank him for his friendship. I like to thank Jeffrey Goldstone for his time, many exciting discussions and his critical reading of my work when I asked him to.

I thank Daniel Nagaj for the discussions the first summer I started, Salman Beigi for various discussions, Bernie Wang for helping with Latex issues when I was writing my thesis and Eduardo Cuervo-Reyes for exciting physics related discussions during my Z\"urich years.

There have been many great scientists who influenced my scientific life trajectory in positive ways. I am very grateful to Reinhard Nesper, Roald Hoffmann, Mehran Kardar, and Richard V. Lovelace. I also like to thank Otto E. R\"ossler, J\"urg Fr\"ohlich, John McGreevy, Jack Wisdom, Alexei Borodin and John Bush.

I have been fortunate to have met so many wonderful people and made wonderful friendships during my PhD at MIT; too many to name here. I owe a good share of my happiness, balance in life, and the fun I had, to them. \\

Last but certainly not least, I thank my dad, Javad Movassagh, and mom, Mahin Shalchi, for the biological existence and all they did for my sister and I to have a worthwhile future and an education. I dedicate this thesis to them. 
%%%%%%%%%%%%%%%%%%%%%%%%%%%%%%%%%%%%%%%%%%%%%%%%%%%%%%%%%%%%%%%%%%%%%%
% -*-latex-*-

  % -*- Mode:TeX -*-
%% This file simply contains the commands that actually generate the table of
%% contents and lists of figures and tables.  You can omit any or all of
%% these files by simply taking out the appropriate command.  For more
%% information on these files, see appendix C.3.3 of the LaTeX manual. 
\tableofcontents
\newpage
\listoffigures
\newpage
\listoftables

\chapter{Many-Body Physics and an Overview}

Physics is concerned with specification and evolution in time of the
state of particles given the laws of interactions they are subjected
to. This thesis is devoted to better understanding of quantum aspects
of Many-Body Systems.

\section{Phase Space}

The phase space of a classical system of $N$ particles is a $6N$
dimensional space where each particle contributes $3$ spatial coordinates
and $3$ momenta. A point in this space fully specifies the state
of the system at any given time; the motion of this point in time
specifies the time evolution of the system. Even if the laws of interaction
are precisely formulated, the analytical solution of the equations
of motion, given the exact initial state, can be daunting %
\footnote{Uncertainties in the initial conditions can give rise to further complications
caused by chaotic behavior for positive Lyapunov exponents%
}. The complexity is due to interactions. For example general solution
of the $3-$body problem has been an open problem for roughly $350$
years\cite{three-body}. Despite the interaction being $2-$body,
the correlation in time, of the distances between pairs of masses
makes the analytical solution hard to obtain. In such a case, one
can resort to computational methods to simulate and make predictions
with controllable accuracy by keeping track of $6N$ real parameters. 

The state of a quantum system of $N$ interacting particles is defined
by a \textit{ray} in the Hilbert space, the dimension of which is
a multiplicative function of the dimension of individual Hilbert spaces
(see \cite[section 2.1]{Dirac,weinbergQFTI} for nice expositions
of quantum theory). Mathematically, 

\begin{equation}
\psi\rangle\in\bigotimes_{i=1}^{N}\mathcal{H}_{i},\label{eq:QMBS_Hspace}
\end{equation}
where we denote the Hilbert space of the $i^{\mbox{th}}$ particle
by $\mathcal{H}_{i}$ and, following Dirac's notation \cite[section 20]{Dirac},
the pure state of the system by the vector $\psi\rangle$. We restore
$|$ only when ambiguity of the label of the vector with linear operators
preceding it may arise. To fully specify the system one needs to specify
$d^{N}$ complex numbers (assuming each particle has $d$ degrees
of freedom), which makes the study of quantum many-body systems (QMBS)
computationally intractable. This, compared to simulation of classical
many-body systems, is an additional obstacle we face. Hence, the complexity
in studying QMBS is in interactions as well as state specification. 

It is worth mentioning that there are quantum systems that are particularly
simple, without having as simple a classical analogue. An example
of which is a quantum bit (\textit{qu}bit), whose state consists of
only two points in the phase space. 

In this thesis we confine ourselves to finite dimensional Hilbert
spaces where a general pure state of an $N-$body problem each of
which having $d$ states is

\begin{equation}
\psi\rangle=\psi^{i_{1}\cdots i_{N}}\quad i_{1}\rangle\otimes i_{2}\rangle\otimes\cdots\otimes i_{N}\rangle.\label{eq:QMBS_PureState}
\end{equation}
where repeated indices are summed over. %
\footnote{One could further simplify the notation by denoting the state by $\rangle\equiv\psi\rangle=\psi^{i_{1}\cdots i_{N}}\quad\rangle_{i_{1}}\rangle_{i_{2}}\cdots\rangle_{i_{N}}$,
but I do not think it is worth the trade offs.%
}

In classical physics one can pick a single particle among $N$ interacting
particles, specify its state at some time and predict its evolution
subject to the \textit{fields} of the other particles. If after some
time the fields impressed on this particle by the remaining particles
diminish, the particle becomes free and uncorrelated from the remaining
$N-1$ particles. In contrast, in quantum mechanics, interaction can
lead to quantum correlations (entanglement) that persist even if the
interaction strength diminishes, by say taking the particles far apart.
For example two electrons can interact for some time and end up in
the entangled state $\psi_{1,2}\rangle=00\rangle+11\rangle$ \cite[see discussions on EPR pairs]{NC}.
This peculiar feature of quantum mechanics, as of yet unexplained
by classical physics, is a radical departure from the latter. Quantum
computation and quantum information science make use of entanglement
as a resource to do tasks that are classically difficult or impossible
to do in reasonable time such as \cite{shorFac,grover}. 

Whether, for a given real linear operator, any eigenvalues and eigenvectors
exist, and if so, how to find them is in general very difficult to
answer \cite[p. 32]{Dirac}.

In this thesis I mainly focus on QMBS systems with Hamiltonians (except
in chapter 3)

\begin{equation}
H=\sum_{l=1}^{N-1}\mathbb{I}_{d^{l-1}}\otimes H_{l,l+1}\otimes\mathbb{I}_{d^{N-l-1}}.\label{eq:Hamiltonian-1}
\end{equation}

\begin{figure}
\centering{}\includegraphics[scale=0.35]{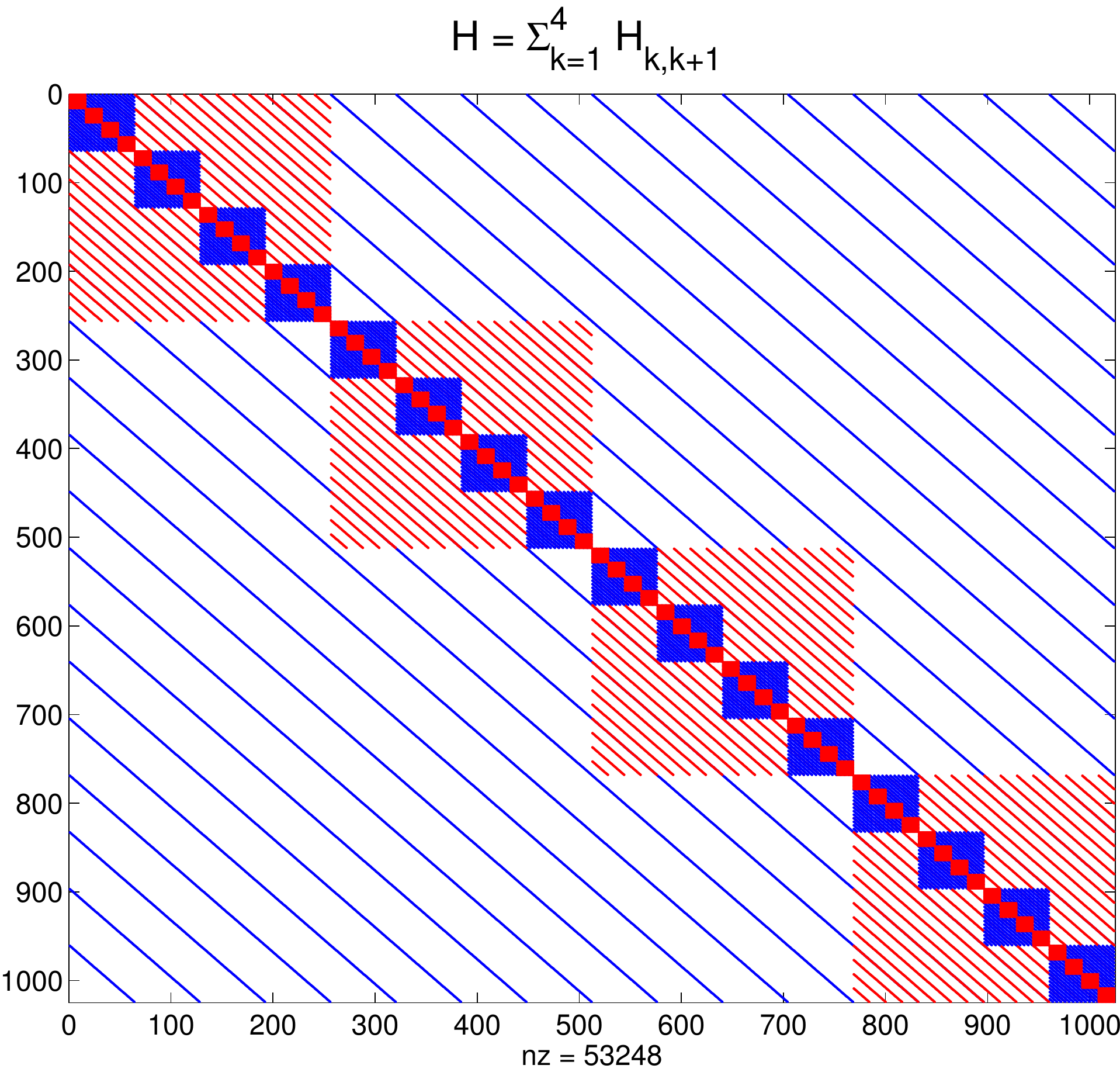}\caption{\label{fig:FS}Sparsity pattern of $H$. The nonzero elements for
$k$ odd are shown in blue and $k$ even in red.}
\end{figure}
 See Figure \ref{fig:FS} for the sparsity pattern of $H$.

Eq. \ref{eq:QMBS_PureState} is the most general equation for a pure
state of $N$ particles irrespective of the type of interactions or
configuration (see Fig. \ref{fig:Nbody_interacting}). However, the
state of a system for any given problem has inherent constraints such
as an underlying lattice that reduces the spatial degrees of freedom,
a range of interaction, or underlying symmetries that the system has
to obey. It is therefore, reasonable to suspect that the physical
properties of any given problem could be well approximated by a reduced
number of parameter say $\sim\mbox{poly}\left(N\right)$ if one understood
the effective degrees of freedom well enough. 

Non-commutativity of the interaction in QMBS is responsible for the
richness of possibilities such as various phases of matter and quantum
computing. Simultaneously, it is accountable for formidability of
finding eigenvalues and corresponding eigenvectors. The exact computation
of eigenvalues alone, on a line, has shown to be QMA-Hard \cite{schuch}.
Energy\textit{ eigenvalue distributions} are needed for calculating
the partition function and calculating other transport properties
(Part I). The eigenvectors specify the corresponding states of the
system. Bulk of matter usually finds itself in its \textit{lowest
energy }\cite[p. 48]{kadanoff}; hence, the urge to comprehend the
low lying states in condensed matter research (Part II).

\section{Part I: Eigenvalues }

Consider the general problem of predicting the eigenvalue distribution
of sums of matrices from the known distribution of summands. In general
this is impossible to do without further information about the eigenvectors.
However, any progress in this direction is extremely desirable as
many problems are modeled by non-commuting matrices. For example,
the Schrödinger equation has a kinetic term and a potential term;
often the former is diagonalizable in the Fourier and the latter in
position space. However, the sum does not have an obvious global basis
nor a distribution that can trivially be inferred from the known pieces. 

For the sake of concreteness suppose we are interested in the distribution
of the random matrix $M=M_{1}+M_{2}$ where the distributions of $M_{1}$
and $M_{2}$ are known. There are two special cases worth considering:
\begin{enumerate}
\item The summands commute. In this case, one can find a simultaneous set
of eigenvectors that diagonalizes each of the summands. In the language
of linear algebra of diagonal matrices, the eigenvalues add. When
the eigenvalues are random, this connects us to the familiar classical
probability theory where the distribution of the sum is obtained by
a convolution of the distribution of the summands. 
\item The summands are in generic positions. In this case, the eigenvectors
of the summands are in generic positions. It is a fascinating fact
that this case also has a known analytical answer given by Free Probability
Theory (FPT) \cite{Voiculescu1985}\cite{speicher}. The eigenvalue
distribution of $M$ is given by the ``free convolution'' of the
distributions of $M_{1}$ and $M_{2}$.
\end{enumerate}
There are many interesting questions that one can ask. How ``free''
are general non-commuting matrices? What is the relationship between
commutation relation and freeness of two matrices? To what extent
does the Fourier matrix act generic? Can a large class of non-commuting
matrices be analyzed using a convex combination of the two extreme
cases discussed above (see Isotropic Entanglement)? 

Suppose the local terms in Eq. \ref{eq:Hamiltonian-1} are generic
(i.e., random matrices), can we utilize the existing tools of random
matrix theory to capture the eigenvalue distribution of $H$ given
the distribution of $H_{l,l+1}$'s? Despite, the local terms being
generic, $H$ is non-generic. The number of random parameters grow
polynomially with $N$ whereas $H$ is $d^{N}$ dimensional. Fraction
of sparsity of $H$ is $\le\left(N-1\right)d^{-\left(N-2\right)}$
(Figure \ref{fig:FS}).

Since the exact evaluation of the density of $H$ is very difficult
\cite{schuch}, one can use two known approximations. As far as parameter
counting is concerned the quantum problem falls nicely in between
the two extreme case (Figure \ref{fig:Parameter-counting} and Isotropic
Entanglement). 

\begin{figure}
\begin{centering}
\includegraphics[scale=0.40]{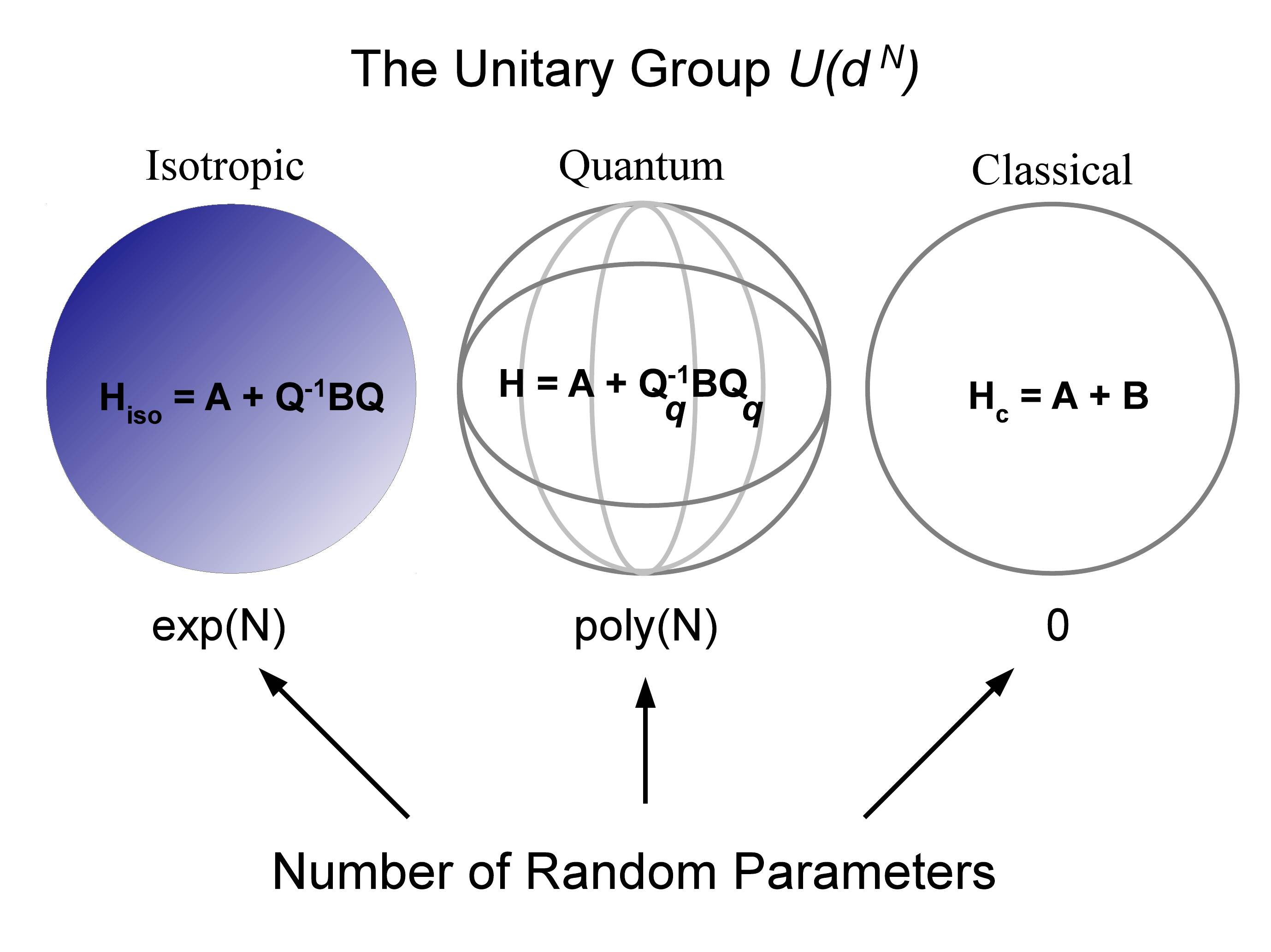}
\par\end{centering}

\caption{\label{fig:Parameter-counting}Parameter counting}
\end{figure}

Chapter 2 gives a detailed description of Isotropic Entanglement which
gives the distribution of Eq. \ref{eq:Hamiltonian-1} and elaborates
on the discussion of this section.

Chapter 3 is on the distribution of one particle hopping random Schr\"{o}dinger
equation and Anderson model.

\section{Part II: Eigenvectors}

Is it possible to capture the essential physics of the system accurately
enough with an efficient simulation with a much smaller $\chi\thicksim\textrm{poly}(N)$,
spanning only a small part of the full Hilbert space of the system?
In our case, the qudits are arranged on a 1D lattice and only have
nearest-neighbor interactions. We could thus expect that a reduced
space might suffice for our needs. This concept is common for the
various approaches proposed for efficient (tractable on a classical
computer) numerical investigation of quantum many body systems such
as the Density Matrix Renormalization Group \cite{white}, Matrix
Product States \cite{frank}, Tensor Product States \cite{wen} and
projected entangled pair states (PEPSs) \cite{peps}. For gapped one
dimensional systems MPS ansatz is proved to suffice \cite{Hastings07}.
Consider a general Hamiltonian for $1D$ open chain with generic local
interactions as given by Eq. \ref{eq:Hamiltonian-1}, where each $H_{l,l+1}$
is a $d^{2}\times d^{2}$ matrix of rank $r$ and the total number
of particles is $N$. 

It is interesting to ask under what circumstances can there be a degeneracy
of ground states? Moreover, when is the ground state of the whole
system ($H$ in Eq. \ref{eq:Hamiltonian-1}) also the local ground
state of all $H_{l,l+1}$'s ; i.e., the system is ``unfrustrated''?
We answered these questions for $1D$ spin systems with \textit{generic}
interactions \cite{MFGNOS} (Chapter 6). We found that in the regime
where $r\leq d^{2}/4$ the system is unfrustrated with many ground
states; moreover for $r<d$ there can be product states among the
ground states. For sufficiently large $N$ the system is frustrated
for $r>d^{2}/4$. 

The next natural question is: how entangled are the ground states
in the regime $d\le r\le d^{2}/4$? The entanglement can be quantified
by the Schmidt rank (see MPS below). We call a state highly entangled
if its Schmidt rank is exponentially large in the number of particles.
In this regime, it is straightforward to show that among the many
ground states, there are no product states and that there exist states
with high amount of entanglement with probability one. The latter
can be shown for example using results of algebraic geometry \cite{cubitt}\cite{Hayden_Generic}.
I have been trying to show that \textit{all }the ground states, in
this regime, have Schmidt ranks that are exponentially large with
probability one. Despite some partial results, the goal has not been
fulfilled.

Using a genericity argument one can show that, to prove results in
the generic case, it is sufficient to find an example of local terms
whose ground states all have large Schmidt ranks. We have not yet
succeeded in finding such examples in the regime of interest $d\le r\le d^{2}/4$.
There are, however, interesting examples for which there is a unique
ground state with exponentially large entanglement in the \textit{frustrated}
regime, i.e., probability zero case (Chapters 8 and 9). 

Below I give a background on Matrix Product States on with an improvement
on the notation.

Chapter 4 discusses the unfrustration condition and a numerical code
I developed to study spin systems on a line with local interactions
and without translational invariance. I find the ground states using
imaginary time evolution. I then provide the proofs and corrections
of our previous work regarding the unfrustration condition and degeneracy
of quantum spin chains with generic local interactions.

Chapter 5 summarizes various attempts I made in proving a lower bound
on the Schmidt rank of the ground states of generic spin chains. It
includes unpublished results and two ways one can potentially prove
the conjecture given there. 

Chapter 6 discusses two interesting measure zero examples ($d=3$
and $d=4$). I include the combinatorial techniques for calculating
the entanglement entropies. The $d=4$ example has not been published
elsewhere.

Chapter 7 elaborates on the $d=3$ example of Chapter 6 and has a
novel technique for calculating the gap that may be of independent
interest.

\subsection{SVD and Matrix Product States (MPS) on a Line}

In this section, in order to avoid confusion, we restore the summation
symbols. As stated earlier the state of a composite system is a vector
in the tensor product of the Hilbert spaces of the constituents. Suppose
we have the pure state of a composite system, how can we express it
in terms of the pure states of the constituents? This is answered
by singular value decomposition (SVD). This application of SVD in
quantum information theory is called Schmidt Decomposition \cite{wikipedia_SD},\cite{NC}.

\begin{thm}
(Schmidt Decomposition) Let $H_{1}$ and $H_{2}$ be Hilbert spaces
of dimensions $n$ and $m$ respectively. Assume $m\le n$. For any
vector $\psi_{12}\rangle\in H_{1}\otimes H_{2}$, there exists orthonormal
sets $\left\{ \eta_{1}\rangle,\cdots,\eta_{n}\rangle\right\} \subset H_{1}$
and $\left\{ \omega_{1}\rangle,\cdots,\omega_{m}\rangle\right\} \subset H_{2}$
such that $\psi\rangle=\sum_{i=1}^{m}\lambda_{i}\;\eta_{i}\rangle\otimes\omega_{i}\rangle$,
where $\lambda_{i}$ are non-negative and , as a set, uniquely determined
by $\psi\rangle$.
\end{thm}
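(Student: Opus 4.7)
The plan is to reduce the statement to the singular value decomposition (SVD) of an ordinary matrix, and then obtain uniqueness of the $\lambda_i$ from the spectrum of the reduced density matrix. First I would fix arbitrary orthonormal bases $\{e_j\rangle\}_{j=1}^{n}$ of $H_1$ and $\{f_k\rangle\}_{k=1}^{m}$ of $H_2$ and expand $\psi\rangle = \sum_{j,k} C_{jk}\, e_j\rangle \otimes f_k\rangle$, thereby identifying $\psi\rangle$ with the $n \times m$ complex matrix $C = (C_{jk})$. The Schmidt form the theorem asks for is then precisely an SVD of $C$ translated back into the tensor-product language, with the hypothesis $m \le n$ ensuring the sum over $i$ runs only to $m$.

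Next I would apply SVD to write $C = U\,\Sigma\,V^{\ast}$, where $U$ is $n \times n$ unitary, $V$ is $m \times m$ unitary, and $\Sigma$ is $n \times m$ with entries $\Sigma_{ii} = \lambda_i \ge 0$ on the diagonal and zero elsewhere. Defining
\[
\eta_i\rangle := \sum_{j=1}^{n} U_{ji}\, e_j\rangle \quad (i=1,\ldots,n),\qquad \omega_i\rangle := \sum_{k=1}^{m} \overline{V_{ki}}\, f_k\rangle \quad (i=1,\ldots,m),
\]
the unitarity of $U$ and $V$ immediately makes each set orthonormal. Substituting and using $C_{jk} = \sum_{i=1}^{m} U_{ji}\, \lambda_i\, \overline{V_{ki}}$ reshuffles the expansion into $\psi\rangle = \sum_{i=1}^{m} \lambda_i\, \eta_i\rangle \otimes \omega_i\rangle$, which is the asserted decomposition. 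The ``extra'' vectors $\eta_{m+1}\rangle, \ldots, \eta_n\rangle$ drop out automatically because the corresponding rows of $\Sigma$ vanish.

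For uniqueness of the multiset $\{\lambda_i\}$, I would introduce the reduced density matrix $\rho_2 := \trace_{H_1}\bigl(\psi\rangle\langle\psi|\bigr)$, which depends only on $\psi\rangle$ and not on any choice of basis. A direct computation in the Schmidt basis yields $\rho_2 = \sum_{i=1}^{m} \lambda_i^{2}\, \omega_i\rangle\langle\omega_i|$, so $\{\lambda_i^{2}\}$ is exactly the spectrum of $\rho_2$. Since each $\lambda_i \ge 0$, the set $\{\lambda_i\}$ is then determined by $\psi\rangle$ alone; equivalently, the $\lambda_i^{2}$ are the eigenvalues of $CC^{\ast}$, which are manifestly independent of the initial basis choice.

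The main obstacle is purely bookkeeping: tracking the complex conjugation of $V$ carefully so that the product $C = U\Sigma V^{\ast}$ translates into the correct vectors $\omega_i\rangle$, and handling the asymmetric shape of $\Sigma$ when $m < n$. Once these indices are managed, the theorem is a direct transcription of the existence and essential uniqueness of the SVD into the bipartite pure-state language.
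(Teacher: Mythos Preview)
Your proof is correct and is exactly the approach the paper has in mind: the paper does not actually write out a proof but states just before the theorem that ``This is answered by singular value decomposition (SVD). This application of SVD in quantum information theory is called Schmidt Decomposition,'' citing Wikipedia and Nielsen--Chuang. Your argument via the coefficient matrix $C$, its SVD, and the identification of the $\lambda_i^2$ with the spectrum of the reduced density matrix is the standard one and fills in precisely what the paper leaves to those references.
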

The number of nonzero $\lambda$'s are called the Schmidt rank denoted
by $\chi$.

\begin{flushleft}
Comment: $\chi\le m\le n$; the Schmidt rank is no greater than the
minimum of the dimensions of the two Hilbert spaces. 
\par\end{flushleft}

Comment: Schmidt Decomposition can be thought of as an expansion of
a vector in bases of the subsystems. The need for having more than
one expansion coefficient (i.e., $\chi>1$) indicates that the state
is not separable in any basis (i.e., subsystems are entangled). A
quantifier for entanglement is therefore $\chi$.

\begin{figure}
\begin{centering}
\includegraphics[scale=1.4]{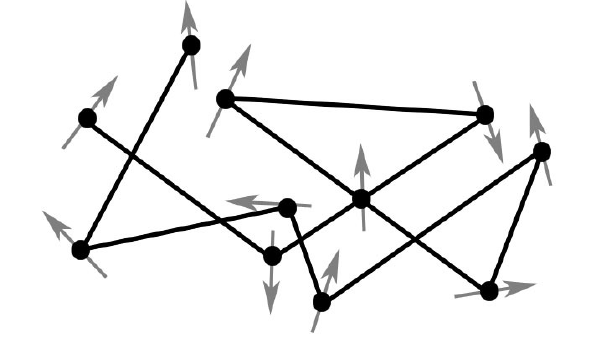}\caption{\label{fig:Nbody_interacting}Some general $N$ interaction quantum
spins whose pure state is represented by Eq. \ref{eq:QMBS_PureState}.}

\par\end{centering}

\end{figure}

DMRG, natural representation of which is MPS, beautifully utilizes
Schmidt Decomposition to capture low energy properties of QMBS on
a line or trees \cite{cirac,vidal}\cite{vidalTree}. Though any state
can be expresses as a MPS, the time evolution is only naturally implemented
for the line or a tree. Some attractive features are 1. MPS gives
a local description of QMBS on a line 2. MPS allows for a systematic
truncation of the Hilbert space to capture the low energy properties
with controllable accuracy. Below I give a derivation of MPS similar
to \cite{daniel} and improve the notation by making it more compact
(Eq. \ref{eq:Ramis}).

Suppose we have a chain with $N$ sites and take $1\le n\le N$. First,
perform a Schmidt decomposition of the chain between sites $n-1$
and $n$ as 

\begin{equation}
\psi\rangle=\sum_{\alpha_{n-1}}^{\chi_{n-1}}\lambda_{\alpha_{n-1}}^{\left(n-1\right)}\quad|\phi_{\alpha_{n-1}}\rangle_{1,\cdots n-1}\otimes\;\phi_{\alpha_{n-1}}\rangle_{n,\cdots,N},\label{eq:breatAtn-1}
\end{equation}
where the states on the left and on the right of the division form
an orthonormal bases for the respective subsystems of the state of
$\psi\rangle$. The Schmidt rank $\chi_{n-1}$ is the minimum number
of terms required in this decomposition. Recall that $\chi_{n-1}$
is at most equal to the minimum dimension of the Hilbert spaces of
the two subsystems.

Next, the Schmidt decomposition for a split between $n$ and $n+1$
gives

\begin{equation}
\psi\rangle=\sum_{\alpha_{n}}^{\chi_{n}}\lambda_{\alpha_{n}}^{\left(n\right)}\quad|\theta_{\alpha_{n}}\rangle_{1,\cdots n}\otimes\;\theta_{\alpha_{n}}\rangle_{n+1,\cdots,N}.\label{eq:breakAtn}
\end{equation}
These two decompositions describe the same state Eq. \ref{eq:QMBS_PureState},
allowing us to combine them by expressing the basis of the subsystem
$n,\cdots,N$ as

\begin{equation}
\phi_{\alpha_{n-1}}\rangle_{n,\cdots,N}=\sum_{i_{n}=1}^{d}\sum_{\alpha_{n}=1}^{\chi_{n}}\Gamma_{\alpha_{n-1},\alpha_{n}}^{i_{n},\left[n\right]}\lambda_{\alpha_{n}}^{\left(n\right)}\quad|i_{n}\rangle\otimes\theta_{\alpha_{n}}\rangle_{n+1,\cdots,N},\label{eq:MPS_Gamman}
\end{equation}
where we inserted $\lambda_{\alpha_{n}}^{\left(n\right)}$ for convenience.
This gives us the tensor $\Gamma^{\left[n\right]}$ which carries
an index $i_{n}$ corresponding to the physical states $i_{n}\rangle$
of the $n^{\mbox{th}}$ spin, and indices $\alpha_{n-1}$ and $\alpha_{n}$
corresponding to the two consecutive divisions of the system. Since
$\phi_{\alpha_{n-1}}\rangle$ and $\theta_{\alpha_{n}}\rangle$ are
orthonormal states, the vectors $\lambda$ and tensors $\Gamma$ obey
the following normalization conditions. From Eq. \ref{eq:breakAtn}
we have

\begin{equation}
\sum_{\alpha_{n}=1}^{\chi_{n}}\lambda_{\alpha_{n}}^{\left[n\right]2}=1,\label{eq:norm0}
\end{equation}
while \ref{eq:MPS_Gamman} implies 

\begin{equation}
\langle\phi_{\alpha'_{n-1}}|\phi_{\alpha_{n-1}}\rangle=\sum_{i_{n}=1}^{d}\sum_{\alpha_{n}=1}^{\chi_{n}}\Gamma_{\alpha'_{n-1},\alpha_{n}}^{\left[n\right],i_{n}*}\lambda_{\alpha_{n}}^{\left[n\right]}\Gamma_{\alpha_{n-1},\alpha_{n}}^{\left[n\right],i_{n}}\lambda_{\alpha_{n}}^{\left[n\right]}=\delta_{\alpha_{n-1},\alpha_{n-1}'},\label{eq:norm1}
\end{equation}
and 

\begin{equation}
\langle\theta_{\alpha'_{n}}|\theta_{\alpha_{n}}\rangle=\sum_{i_{n}=1}^{d}\sum_{\alpha_{n-1}=1}^{\chi_{n-1}}\lambda_{\alpha_{n-1}}^{\left[n-1\right]}\Gamma_{\alpha_{n-1},\alpha'_{n}}^{\left[n\right],i_{n}*}\lambda_{\alpha_{n-1}}^{\left[n-1\right]}\Gamma_{\alpha_{n-1},\alpha_{n}}^{\left[n\right],i_{n}}=\delta_{\alpha_{n},\alpha_{n}'}.\label{eq:norm2}
\end{equation}
\begin{figure}
\begin{centering}
\includegraphics[scale=0.3]{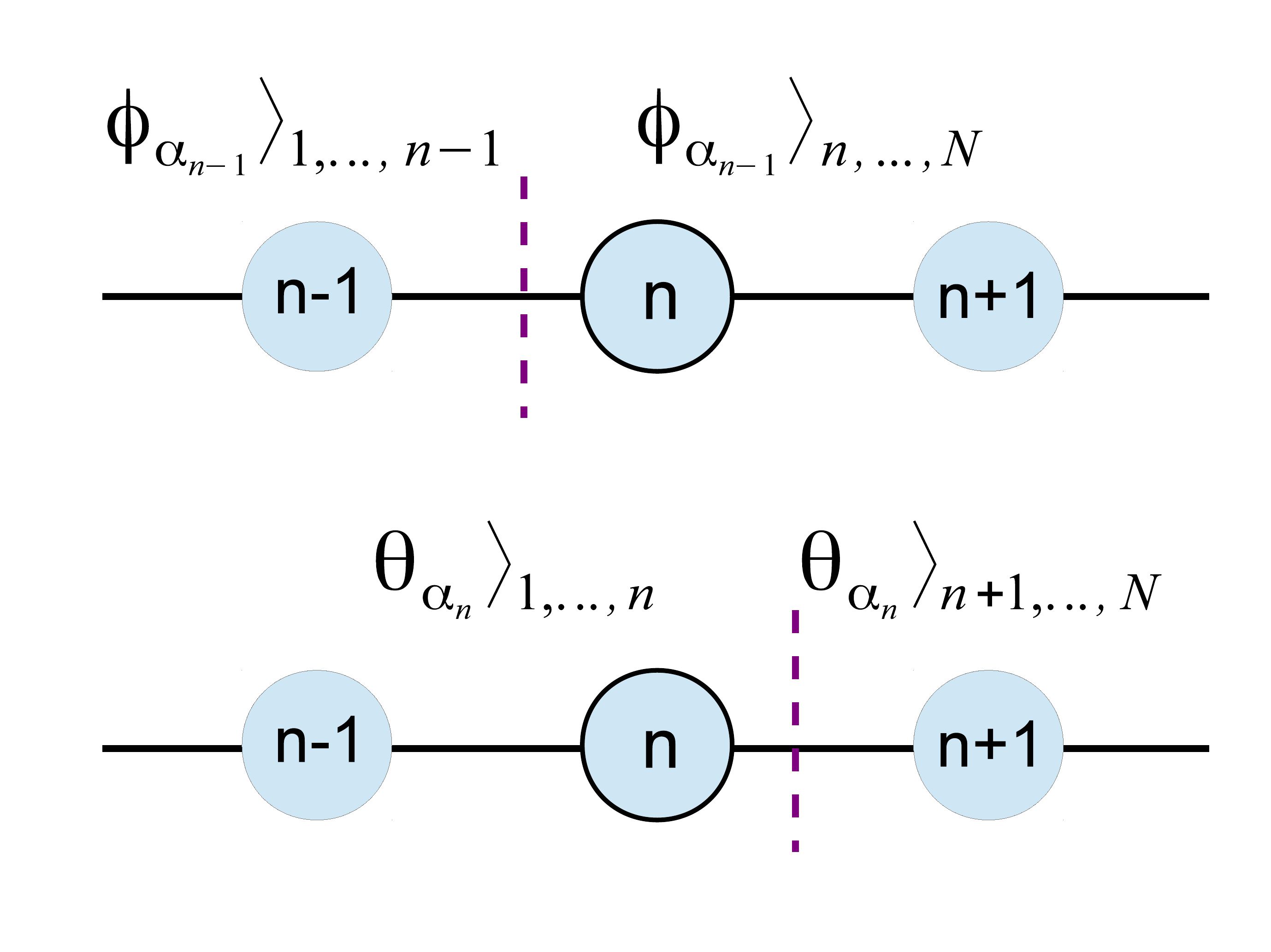}
\par\end{centering}

\caption{\label{fig:MPS_derivation}Decompositions that define $\Gamma^{[n]}$
in the derivation of MPS.}
\end{figure}

One can do what we just did for every site $1\le n\le N$ and get
the MPS representation of the spin chain, denoting open boundary conditions
and periodic boundary condition by OBC and PBC respectively,

\begin{eqnarray}
\psi^{i_{1}i_{2}\dots i_{N}} & = & \sum_{\alpha_{1},\dots,\alpha_{N-1}=1}^{\chi}\Gamma_{1,\alpha_{1}}^{i_{1},[1]}\Gamma_{\alpha_{1},\alpha_{2}}^{i_{2},[2]}\cdots\Gamma_{\alpha_{N-1},1}^{i_{N},[N]}\quad\quad\quad\mbox{OBC}\label{eq:mpsdefinition}\\
\psi^{i_{1}i_{2}\dots i_{N}} & = & \sum_{\alpha_{0},\alpha_{1},\dots,\alpha_{N-1}=1}^{\chi}\Gamma_{\alpha_{0},\alpha_{1}}^{i_{1},[1]}\Gamma_{\alpha_{1},\alpha_{2}}^{i_{2},[2]}\cdots\Gamma_{\alpha_{N-1},\alpha_{0}}^{i_{N},[N]}\quad\mbox{PBC}\nonumber 
\end{eqnarray}

Comments: It is customary to absorb the $\lambda$'s into $\Gamma$'s
and omit them as we did in the foregoing equation. The upper limit
$\chi=\max_{1\le n\le N}\left\{ \chi_{n}\right\} $.

I believe the MPS notation given by Eq. \ref{eq:mpsdefinition} can
be improved 

\begin{eqnarray}
\psi\rangle & = & {\displaystyle \mathcal{P}}\left\{ \bigotimes_{p=1}^{N}\sum_{i_{p}=1}^{d}\Gamma\left(i_{p}\right)\sigma^{i_{p}}\right\} \;0\rangle^{\otimes N}\quad\quad\quad\mbox{OBC}\label{eq:Ramis}\\
\psi\rangle & = & \mbox{Tr}_{\chi}{\displaystyle \mathcal{P}}\left\{ \bigotimes_{p=1}^{N}\sum_{i_{p}=1}^{d}\Gamma\left(i_{p}\right)\sigma^{i_{p}}\right\} \;0\rangle^{\otimes N}\quad\quad\mbox{PBC}\nonumber 
\end{eqnarray}
where $\sigma^{i_{p}}$ defined by $\sigma^{i_{p}}\;|0\rangle\equiv i_{p}\rangle$
are the generalized Pauli operators, each $\Gamma\left(i_{p}\right)\equiv\Gamma^{[p],i_{p}}\rightarrow\Gamma_{\alpha_{p-1},\alpha_{p}}^{[p],i_{p}}$,
for a given $i_{p}$, is a $\chi\times\chi$ matrix and $\mathcal{P}$
denotes an ordering with respect to $p$ of the tensor products. The
subscript $\chi$ on the trace reminds us that the trace is on $\chi\times\chi$
part of $\Gamma$'s and not the physical indices $i_{p}$. Lastly,
we can simplify notation by assuming repeated indices are summed over
to get

\begin{eqnarray}
\psi\rangle & = & {\displaystyle \mathcal{P}}\left\{ \bigotimes_{p=1}^{N}\Gamma\left(i_{p}\right)\sigma^{i_{p}}\right\} \;0\rangle^{\otimes N}\quad\quad\quad\mbox{OBC}\nonumber \\
\psi\rangle & = & \mbox{Tr}_{\chi}{\displaystyle \mathcal{P}}\left\{ \bigotimes_{p=1}^{N}\Gamma\left(i_{p}\right)\sigma^{i_{p}}\right\} \;0\rangle^{\otimes N}\quad\quad\mbox{PBC}\label{eq:RamisBrief}
\end{eqnarray}
Note that we now need at most $\sim Nd\chi^{2}$ parameters to specify
any state. There are orthogonality conditions on $\Gamma$'s that
further reduce the number of independent parameters needed. 

\begin{figure}
\centering{}\includegraphics[scale=0.3]{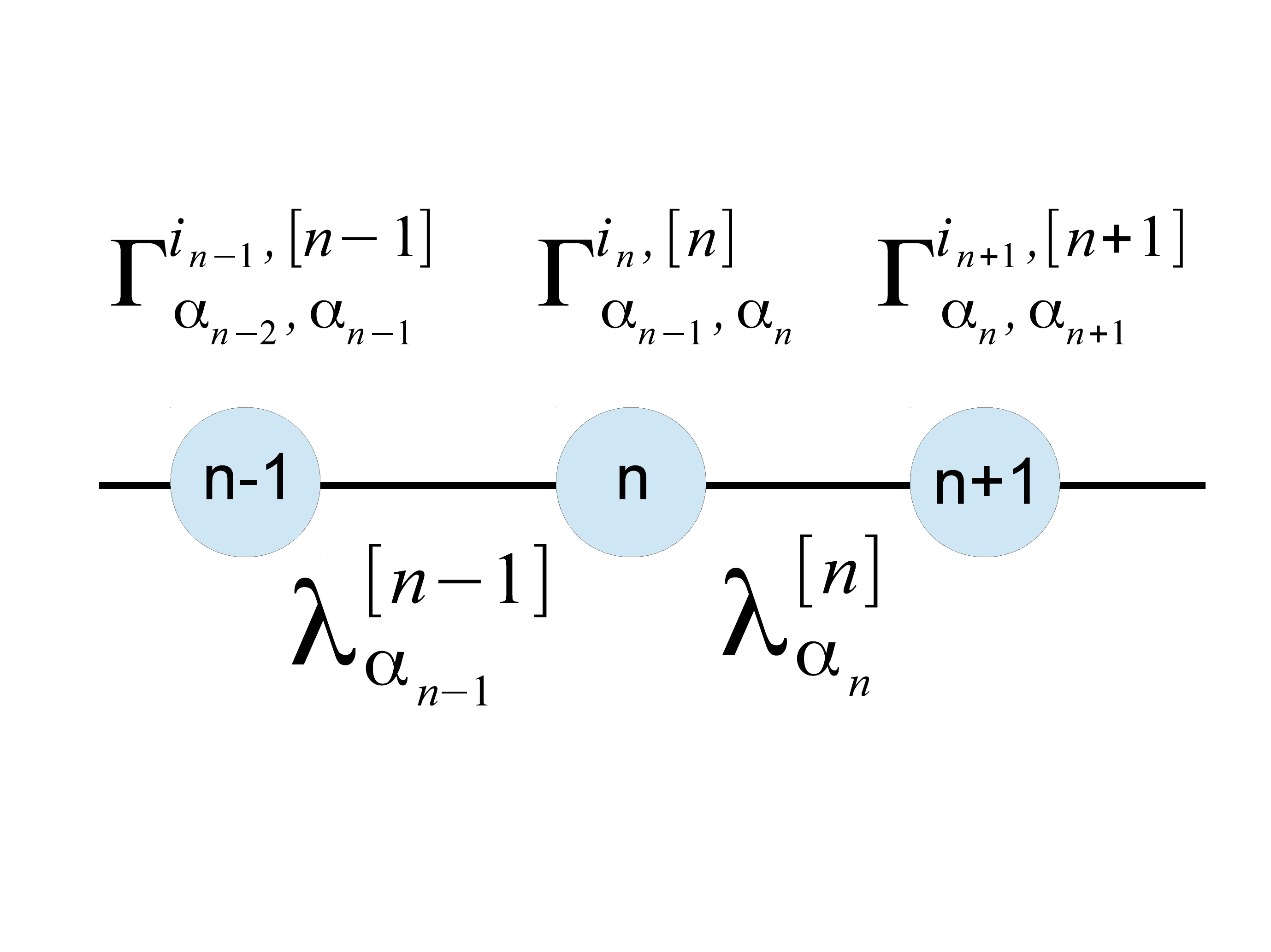}\caption{MPS representation}
\end{figure}

\subsection{MPS on a Tree}

The MPS representation relies on splitting the system into two subsystems
and making use of Schmidt Decomposition which applies for loop-less
configuration of spins such as chains and trees. The derivation above
can be generalized (see \cite{daniel}) to spins on a $k-$child tree.

\begin{equation}
\psi\rangle=\left(\Pi_{p\in\mbox{bonds}}\sum_{\alpha_{p}=1}^{\chi_{p}}\lambda_{\alpha_{p}}^{\left[p\right]}\right)\left(\Pi_{n\in\mbox{sites}}\sum_{i_{n}}\Gamma_{\alpha_{n_{1}},\cdots,\alpha_{n_{k}}}^{i_{n},\left[n\right]}\right)\quad\cdots\rangle\otimes i_{n}\rangle\otimes\cdots\rangle,\label{eq:MPStree}
\end{equation}
where $\alpha_{n_{1}},\cdots,\alpha_{n_{k}}$ are indices corresponding
to the $k$ bonds $n_{1},\cdots,n_{k}$ coming out of site $n$. Each
index $\alpha_{n_{j}}$ appears in $k$ $\Gamma$ tensors and one
$\lambda$ vector. The normalization conditions for a MPS description
of a state on a tree are analogous to Eqs. \ref{eq:norm0},\ref{eq:norm1},
and \ref{eq:norm2}. We have

\begin{eqnarray}
\sum_{\alpha_{p}}\lambda_{\alpha_{p}}^{\left[p\right]2} & = & 1,\label{eq:normTree}\\
\sum_{i_{n}}\sum_{\alpha_{2},\cdots,\alpha_{n}}\Gamma_{\alpha'_{n_{1}},\cdots,\alpha_{n_{k}}}^{i_{n},\left[n\right]*}\lambda_{\alpha_{n_{2}}}^{\left[2\right]2}\cdots\lambda_{n_{k}}^{\left[k\right]2} & \Gamma_{\alpha{}_{n_{1}},\cdots,\alpha_{n_{k}}}^{i_{n},\left[n\right]*}= & \delta_{\alpha'_{n_{1}},\alpha_{n_{1}},}\label{eq:norm2Tree}
\end{eqnarray}
and the other variations with $n_{2},\cdots,n_{k}$.

\part{Eigenvalues}
%% LyX 2.0.2 created this file.  For more info, see http://www.lyx.org/.
%% Do not edit unless you really know what you are doing.
\chapter{Isotropic Entanglement}

In this first part of the thesis I focus on the density of states
or QMBS. This chapter treats the eigenvalue distribution of spin chains,
though some of the theorems apply in higher dimensions. We treat generic
local interactions, where by local I mean every interaction term acts
nontrivially on $L$ consecutive spins. This chapter also appears
in \cite{RamisIE,RamisPRL}.

\section{\label{sec:The-Elusive-Spectra} Elusive Spectra of Hamiltonians}

\begin{onehalfspace}
Random matrix techniques have proved to be fruitful in addressing
various problems in quantum information theory (QIT) \cite{Hayden_Generic,hastings,cubitt}
and theoretical physics \cite{RMTinCMP,Beenakker}. In condensed matter
physics (CMP), quantum spins with random exchange have been extensively
studied with surprising phase transition behaviors \cite{DFisher1994,Fisher2001,Tu,Schuettler}. 

Moreover, constraint satisfaction problems are at the root of complexity
theory \cite{cook1971,krzakala_Classical}. The quantum analogue of
the satisfiability problem (QSAT) encodes the constraints by a Hamiltonians
acting on quantum spins (or qu\textit{d}its) \cite{Bravyi}. Interesting
phase transition behaviors have been seen for random QSAT with important
implications for the Hardness of typical instances\cite{chris,chris2,QLLL}. 

While the application of the aggregate of eigenvalues may be different
in CMP, we point out a few examples where they are studied \cite{HiT_HeatC,FullDOS,WolfReview,PhotoEmissionDOS}.
By comparison, in QIT and CMP the ground state and the first few excited
states have been well studied to date \cite{cirac,vidal,white,wen,frank,hastings,ramis}.
To the best of our knowledge the aggregate has not been as well studied
perhaps in part because of the complexity \cite{schuch} and perhaps
in part because the {\it low} energy states have captivated so much
interest to date. 

Energy \textit{eigenvalue distributions} or the \textit{density of
states (DOS)} are needed for calculating the partition function\cite[p. 14]{kadanoff}.
The DOS plays an important role in the theory of solids, where it
is used to calculate various physical properties such the internal
energy, the density of particles, specific heat capacity, and thermal
conductivity. Quantum Many-Body Systems (QMBS) spectra have been elusive
for two reasons: 1. The terms that represent the interactions are
generally non-commuting. This is pronounced for systems with random
interactions (e.g., quantum spin glasses \cite{spinGlass,sachdev2}).
2. Standard numerical diagonalization is limited by memory and computer
speed. The exact calculation of the spectrum of interacting QMBS has
been shown to be difficult \cite{schuch}. 

An accurate description of tails of distributions are desirable for
CMP. Isotropic Entanglement (IE) provides a direct method for obtaining
eigenvalue distributions of quantum spin systems with generic local
interactions and does remarkably well in capturing the tails. Indeed
interaction is the very source of entanglement generation \cite[Section 2.4.1]{NC}\cite{Preskill}
which makes QMBS a resource for quantum computation \cite{gottesman}
but their study a formidable task on a classical computer. 

Suppose we are interested in the eigenvalue distribution of a sum
of Hermitian matrices $M=\sum_{i=1}^{N}M_{i}.$ In general, $M_{i}$
cannot be simultaneously diagonalized, consequently the spectrum of
the sum is not the sum of the spectra. Summation mixes the entries
in a very complicated manner that depends on eigenvectors. Nevertheless,
it seems possible that a one-parameter approximation might suffice. 

Though we are not restricted to one dimensional chains, for sake of
concreteness, we investigate $N$ interacting $d$-dimensional quantum
spins (qudits) on a line with generic interactions. The Hamiltonian
is

\vspace{-0.15in}

\end{onehalfspace}

\begin{onehalfspace}
\begin{equation}
H=\sum_{l=1}^{N-1}\mathbb{I}_{d^{l-1}}\otimes H_{l,\cdots,l+L-1}\otimes\mathbb{I}_{d^{N-l-\left(L-1\right)}},\label{eq:Hamiltonian}
\end{equation}
where the local terms $H_{l,\cdots,l+L-1}$ are finite $d^{L}\times d^{L}$
random matrices. We take the case of nearest neighbors interactions,
$L=2$, unless otherwise specified. 
\end{onehalfspace}

\begin{onehalfspace}
The eigenvalue distribution of any commuting subset of $H$ such as
the terms with $l$ odd (the ``odds'') or $l$ even (the ``evens'')
can be obtained using local diagonalization. However, the difficulty
in approximating the full spectrum of $H\equiv H_{\mbox{odd}\vphantom{\mbox{even}}}+H_{\mbox{even}\vphantom{\mbox{odd}}}$
is in summing the odds and the evens because of their overlap at every
site. 

The intuition behind IE is that terms with an overlap, such as $H_{l,l+1}$
and $H_{l+1,l+2}$, introduce randomness and mixing through sharing
of a site. Namely, the process of entanglement generation introduces
an \textit{isotropicity }between the eigenvectors of evens and odds
that can be harnessed to capture the spectrum.

\begin{figure}
\centering{}\includegraphics[scale=1.5]{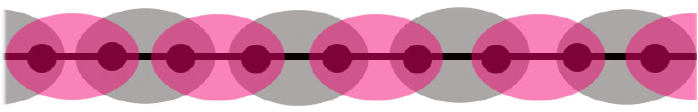}\caption{\label{fig:Odd-and-even}Odd and even summands can separately be locally
diagonalized, but not the sum. The overlap of the two subsets at every
site generally requires a global diagonalization.}
\end{figure}

Random Matrix Theory (RMT) often takes advantage of eigenvectors with
Haar measure, the uniform measure on the orthogonal/unitary group.
However, the eigenvectors of QMBS have a more special structure (see
Eq. \ref{eq:OddEven}).

Therefore we created a \textit{hybrid theory}, where we used a finite
version of Free Probability Theory (FPT) and Classical Probability
Theory to capture the eigenvalue distribution of Eq. \ref{eq:Hamiltonian}.
Though such problems can be QMA-complete, our examples show that IE
provides an accurate picture well beyond what one expects from the
first four moments alone.\textit{ }The \textit{Slider} (bottom of
Figure \ref{fig:RoadMap}) displays the proposed mixture $p$.
\end{onehalfspace}

\begin{onehalfspace}

\section{\label{sec:The-Method}The Method of Isotropic Entanglement}
\end{onehalfspace}

\begin{onehalfspace}

\subsection{\label{sub:Road-Map} Overview}
\end{onehalfspace}

\begin{onehalfspace}
\begin{figure}
\centering{}\includegraphics[scale=0.6]{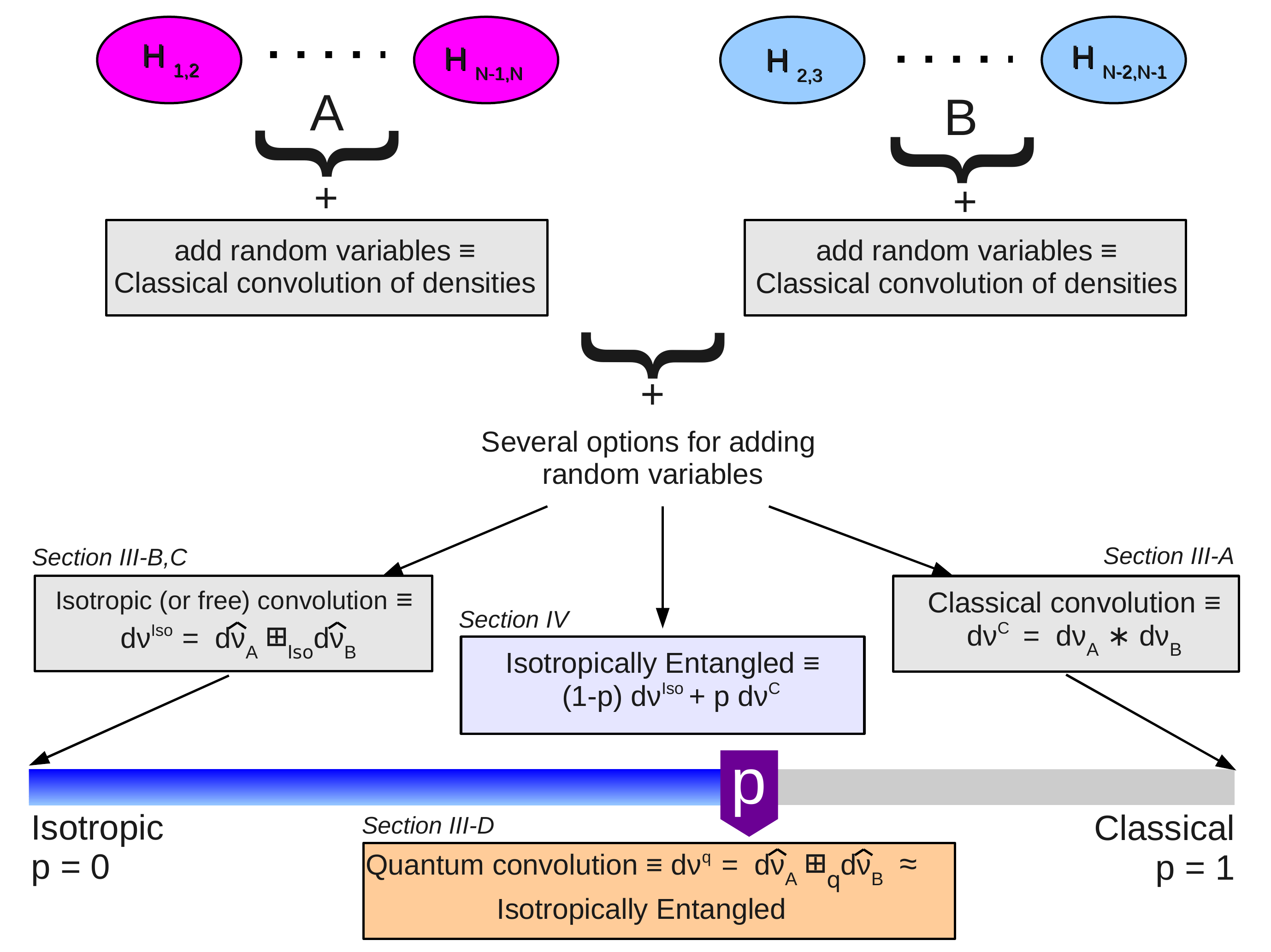}\caption{\label{fig:RoadMap}The method of Isotropic Entanglement: Quantum
spectra as a convex combination of isotropic and classical distributions.
The Slider (bottom) indicates the $p$ that matches the quantum kurtosis
as a function of classical ($p=1$) and isotropic ($p=0$) kurtoses.
To simplify we drop the tensor products (Eq. \ref{eq:HevenHodd})
in the local terms (ellipses on top). Note that isotropic and quantum
convolution depend on multivariate densities for the eigenvalues. }
\end{figure}

We propose a method to compute the ``density of states'' (DOS) or
``eigenvalue density'' of quantum spin systems with generic local
interactions. More generally one wishes to compute the DOS of the
sum of non-commuting random matrices from their, individually known,
DOS's. 

We begin with an example in Figure \ref{fig:The_intro}, where we
compare exact diagonalization against two approximations:

\begin{figure}
\centering{}\includegraphics[width=12cm]{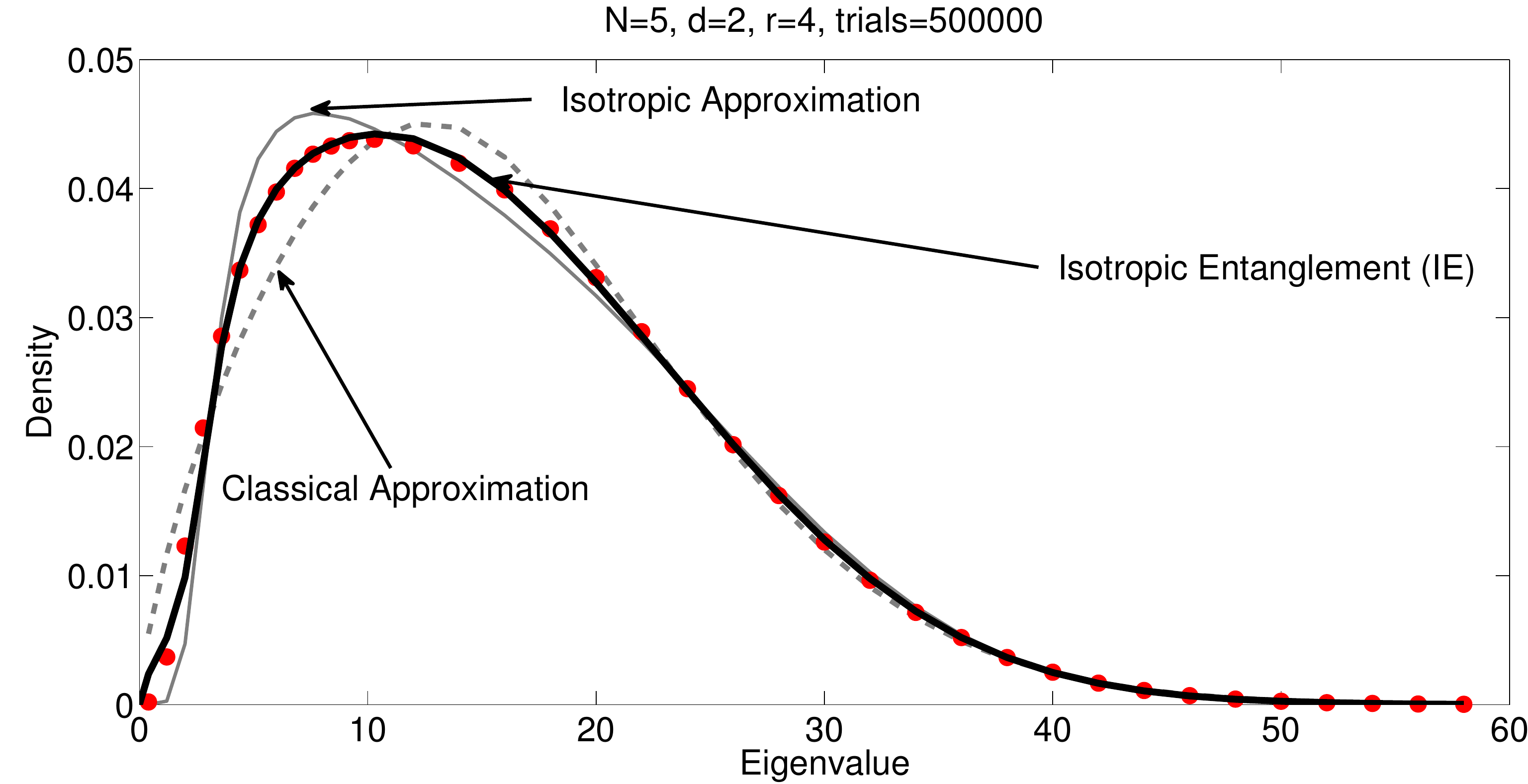}\caption{\label{fig:The_intro}The exact diagonalization in dots and IE compared
to the two approximations. The title parameters are explained in the
section on numerical results.}
\end{figure}

\end{onehalfspace}
\begin{itemize}
\begin{onehalfspace}
\item Dashed grey curve: \textit{classical} approximation. Notice that it
overshoots to the right.
\item Solid grey curve: \textit{isotropic} approximation. Notice that it
overshoots to the left.
\item Solid black curve: \textit{isotropic entanglement (IE).}
\item Dots: \textit{exact diagonalization} of the quantum problem given
in Eq. \ref{eq:Hamiltonian}.\end{onehalfspace}

\end{itemize}
\begin{onehalfspace}
The \textit{classical approximation} ignores eigenvector structure
by summing random eigenvalues uniformly from non-commuting matrices.
The dashed curve is the convolution of the probability densities of
the eigenvalues of each matrix. 

The\textit{ isotropic approximation} assumes that the eigenvectors
are in ``general position''; that is, we add the two matrices with
correct eigenvalue densities but choose the eigenvectors from Haar
measure. As the matrix size goes to infinity, the resulting distribution
is the free convolution of the individual distributions \cite{speicher}. 

The exact diagonalization given by red dots, the dashed and solid
grey curves have exactly the same first three moments, but differing
fourth moments. 

\textit{Isotropic Entanglement (IE)} is a linear combination of the
two approximations that is obtained by matching the fourth moments.
We show that 1) the fit is better than what might be expected by four
moments alone, 2) the combination is always convex for the problems
of interest, given by $0\leq p\leq1$ and 3) this convex combination
is universal depending on the parameter counts of the problem but
not the eigenvalue densities of the local terms.

\textit{Parameter counts: exponential, polynomial and zero.} Because
of the \textit{locality} of generic interactions, the complete set
of eigenstates has parameter count equal to a polynomial in the number
of spins, though the dimensionality is exponential. The classical
and isotropic approximations have zero and exponentially many random
parameters respectively. This suggests that the problem of interest
somehow lies in between the two approximations. 

Our work supports a very general principle that one can obtain an
accurate representation of inherently exponential problems by approximating
them with less complexity. This realization is at the heart of other
recent developments in QMBS research such as Matrix Product States
\cite{cirac,vidal}, and Density Matrix Renormalization Group \cite{white},
where the \textit{state} (usually the ground state of $1D$ chains)
can be adequately represented by a Matrix Product State (MPS) ansatz
whose parameters grow \textit{linearly} with the number of quantum
particles. Future work includes explicit treatment of fermionic systems
and numerical exploration of higher dimensional systems.
\end{onehalfspace}

\begin{onehalfspace}

\subsection{Inputs and Outputs of the Theory}
\end{onehalfspace}

\begin{onehalfspace}
In general we consider Hamiltonians $H=H_{\mbox{odd}\mbox{\ensuremath{\vphantom{even}}}}+H_{\mbox{even}\mbox{\ensuremath{\vphantom{odd}}}}$,
where the local terms that add up to $H_{\mbox{odd}\vphantom{\mbox{even}}}$
(or $H_{\mbox{even}\vphantom{\mbox{odd}}}$) form a commuting subset.
All the physically relevant quantities such as the lattice structure,
$N$, dimension of the spin $d$ and the rank $r$ are encoded in
the eigenvalue densities. The output of the theory is a $0\leq p\leq1$
by which the IE distribution is obtained and $d\nu^{IE}$ serves as
an approximation to the spectral measure. The inputs can succinctly
be expressed in terms of the dimension of the quantum spins, and the
nature of the lattice (Figure \ref{tab:Inputs-and-Outputs}).

\begin{figure}
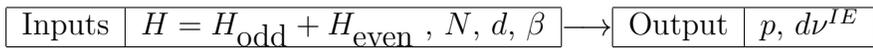

\begin{centering}
\begin{tabular}{|c|c|}
\hline 
Inputs & $H=H_{\mbox{odd}\vphantom{\mbox{even}}}+H_{\mbox{even}\vphantom{\mbox{odd}}}$
, $N$, $d$, $\beta$\tabularnewline
\hline 
\end{tabular}$\longrightarrow$%
\begin{tabular}{|c|c|}
\hline 
Output & $p$, $d\nu^{IE}$\tabularnewline
\hline 
\end{tabular}
\par\end{centering}

\caption{\label{tab:Inputs-and-Outputs}Inputs and outputs of the IE theory.
See section \ref{sec:Spectra-Sums-from-Prob} for the definition of
$d\nu^{IE}$.}
\end{figure}

\end{onehalfspace}

\begin{onehalfspace}

\subsection{\label{sub:More-Than-Four}More Than Four Moments of Accuracy?}
\end{onehalfspace}

\begin{onehalfspace}
Alternatives to IE worth considering are 1) Pearson and 2) Gram-Charlier
moment fits. 

We illustrate in Figure \ref{fig:IEvsOthersZERO} how the IE fit is
better than expected when matching four moments. We used the first
four moments to approximate the density using the Pearson fit as implemented
in MATLAB and also the well-known Gram-Charlier fit \cite{Cramer}.
In \cite{Gram} it was demonstrated that the statistical mechanics
methods for obtaining the DOS, when applied to a finite dimensional
vector space, lead to a Gaussian distribution in the lowest order.
Further, they discovered that successive approximations lead naturally
to the Gram-Charlier series \cite{Cramer}. Comparing these against
the accuracy of IE leads us to view IE as more than a moment matching
methodology. 

\begin{figure}
\begin{centering}
\includegraphics[width=8cm]{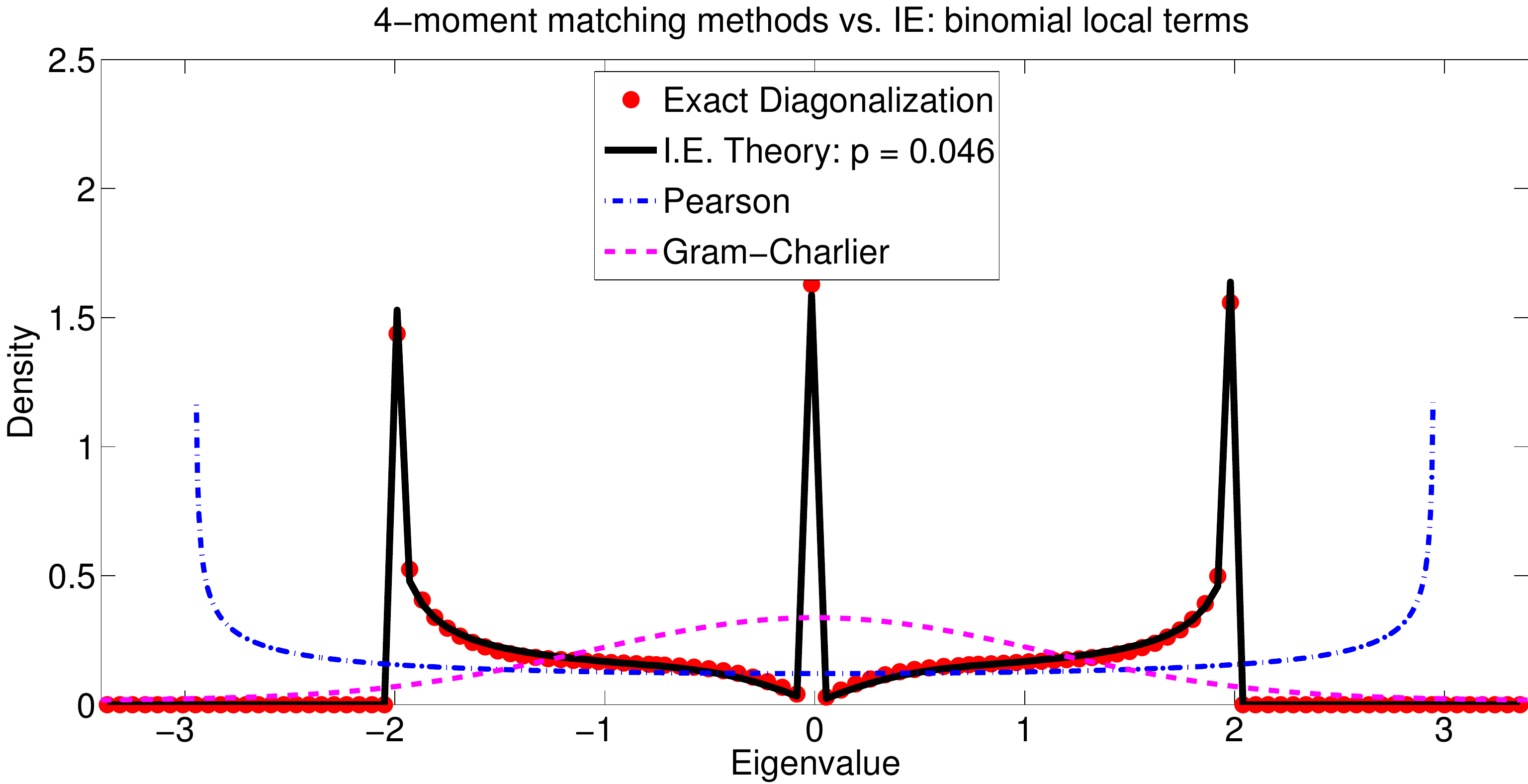}\includegraphics[width=8cm]{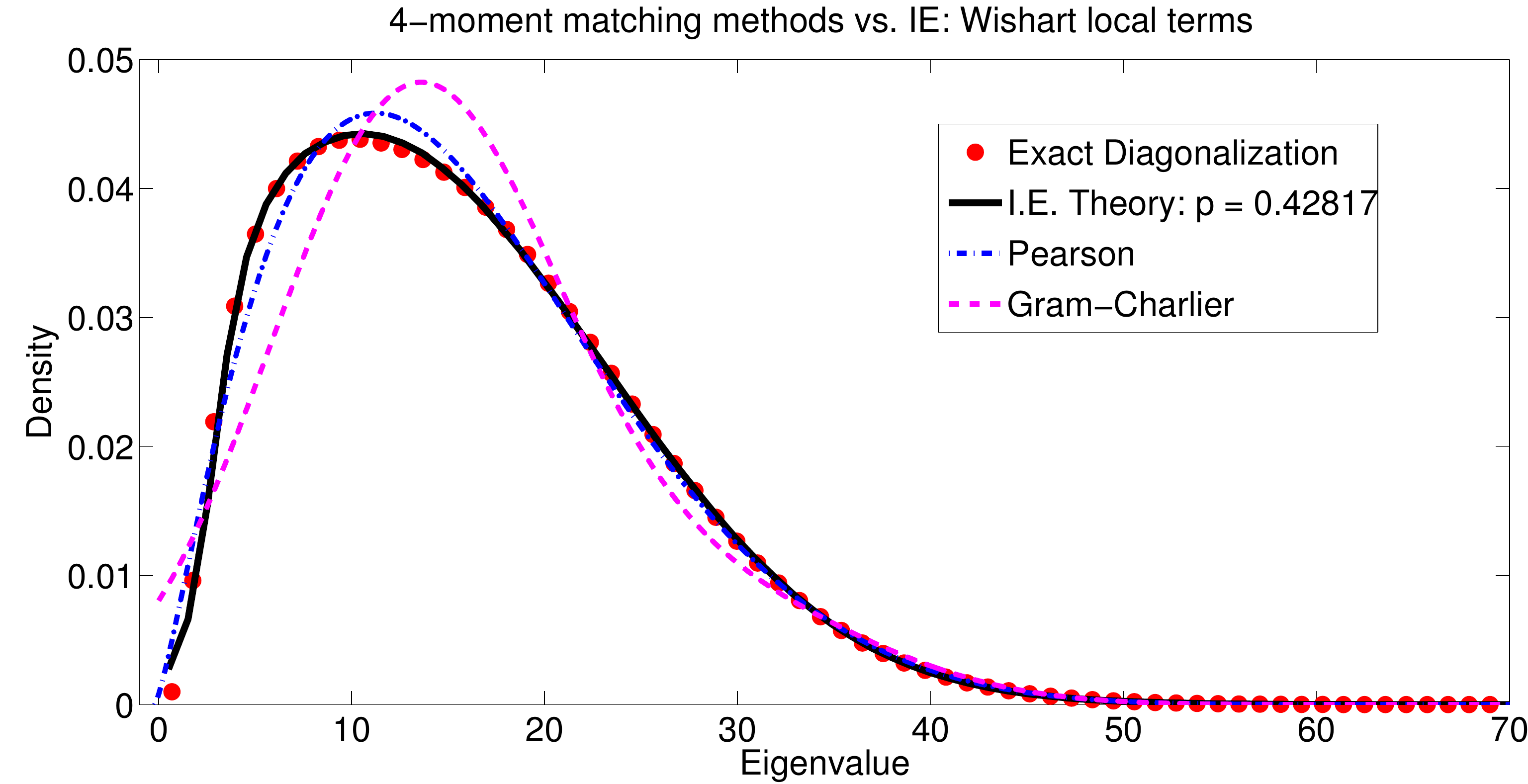}
\par\end{centering}

\centering{}\caption{\label{fig:IEvsOthersZERO}IE vs. Pearson and Gram-Charlier}
\end{figure}

The departure theorem (Section \ref{sub:departure}) shows that in
any of the higher moments ($>4$) there are many terms in the quantum
case that match IE exactly. Further, we conjecture that the effect
of the remaining terms are generally less significant. 
\end{onehalfspace}

\begin{onehalfspace}

\section{\label{sec:Spectra-Sums-from-Prob}Spectra Sums in Terms of Probability
Theory}
\end{onehalfspace}

\begin{onehalfspace}
The density of eigenvalues may be thought of as a histogram. Formally
for an $m\times m$ matrix $M$ the\textit{ eigenvalue distribution}
is \cite[p. 4]{zeitouni1}\cite[p. 101]{Hiai} 
\begin{equation}
d\nu_{M}(x)=\frac{1}{m}\sum_{i=1}^{m}\delta(x-\lambda_{i}\left(M\right)).
\end{equation}

For a random matrix, there is the expected eigenvalue distribution
\cite{alan}, \cite[p. 362]{speicher}
\begin{equation}
d\nu_{M}(x)=\frac{1}{m}\mathbb{E}\left[\sum_{i=1}^{m}\delta(x-\lambda_{i}\left(M\right))\right],
\end{equation}
 which is typically a smooth curve \cite[p. 101]{zeitouni1}\cite[p. 115]{Hiai}. 

The eigenvalue distributions above are measures on one variable. We
will also need the multivariate measure on the $m$ eigenvalues of
$M$: 
\[
d\hat{\nu}_{M}\left(x\right)\mbox{= The symmetrized joint density of the eigenvalues. }
\]

Given the densities for $M$ and $M',$ the question arises: What
kind of summation of densities might represent the density for $M+M'$?
This question is unanswerable without further information.

One might try to answer this using various assumptions based on probability
theory. The first assumption is the familiar ``classical'' probability
theory where the distribution of the sum is obtained by convolution
of the density of summands. Another assumption is the modern ``free''
probability theory; we introduce a finite version to obtain the ``isotropic''
theory. Our target problem of interest, the ``quantum'' problem,
we will demonstrate, practically falls nicely in between the two.
The ``Slider'' quantifies to what extent the quantum problem falls
in between (Figure \ref{fig:RoadMap} bottom). 
\end{onehalfspace}

\begin{onehalfspace}

\subsection{\label{sub:Classical}Classical}
\end{onehalfspace}

\begin{onehalfspace}
Consider random diagonal matrices $A$ and $B$ of size $m$, the
only randomness is in a uniform choice among the $m!$ possible orders.
Then there is no difference between the density of eigenvalue sums
and the familiar convolution of densities of random variables,
\begin{equation}
d\nu^{c}=d\nu_{A}*d\nu_{B}.
\end{equation}

Comment: From this point of view, the diagonal elements of $A,$ say,
are identically distributed random variables that need not be independent.
Consider Wishart matrices \cite{wishart}, where there are dependencies
among the eigenvalues. To be precise let $\mathbf{a}\in\mathbb{R}^{m}$
be a symmetric random variable, i.e., $P\mathbf{a}$ has the same
distribution as $\mathbf{a}$ for all permutation matrices $P$. We
write, $A=\left(\begin{array}{ccc}
a_{1}\\
 & \ddots\\
 &  & a_{m}
\end{array}\right)\equiv\textrm{diag}(\mathbf{a})$. Similarly for $B$.

Comment: The classical convolution appears in Figure \ref{fig:RoadMap}
in two different ways. Firstly, in the definition of $A$ (or $B$)
, the eigenvalues of the odd (or even) terms are added classically.
Secondly, $A$ and $B$ are added classically to form one end of the
Slider. 
\end{onehalfspace}

\begin{onehalfspace}

\subsection{\label{sub:Free}Free and Isotropic}
\end{onehalfspace}

\begin{onehalfspace}
Free probability \cite[is recommended]{speicher} provides a new natural
mathematical ``sum'' of random variables. This sum is computed ``free
convolution'' denoted 

\begin{equation}
d\nu^{f}=d\nu_{A}\boxplus d\nu_{B}.
\end{equation}
Here we assume the random matrices $A$ and $B$, representing the
eigenvalues, have densities $d\nu_{A}$ and $d\nu_{B}$. In the large
$m$ limit, we can compute the DOS of $A+Q^{T}BQ$, where $Q$ is
a $\beta-$Haar distributed matrix (see Table \ref{tab:field}). 

Comment: In this paper we will not explore the free approach strictly
other than observing that it is the infinite limit of the isotropic
approach (i.e., $t\rightarrow\infty$ in Eq. \ref{eq:HIso}). This
infinite limit is independent of the choice of $\beta$ (see Table
\ref{tab:field}).

\begin{table}[H]
\begin{centering}
\begin{tabular}{|c|c|c|c|c|}
\hline 
 & Real $\mathbb{R}$ & Complex $\mathbb{C}$ & Quaternions $\mathbb{H}$ & ``Ghosts'' \tabularnewline
\hline 
\hline 
$\beta$ & $1$ & $2$ & $4$ & general $\beta$\tabularnewline
\hline 
Notation & $Q$ & $U$ & $S$ & $\mathcal{Q_{\beta}}$\tabularnewline
\hline 
Haar matrices & orthogonal & unitary & symplectic & $\beta-$orthogonal\tabularnewline
\hline 
\end{tabular}
\par\end{centering}

\centering{}\caption{\label{tab:field}Various $\beta-$Haar matrices.}
\end{table}

We define an isotropic convolution. The isotropic sum depends on a
copying parameter $t$ and $\beta$ (Table \ref{tab:field}). The
new Hamiltonian is the \textit{isotropic Hamiltonian} (\textit{``iso}''):

\begin{equation}
H_{iso}\equiv\left(A'\otimes\mathbb{I}_{t}\right)+Q_{\beta}^{-1}\left(\mathbb{I}_{t}\otimes B'\right)Q_{\beta},\label{eq:HIso}
\end{equation}
where $Q_{\beta}$ is a $\beta-$Haar distributed matrix, $A=A'\otimes\mathbb{I}_{t}$
and $B=\mathbb{I}_{t}\otimes B'$. For the copying parameter $t=d$,
$H_{iso}$ has the same dimension as $H$ in Eq. \ref{eq:Hamiltonian};
however, $t>d$ allows us to formally treat problems of growing size.
We can recover the free convolution by taking the limit: $\lim_{t\rightarrow\infty}d\nu^{iso\left(\beta,t\right)}=d\nu^{f}$.
The effect of $Q_{\beta}$ is to spin the eigenvectors of $\mathbb{I}_{t}\otimes B$
to point isotropically with respect to the eigenvectors of $A$. We
denote the isotropic eigenvalue distribution by

\begin{equation}
d\nu^{iso\left(\beta,t\right)}=d\hat{\nu}_{A}\boxplus_{iso\left(\beta,t\right)}d\hat{\nu}_{B}
\end{equation}
omitting $t$ and $\beta$ when it is clear from the context. 

Comment: In Eq. \ref{eq:HIso}, the $\mathbb{I}_{t}$ and $B$ in
$\mathbb{I}_{t}\otimes B$, can appear in any order. We chose this
presentation in anticipation of the quantum problem.

Comment: In this paper we primarily consider $t$ to match the dimension
of $H$. 
\end{onehalfspace}

\begin{onehalfspace}

\subsection{\label{sub:Quantum}Quantum}
\end{onehalfspace}

\begin{onehalfspace}
Let $d\nu^{q}$ denote the eigenvalue distribution for the Hamiltonian
in Eq. \ref{eq:Hamiltonian}. This is the distribution that we will
approximate by $d\nu^{IE}$. In connection to Figure \ref{fig:Odd-and-even}
the Hamiltonian can be written as

\begin{equation}
H\equiv H_{\mbox{odd}\vphantom{\mbox{even}}}+H_{\mbox{even}\mbox{\ensuremath{\vphantom{odd}}}}=\sum_{l=1,3,5,\cdots}\mathbb{I}\otimes H_{l,l+1}\otimes\mathbb{I}+\sum_{l=2,4,6,\cdots}\mathbb{I}\otimes H_{l,l+1}\otimes\mathbb{I}.\label{eq:HevenHodd}
\end{equation}

We proceed to define a ``quantum convolution'' on the distributions
$d\hat{\nu}_{A}$ and $d\hat{\nu}_{B}$, which is $\beta$-dependent

\begin{equation}
d\nu^{q\left(\beta\right)}=d\hat{\nu}_{A}\boxplus_{q}d\hat{\nu}_{B}.
\end{equation}

In general, without any connection to a Hamiltonian, let $d\hat{\nu}_{A}$
and $d\hat{\nu}_{B}$ be symmetric measures on $\mathbb{R}^{d^{N}}$.
We define $d\nu^{q\left(\beta\right)}$ to be the eigenvalue distribution
of 

\begin{equation}
H=A+Q_{q}^{-1}BQ_{q},
\end{equation}

where $Q_{q}=\left(Q_{q}^{(A)}\right)^{-1}Q_{q}^{(B)}$ with

\begin{equation}
\begin{array}{c}
Q_{q}^{(A)}=\left[\bigotimes_{i=1}^{\left(N-1\right)/2}Q_{i}^{(O)}\right]\otimes\mathbb{I}_{d}\;\textrm{and}\; Q_{q}^{(B)}=\mathbb{I}_{d}\otimes\left[\bigotimes_{i=1}^{\left(N-1\right)/2}Q_{i}^{(E)}\right]\qquad N\;\textrm{odd}\\
\\
\qquad Q_{q}^{(A)}=\left[\bigotimes_{i=1}^{N/2}Q_{i}^{(O)}\right]\;\textrm{and}\; Q_{q}^{(B)}=\mathbb{I}_{d}\otimes\left[\bigotimes_{i=1}^{N/2-1}Q_{i}^{(E)}\right]\otimes\mathbb{I}_{d}\qquad N\;\textrm{even}
\end{array}\label{eq:OddEven}
\end{equation}

and each $Q_{i}^{\left(\bullet\right)}$ is a $\beta-$Haar measure
orthogonal matrix of size $d^{2}$ and $\mathbb{I}_{d}$ is an identity
matrix of size $d$.

Comment: $A$, $B$ and $Q_{q}$ are $d^{N}\times d^{N}.$

Comment: In our examples given in this paper, we assume the local
terms are independent and identically distributed (iid) random matrices,
each of which has eigenvectors distributed with $\beta-$Haar measure. 

The tensor product in (\ref{eq:OddEven}) succinctly summarizes the
departure of the quantum case from a generic matrix as well as from
the classical case. First of all the number of parameters in $Q_{q}$
grows linearly with $N$ whereas in $Q$ it grows exponentially with
$N$. Second, the quantum case possesses isotropicity that makes it
different from the classical, whose eigenvectors are a point on the
orthogonal group (i.e., the identity matrix).

Comment: General $\beta$'s can be treated formally \cite{alanGhost}.
In particular, for quantum mechanical problems $\beta$ is taken to
be $1$ or $2$ corresponding to real and complex entries in the local
terms. $\beta=4$ corresponds to quaternions.
\end{onehalfspace}
\begin{defn*}
\begin{onehalfspace}
\label{def:The-Hadamard-product}The \textbf{Hadamard product} of
two matrices $M_{1}$ and $M_{2}$ of the same size, denoted by $M_{1}\circ M_{2}$,
is the product of the corresponding elements. \end{onehalfspace}
\end{defn*}
\begin{lem}
\begin{onehalfspace}
\label{lem:var}The elements of $Q_{q}$ defined in Eq. \ref{eq:OddEven}
are (dependent) random variables with mean zero and variance $d^{-N}$.\end{onehalfspace}
\end{lem}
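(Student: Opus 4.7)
The plan is to compute $\mathbb{E}[(Q_q)_{I,J}]$ and $\mathbb{E}|(Q_q)_{I,J}|^2$ directly from the definition $Q_q=(Q_q^{(A)})^{-1}Q_q^{(B)}$, exploiting the tensor-product structure of $Q_q^{(A)}$ and $Q_q^{(B)}$ together with the first two moments of a $\beta$-Haar matrix. I will set things up for $N$ odd in detail; the $N$ even case is entirely analogous.

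First I would unpack the entry. Using $(Q_q^{(A)})^{-1}=(Q_q^{(A)})^{\dagger}$ and inserting a complete set of intermediate indices $K=(k_1,\dots,k_N)$, the identity factors on site $N$ (in $A$) and site $1$ (in $B$) collapse two of the sums via Kronecker deltas $\delta_{k_N,i_N}$ and $\delta_{k_1,j_1}$, leaving $(Q_q)_{I,J}=\sum_{k_2,\dots,k_{N-1}}\prod_{i=1}^{(N-1)/2}\overline{(Q_i^{(O)})_{(k_{2i-1}k_{2i}),(i_{2i-1}i_{2i})}}\;\prod_{i=1}^{(N-1)/2}(Q_i^{(E)})_{(k_{2i}k_{2i+1}),(j_{2i}j_{2i+1})}$, where $k_1:=j_1$ and $k_N:=i_N$. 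Because the $N-1$ Haar factors are independent and each satisfies $\mathbb{E}[Q_{ab}]=0$, every summand has mean zero, so $\mathbb{E}[(Q_q)_{I,J}]=0$.

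For the variance I would apply the standard second-moment identity $\mathbb{E}[Q_{ab}\overline{Q_{cd}}]=\delta_{ac}\delta_{bd}/m$ for a $\beta$-Haar matrix of size $m$ (here every Haar factor has $m=d^2$). Expanding $\mathbb{E}|(Q_q)_{I,J}|^2$ as a double sum over intermediate indices $(k_2,\dots,k_{N-1})$ and $(k'_2,\dots,k'_{N-1})$ and invoking independence across the $N-1$ Haar factors, each factor contributes $1/d^2$ together with Kronecker deltas forcing $k_j=k'_j$ for every interior $j$. Once these are collapsed, the residual free sum runs over $d^{N-2}$ choices of $(k_2,\dots,k_{N-1})$, giving $\mathbb{E}|(Q_q)_{I,J}|^2=d^{N-2}/d^{2(N-1)}=1/d^N$, independent of $I$ and $J$. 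The $N$ even case runs identically: one has $N/2$ odd and $N/2-1$ even Haar factors plus two boundary identities, producing the same count $d^{N-2}$ of free intermediate indices and the same prefactor $1/d^{2(N-1)}$.

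The main obstacle is simply the tensor-product bookkeeping: keeping straight the interplay between the identity factors $\mathbb{I}_d$ (which collapse the two boundary intermediate indices via Kronecker constraints) and the Haar factors (which, after taking second moments, impose pairwise delta constraints on the interior intermediate indices). One might hope for a shortcut: since $Q_q$ is a product of $\beta$-orthogonal matrices it is itself $\beta$-orthogonal, so $\sum_J|(Q_q)_{I,J}|^2=1$ deterministically and hence $\sum_J\mathbb{E}|(Q_q)_{I,J}|^2=1$ row by row. However, this only yields the average variance $1/d^N$ across each row, not the entrywise equality, because the $\mathbb{I}_d$ factors break the $O(d^N)$-symmetry that would otherwise make all entries of $Q_q$ identically distributed. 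The direct expansion above therefore seems to be the cleanest route.
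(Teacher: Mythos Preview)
Your proof is correct and takes essentially the same approach as the paper: factor $Q_q=(Q_q^{(A)})^\dagger Q_q^{(B)}$, use independence of the $A$ and $B$ blocks and of the local Haar factors, invoke the second-moment identity $\mathbb{E}[Q_{ab}\overline{Q_{cd}}]=\delta_{ac}\delta_{bd}/d^2$ for each factor, and count. The only difference is packaging: the paper carries out the same computation at the matrix level, writing $\mathbb{E}(Q_q\circ Q_q)=\mathbb{E}(Q_q^{(A)}\circ Q_q^{(A)})^T\,\mathbb{E}(Q_q^{(B)}\circ Q_q^{(B)})$ after arguing the cross terms vanish, and then evaluates the right-hand side via tensor identities for the all-ones matrix $J$ (e.g.\ $(J_i/i)\otimes(J_i/i)=J_{i^2}/i^2$), arriving at $d^{-N}J_{d^N}$; your explicit multi-index expansion, delta-collapse, and count $d^{N-2}/d^{2(N-1)}=d^{-N}$ is the entrywise version of the same calculation.
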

\begin{proof}
\begin{onehalfspace}
Here expectations are taken with respect to the random matrix $Q_{q}$
which is built from local Haar measure matrices by Eq. \ref{eq:OddEven}.
The fact that $\mathbb{E}\left(Q_{q}^{\left(A\right)}\right)=\mathbb{E}\left(Q_{q}^{\left(B\right)}\right)=0_{d^{N}}$
follows from the Haar distribution of local terms. Thus $\mathbb{E}\left(Q_{q}\right)=0$
by independence of $Q_{q}^{\left(A\right)}$ and $Q_{q}^{\left(B\right)}$.
Further, each element in $Q_{q}$ involves a dot product between columns
of $Q_{q}^{\left(A\right)}$ and $Q_{q}^{\left(B\right)}$. In every
given column of $Q_{q}^{\left(A\right)}$ any nonzero entry is a distinct
product of entries of local $Q's$ (see Eq.\ref{eq:OddEven}). For
example the expectation value of the $1,1$ entry is $\mathbb{E}\left(q_{i,1}^{\left(A\right)}q_{j,1}^{\left(A\right)}q_{i,1}^{\left(B\right)}q_{j,1}^{\left(B\right)}\right)=\mathbb{E}\left(q_{i,1}^{\left(A\right)}q_{j,1}^{\left(A\right)}\right)\mathbb{E}\left(q_{i,1}^{\left(B\right)}q_{j,1}^{\left(B\right)}\right)$.
Because of the Haar measure of the local terms, this expectation is
zero unless $i=j$. We then have that

\begin{equation}
\begin{array}{c}
\mathbb{E}\left(Q_{q}\circ Q_{q}\right)=\mathbb{E}\left(Q_{q}^{\left(A\right)}\circ Q_{q}^{\left(A\right)}\right)^{T}\mathbb{E}\left(Q_{q}^{\left(B\right)}\circ Q_{q}^{\left(B\right)}\right)=\\
\begin{cases}
\begin{array}{c}
\left(\left[\bigotimes_{i=1}^{\left(N-1\right)/2}d^{-2}J_{d^{2}}\right]\otimes\mathbb{I}_{d}\right)\left(\mathbb{I}_{d}\otimes\left[\bigotimes_{i=1}^{\left(N-1\right)/2}d^{-2}J_{d^{2}}\right]\right)\qquad N\;\textrm{odd}\\
\left(\bigotimes_{i=1}^{N/2}d^{-2}J_{d^{2}}\right)\left(\mathbb{I}_{d}\otimes\left[\bigotimes_{i=1}^{N/2-1}d^{-2}J_{d^{2}}\right]\otimes\mathbb{I}_{d}\right)\qquad\qquad\quad N\;\mbox{even}
\end{array}\end{cases}\\
=d^{-N}J_{d^{N}},
\end{array}
\end{equation}
where $J_{i}=i\times i$ matrix of all ones. We use facts such as
$\left(J_{i}/i\right)^{2}=\left(J_{i}/i\right)$, $\left(J_{i}/i\right)\otimes\left(J_{i}/i\right)=\left(J_{i^{2}}/i^{2}\right)$
and the variance of the elements of an $i\times i$ $\beta-$Haar
matrix is $1/i$.\end{onehalfspace}

\end{proof}
\begin{onehalfspace}

\section{\label{sec:Theory-of-Isotropic}Theory of Isotropic Entanglement}
\end{onehalfspace}

\begin{onehalfspace}

\subsection{\label{sub:convex}Isotropic Entanglement as the Combination of Classical
and Isotropic}
\end{onehalfspace}

\begin{onehalfspace}
We create a ``Slider'' based on the fourth moment. The moment $m_{k}$
of a random variable defined in terms of its density is $m_{k}=\int x^{k}d\nu_{M}.$
For the eigenvalues of an $m\times m$ random matrix, this is $\frac{1}{m}\mathbb{E}\mbox{Tr}M^{k}.$
In general, the moments of the classical sum and the free sum are
different, but the first three moments, $m_{1},\ m_{2},$ and $m_{3}$
are theoretically equal \cite[p. 191]{speicher}. Further, to anticipate
our target problem, the first three moments of the quantum eigenvalues
are also equal to that of the iso and the classical (The Departure
and the Three Moments Matching theorems in Section \ref{sub:departure}).
These moments are usually encoded as the mean, variance, and skewness.

We propose to use the fourth moment (or the excess kurtosis) to choose
a correct $p$ from a sliding hybrid sum:

\begin{equation}
d\nu^{q}\approx d\nu^{IE}=pd\nu^{c}+(1-p)d\nu^{iso}\label{eq:dNuq}
\end{equation}

Therefore, we find $p$ that expresses the kurtosis of the quantum
sum $(\gamma_{2}^{q}$) in terms of the kurtoses of the classical
($\gamma_{2}^{c}$) and isotropic ($\gamma_{2}^{iso}$) sums:

\begin{equation}
\gamma_{2}^{q}=p\gamma_{2}^{c}+\left(1-p\right)\gamma_{2}^{iso}\Rightarrow\qquad p=\frac{\gamma_{2}^{q}-\gamma_{2}^{iso}}{\gamma_{2}^{c}-\gamma_{2}^{iso}}.\label{eq:convex}
\end{equation}
Recall that the kurtosis $\gamma_{2}\equiv\frac{m_{4}}{\sigma^{4}}$,
where $\sigma^{2}$ is the variance. Hence kurtosis is the correct
statistical quantity that encodes the fourth moments:

\begin{equation}
m_{4}^{c}=\frac{1}{d^{N}}\mathbb{E}\textrm{Tr}\left(A+\Pi^{T}B\Pi\right)^{4},\; m_{4}^{iso}=\frac{1}{d^{N}}\mathbb{E}\textrm{Tr}\left(A+Q^{T}BQ\right)^{4},\; m_{4}^{q}=\frac{1}{d^{N}}\mathbb{E}\textrm{Tr}\left(A+Q_{q}^{T}BQ_{q}\right)^{4},\label{eq:moments}
\end{equation}
where $\Pi$ is a random uniformly distributed permutation matrix,
$Q$ is a $\beta-$Haar measure orthogonal matrix of size $d^{N}$,
and $Q_{q}$ is given by Eq. \ref{eq:OddEven}. \textit{Unless stated
otherwise, in the following the expectation values are taken with
respect to random eigenvalues $A$ and $B$ and eigenvectors. The
expectation values over the eigenvectors are taken with respect to
random permutation $\Pi$, $\beta-$Haar $Q$ or $Q_{q}$ matrices
for classical, isotropic or quantum cases respectively. }
\end{onehalfspace}

\begin{onehalfspace}

\subsection{\label{sub:departure}The Departure and The Matching Three Moments
Theorems}
\end{onehalfspace}

\begin{onehalfspace}
In general we have the $i^{\textrm{th}}$ moments:

\begin{eqnarray*}
m_{i}^{iso} & = & \frac{1}{m}\mathbb{E}\textrm{Tr}\left(A+Q^{T}BQ\right)^{i}\\
m_{i}^{q} & = & \frac{1}{m}\mathbb{E}\textrm{Tr}\left(A+Q_{q}^{T}BQ_{q}\right)^{i},\mbox{ and }\\
m_{i}^{c} & = & \frac{1}{m}\mathbb{E}\textrm{Tr}\left(A+\Pi^{T}B\Pi\right)^{i}.
\end{eqnarray*}
where $m\equiv d^{N}$. If we expand the moments above we find some
terms can be put in the form $\mathbb{E}\textrm{Tr}\left(A^{m_{1}}Q_{\bullet}^{T}B^{m_{2}}Q_{\bullet}\right)$
and the remaining terms can be put in the form $\mathbb{E}\textrm{Tr}\left\{ \ldots Q_{\bullet}^{T}B^{\ge1}Q_{\bullet}A^{\ge1}Q_{\bullet}^{T}B^{\ge1}Q_{\bullet}\ldots\right\} .$
The former terms we denote \textit{non-departing}; the remaining terms
we denote \textit{departing}.

For example, when $i=4$, 

\begin{eqnarray}
m_{4}^{iso} & = & \frac{1}{m}\mathbb{E}\left\{ \textrm{Tr}\left[A^{4}+4A^{3}Q^{T}BQ+4A^{2}Q^{T}B^{2}Q+4AQ^{T}B^{3}Q+\mathbf{\underline{2\left(\mathbf{AQ^{T}BQ}\right)^{2}}}+B^{4}\right]\right\} \label{eq:fourthMoments}\\
m_{4}^{q} & = & \frac{1}{m}\mathbb{E}\left\{ \textrm{Tr}\left[A^{4}+4A^{3}Q_{q}^{T}BQ_{q}+4A^{2}Q_{q}^{T}B^{2}Q_{q}+4AQ_{q}^{T}B^{3}Q_{q}+\underline{\mathbf{2\left(AQ_{q}^{T}BQ_{q}\right)^{2}}}+B^{4}\right]\right\} \nonumber \\
m_{4}^{c} & = & \frac{1}{m}\mathbb{E}\left\{ \textrm{Tr}\left[A^{4}+4A^{3}\Pi^{T}B\Pi+4A^{2}\Pi^{T}B^{2}\Pi+4A\Pi^{T}B^{3}\Pi+\mathbf{\underline{2\left(A\Pi^{T}B\Pi\right)^{2}}}+B^{4}\right]\right\} ,\nonumber 
\end{eqnarray}
where the only departing terms and the corresponding classical term
are shown as underlined and bold faced.
\end{onehalfspace}
\begin{thm*}
\begin{onehalfspace}
\textbf{\textup{\label{thm:(The-Departure-Theorem)}(The Departure
Theorem)}}\textbf{ }The moments of the quantum, isotropic and classical
sums differ only in the departing terms: $\mathbb{E}\textrm{Tr}\left\{ \ldots Q_{\bullet}^{T}B^{\ge1}Q_{\bullet}A^{\ge1}Q_{\bullet}^{T}B^{\ge1}Q_{\bullet}\ldots\right\} .$ \end{onehalfspace}
\end{thm*}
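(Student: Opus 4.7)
The plan is to show that for any $i$, when we expand
\[
m_i^{\bullet} \;=\; \frac{1}{m}\,\mathbb{E}\,\mathrm{Tr}\bigl(A + Q_{\bullet}^{T}BQ_{\bullet}\bigr)^{i}
\]
into monomials in $A$ and $Q_{\bullet}^{T}BQ_{\bullet}$, every \emph{non-departing} monomial contributes the same value regardless of whether $Q_{\bullet}$ is a uniform permutation $\Pi$, a $\beta$-Haar matrix $Q$, or the structured $Q_q$ of Eq.~\ref{eq:OddEven}. Using the identity $Q_{\bullet}^{T}BQ_{\bullet}\cdot Q_{\bullet}^{T}BQ_{\bullet}=Q_{\bullet}^{T}B^{2}Q_{\bullet}$ (valid because all three $Q_{\bullet}$ are orthogonal) and cyclicity of the trace, every non-departing monomial reduces to the canonical form $\mathrm{Tr}\bigl(A^{m_{1}}Q_{\bullet}^{T}B^{m_{2}}Q_{\bullet}\bigr)$ for some nonnegative $m_{1},m_{2}$. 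So the entire theorem reduces to the claim
\[
\mathbb{E}\,\mathrm{Tr}\bigl(A^{m_{1}}Q_{\bullet}^{T}B^{m_{2}}Q_{\bullet}\bigr)\;\text{is the same for}\;\bullet\in\{c,iso,q\}.
\]

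The next step is to exploit the fact that $A$ and $B$ are diagonal matrices of eigenvalues. Writing $D:=B^{m_{2}}$, which is still diagonal, we get
\[
[Q_{\bullet}^{T}DQ_{\bullet}]_{ii} \;=\; \sum_{k} [Q_{\bullet}]_{ki}^{\,2}\, D_{kk},
\]
so
\[
\mathrm{Tr}\bigl(A^{m_{1}}Q_{\bullet}^{T}DQ_{\bullet}\bigr)
\;=\;\sum_{i,k}\,[A^{m_{1}}]_{ii}\,[Q_{\bullet}]_{ki}^{\,2}\,D_{kk}.
\]
Taking the expectation over $Q_{\bullet}$ first (conditioning on the eigenvalues $A,B$, which are independent of $Q_{\bullet}$ under all three models), it therefore suffices to show that $\mathbb{E}[Q_{\bullet}]_{ki}^{\,2}=1/m$ for every $k,i$ and every $\bullet$, where $m=d^{N}$. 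This independence of the entries' second moments is precisely the content we need.

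I then verify this second-moment identity case by case. For $\Pi$, the entry $[\Pi]_{ki}$ is a Bernoulli indicator of $\pi(i)=k$, so $\mathbb{E}[\Pi]_{ki}^{\,2}=\Pr[\pi(i)=k]=1/m$. For the $\beta$-Haar matrix $Q$ of size $m$, isotropy gives $\mathbb{E}[Q]_{ki}^{\,2}=1/m$. For $Q_{q}$ this is exactly Lemma \ref{lem:var}. Combining yields the common value
\[
\mathbb{E}\,\mathrm{Tr}\bigl(A^{m_{1}}Q_{\bullet}^{T}B^{m_{2}}Q_{\bullet}\bigr)\;=\;\tfrac{1}{m}\,\mathbb{E}\bigl[\mathrm{Tr}(A^{m_{1}})\cdot\mathrm{Tr}(B^{m_{2}})\bigr]
\]
in all three models, and summing over monomials shows that the non-departing contributions to $m_{i}^{c}$, $m_{i}^{iso}$ and $m_{i}^{q}$ coincide; the departing ones are what remain, proving the theorem.

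The main obstacle is conceptual rather than computational: one must be convinced that the reduction to a single $Q_{\bullet}^{T}B^{m_{2}}Q_{\bullet}$ block is exhaustive, i.e., that every monomial which is \emph{not} of this form genuinely requires joint higher-order correlations $\mathbb{E}\bigl[[Q_{\bullet}]_{k_{1}i_{1}}\cdots[Q_{\bullet}]_{k_{s}i_{s}}\bigr]$ with $s\ge 4$. These higher correlations are precisely where the three ensembles differ, which is why the equality stops at non-departing terms; in particular, the argument above uses only second-moment information about $Q_{\bullet}$, and Lemma \ref{lem:var} supplies exactly that for the intricate $Q_{q}$ case. Any attempt to extend the agreement further would have to control the full joint distribution of entries of $Q_{q}$, which has a tensor structure very different from Haar or permutation matrices.
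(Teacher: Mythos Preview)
Your proof is correct and follows essentially the same route as the paper: reduce every non-departing monomial to the canonical form $\mathrm{Tr}(A^{m_1}Q_\bullet^T B^{m_2}Q_\bullet)$ via orthogonality and cyclicity, then use that $\mathbb{E}[Q_\bullet]_{ki}^{2}=1/m$ in all three ensembles (symmetry for $\Pi$ and $Q$, Lemma~\ref{lem:var} for $Q_q$) to conclude equality. Your write-up is slightly more explicit about the reduction step and the resulting formula $\tfrac{1}{m}\mathbb{E}[\mathrm{Tr}(A^{m_1})\mathrm{Tr}(B^{m_2})]$, but the argument is the same.
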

\begin{proof}
\begin{onehalfspace}
Below the repeated indices are summed over. If $A$ and $B$ are any
diagonal matrices, and $Q_{\bullet}$ is $Q$ or $Q_{q}$ or $\Pi$
of size $m\times m$ then $\mathbb{E}\left(q_{ij}^{2}\right)=1/m$
, by symmetry and by Lemma \ref{lem:var} respectively. Since the
$\mathbb{E}\textrm{Tr}\left(AQ_{\bullet}^{T}BQ_{\bullet}\right)=\mathbb{E}\left(q_{ij}^{2}a_{i}b_{j}\right)$,
where expectation is taken over randomly ordered eigenvalues and eigenvectors;
the expected value is $m^{2}\left(\frac{1}{m}\right)\mathbb{E}\left(a_{i}b_{j}\right)$
for any $i$ or $j$. Hence, $\frac{1}{m}\mathbb{E}\textrm{Tr}\left(AQ_{\bullet}^{T}BQ_{\bullet}\right)=\mathbb{E}\left(a_{i}b_{j}\right)=\mathbb{E}\left(a_{i}\right)\mathbb{E}\left(b_{j}\right)$,
which is equal to the classical value. The first equality is implied
by permutation invariance of entries in $A$ and $B$ and the second
equality follows from the independence of $A$ and $B$. \end{onehalfspace}

\end{proof}
\begin{onehalfspace}
Therefore, the three cases differ only in the terms $\frac{2}{m}\mathbb{E}\textrm{Tr}\left(AQ^{T}BQ\right)^{2}$,
$\frac{2}{m}\mathbb{E}\textrm{Tr}\left(AQ_{q}^{T}BQ_{q}\right)^{2}$
and $\frac{2}{m}\mathbb{E}\textrm{Tr}\left(A\Pi^{T}B\Pi\right)^{2}$
in Eq. \ref{eq:fourthMoments}.
\end{onehalfspace}
\begin{thm*}
\begin{onehalfspace}
\textbf{\textup{\label{thm:(The-Matching-Three}(The Matching Three
Moments Theorem) }}The first three moments of the quantum, iso and
classical sums are equal.\end{onehalfspace}
\end{thm*}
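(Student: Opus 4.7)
The plan is to deduce this theorem as a direct corollary of the Departure Theorem together with the structural fact that departing terms cannot occur at moments of order less than four.

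First I would handle the combinatorial step. Expand $m_i^\bullet = \frac{1}{m}\mathbb{E}\,\textrm{Tr}(A + Q_\bullet^T B Q_\bullet)^i$ for $i = 1, 2, 3$, and use the cyclicity of the trace together with the identity $Q_\bullet Q_\bullet^T = \mathbb{I}$ (which holds whether $Q_\bullet$ is a uniform permutation $\Pi$, a $\beta$-Haar matrix $Q$, or the structured orthogonal matrix $Q_q$) to collapse adjacent $Q_\bullet Q_\bullet^T$ pairs. For $i=1$, one obtains $\mathbb{E}\,\textrm{Tr}(A) + \mathbb{E}\,\textrm{Tr}(B)$. For $i=2$, the cross terms combine into $2\,\mathbb{E}\,\textrm{Tr}(AQ_\bullet^T B Q_\bullet)$. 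For $i=3$, after combining terms cyclically one gets $\mathbb{E}\,\textrm{Tr}[A^3 + 3A^2 Q_\bullet^T B Q_\bullet + 3 A Q_\bullet^T B^2 Q_\bullet + B^3]$, because any appearance of $Q_\bullet^T B Q_\bullet Q_\bullet^T B Q_\bullet$ collapses to $Q_\bullet^T B^2 Q_\bullet$.

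Next I would observe that every surviving summand is of the non-departing form $\mathbb{E}\,\textrm{Tr}(A^{m_1} Q_\bullet^T B^{m_2} Q_\bullet)$ or a pure power of $A$ or $B$; no summand has the sandwich pattern $\ldots Q_\bullet^T B^{\geq 1} Q_\bullet A^{\geq 1} Q_\bullet^T B^{\geq 1} Q_\bullet \ldots$ required for a departing term. Indeed, a departing term requires at least two separated $Q_\bullet^T B Q_\bullet$ blocks with an $A$ wedged between them, so its total degree in $A$ and $B$ is at least four. The argument in the proof of the Departure Theorem shows that each non-departing term evaluates identically across all three models: since $\mathbb{E}(q_{ij}^2) = 1/m$ holds for $\Pi$ (by symmetry), for $Q$ (Haar), and for $Q_q$ (by Lemma \ref{lem:var}), one gets
\begin{equation}
\frac{1}{m}\mathbb{E}\,\textrm{Tr}(A^{m_1} Q_\bullet^T B^{m_2} Q_\bullet) = \mathbb{E}(a_i^{m_1})\,\mathbb{E}(b_j^{m_2})
\end{equation}
using permutation invariance of the symmetric laws of $A$ and $B$ and their independence.

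Combining these two steps, every summand contributing to $m_1^\bullet$, $m_2^\bullet$, $m_3^\bullet$ takes the same value in the classical, isotropic, and quantum cases, so $m_i^c = m_i^{iso} = m_i^q$ for $i = 1, 2, 3$. There is no real obstacle here; the only thing to be careful about is ensuring the cyclic collapses at $i=3$ are applied consistently so that no hidden $Q_\bullet^T B Q_\bullet A Q_\bullet^T B Q_\bullet$ pattern is missed, and that the same $\mathbb{E}(q_{ij}^2) = 1/m$ identity underlying the Departure Theorem is genuinely available for $Q_q$, which is exactly the content of Lemma \ref{lem:var}.
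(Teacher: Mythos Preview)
Your proposal is correct and follows essentially the same approach as the paper: expand the first three moments using cyclicity of the trace and $Q_\bullet Q_\bullet^T=\mathbb{I}$, observe that every surviving term is non-departing, and then invoke the Departure Theorem (underpinned by Lemma~\ref{lem:var} for the $Q_q$ case) to conclude equality. Your write-up is slightly more explicit about why no departing term can appear below order four, but the argument is the paper's.
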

\begin{proof}
\begin{onehalfspace}
The first three moments are
\begin{equation}
\begin{array}{c}
m_{1}^{\left(\bullet\right)}=\frac{1}{m}\mathbb{E}\textrm{Tr}\left(A+B\right)\\
m_{2}^{\left(\bullet\right)}=\frac{1}{m}\mathbb{E}\textrm{Tr}\left(A+Q_{\bullet}^{T}BQ_{\bullet}\right)^{2}=\frac{1}{m}\mathbb{E}\textrm{Tr}\left(A^{2}+2AQ_{\bullet}^{T}BQ_{\bullet}+B^{2}\right)\\
m_{3}^{\left(\bullet\right)}=\frac{1}{m}\mathbb{E}\textrm{Tr}\left(A+Q_{\bullet}^{T}BQ_{\bullet}\right)^{3}=\frac{1}{m}\mathbb{E}\textrm{Tr}\left(A^{3}+3A^{2}Q_{\bullet}^{T}BQ_{\bullet}+3AQ_{\bullet}^{T}B^{2}Q_{\bullet}+B^{3}\right),
\end{array}
\end{equation}
where $Q_{\bullet}$ is $Q$ and $Q_{q}$ for the iso and the quantum
sums respectively and we used the familiar trace property $\textrm{Tr}(M_{1}M_{2})=\textrm{Tr}(M_{2}M_{1})$.
The equality of the first three moments of the iso and quantum with
the classical follows from The Departure Theorem.\end{onehalfspace}

\end{proof}
\begin{onehalfspace}
Furthermore, in the expansion of any of the moments $>4$ all the
non-departing terms are exactly captured by IE. These terms are equal
to the corresponding terms in the classical and the isotropic and
therefore equal to any linear combination of them. The departing terms
in higher moments (i.e.,$>4$) that are approximated by IE, we conjecture
are of little relevance. For example, the fifth moment has only two
terms (shown in bold) in its expansion that are departing: 

\begin{equation}
\begin{array}{c}
m_{5}=\frac{1}{m}\mathbb{E}\textrm{Tr}\left(A^{5}+5A^{4}Q_{\bullet}^{T}BQ_{\bullet}+\mathit{5}A^{3}Q_{\bullet}^{T}B^{2}Q_{\bullet}+\mathit{5}A^{2}Q_{\bullet}^{T}B^{3}Q_{\bullet}+\mathbf{\underline{5A\left(AQ_{\bullet}^{T}BQ_{\bullet}\right)^{2}}+}\right.\\
\left.\mathbf{\underline{5\left(AQ_{\bullet}^{T}BQ_{\bullet}\right)^{2}Q_{\bullet}^{T}BQ_{\bullet}}}+\mathit{5}AQ_{\bullet}^{T}B^{4}Q_{\bullet}+B^{5}\right)
\end{array}\label{eq:fifthmoments}
\end{equation}

\begin{table}
\begin{centering}
\begin{tabular}{|c|c|c|c|c|c|}
\hline 
\noalign{\vskip2sp}
number of & number of odds  & number of odds  & size of & Number of  & Dimension of\tabularnewline
\noalign{\vskip2sp}
 sites & or evens ($N$ odd) & or evens ($N$ even) &  $H$ & copies & quidits\tabularnewline
\hline 
$N$ & $k=\frac{N-1}{2}$ & $k_{\mbox{odd}\vphantom{\mbox{even}}}=\frac{N}{2},\; k_{\mbox{even}\mbox{\ensuremath{\vphantom{odd}}}}=\frac{N-2}{2}$ & $m=d^{N}$ & $t$ & $d$\tabularnewline
\hline 
\end{tabular}
\par\end{centering}

\vspace{0.3in}

\begin{centering}
\begin{tabular}{|c|c|c|c|c|c|c|}
\hline 
size of local terms & $l^{th}$ moment & $l^{th}$ cumulant & mean & variance & skewness & kurtosis\tabularnewline
\hline 
\hline 
$n=d^{2}$ & $m_{l}$ & $\kappa_{l}$ & $\mu$ & $\sigma^{2}$ & $\gamma_{1}$ & $\gamma_{2}$\tabularnewline
\hline 
\end{tabular}
\par\end{centering}

\caption{\label{tab:parameters}Notation}
\end{table}

By the Departure Theorem the numerator in Eq. \ref{eq:convex} becomes,

\begin{equation}
\gamma_{2}^{q}-\gamma_{2}^{iso}=\frac{\kappa_{4}^{q}-\kappa_{4}^{iso}}{\sigma^{4}}=\frac{2}{m}\frac{\mathbb{E}\left\{ \textrm{Tr}\left[\left(AQ_{q}^{T}BQ_{q}\right)^{2}-\left(AQ^{T}BQ\right)^{2}\right]\right\} }{\sigma^{4}}\label{eq:numer}
\end{equation}
and the denominator in Eq. \ref{eq:convex} becomes, 

\begin{equation}
\gamma_{2}^{c}-\gamma_{2}^{iso}=\frac{\kappa_{4}^{c}-\kappa_{4}^{iso}}{\sigma^{4}}=\frac{2}{m}\frac{\mathbb{E}\left\{ \textrm{Tr}\left[\left(A\Pi^{T}B\Pi\right)^{2}-\left(AQ^{T}BQ\right)^{2}\right]\right\} }{\sigma^{4}},\label{eq:denom}
\end{equation}
where as before, $Q$ is a $\beta-$Haar measure orthogonal matrix
of size $m=d^{N}$, $Q_{q}=\left(Q_{q}^{(A)}\right)^{T}Q_{q}^{(B)}$
given by Eq. \ref{eq:OddEven} and $\kappa_{4}^{\bullet}$ denote
the fourth cumulants. Therefore, evaluation of $p$ reduces to the
evaluation of the right hand sides of Eqs. \ref{eq:numer} and \ref{eq:denom}.

Below we do not want to restrict ourselves to only chains with odd
number of sites and we need to take into account the multiplicity
of the eigenvalues as a result of taking the tensor product with identity.
It is convenient to denote the size of the matrices involved by $m=d^{N}=tn^{k}$,
where $n=d^{2}$ and $k=\frac{N-1}{2}$ and $t$ is the number of
copies (Section \ref{sub:Free} and Table \ref{tab:parameters}). 
\end{onehalfspace}

\begin{onehalfspace}

\subsection{Distribution of $A$ and $B$ }
\end{onehalfspace}

\begin{onehalfspace}
The goal of this section is to express the moments of the entries
of $A$ and $B$ (e.g., $m_{2}^{A}$ and $m_{1,1}^{A}$) in terms
of the moments of the local terms (e.g for odd local terms $m_{2}^{\mbox{odd}},m_{11}^{\mbox{odd}}$).
Note that $A$ and $B$ are independent. The odd summands that make
up $A$ all commute and therefore can be locally diagonalized to give
the diagonal matrix $A$ (similarly for $B$),

\begin{eqnarray}
A & = & \sum_{i=1,3,\cdots}^{N-2}\mathbb{I}\otimes\Lambda_{i}\otimes\mathbb{I}\label{eq:A_and_B}\\
B & = & \sum_{i=2,4,\cdots}^{N-1}\mathbb{I}\otimes\Lambda_{i}\otimes\mathbb{I},\nonumber 
\end{eqnarray}
where $\Lambda_{i}$ are of size $d^{2}$ and are the diagonal matrices
of the local eigenvalues. 

The diagonal matrices $A$ and $B$ are formed by a direct sum of
the local eigenvalues of odds and evens respectively. For open boundary
conditions (OBC) each entry has a multiplicity given by Table \ref{tab:multiplicity}.

\begin{table}[H]
\begin{centering}
\begin{tabular}{|c|c|c|}
\hline 
OBC & $N$ odd & $N$ even\tabularnewline
\hline 
\hline 
$A$ & $d$ & $1$\tabularnewline
\hline 
$B$ & $d$ & $d^{2}$\tabularnewline
\hline 
\end{tabular}
\par\end{centering}

\centering{}\caption{\label{tab:multiplicity}The multiplicity of terms in $A$ and $B$
for OBC. For closed boundary conditions there is no repetition.}
\end{table}

Comment: We emphasize that $A$ and $B$ are independent of the eigenvector
structures. In particular, $A$ and $B$ are the same among the three
cases of isotropic, quantum and classical. 

We calculate the moments of $A$ and $B$. Let us treat the second
moment of $A$ ($B$ is done the same way). By the permutation invariance
of entries in $A$

\begin{eqnarray}
m_{2}^{A}\equiv\mathbb{E}\left(a_{1}^{2}\right) & = & \mathbb{E}\left(\lambda_{i_{1}}^{\left(1\right)}+\cdots+\lambda_{i_{k}}^{\left(k\right)}\right)^{2}\nonumber \\
 & = & \mathbb{E}\left[k\left(\lambda^{2}\right)+k\left(k-1\right)\lambda^{\left(1\right)}\lambda^{\left(2\right)}\right]\label{eq:m_2}\\
 & = & km_{2}^{\mbox{odd}}+k\left(k-1\right)m_{1,1}^{\mbox{odd}}\nonumber 
\end{eqnarray}
where expectation is taken over randomly chosen local eigenvalues,
$m_{2}^{\mbox{odd}}\equiv\mathbb{E}\left(\lambda_{i}^{2}\right)$
and $m_{1,1}^{\mbox{odd}}\equiv\mathbb{E}\left(\lambda_{i}\lambda_{j}\right)$
for some uniformly chosen $i$ and $j$ with $i\ne j$. The permutation
invariance assumption implies $\mathbb{E}\left(a_{i}^{2}\right)=\mathbb{E}\left(a_{1}^{2}\right)$
for all $i=1\cdots m$.

Comment: The key to this argument giving $m_{2}^{A}$ is that the
indices are not sensitive to the copying that results from the tensor
product with $\mathbb{I}_{d}$ at the boundaries.

Next we calculate the correlation between two diagonal terms, namely
$m_{1,1}^{A}\equiv\mathbb{E}\left(a_{i}a_{j}\right)$ for $i\neq j$.
We need to incorporate the multiplicity, denoted by $t$, due to the
tensor product with an identity matrix at the end of the chain,

\begin{eqnarray}
m_{1,1}^{A} & = & \frac{1}{m\left(m-1\right)}\mathbb{E}\left\{ \left(\sum_{i_{1},\cdots,i_{k}=1}^{n}\lambda_{i_{1}}^{\left(1\right)}+\cdots+\lambda_{i_{k}}^{\left(k\right)}\right)^{2}-\sum_{i_{1},\cdots,i_{k}=1}^{n}\left(\lambda_{i_{1}}^{\left(1\right)}+\cdots+\lambda_{i_{k}}^{\left(k\right)}\right)^{2}\right\} \label{eq:elemGeneral}\\
 & = & k\left(k-1\right)\mathbb{E}\left(\lambda\right)^{2}+k\left\{ \textrm{prob}\left(\lambda^{2}\right)\mathbb{E}\left(\lambda^{2}\right)+\textrm{prob}\left(\lambda_{1}\lambda_{2}\right)\mathbb{E}\left(\lambda_{1}\lambda_{2}\right)\right\} \nonumber \\
 & = & k\left(k-1\right)m_{2}^{\mbox{odd}}+\frac{k}{m-1}\left\{ \left(tn^{k-1}-1\right)m_{2}^{\mbox{odd}}+\left(tn^{k-1}\left(n-1\right)\right)m_{1,1}^{\mbox{odd}}\right\} \nonumber 
\end{eqnarray}
where, $\textrm{prob}\left(\lambda^{2}\right)=\frac{tn^{k-1}-1}{m-1}$
and $\textrm{prob}\left(\lambda_{1}\lambda_{2}\right)=\frac{tn^{k-1}\left(n-1\right)}{m-1}$.
Similarly for $B$.\bigskip{}

\end{onehalfspace}

\begin{onehalfspace}

\subsection{\label{sub:Isotropic-theory} Evaluation and Universality of $p=\frac{\gamma_{2}^{q}-\gamma_{2}^{iso}}{\gamma_{2}^{c}-\gamma_{2}^{iso}}$}
\end{onehalfspace}

\begin{onehalfspace}
Recall the definition of $p$; from Eqs. \ref{eq:convex}, \ref{eq:numer}
and \ref{eq:denom} we have,

\begin{equation}
1-p=\frac{\mathbb{E}\mbox{Tr}\left(A\Pi^{T}B\Pi\right)^{2}-\mathbb{E}\textrm{Tr}\left(AQ_{q}^{T}BQ_{q}\right)^{2}}{\mathbb{E}\mbox{Tr}\left(A\Pi^{T}B\Pi\right)^{2}-\mathbb{E}\textrm{Tr}\left(AQ^{T}BQ\right)^{2}}.\label{eq:1-p}
\end{equation}

The classical case

\begin{equation}
\frac{1}{m}\mathbb{E}\mbox{Tr}\left(A\Pi^{T}B\Pi\right)^{2}=\frac{1}{m}\mathbb{E}\sum_{i=1}^{m}a_{i}^{2}b_{i}^{2}=\mathbb{E}\left(a_{i}^{2}\right)\mathbb{E}\left(b_{i}^{2}\right)=m_{2}^{A}m_{2}^{B}.\label{eq:classical_depart}
\end{equation}
\begin{table}
\begin{centering}
\begin{tabular}{|c|c|c|}
\hline 
moments & expectation values & count\tabularnewline
\hline 
\hline 
$m_{2}^{2}$ & $\mathbb{E}\left(\left|q_{i,j}\right|^{4}\right)=\frac{\beta+2}{m\left(m\beta+2\right)}$ & $m^{2}$\tabularnewline
\hline 
$m_{2}m_{11}$ & $\mathbb{E}\left(\left|q_{1,1}q_{1,2}\right|^{2}\right)=\frac{\beta}{m\left(m\beta+2\right)}$ & $2m^{2}\left(m-1\right)$\tabularnewline
\hline 
$\left(m_{11}\right)^{2}$ & $\mathbb{E}\left(q_{1,1}\overline{q_{1,2}}\overline{q_{2,1}}q_{2,2}\right)=-\frac{\beta}{m\left(m\beta+2\right)\left(m-1\right)}$ & $m^{2}\left(m-1\right)^{2}$\tabularnewline
\hline 
 & $\mathbb{E}\left(q_{13}^{2}q_{24}^{2}\right)=\frac{\beta\left(n-1\right)+2}{n\left(n\beta+2\right)\left(n-1\right)}$ & \tabularnewline
\hline 
\end{tabular}
\par\end{centering}

\centering{}\caption{\label{tab:HaarExp} The expectation values and counts of colliding
terms in $Q$ when it is $\beta-$Haar distributed. In this section
we use the first row; we include the last three rows for the calculations
in the appendix.}
\end{table}

Comment: Strictly speaking after the first equality we must have used
$b_{\pi_{i}}$ instead of $b_{i}$ but we simplified the notation
as they are the same in an expectation sense.

The general form for the denominator of Eq. \ref{eq:1-p} is

\begin{equation}
\frac{1}{m}\mathbb{E}\textrm{Tr}\left[\left(A\Pi^{T}B\Pi\right)^{2}-\left(AQ^{T}BQ\right)^{2}\right]=\frac{1}{m}\mathbb{E}\left\{ a_{l}^{2}b_{l}^{2}-a_{i}a_{k}b_{j}b_{p}\left(q_{ji}q_{jk}q_{pk}q_{pi}\right)\right\} .\label{eq:isotropic}
\end{equation}
It's worth noting that the arguments leading to Eq. \ref{eq:Iso-Classical_FINAL}
hold even if one fixes $A$ and $B$ and takes expectation values
over $\Pi$ and a permutation invariant $Q$ whose entries have the
same expectation value. The right hand side of Eq. \ref{eq:Iso-Classical_FINAL}
is a homogeneous polynomial of order two in the entries of $A$ and
$B$; consequently it necessarily has the form

\[
\frac{1}{m}\mathbb{E}\textrm{Tr}\left[\left(A\Pi^{T}B\Pi\right)^{2}-\left(AQ^{T}BQ\right)^{2}\right]=c_{1}\left(B,Q\right)m_{2}^{A}+c_{2}\left(B,Q\right)m_{1,1}^{A}
\]
but Eq. \ref{eq:isotropic} must be zero for $A=I$, for which $m_{2}^{A}=m_{1,1}^{A}=1$.
This implies that $c_{1}=-c_{2}$, allowing us to factor out $\left(m_{2}^{A}-m_{1,1}^{A}\right)$.
Similarly, the homogeneity and permutation invariance of $B$ implies,

\[
\frac{1}{m}\mathbb{E}\textrm{Tr}\left[\left(A\Pi^{T}B\Pi\right)^{2}-\left(AQ^{T}BQ\right)^{2}\right]=\left(m_{2}^{A}-m_{1,1}^{A}\right)\left(D_{1}\left(Q\right)m_{2}^{B}+D_{2}\left(Q\right)m_{1,1}^{B}\right).
\]
The right hand side should be zero for $B=I$, whereby we can factor
out $\left(m_{2}^{B}-m_{1,1}^{B}\right)$

\begin{equation}
\frac{1}{m}\mathbb{E}\textrm{Tr}\left[\left(A\Pi^{T}B\Pi\right)^{2}-\left(AQ^{T}BQ\right)^{2}\right]=\left(m_{2}^{A}-m_{1,1}^{A}\right)\left(m_{2}^{B}-m_{1,1}^{B}\right)f\left(Q\right),\label{eq:FreeExpect}
\end{equation}
where $m_{2}^{A}=\mathbb{E}\left(a_{i}^{2}\right)$, $ $ $m_{2}^{B}=\mathbb{E}\left(b_{j}^{2}\right)$,
and $m_{1,1}^{A}=\mathbb{E}\left(a_{i},a_{j}\right)$ , $m_{1,1}^{B}=\mathbb{E}\left(b_{i},b_{j}\right)$.
Moreover $f\left(Q\right)$ is a homogeneous function of order four
in the entries of $Q$. To evaluate $f\left(Q\right)$, it suffices
to let $A$ and $B$ be projectors of rank one where $A$ would have
only one nonzero entry on the $i^{\mbox{th }}$ position on its diagonal
and $B$ only one nonzero entry on the $j^{\mbox{th }}$ position
on its diagonal. Further take those nonzero entries to be ones, giving
$m_{1,1}^{A}=m_{1,1}^{B}=0$ and $m_{2}^{A}=m_{2}^{B}=1/m$,

\begin{equation}
\frac{1}{m}\mathbb{E}\textrm{Tr}\left[\left(A\Pi^{T}B\Pi\right)^{2}-\left(AQ^{T}BQ\right)^{2}\right]=\frac{1}{m^{2}}f\left(Q\right)
\end{equation}
But the left hand side is

\begin{eqnarray*}
\frac{1}{m}\mathbb{E}\textrm{Tr}\left[\left(A\Pi^{T}B\Pi\right)^{2}-\left(AQ^{T}BQ\right)^{2}\right] & = & \frac{1}{m}\mathbb{E}\left[\delta_{ij}-q_{ij}^{4}\right]\\
 & = & \frac{1}{m}\left\{ \frac{1}{m^{2}}\sum_{ij}\delta_{ij}-\frac{1}{m^{2}}\sum_{ij}\mathbb{E}\left(q_{ij}^{4}\right)\right\} \\
 & = & \frac{1}{m}\left\{ \frac{1}{m}-\mathbb{E}\left(q_{ij}^{4}\right)\right\} ,
\end{eqnarray*}
where, we used the homogeneity of $Q$. Consequently, by equating
this to $f\left(Q\right)/m^{2}$, we get the desired quantity$f\left(Q\right)=\left\{ 1-m\mathbb{E}\left(q_{ij}^{4}\right)\right\} .$

Our final result Eq. \ref{eq:FreeExpect} now reads

\begin{equation}
\frac{1}{m}\mathbb{E}\textrm{Tr}\left[\left(A\Pi^{T}B\Pi\right)^{2}-\left(AQ^{T}BQ\right)^{2}\right]=\left(m_{2}^{A}-m_{1,1}^{A}\right)\left(m_{2}^{B}-m_{1,1}^{B}\right)\left\{ 1-m\mathbb{E}\left(q_{ij}^{4}\right)\right\} .\label{eq:isotropicBook}
\end{equation}
The same calculation where each of the terms is obtained separately
yields the same result (Appendix). In this paper $p$ is formed by
taking $Q$ to have a $\beta-$Haar measure. Expectation values of
the entries of $Q$ are listed in the Table \ref{tab:HaarExp}.

We wish to express everything in terms of the local terms; using Eqs.
\ref{eq:m_2} and \ref{eq:elemGeneral} as well as $tn^{k}=m$,

\begin{eqnarray*}
m_{2}^{A}-m_{1,1}^{A} & = & \frac{tk\left(n-1\right)n^{k-1}}{m-1}\left(m_{2}^{\mbox{odd}}-m_{1,1}^{\mbox{odd}}\right)\\
m_{2}^{B}-m_{1,1}^{B} & = & \frac{tk\left(n-1\right)n^{k-1}}{m-1}\left(m_{2}^{\mbox{even}}-m_{1,1}^{\mbox{even}}\right),
\end{eqnarray*}
giving 

\begin{eqnarray}
\frac{1}{m}\mathbb{E}\left[\mbox{Tr}\left(A\Pi^{T}B\Pi\right)^{2}-\textrm{Tr}\left(AQ^{T}BQ\right)^{2}\right] & = & \left(m_{2}^{\mbox{odd}}-m_{1,1}^{\mbox{odd}}\right)\left(m_{2}^{\mbox{even}}-m_{1,1}^{\mbox{even}}\right)\times\nonumber \\
 &  & \left(\frac{km\left(n-1\right)}{n\left(m-1\right)}\right)^{2}\left\{ 1-m\mathbb{E}\left(q_{ij}^{4}\right)\right\} .\label{eq:Iso-Classical_FINAL}
\end{eqnarray}

We now proceed to the quantum case where we need to evaluate 
\[
\frac{1}{m}\mathbb{E}\left[\left(A\Pi^{T}B\Pi\right)^{2}-\textrm{Tr}\left(AQ_{q}^{T}BQ_{q}\right)^{2}\right].
\]
 
\end{onehalfspace}

In this case, we cannot directly use the techniques that we used to
get Eq. \ref{eq:Iso-Classical_FINAL} because $Q_{q}$ is not permutation
invariant despite local eigenvectors being so. Before proceeding further
we like to prove a useful lemma (Lemma \ref{lem:diamonds}). Let us
simplify the notation and denote the local terms that are drawn randomly
from a known distribution by $H_{l,l+1}\equiv H^{\left(l\right)}$
whose eigenvalues are $\Lambda_{l}$ as discussed above. 

\begin{onehalfspace}
Recall that $A$ represents the \textit{sum} of all the odds and $Q_{q}^{-1}BQ_{q}$
the \textit{sum} of all the evens,

\[
H_{\mbox{odd}\vphantom{\mbox{even}}}=\sum_{l=1,3,5,\cdots}\mathbb{I}\otimes H^{\left(l\right)}\otimes\mathbb{I},\mbox{ and}\quad H_{\mbox{even}\vphantom{\mbox{odd}}}=\sum_{l=2,4,6,\cdots}\mathbb{I}\otimes H^{\left(l\right)}\otimes\mathbb{I},
\]

Hence, the expansion of $\frac{1}{m}\mathbb{E}\left[\textrm{Tr}\left(AQ_{q}^{T}BQ_{q}\right)^{2}\right]$
amounts to picking an odd term, an even term, then another odd term
and another even term, multiplying them together and taking the expectation
value of the trace of the product (Figure \ref{fig:racksMoment}).
Therefore, each term in the expansion can have four, three or two
different local terms, whose expectation values along with their counts
are needed. These expectations are taken with respect to the local
terms (dense $d^{2}\times d^{2}$ random matrices).

\begin{figure}
\begin{centering}
\includegraphics[scale=1.5]{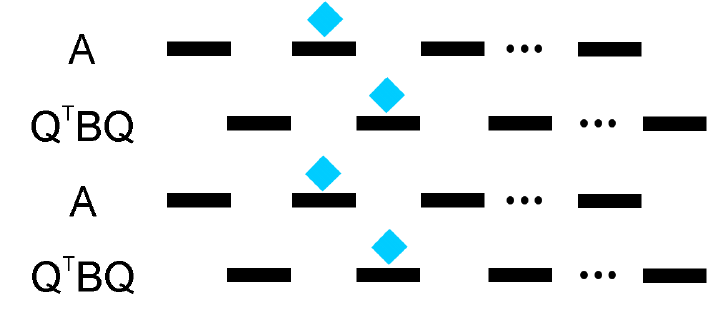}
\par\end{centering}

\caption{\label{fig:racksMoment}The terms in the expansion of $\frac{1}{m}\mathbb{E}\left[\textrm{Tr}\left(AQ_{q}^{T}BQ_{q}\right)^{2}\right]$
can be visualized as picking an element from each row from top to
bottom and multiplying. Each row has $k$ of the local terms corresponding
to a chain with odd number of terms. Among $k^{4}$ terms roughly
$k^{2}$ of them differ among the classical, isotropic and quantum
cases (See Eqs. \ref{eq:QExpCountOdd} and \ref{eq:QExpCountEven}).
An example of such a choice is shown by diamonds. }
\end{figure}

The expectation values depend on the type of random matrix distribution
from which the local terms are drawn. The counting however, depends
on the configuration of the lattice only. We show the counting of
the number of terms, taking care of the boundary terms for an open
chain, along with the type of expectation values by which they need
to be weighted:

\uline{For $N$ odd ($k$ odd terms and $k$ even terms)}

\begin{equation}
\begin{array}{c}
\textrm{Four}\: H^{\left(\centerdot\right)}\textrm{'s}:k^{2}\left(k-1\right)^{2}\Rightarrow d^{N-u_{1}}\mathbb{E}\textrm{Tr}\left(H^{\left(l\right)}\right)^{4},\mbox{ }u_{1}\in\left\{ 5,\cdots,8\right\} \\
\textrm{Three}\: H^{\left(\centerdot\right)}\textrm{'s}:2k^{2}\left(k-1\right)\Rightarrow d^{N-u_{2}}\mathbb{E}\textrm{Tr}\left(\left[H^{\left(l\right)}\right]^{2}\right)\mathbb{E}\textrm{Tr}\left(H^{\left(l\right)}\right)^{2},\mbox{ }u_{2}\in\left\{ 4,5,6\right\} \\
\textrm{Two}\: H^{\left(\centerdot\right)}\textrm{'s:}\left(k-1\right)^{2}\;\textrm{Not\;\ Entangled}\Rightarrow d^{N-4}\left\{ \mathbb{E}\textrm{Tr}\left(\left[H^{\left(l\right)}\right]^{2}\right)\right\} ^{2}\\
\textrm{Two}\: H^{\left(\centerdot\right)}\textrm{'s}:\left(2k-1\right)\;\textrm{ Entangled}\Rightarrow d^{N-3}\mathbb{E}\textrm{Tr}\left[\left(H^{\left(l\right)}\otimes\mathbb{I}\right)\left(\mathbb{I}\otimes H^{\left(l+1\right)}\right)\left(H^{\left(l\right)}\otimes\mathbb{I}\right)\left(\mathbb{I}\otimes H^{\left(l+1\right)}\right)\right]
\end{array}\label{eq:QExpCountOdd}
\end{equation}

\uline{For $N$ even ($k$ odd terms and $k-1$ even terms)}

\begin{equation}
\begin{array}{c}
\textrm{Four}\: H^{\left(\centerdot\right)}\textrm{'s}:k\left(k-1\right)^{2}\left(k-2\right)\Rightarrow d^{N-u_{1}}\mathbb{E}\textrm{Tr}\left(H^{\left(l\right)}\right)^{4},\mbox{ }u_{1}\in\left\{ 5,\cdots,8\right\} \\
\textrm{Three}\: H^{\left(\centerdot\right)}\textrm{'s}:k\left(k-1\right)\left(2k-3\right)\Rightarrow d^{N-u_{2}}\mathbb{E}\textrm{Tr}\left(\left[H^{\left(l\right)}\right]^{2}\right)\mathbb{E}\textrm{Tr}\left(H^{\left(l\right)}\right)^{2},\mbox{ }u_{2}\in\left\{ 4,5,6\right\} \\
\textrm{Two}\: H^{\left(\centerdot\right)}\textrm{'s}:\left(k-1\right)\left(k-2\right)\;\textrm{Not\;\ Entangled}\Rightarrow d^{N-4}\left\{ \mathbb{E}\textrm{Tr}\left(\left[H^{\left(l\right)}\right]^{2}\right)\right\} ^{2}\\
\textrm{Two}\: H^{\left(\centerdot\right)}\textrm{'s}:2\left(k-1\right)\;\textrm{ Entangled}\Rightarrow d^{N-3}\mathbb{E}\textrm{Tr}\left[\left(H^{\left(l\right)}\otimes\mathbb{I}\right)\left(\mathbb{I}\otimes H^{\left(l+1\right)}\right)\left(H^{\left(l\right)}\otimes\mathbb{I}\right)\left(\mathbb{I}\otimes H^{\left(l+1\right)}\right)\right]
\end{array}\label{eq:QExpCountEven}
\end{equation}
Here $u_{1}$ and $u_{2}$ indicate the number of sites that the local
terms act on (i.e., occupy). Therefore, $\frac{1}{m}\mathbb{E}\left[\textrm{Tr}\left(AQ_{q}^{T}BQ_{q}\right)^{2}\right]$
is obtained by multiplying each type of terms, weighted by the counts
and summing. For example for $u_{1}=5$ and $u_{2}=3$, when $N$
is odd,

\begin{equation}
\begin{array}{c}
\frac{1}{m}\mathbb{E}\left[\textrm{Tr}\left(AQ_{q}^{T}BQ_{q}\right)^{2}\right]=\frac{1}{m}\left\{ d^{N-5}k^{2}\left(k-1\right)^{2}\mathbb{E}\textrm{Tr}\left(H^{\left(l\right)}\right)^{4}+\right.\\
2k^{2}\left(k-1\right)d^{N-4}\mathbb{E}\textrm{Tr}\left(\left[H^{\left(l\right)}\right]^{2}\right)\mathbb{E}\textrm{Tr}\left(H^{\left(l\right)}\right)^{2}+\left(k-1\right)^{2}d^{N-4}\left\{ \mathbb{E}\textrm{Tr}\left(\left[H^{\left(l\right)}\right]^{2}\right)\right\} ^{2}+\\
\left.\left(2k-1\right)d^{N-3}\mathbb{E}\textrm{Tr}\left[\left(H^{\left(l\right)}\otimes\mathbb{I}\right)\left(\mathbb{I}\otimes H^{\left(l+1\right)}\right)\left(H^{\left(l\right)}\otimes\mathbb{I}\right)\left(\mathbb{I}\otimes H^{\left(l+1\right)}\right)\right]\right\} 
\end{array}\label{eq:quantumOdd}
\end{equation}
and similarly for $N$ even,

\begin{equation}
\begin{array}{c}
\frac{1}{m}\mathbb{E}\left[\textrm{Tr}\left(AQ_{q}^{T}BQ_{q}\right)^{2}\right]=\frac{\left(k-1\right)}{m}\left\{ k\left(k-1\right)\left(k-2\right)d^{N-5}\mathbb{E}\textrm{Tr}\left(H^{\left(l\right)}\right)^{4}+\right.\\
k\left(2k-3\right)d^{N-4}\mathbb{E}\textrm{Tr}\left(\left[H^{\left(l\right)}\right]^{2}\right)\mathbb{E}\textrm{Tr}\left(H^{\left(l\right)}\right)^{2}+\left(k-2\right)d^{N-4}\left\{ \mathbb{E}\textrm{Tr}\left(\left[H^{\left(l\right)}\right]^{2}\right)\right\} ^{2}+\\
\left.2d^{N-3}\mathbb{E}\textrm{Tr}\left[\left(H^{\left(l\right)}\otimes\mathbb{I}\right)\left(\mathbb{I}\otimes H^{\left(l+1\right)}\right)\left(H^{\left(l\right)}\otimes\mathbb{I}\right)\left(\mathbb{I}\otimes H^{\left(l+1\right)}\right)\right]\right\} .
\end{array}\label{eq:quantumEven}
\end{equation}

The expectation values depend on the type of random matrix distribution
from which the local terms are drawn. We will give explicit examples
in the following sections. In the following lemma, we use $\mathbb{E\left(\mathit{H^{\left(l\right)}}\right)=}\mu\mathbb{I}_{d^{2}}$
and $\mathbb{E\left(\mathit{H^{\left(l\right)}}\right)^{\mathit{2}}=}m_{2}\mathbb{I}_{d^{2}}$.
\end{onehalfspace}
\begin{lem}
\begin{onehalfspace}
\label{lem:diamonds}In calculating the $\mathbb{E}\mbox{Tr}\left(AQ_{q}^{T}BQ_{q}\right)^{2}$
if at least one of the odds (evens) commutes with one of the evens
(odds) then the expectation value is the same as the classical expectation
value. Further if the local terms have permutation invariance of eigenvalues
then the only quantum expectation value that differs from classical
is of Type II (see the proof and the diamonds in figure \ref{fig:racksMoment}).\end{onehalfspace}
\end{lem}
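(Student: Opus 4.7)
The plan is to expand $\mbox{Tr}(AQ_q^T BQ_q)^2$ as a sum over alternating quadruples $(l_1,l_2,l_3,l_4)$ of local-term indices (with $l_1,l_3$ odd and $l_2,l_4$ even), producing traces of the form $\mbox{Tr}[\tilde A^{(l_1)}\tilde B^{(l_2)}\tilde A^{(l_3)}\tilde B^{(l_4)}]$ where a tilde denotes the local operator padded by identities. Two structural facts will do most of the work: any two odd local factors commute (their two-site supports are pairwise disjoint) and likewise any two evens commute; and an odd at $(l,l{+}1)$ and an even at $(l',l'{+}1)$ commute iff $|l-l'|\geq 2$.

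For the first claim, assume without loss of generality that $[\tilde A^{(l_1)},\tilde B^{(l_4)}]=0$ (the other three cases are symmetric). A short chain of rewrites using cyclicity of the trace, this assumed commutation, and the commutations among odds and among evens gives
\[
\mbox{Tr}[\tilde A^{(l_1)}\tilde B^{(l_2)}\tilde A^{(l_3)}\tilde B^{(l_4)}]=\mbox{Tr}\bigl[(\tilde A^{(l_1)}\tilde A^{(l_3)})(\tilde B^{(l_2)}\tilde B^{(l_4)})\bigr].
\]
Setting $\tilde A':=\tilde A^{(l_1)}\tilde A^{(l_3)}$, which is diagonal, and writing $\tilde B^{(l_2)}\tilde B^{(l_4)}=Q_q^T\tilde B'Q_q$ with $\tilde B'$ the diagonal product of local eigenvalue matrices, the expectation reduces to $\frac{1}{m}\mathbb{E}\mbox{Tr}[\tilde A'Q_q^T\tilde B'Q_q]$. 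Just as in the proof of the Matching Three Moments Theorem, this depends on $Q_q$ only through $\mathbb{E}(q_{ij}^2)$, and Lemma \ref{lem:var} evaluates that variance to $1/m$, the same value produced by a uniform permutation $\Pi$ by symmetry. Hence for any such quadruple the quantum expectation equals the classical one.

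For the second claim the task reduces to the \emph{fully entangled} quadruples in which every chosen odd is adjacent to every chosen even. A combinatorial check (each odd has at most two even neighbours and vice versa, while distinct odds and distinct evens are at distance at least $2$) rules out fully entangled quadruples with four distinct local terms and leaves exactly three patterns: (A1) $l_1=l_3=l$ with $l_2=l_4=l\pm 1$; (A2) $l_1,l_3$ a pair of adjacent distinct odds and $l_2=l_4$ their common even neighbour; and (B) $l_1=l_3=l$ with $\{l_2,l_4\}=\{l-1,l+1\}$. My plan is to show that, under the permutation invariance of local eigenvalues, (A2) and (B) still reduce to the classical value, leaving (A1)---the Type II pattern corresponding to the diamonds in Figure~\ref{fig:racksMoment} and to the fourth lines of Eqs.~\ref{eq:QExpCountOdd}--\ref{eq:QExpCountEven}---as the sole source of quantum departure. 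For (A2) the non-repeated even factor $\tilde B^{(l)}$ sits between two disjoint odd supports, and partially tracing it against one of them using the eigenvalue symmetry $\mathbb{E}[P\Lambda^{(l)}P^T]=\mathbb{E}[\Lambda^{(l)}]$ decouples the trace into a product matching the classical case; an analogous partial-trace argument handles (B). The main obstacle is precisely this last step: verifying that after averaging against local permutation-invariant eigenvalues the residual cross terms in (A2) and (B) vanish, while in (A1) they survive and are captured by the Type II expression in Eqs.~\ref{eq:QExpCountOdd}--\ref{eq:QExpCountEven}.
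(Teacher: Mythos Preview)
Your treatment of the first claim is correct and follows the paper's approach: cyclicity of the trace together with the assumed odd--even commutation lets you slide the factors so the two odds sit together and the two evens sit together, giving $\mbox{Tr}\bigl[(\tilde A^{(l_1)}\tilde A^{(l_3)})(\tilde B^{(l_2)}\tilde B^{(l_4)})\bigr]$, which is precisely the corresponding term in the classical expansion. Your explicit appeal to Lemma~\ref{lem:var} to justify that last identification is more detailed than the paper, which simply declares the rearranged form ``the classical value,'' but the logic is the same.

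For the second claim your case analysis is right (your (A1) is the paper's Type~II, and your (A2), (B) are the two symmetric variants of Type~I), but the step you flag as the ``main obstacle'' has a much simpler resolution than the partial-trace-and-eigenvalue-symmetry argument you sketch. The paper simply uses that under permutation invariance of local eigenvalues one has $\mathbb{E}\bigl[H^{(l)}\bigr]=\mu\,\mathbb{I}_{d^2}$. In any Type~I quadruple exactly one local factor appears only once and is independent of the other two; averaging it first replaces it by $\mu\,\mathbb{I}$ and collapses the four-factor trace to $\frac{\mu}{m}\,\mathbb{E}\,\mbox{Tr}\bigl[(\tilde H^{(l)})^{2}\,\tilde H^{(l\pm 1)}\bigr]$, and a second application to the remaining singleton yields $\mu^{2}m_{2}$, exactly the classical contribution of that quadruple. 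So your obstacle dissolves once you observe that permutation invariance of the eigenvalues already gives $\mathbb{E}\bigl[H^{(l)}\bigr]\propto\mathbb{I}$; no delicate partial-trace computation is required. (A small slip: in your (A2) the even factor is the \emph{repeated} one, so ``non-repeated even factor'' should read ``non-repeated odd factor.'')
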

\begin{proof}
\begin{onehalfspace}
This can be shown using the trace property $\mbox{Tr}\left(MP\right)=\mbox{Tr}\left(PM\right)$.
In calculating $\mathbb{E}\mbox{Tr}\left(H_{l}^{odd}H_{p}^{even}H_{j}^{odd}H_{k}^{even}\right)$;
if any of the odd (even) terms commutes with any of the even (odd)
terms to its left or right then they can be swapped. For example one
gets $\mathbb{E}\mbox{Tr}\left(H_{l}^{odd}H_{p}^{even}H_{k}^{even}H_{j}^{odd}\right)=\mathbb{E}\mbox{Tr}\left(H_{j}^{odd}H_{l}^{odd}H_{p}^{even}H_{k}^{even}\right)$
which is just the classical value. Hence the only types of expectations
that we need to worry about are

\[
\begin{array}{ccc}
 & \underset{\_\_}{H^{\left(l\right)}}\\
 &  & \underset{\_\_}{H^{\left(l+1\right)}}\\
 & \underset{\_\_}{H^{\left(l\right)}}\\
\underset{\_\_}{H^{\left(l-1\right)}}\\
 & \mbox{Type I}
\end{array}\qquad\mbox{ and }\qquad\begin{array}{cc}
\underset{\_\_}{H^{\left(l\right)}}\\
 & \underset{\_\_}{H^{\left(l+1\right)}}\\
\underset{\_\_}{H^{\left(l\right)}}\\
 & \underset{\_\_}{H^{\left(l+1\right)}}\\
\mbox{Type II}
\end{array}
\]
now we show that with permutation invariance of the local eigenvalues
the first type are also classical leaving us with the ``diamond terms''
alone (Fig. \ref{fig:racksMoment}). Consider a Type I term, which
involves three independent local terms,

\[
\begin{array}{c}
{\scriptstyle \frac{1}{m}\mathbb{E}\textrm{Tr}\left[\left(\mathbb{I}_{d^{2}}\otimes H^{\left(3\right)}\otimes\mathbb{I}_{d^{N-4}}\right)\left(\mathbb{I}\otimes H^{\left(2\right)}\otimes\mathbb{I}_{d^{N-3}}\right)\left(\mathbb{I}_{d^{2}}\otimes H^{\left(3\right)}\otimes\mathbb{I}_{d^{N-4}}\right)\left(\mathbb{I}_{d^{3}}\otimes H^{\left(4\right)}\otimes\mathbb{I}_{d^{N-5}}\right)\right]}\\
=\mu^{2}m_{2}.
\end{array}
\]
This follows immediately from the independence of $H^{\left(4\right)}$
, which allows us to take its expectation value separately giving
a $\mu$ and leaving us with 
\[
\frac{\mu}{m}\mathbb{E}\textrm{Tr}\left[\left(\mathbb{I}_{d^{2}}\otimes H^{\left(3\right)}\otimes\mathbb{I}_{d^{N-4}}\right)^{2}\left(\mathbb{I}\otimes H^{\left(2\right)}\otimes\mathbb{I}_{d^{N-3}}\right)\right]=\mu^{2}m_{2}.
\]

Therefore the only relevant terms, shown by diamonds in Fig. \ref{fig:racksMoment},
are of Type II. As an example of such terms consider (here on repeated
indices are summed over)

\begin{equation}
\begin{array}{c}
{\scriptstyle \frac{1}{m}\mathbb{E}\textrm{Tr}\left[\left(H^{\left(1\right)}\otimes\mathbb{I}_{d^{N-2}}\right)\left(\mathbb{I}\otimes H^{\left(2\right)}\otimes\mathbb{I}_{d^{N-3}}\right)\left(H^{\left(1\right)}\otimes\mathbb{I}_{d^{N-2}}\right)\left(\mathbb{I}\otimes H^{\left(2\right)}\otimes\mathbb{I}_{d^{N-3}}\right)\right]}\\
=\frac{1}{d^{3}}\left\{ \mathbb{E}\left(H_{i_{1}i_{2},j_{1}j_{2}}^{\left(1\right)}H_{i_{1}p_{2},j_{1}k_{2}}^{\left(1\right)}\right)\mathbb{E}\left(H_{j_{2}i_{3},k_{2}k_{3}}^{\left(2\right)}H_{i_{2}i_{3},p_{2}k_{3}}^{\left(2\right)}\right)\right\} ,
\end{array}
\end{equation}
where the indices with subscript $2$ prevent us from treating the
two expectation values independently: $H^{\left(1\right)}$ and $H^{\left(2\right)}$
overlap at the second site. The number of such terms is $2k-1$, where
$k=\frac{N-1}{2}$.\end{onehalfspace}

\end{proof}
\begin{onehalfspace}
Therefore, we have found a further reduction of the terms from the
departure theorem, that distinguishes the quantum problem from the
other two. Luckily and interestingly the kurtosis of the quantum case
lies in between the classical and the iso. We emphasize that \textit{the
only inputs to the theory are the geometry of the lattice (e.g., the
number of summands and the inter-connectivity of the local terms)
and the moments} that characterizes the type of the local terms. 

Comment: The most general treatment would consider Type I terms as
well, i.e., there is no assumption of permutation invariance of the
eigenvalues of the local terms. This allows one to treat all types
of local terms. Here we are confining to random local interactions,
where the local eigenvectors are generic or the eigenvalues locally
are permutation invariant in the expectation value sense. 

The goal is to find $p$ by matching fourth moments

\[
1-p=\frac{\mathbb{E}\mbox{Tr}\left(A\Pi^{T}B\Pi\right)^{2}-\mathbb{E}\mbox{Tr}\left(AQ_{q}^{T}BQ_{q}\right)^{2}}{\mathbb{E}\mbox{Tr}\left(A\Pi^{T}B\Pi\right)^{2}-\mathbb{E}\mbox{Tr}\left(AQ^{T}BQ\right)^{2}}
\]
for which we calculated the denominator resulting in Eq. \ref{eq:Iso-Classical_FINAL},
where $\mathbb{E}\left(\left|q_{i,j}\right|^{4}\right)=\frac{\beta+2}{m\left(m\beta+2\right)}$
for $\beta-$Haar $Q$$ $ (Table \ref{tab:HaarExp}). If the numerator
allows a factorization of the moments of the local terms as in Eq.
\ref{eq:Iso-Classical_FINAL}, then the value of $p$ will be independent
of the covariance matrix (i.e., eigenvalues of the local terms). 
\end{onehalfspace}
\begin{lem*}
\begin{onehalfspace}
\textbf{\textup{(Universality)}} $p\mapsto p\left(N,d,\beta\right)$,
namely, it is independent of the distribution of the local terms.\end{onehalfspace}
\end{lem*}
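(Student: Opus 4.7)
The plan is to show that the only $(m_2,m_{1,1})$-dependence in both numerator and denominator of $1-p$ appears in the same factorized form, so that the eigenvalue-distribution-dependent pieces cancel. The denominator is already handled: Eq.~\ref{eq:Iso-Classical_FINAL} establishes
\[
\frac{1}{m}\mathbb{E}\mathrm{Tr}\left[(A\Pi^{T}B\Pi)^{2}-(AQ^{T}BQ)^{2}\right]
= \left(m_{2}^{\mathrm{odd}}-m_{1,1}^{\mathrm{odd}}\right)\left(m_{2}^{\mathrm{even}}-m_{1,1}^{\mathrm{even}}\right)\,G(N,d,\beta),
\]
where $G$ depends only on $N,d,\beta$. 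So it suffices to establish the analogous factorization for the quantum numerator, i.e.\ to prove
\[
\frac{1}{m}\mathbb{E}\mathrm{Tr}\left[(A\Pi^{T}B\Pi)^{2}-(AQ_{q}^{T}BQ_{q})^{2}\right]
= \left(m_{2}^{\mathrm{odd}}-m_{1,1}^{\mathrm{odd}}\right)\left(m_{2}^{\mathrm{even}}-m_{1,1}^{\mathrm{even}}\right)\,F(N,d,\beta).
\]

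The first step is to reduce the quantum numerator to diamond terms only. By the Departure Theorem, every term in $\mathbb{E}\mathrm{Tr}(AQ_q^TBQ_q)^2$ other than those of the nested form $\ldots Q_q^T B^{\ge1}Q_q A^{\ge1}Q_q^T B^{\ge1}Q_q\ldots$ agrees with its classical counterpart. Expanding $A=\sum_l \mathbb{I}\otimes H^{(l)}_{\mathrm{odd}}\otimes\mathbb{I}$ and similarly for $B$, Lemma~\ref{lem:diamonds} then says that among the surviving terms every configuration in which an odd and an even factor commute also collapses to classical, and under the permutation invariance of local eigenvalues even the ``Type I'' terms are classical. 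What remains are precisely the Type II diamond terms, with purely combinatorial count $2k-1$ (for $N$ odd) or $2(k-1)$ (for $N$ even).

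The second step is to compute a single diamond contribution modulo its classical counterpart. For a fixed overlapping pair of sites, the diamond minus its classical version is
\[
\Delta \;=\; \tfrac{1}{d^{3}}\mathbb{E}\,\mathrm{Tr}\!\left[(H^{(l)}\!\otimes\!\mathbb{I})(\mathbb{I}\!\otimes\! H^{(l+1)})(H^{(l)}\!\otimes\!\mathbb{I})(\mathbb{I}\!\otimes\! H^{(l+1)})\right]\;-\;\tfrac{1}{d^{3}}\mathbb{E}\,\mathrm{Tr}\!\left[(\Lambda_{l}\!\otimes\!\mathbb{I})(\mathbb{I}\!\otimes\!\Lambda_{l+1})(\Lambda_{l}\!\otimes\!\mathbb{I})(\mathbb{I}\!\otimes\!\Lambda_{l+1})\right],
\]
where $H^{(l)} = U_l \Lambda_l U_l^T$ with $U_l$ independent $\beta$-Haar on $\mathbb{C}^{d^2}$. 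This is a \emph{two-site} instance of exactly the same structure as Eq.~\ref{eq:isotropic}, so I would apply verbatim the homogeneity/permutation-invariance argument used to derive Eq.~\ref{eq:isotropicBook}: since $\Delta$ is bilinear-homogeneous of degree $(2,2)$ in the local eigenvalues of $\Lambda_l$ and $\Lambda_{l+1}$, is permutation-invariant separately in each, and vanishes when either local matrix is the identity, it must factor as
\[
\Delta \;=\; \left(m_{2}^{\mathrm{odd}}-m_{1,1}^{\mathrm{odd}}\right)\left(m_{2}^{\mathrm{even}}-m_{1,1}^{\mathrm{even}}\right)\,h(d,\beta),
\]
with $h(d,\beta)$ extracted by evaluating $\Delta$ on rank-one projector eigenvalues, at which point $\Delta$ reduces to an explicit fourth-order Weingarten integral in the entries of a single $d^2\times d^2$ $\beta$-Haar matrix (entries of Table~\ref{tab:HaarExp} with $n=d^2$).

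Finally, summing over the $2k-1$ (or $2(k-1)$) diamond positions and dividing by the denominator, the common factor $(m_{2}^{\mathrm{odd}}-m_{1,1}^{\mathrm{odd}})(m_{2}^{\mathrm{even}}-m_{1,1}^{\mathrm{even}})$ cancels, leaving
\[
1-p \;=\; \frac{(2k-1)\,h(d,\beta)}{G(N,d,\beta)} \;\equiv\; 1-p(N,d,\beta),
\]
which is the claim. The main obstacle is not the algebraic factorization itself---which is forced by the same symmetry/homogeneity skeleton used earlier---but rather justifying that the permutation-invariance reduction of Type I terms in Lemma~\ref{lem:diamonds} applies uniformly across all random-local-interaction models considered here, and ensuring that $h(d,\beta)$ is well-defined (i.e.\ nonzero) so that the factorization is nondegenerate. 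The explicit value of $h(d,\beta)$ can then be recorded separately via the Weingarten expectations already compiled in Table~\ref{tab:HaarExp}.
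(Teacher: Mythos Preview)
Your proposal is correct and follows essentially the same route as the paper: reduce the quantum numerator to the diamond (Type~II) terms via Lemma~\ref{lem:diamonds}, then use the bilinear homogeneity in the local eigenvalues together with permutation invariance and vanishing at $\Lambda=\mathbb{I}$ to factor out $(m_2^{\mathrm{odd}}-m_{1,1}^{\mathrm{odd}})(m_2^{\mathrm{even}}-m_{1,1}^{\mathrm{even}})$, exactly as was done for the isotropic denominator. The only cosmetic difference is that the paper, after specializing to rank-one projectors, rewrites the residual function $h(d,\beta)$ not as a Weingarten integral but in the closed form $n^2\,\mathbb{E}\bigl(\|uv\|_F^2-\|uv(uv)^\dagger\|_F^2\bigr)$, where $u,v$ are the $d\times d$ reshapings of Haar columns; this buys the explicit formula needed later for the Slider Theorem, but is not required for universality itself.
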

\begin{proof}
\begin{onehalfspace}
We use a similar techniques as we did in the isotropic case. The general
form for the numerator of Eq. \ref{eq:1-p} is (denoting Lemma \ref{lem:diamonds}
by L3)

\begin{eqnarray}
\frac{1}{m}\mathbb{E}\textrm{Tr}\left[\left(A\Pi^{T}B\Pi\right)^{2}-\left(AQ_{q}^{T}BQ_{q}\right)^{2}\right] & \overset{\mbox{L3}}{=} & \frac{\left(2k-1\right)}{d^{3}}\mbox{\ensuremath{\mathbb{E}}Tr}\left\{ \left(H^{\left(l\right)}\otimes\mathbb{I}_{d}\right)^{2}\left(\mathbb{I}_{d}\otimes H^{\left(l+1\right)}\right)^{2}\right.\nonumber \\
 &  & -\left.\left[\left(H^{\left(l\right)}\otimes\mathbb{I}_{d}\right)\left(\mathbb{I}_{d}\otimes H^{\left(l+1\right)}\right)\right]^{2}\right\} \nonumber \\
 & = & \frac{\left(2k-1\right)}{d^{3}}\mbox{\ensuremath{\mathbb{E}}}\mbox{Tr}\left\{ \left(Q_{l}^{-1}\Lambda_{l}Q_{l}\otimes\mathbb{I}_{d}\right)^{2}\left(\mathbb{I}_{d}\otimes Q_{l+1}^{-1}\Lambda_{l+1}Q_{l+1}\right)^{2}\right.\nonumber \\
 &  & -\left.\left[\left(Q_{l}^{-1}\Lambda_{l}Q_{l}\otimes\mathbb{I}_{d}\right)\left(\mathbb{I}_{d}\otimes Q_{l+1}^{-1}\Lambda_{l+1}Q_{l+1}\right)\right]^{2}\right\} \label{eq:quantumBook-1}
\end{eqnarray}
where the expectation on the right hand side is taken with respect
to the local terms $H^{\left(l\right)}$ and $H^{\left(l+1\right)}$
. The right hand side is a homogeneous polynomial of order two in
the entries of $\Lambda_{l}$, as well as, in the entries of $\Lambda_{l+1}$;
consequently Eq. \ref{eq:quantumBook-1} necessarily has the form

\[
c_{1}\left(\Lambda^{\mbox{even}},Q_{\mbox{odd}},Q_{\mbox{even}}\right)m_{2}^{\mbox{odd}}+c_{2}\left(H^{\mbox{even}},Q_{\mbox{odd}},Q_{\mbox{even}}\right)m_{1,1}^{\mbox{odd}}
\]
but Eq. \ref{eq:quantumBook-1} must be zero for $\Lambda_{l}=I$,
for which $m_{2}^{\mbox{odd}}=m_{1,1}^{\mbox{odd}}=1$. This implies
that $c_{1}=-c_{2}$. By permutation invariance of the local terms
we can factor out $\left(m_{2}^{\mbox{odd}}-m_{1,1}^{\mbox{odd}}\right)$.
Similarly, the homogeneity and permutation invariance of $H^{\left(l+1\right)}$
implies,

\[
\left(m_{2}^{\mbox{odd}}-m_{1,1}^{\mbox{odd}}\right)\left[D_{1}\left(Q_{\mbox{odd}},Q_{\mbox{even}}\right)m_{2}^{\mbox{even}}+D_{2}\left(Q_{\mbox{odd}},Q_{\mbox{even}}\right)m_{1,1}^{\mbox{even}}\right].
\]
The right hand side should be zero for $\Lambda_{l+1}=I$, whereby
we can factor out $\left(m_{2}^{\mbox{even}}-m_{1,1}^{\mbox{even}}\right)$;
hence the right hand side of Eq. \ref{eq:quantumBook-1} becomes

\begin{equation}
\frac{\left(2k-1\right)}{d^{3}}\left(m_{2}^{\mbox{odd}}-m_{1,1}^{\mbox{odd}}\right)\left(m_{2}^{\mbox{even}}-m_{1,1}^{\mbox{even}}\right)f_{q}\left(Q_{\mbox{odd}},Q_{\mbox{even}}\right)\label{eq:Q-C_final}
\end{equation}
where $f_{q}\left(Q_{\mbox{odd}},Q_{\mbox{even}}\right)$ is a homogeneous
function of order four in the entries of $Q_{\mbox{odd}}$ as well
as $Q_{\mbox{even}}$. To evaluate $f_{q}$, it suffices to let $\Lambda_{l}$
and $\Lambda_{l+1}$ be projectors of rank one where $\Lambda_{l}$
would have only one nonzero entry on the $i^{\mbox{th }}$ position
on its diagonal and $\Lambda_{l+1}$ only one nonzero entry on the
$j^{\mbox{th }}$ position on its diagonal. Further take those nonzero
entries to be ones, giving $m_{1,1}^{A}=m_{1,1}^{B}=0$ and $m_{2}^{A}=m_{2}^{B}=1/n$.
Using this choice of local terms the right hand side of Eq. \ref{eq:quantumBook-1}
now reads 

\begin{eqnarray}
\frac{\left(2k-1\right)}{d^{3}} & \mbox{\ensuremath{\mathbb{E}}}\mbox{Tr} & \left\{ \left(|q_{i}^{\left(l\right)}\rangle\langle q_{i}^{\left(l\right)}|\otimes I_{d}\right)^{2}\left(I_{d}\otimes|q_{j}^{\left(l+1\right)}\rangle\langle q_{j}^{\left(l+1\right)}|\right)^{2}\right.\nonumber \\
 &  & -\left.\left[\left(|q_{i}^{\left(l\right)}\rangle\langle q_{i}^{\left(l\right)}|\otimes I_{d}\right)\left(I_{d}\otimes|q_{j}^{\left(l+1\right)}\rangle\langle q_{j}^{\left(l+1\right)}|\right)\right]^{2}\right\} \label{eq:Qdepart}
\end{eqnarray}
where here the expectation value is taken with respect to random choices
of local eigenvectors. Equating this and Eq. \ref{eq:Q-C_final} 

\begin{eqnarray}
f_{q}\left(Q_{\mbox{odd}},Q_{\mbox{even}}\right) & \mbox{=} & n^{2}\mbox{\ensuremath{\mathbb{E}}}\mbox{Tr}\left\{ \left(|q_{i}^{\left(l\right)}\rangle\langle q_{i}^{\left(l\right)}|\otimes I_{d}\right)^{2}\left(I_{d}\otimes|q_{j}^{\left(l+1\right)}\rangle\langle q_{j}^{\left(l+1\right)}|\right)^{2}\right.\label{eq:Qdepart-1}\\
 &  & -\left.\left[\left(|q_{i}^{\left(l\right)}\rangle\langle q_{i}^{\left(l\right)}|\otimes I_{d}\right)\left(I_{d}\otimes|q_{j}^{\left(l+1\right)}\rangle\langle q_{j}^{\left(l+1\right)}|\right)\right]^{2}\right\} \nonumber 
\end{eqnarray}

To simplify notation let us expand these vectors in the computational
basis $|q_{i}^{\left(l\right)}\rangle=u_{i_{1}i_{2}}|i_{1}\rangle|i_{2}\rangle$
and $|q_{j}^{\left(l+1\right)}\rangle=v_{i_{2}i_{3}}|i_{2}\rangle|i_{3}\rangle.$
The first term on the right hand side of Eq. \ref{eq:Qdepart}, the
classical term, is obtained by assuming commutativity and using the
projector properties,

\begin{eqnarray}
\mbox{Tr}\left[\left(|q_{i}^{\left(l\right)}\rangle\langle q_{i}^{\left(l\right)}|\otimes I_{d}\right)^{2}\left(I_{d}\otimes|q_{j}^{\left(l+1\right)}\rangle\langle q_{j}^{\left(l+1\right)}|\right)^{2}\right] & =\nonumber \\
\mbox{Tr}\left[\left(|q_{i}^{\left(l\right)}\rangle\langle q_{i}^{\left(l\right)}|\otimes I_{d}\right)\left(I_{d}\otimes|q_{j}^{\left(l+1\right)}\rangle\langle q_{j}^{\left(l+1\right)}|\right)\right] & =\nonumber \\
\mbox{Tr}\left[u_{i_{1},i_{2}}\overline{u_{j_{1}j_{2}}}v_{j_{2}i_{3}}\overline{v_{k_{2}k_{3}}}u_{j_{1}k_{2}}|i_{1}i_{2}i_{3}\rangle\langle j_{1}k_{2}k_{3}|\right] & =\nonumber \\
\left[u_{i_{1},i_{2}}\overline{u_{i_{1}j_{2}}}v_{j_{2}i_{3}}\overline{v_{i_{2}i_{3}}}\right]=\left(u^{\dagger}u\right)_{j_{2}i_{2}}\left(vv^{\dagger}\right)_{j_{2}i_{2}} & =\nonumber \\
\mbox{Tr}\left[\left(u^{\dagger}u\right)\left(vv^{\dagger}\right)\right]=\mbox{Tr}\left[uv\left(uv\right)^{\dagger}\right] & =\nonumber \\
\left\Vert uv\right\Vert _{\mbox{F}}^{2} & = & \sum_{i=1}^{d}\sigma_{i}^{2}.\label{eq:CDepart}
\end{eqnarray}
where $\left\Vert \centerdot\right\Vert _{\mbox{F}}$ denotes the
Frobenius norm and $\sigma_{i}$ are the singular values of $uv$.
The second term, the quantum term, is

\begin{eqnarray}
\mbox{Tr}\left[\left(|q_{i}^{\left(l\right)}\rangle\langle q_{i}^{\left(l\right)}|\otimes I_{d}\right)\left(I_{d}\otimes|q_{j}^{\left(l+1\right)}\rangle\langle q_{j}^{\left(l+1\right)}|\right)\right]^{2} & =\label{eq:QdepartFinal}\\
\mbox{Tr}\left[u_{i_{1}i_{2}}\overline{u_{j_{1}j_{2}}}v_{j_{2}i_{3}}\overline{v_{k_{2}k_{3}}}u_{j_{1}k_{2}}\overline{u_{m_{1}m_{2}}}v_{m_{2}k_{3}}\overline{v_{i_{2}i_{3}}}|i_{1}i_{2}i_{3}\rangle\langle p_{1}p_{2}p_{3}|\right] & =\nonumber \\
\left(u^{\dagger}u\right)_{j_{2}k_{2}}\left(vv^{\dagger}\right)_{m_{2}k_{2}}\left(u^{\dagger}u\right)_{m_{2}i_{2}}\left(vv^{\dagger}\right)_{j_{2}i_{2}} & =\nonumber \\
\left(u^{\dagger}uvv^{\dagger}\right)_{j_{2}m_{2}}\left(u^{\dagger}uvv^{\dagger}\right)_{m_{2}j_{2}}=\mbox{Tr}\left\{ \left[uv\left(uv\right)^{\dagger}\right]^{2}\right\}  & =\nonumber \\
\left\Vert uv\left(uv\right)^{\dagger}\right\Vert _{\mbox{F}}^{2} & = & \sum_{i=1}^{d}\sigma_{i}^{4}.\nonumber 
\end{eqnarray}
where we used the symmetry of $\left(uv\left(uv\right)^{\dagger}\right)^{2}=uv\left(uv\right)^{\dagger}\left[uv\left(uv\right)^{\dagger}\right]^{\dagger}$.

Now we can calculate 

\begin{equation}
f_{q}\left(Q_{\mbox{odd}},Q_{\mbox{even}}\right)=n^{2}\mbox{\ensuremath{\mathbb{E}}}\left\{ \left\Vert uv\right\Vert _{\mbox{F}}^{2}-\left\Vert uv\left(uv\right)^{\dagger}\right\Vert _{\mbox{F}}^{2}\right\} \label{eq:d4_introduced}
\end{equation}
giving us the desired result

\begin{eqnarray}
\frac{1}{m}\mathbb{E}\textrm{Tr}\left[\left(A\Pi^{T}B\Pi\right)^{2}-\left(AQ_{q}^{T}BQ_{q}\right)^{2}\right] & = & d\left(2k-1\right)\left(m_{2}^{\mbox{odd}}-m_{1,1}^{\mbox{odd}}\right)\left(m_{2}^{\mbox{even}}-m_{1,1}^{\mbox{even}}\right)\nonumber \\
 & \times & \mathbb{E}\left(\left\Vert uv\right\Vert _{\mbox{F}}^{2}-\left\Vert uv\left(uv\right)^{\dagger}\right\Vert _{\mbox{F}}^{2}\right),\label{eq:Quantum-Classical}
\end{eqnarray}
from which 

\begin{eqnarray}
1-p & = & \frac{\mbox{ETr}\left(A\Pi^{T}B\Pi\right)^{2}-\mbox{ETr}\left(AQ_{q}^{-1}BQ_{q}\right)^{2}}{\mbox{ETr}\left(A\Pi^{T}B\Pi\right)^{2}-\mbox{ETr}\left(AQ^{-1}BQ\right)^{2}}\label{eq:1-pFINAL}\\
 & = & \frac{d\left(2k-1\right)\mathbb{E}\left(\left\Vert uv\right\Vert _{\mbox{F}}^{2}-\left\Vert uv\left(uv\right)^{\dagger}\right\Vert _{\mbox{F}}^{2}\right)}{\left(\frac{km\left(n-1\right)}{n\left(m-1\right)}\right)^{2}\left\{ 1-m\mathbb{E}\left(q_{ij}^{4}\right)\right\} }.\nonumber 
\end{eqnarray}
The dependence on the covariance matrix has cancelled- a covariance
matrix is one whose element in the $i,j$ position is the covariance
between the $i^{th}$ and $j^{th}$ eigenvalue. This shows that $p$
is independent of eigenvalues of the local terms which proves the
universality lemma.

Comment: To get the numerator we used permutation invariance of $A$
and $B$ and local terms, to get the denominator we used permutation
invariance of $Q$. \end{onehalfspace}

\end{proof}
\begin{onehalfspace}
Comment: It is interesting that the amount of mixture of the two extremes
needed to capture the quantum spectrum is independent of the actual
types of local terms. It only depends on the physical parameters of
the lattice.
\end{onehalfspace}

\begin{onehalfspace}

\subsection{\label{sub:The-Slider-Theorem}The Slider Theorem and a Summary}
\end{onehalfspace}

\begin{onehalfspace}
In this section we make explicit use of $\beta-$Haar properties of
$Q$ and local terms. To prove that there exists a $0\le p\le1$ such
that the combination in Eq. \ref{eq:convex} is convex we need to
evaluate the expected Frobenius norms in Eq. \ref{eq:Quantum-Classical}.
\end{onehalfspace}
\begin{lem}
\begin{onehalfspace}
$\mathbb{E}\left\Vert uv\right\Vert _{F}^{2}=1/d$ and $\mathbb{E}\left\Vert uv\left(uv\right)^{\dagger}\right\Vert _{F}^{2}=\frac{\beta^{2}\left[3d\left(d-1\right)+1\right]+2\beta\left(3d-1\right)+4}{d\left(\beta d^{2}+2\right)^{2}}$,
when local terms have $\beta-$Haar eigenvectors. \end{onehalfspace}
\end{lem}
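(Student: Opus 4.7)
The plan rests on two structural observations. First, each of $u$ and $v$ is the $d\times d$ reshape of a single column of an independent $d^2\times d^2$ $\beta$-Haar matrix, i.e., a uniformly distributed unit vector on the appropriate $\beta$-sphere; since these columns come from the two distinct local Haar matrices $Q_l$ and $Q_{l+1}$, $u$ and $v$ are independent. Second, the reshape distribution is invariant under $u\mapsto V_1 u V_2^T$ for any $V_1,V_2$ in the $\beta$-orthogonal group, which makes the laws of $M\equiv u^\dagger u$ and $N\equiv vv^\dagger$ invariant under conjugation $M\mapsto WMW^{\dagger}$.

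For part (i), I would apply the two-point function $\mathbb{E}[\bar q_{\alpha}q_{\beta}]=\delta_{\alpha\beta}/d^{2}$ of a Haar unit vector in dimension $d^{2}$; reindexing $\alpha=(i,a)$, $\beta=(i,b)$ and summing over the first index gives $\mathbb{E}[M]=I_{d}/d$, and analogously $\mathbb{E}[N]=I_{d}/d$. Writing $\|uv\|_{F}^{2}=\textrm{Tr}(MN)$ and invoking independence yields $\mathbb{E}\|uv\|_{F}^{2}=\textrm{Tr}(\mathbb{E}M\cdot\mathbb{E}N)=\textrm{Tr}(I_{d}/d^{2})=1/d$, which is the first claim.

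For part (ii), start from $\|uv(uv)^{\dagger}\|_{F}^{2}=\textrm{Tr}[(MN)^{2}]=\sum_{a,b,c,e}M_{ab}N_{bc}M_{ce}N_{ea}$ and use independence to factor the expectation as $\sum_{a,b,c,e}\mathbb{E}[M_{ab}M_{ce}]\,\mathbb{E}[N_{bc}N_{ea}]$. Conjugation invariance forces the second-moment tensor of $M$ to lie in the commutant of the adjoint action: for complex Haar ($\beta=2$) this commutant has dimension two, giving $\mathbb{E}[M_{ab}M_{ce}]=\alpha\,\delta_{ab}\delta_{ce}+\gamma\,\delta_{ae}\delta_{bc}$, while for $\beta=1,4$ an extra invariant $\mu\,\delta_{ac}\delta_{be}$ appears because $M$ is then symmetric or self-dual. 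The coefficients are pinned down by three linear conditions: the deterministic identity $(\textrm{Tr} M)^{2}=1$ gives $\alpha d^{2}+(\gamma+\mu)d=1$; the scalar $\mathbb{E}\,\textrm{Tr}(M^{2})$ is evaluated by splitting the Haar 4-point moments of Table \ref{tab:HaarExp} into the four cases where the two row indices and two column indices of the entries do or do not collide, producing $\mathbb{E}\,\textrm{Tr}(M^{2})=[\beta(2d-1)+2]/(\beta d^{2}+2)$ and thus $\alpha d+\gamma d^{2}+\mu d$; and the entry moment $\mathbb{E}[M_{12}^{2}]$, computed from the same 4-point data, fixes $\mu$ (which vanishes in the complex case and equals $\gamma$ in the real case by the symmetry $M=M^{T}$). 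The analogous identities hold for $N$.

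The final step is a straightforward contraction. Expanding the product of the invariant-tensor expansions for $M$ and $N$ and summing over $a,b,c,e$ reduces to a small collection of Kronecker contractions, each collapsing either to $d$ (when $a=b=c=e$) or to $d^{2}$ (when two independent index pairs coincide); aggregating these terms and substituting the explicit values of $\alpha,\gamma,\mu$ simplifies, after some algebra, to $(\beta^{2}[3d(d-1)+1]+2\beta(3d-1)+4)/[d(\beta d^{2}+2)^{2}]$. As a sanity check, one sees that for $\beta=2$ the $\mu$-term drops out and the expression reduces to $(3d^{2}+1)/[d(d^{2}+1)^{2}]$, matching the claimed formula. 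The main obstacle is handling the third invariant tensor for $\beta\neq 2$: one must solve the full $3\times 3$ linear system for $\alpha,\gamma,\mu$ using the $\beta$-dependent Haar moments, and carry the extra $\mu$-dependent contractions through the final sum without arithmetic mistakes, whereas the complex case collapses to a clean two-parameter calculation.
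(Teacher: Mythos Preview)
Your approach is correct but differs substantially from the paper's. The paper never works with the conjugation-invariant tensor decomposition of $\mathbb{E}[M_{ab}M_{ce}]$. Instead it uses the Gaussian trick $G=u\,\chi_{\beta d^{2}}$ (with $G$ a $d\times d$ $\beta$-Gaussian matrix and $\chi_{\beta d^{2}}$ independent of $u$), so that $u^{\dagger}u=W_{1}/\|G_{1}\|_{F}^{2}$ and $vv^{\dagger}=W_{2}/\|G_{2}\|_{F}^{2}$ with $W_{1},W_{2}$ independent Wishart matrices. Then $\mathbb{E}\|uv(uv)^{\dagger}\|_{F}^{2}=\mathbb{E}\,\mathrm{Tr}(W_{1}W_{2})^{2}\big/[\mathbb{E}\chi_{\beta d^{2}}^{4}]^{2}$, and $\mathbb{E}\,\mathrm{Tr}(W_{1}W_{2})^{2}$ is computed by writing $W=\sum_{i}x_{i}x_{i}^{\dagger}$ and classifying the $d^{4}$ terms in $\sum_{ijkl}(x_{i}^{\dagger}y_{j})(y_{j}^{\dagger}x_{k})(x_{k}^{\dagger}y_{l})(y_{l}^{\dagger}x_{i})$ by the collision pattern of $(i,k)$ and $(j,l)$; each of the three resulting scalar expectations is evaluated by a short QR-decomposition argument that reduces it to products of $\chi^{2}$ and Gaussian moments.

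Your invariant-tensor route is more structural: it explains \emph{why} only a handful of scalar parameters appear, and packages the $\beta$-dependence into the presence or absence of the third invariant $\delta_{ac}\delta_{be}$. The paper's Wishart route is more hands-on and avoids invariant theory entirely, trading the linear system you must solve for a short case analysis on Gaussian inner products. Both land on the same closed form; your sanity check at $\beta=2$ confirms consistency.
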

\begin{proof}
\begin{onehalfspace}
It is a fact that $G=u\chi_{\beta d^{2}}$, when $u$ is uniform on
a sphere, $G$ is a $d\times d$ $\beta-$Gaussian matrix whose expected
Frobenius norm has a $\chi-$distribution denoted here by $\chi_{\beta d^{2}}$
(similarly for $v$). Recall that $\mathbb{E}\left(\chi_{h}^{2}\right)=h$
and $\mathbb{E}\left(\chi_{h}^{4}\right)=h\left(h+2\right)$.

\begin{eqnarray}
\mathbb{E}\left\Vert uv\right\Vert _{\mbox{F}}^{2} & \mathbb{E}\left(\chi_{\beta d^{2}}\right)^{2}= & \mathbb{E}\left\Vert \left(G_{1}G_{2}\right)\right\Vert _{\mbox{F}}^{2}\label{eq:SVclass}\\
\Rightarrow\mathbb{E}\left\Vert uv\right\Vert _{\mbox{F}}^{2} & = & \frac{1}{\left(\beta d^{2}\right)^{2}}\mathbb{E}\left\Vert G_{1}G_{2}\right\Vert _{\mbox{F}}^{2}=\frac{d^{2}}{\left(\beta d^{2}\right)^{2}}\mathbb{E}\sum_{k=1}^{d}\left(g_{i,k}^{(1)}g_{kj}^{(2)}\right)^{2}\nonumber \\
 & = & \frac{d^{2}}{\left(\beta d^{2}\right)^{2}}d\left(\beta\right)^{2}=\frac{1}{d}.\nonumber 
\end{eqnarray}
The quantum case, $u^{\dagger}u=\frac{G_{1}^{\dagger}G_{1}}{\left\Vert G_{1}\right\Vert _{\mbox{F}}^{2}}\equiv\frac{W_{1}}{\left\Vert G_{1}\right\Vert _{\mbox{F}}^{2}}$
, similarly $v^{\dagger}v=\frac{G_{2}^{\dagger}G_{2}}{\left\Vert G_{2}\right\Vert _{\mbox{F}}^{2}}\equiv\frac{W_{2}}{\left\Vert G_{2}\right\Vert _{\mbox{F}}^{2}}$,
where $W_{1}$ and $W_{2}$ are Wishart matrices. 

\begin{equation}
\mathbb{E}\left\Vert uv\left(uv\right)^{\dagger}\right\Vert _{\mbox{F}}^{2}=\frac{\mathbb{E}\mbox{Tr}\left(W_{1}W_{2}\right)^{2}}{\mathbb{E}\left(\chi_{d^{2}\beta}^{4}\right)\mathbb{E}\left(\chi_{d^{2}\beta}^{4}\right)}=\frac{\mathbb{E}\mbox{Tr}\left(W_{1}W_{2}\right)^{2}}{\left[d^{2}\beta\left(d^{2}\beta+2\right)\right]^{2}}\label{eq:QuantumFrob}
\end{equation}
hence the complexity of the problem is reduced to finding the expectation
of the trace of a product of Wishart matrices. 

\begin{equation}
\mathbb{E}\mbox{Tr}\left(W_{1}W_{2}\right)^{2}=\mathbb{E}\mbox{Tr}\left(W_{1}W_{2}W_{1}W_{2}\right)=\mathbb{E}\sum_{1\le ijkl\le d}x_{i}x_{i}^{\dagger}y_{j}y_{j}^{\dagger}x_{k}x_{k}^{\dagger}y_{l}y_{l}^{\dagger}\equiv\Pi\left[\begin{array}{cc}
x_{i}^{\dagger}y_{j} & y_{l}^{\dagger}x_{i}\\
y_{j}^{\dagger}x_{k} & x_{k}^{\dagger}y_{l}
\end{array}\right],\label{eq:PROD}
\end{equation}
where $\Pi$ denotes the product of the elements of the matrix. There
are three types of expectations summarized in Table \ref{tab:Expectation-values.}.

\begin{table}
\noindent \begin{centering}
\begin{tabular}{|c|c|c|}
\hline 
Notation & Type & Count\tabularnewline
\hline 
\hline 
$X$ & $\begin{array}{ccc}
i\neq k & \& & j\neq l\end{array}$ & $d^{2}\left(d-1\right)^{2}$\tabularnewline
\hline 
$Y$ & $\begin{array}{ccc}
i=k & \& & j\neq l\\
 & \mbox{or}\\
i\neq k & \& & j=l
\end{array}$ & $2d^{2}\left(d-1\right)$\tabularnewline
\hline 
$Z$ & $\begin{array}{ccc}
i=k & \& & j=l\end{array}$ & $d^{2}$\tabularnewline
\hline 
\end{tabular}
\par\end{centering}

\caption{\label{tab:Expectation-values.}Expectation values.}
\end{table}

In Table \ref{tab:Expectation-values.}

\begin{eqnarray*}
X & \equiv & \mathbb{E}\left[\Pi\left(x_{i}x_{k}\right)\left(y_{i}y_{l}\right)\right]\\
Y & \equiv & \mathbb{E}\left[\left(x_{i}^{\dagger}y_{j}\right)^{2}\left(x_{i}^{\dagger}y_{l}\right)^{2}\right]\\
Z & \equiv & \mathbb{E}\left[\left(x_{i}^{\dagger}y_{i}\right)^{4}\right].
\end{eqnarray*}
We now evaluate these expectation values. We have

\[
X=\Pi\left(\begin{array}{cc}
\chi_{\beta d} & g_{\beta}\\
0 & \chi_{\beta\left(d-1\right)}\\
0 & 0\\
\vdots & \vdots
\end{array}\right)^{\dagger}\left(\begin{array}{cc}
g_{\beta} & g_{\beta}\\
g_{\beta} & g_{\beta}\\
\mbox{DC} & \mbox{DC}\\
\vdots & \vdots
\end{array}\right)
\]
by $QR$ decomposition, where $g_{\beta}$ and $\chi_{h}$ denote
an element with a $\beta-$Gaussian and $\chi_{h}$ distribution respectively;
DC means ``Don't Care''. Consequently

\begin{eqnarray*}
X & = & \Pi\left(\begin{array}{cc}
\chi_{\beta d} & g_{\beta}\\
0 & \chi_{\beta\left(d-1\right)}
\end{array}\right)\left(\begin{array}{cc}
a & b\\
c & d
\end{array}\right)\\
 & =\Pi & \left[\begin{array}{cc}
a\chi_{\beta d} & g_{\beta}a+\chi_{\beta\left(d-1\right)}c\\
b\chi_{\beta d} & g_{\beta}b+\chi_{\beta\left(d-1\right)}d
\end{array}\right]=\Pi\left[\begin{array}{cc}
a\chi_{\beta d} & g_{\beta}a\\
b\chi_{\beta d} & g_{\beta}b
\end{array}\right]\\
 & = & \chi_{\beta d}^{2}a^{2}b^{2}g_{\beta}^{2}=\beta^{4}d.
\end{eqnarray*}
where we denoted the four independent Gaussian entries by $a,b,c,d$
to not confuse them as one number. From Eq. \ref{eq:PROD} we have

\begin{eqnarray*}
Y & = & \mathbb{E}\left[\left(x_{i}^{\dagger}y_{j}\right)^{2}\left(x_{i}^{\dagger}y_{l}\right)^{2}\right]=\mathbb{E}\left(\chi_{d\beta}g_{\beta}^{\left(1\right)}\right)^{2}\left(\chi_{d\beta}g_{\beta}^{\left(2\right)}\right)^{2}=\beta d\left(\beta d+2\right)\beta^{2}\\
Z & = & \mathbb{E}\left(x^{\dagger}y\right)^{4}=\mathbb{E}\left(\chi_{\beta d}^{4}\right)\mathbb{E}\left(\chi_{\beta}^{4}\right)=\beta d\left(\beta d+2\right)\beta\left(\beta+2\right).
\end{eqnarray*}
Eq. \ref{eq:QuantumFrob} now reads

\begin{equation}
\mathbb{E}\left\Vert uv\left(uv\right)^{\dagger}\right\Vert _{F}^{2}=\frac{\beta^{2}\left[3d\left(d-1\right)+1\right]+2\beta\left(3d-1\right)+4}{d\left(\beta d^{2}+2\right)^{2}}.\label{eq:SVq}
\end{equation}
\end{onehalfspace}
\end{proof}
\begin{thm*}
\begin{onehalfspace}
\textbf{\textup{(The Slider Theorem)\label{thm:The-Slider-Theorem}}}\textbf{
}The quantum kurtosis lies in between the classical and the iso kurtoses,
$\gamma_{2}^{iso}\leq\gamma_{2}^{q}\leq\gamma_{2}^{c}$. Therefore
there exists a $0\leq p\leq1$ such that $\gamma_{2}^{q}=p\gamma_{2}^{c}+\left(1-p\right)\gamma_{2}^{iso}$.
Further, $\lim_{N\rightarrow\infty}p=1$.\end{onehalfspace}
\end{thm*}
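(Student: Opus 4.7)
The plan is to work directly with the explicit formula derived in Eq.~\ref{eq:1-pFINAL},
\[
1-p=\frac{d(2k-1)\,\mathbb{E}\!\left(\|uv\|_F^2-\|uv(uv)^{\dagger}\|_F^2\right)}{\left(\tfrac{km(n-1)}{n(m-1)}\right)^{2}\{1-m\,\mathbb{E}(q_{ij}^{4})\}},
\]
combined with the explicit Frobenius-norm expectations from the preceding lemma and the value $\mathbb{E}(q_{ij}^{4})=\tfrac{\beta+2}{m(m\beta+2)}$ from Table~\ref{tab:HaarExp}. Everything reduces to checking that this ratio lies in $[0,1]$ and tends to $0$ as $N\to\infty$.

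First I would establish $p\le 1$, equivalently $\gamma_{2}^{q}\le\gamma_{2}^{c}$ (i.e.\ the numerator of $1-p$ is non-negative). Since the local eigenvector components $u,v$ come from unit vectors in $\mathbb{R}^{d^{2}}$ (or $\mathbb{C}^{d^{2}}$) reshaped into $d\times d$ matrices, $\|u\|_{F}=\|v\|_{F}=1$, and by submultiplicativity of the Frobenius norm with the operator norm, $\|uv\|_{F}\le\|u\|_{\mathrm{op}}\|v\|_{F}\le 1$. Thus each singular value $\sigma_{i}$ of $uv$ satisfies $\sigma_{i}\le 1$, so pointwise $\sum_{i}\sigma_{i}^{4}\le\sum_{i}\sigma_{i}^{2}$; taking expectations gives $\mathbb{E}\|uv\|_{F}^{2}\ge\mathbb{E}\|uv(uv)^{\dagger}\|_{F}^{2}$. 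Combined with the Cauchy--Schwarz bound $m_{1,1}^{\bullet}\le m_{2}^{\bullet}$ for the local eigenvalues, this gives non-negativity of the numerator in Eq.~\ref{eq:Quantum-Classical}.

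Second, and this is the main obstacle, I would establish $p\ge 0$, equivalently $\gamma_{2}^{q}\ge\gamma_{2}^{\mathrm{iso}}$ (i.e.\ $1-p\le 1$). After substituting the closed forms $\mathbb{E}\|uv\|_{F}^{2}=1/d$ and $\mathbb{E}\|uv(uv)^{\dagger}\|_{F}^{2}=\tfrac{\beta^{2}[3d(d-1)+1]+2\beta(3d-1)+4}{d(\beta d^{2}+2)^{2}}$, together with $1-m\,\mathbb{E}(q_{ij}^{4})=\tfrac{\beta(m-1)}{m\beta+2}>0$, the required inequality becomes an explicit rational inequality in $d,\beta,k$ (using $n=d^{2}$, $m=tn^{k}$). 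The challenge is that while the right side carries a factor $k^{2}$ and the left only $(2k-1)$---so asymptotically there is large slack---one must verify the borderline case $k=1$ (smallest chain) robustly across all admissible $d$ and $\beta$. I would separate the check into: (i) a direct algebraic verification at $k=1$ with $d\ge 2$, $\beta\in\{1,2,4\}$ (and generic $\beta$ by continuity), and (ii) an inductive/monotonicity argument showing that the ratio is non-increasing in $k$, so that the $k=1$ bound controls all larger chains.

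Finally, for $\lim_{N\to\infty}p=1$, observe that as $N\to\infty$ we have $k\to\infty$ and $m\to\infty$, so $\tfrac{km(n-1)}{n(m-1)}\sim k(n-1)/n$ and $1-m\,\mathbb{E}(q_{ij}^{4})\to 1$. Hence the denominator of $1-p$ grows like $k^{2}(n-1)^{2}/n^{2}$ while the numerator grows only like $d(2k-1)$ times a fixed expectation, so $1-p=O(1/k)\to 0$, establishing $p\to 1$. Intuitively this says that as the chain grows, the diamond terms (the only terms distinguishing quantum from classical) become a vanishing fraction of all order-four contractions, so the spectrum is driven by the classical (convolution) end of the Slider.
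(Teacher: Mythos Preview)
Your approach is essentially the same as the paper's: both reduce the question to the explicit formula Eq.~\ref{eq:1-pFINAL}, use the pointwise inequality $\sum\sigma_i^4\le\sum\sigma_i^2$ (from $\sigma_i\le 1$) for $p\le 1$, and for $p\ge 0$ substitute the closed-form expectations, observe that the $k$-dependent factors are largest at $k=1$, and finish with an algebraic check there; the limit $p\to 1$ likewise comes from the $k^2$-versus-$(2k-1)$ growth disparity you identify. The one place where the paper is sharper is your step~(i): checking $k=1$ only at $\beta\in\{1,2,4\}$ and invoking ``continuity'' does not establish the inequality for all $\beta$, whereas the paper writes the $k=1$ braces term as a single rational function in $d,\beta$ and verifies numerator~$\le$~denominator by an explicit subtraction valid for every $d\ge 2$ and $\beta\ge 1$ simultaneously.
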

\begin{proof}
\begin{onehalfspace}
We have $\left\{ 1-\frac{1}{m}\sum_{ij=1}^{m}q_{ij}^{4}\right\} \geq0$,
since $\sum_{ij}q_{ij}^{4}\leq\sum_{ij}q_{ij}^{2}=m.$ The last inequality
follows from $q_{ij}^{2}\le1$ . Therefore, Eq. \ref{eq:denom} is

\begin{eqnarray*}
\gamma_{2}^{iso}-\gamma_{2}^{c} & = & \frac{2}{\sigma^{4}}\left(m_{2}^{\mbox{odd}}-m_{11}^{\mbox{odd}}\right)\left(m_{2}^{\mbox{even}}-m_{11}^{\mbox{even}}\right)\times\\
 &  & \left(\frac{km\left(n-1\right)}{n\left(m-1\right)}\right)^{2}\left\{ m\mathbb{E}\left(q_{11}^{4}\right)-1\right\} \le0.
\end{eqnarray*}

From Eqs. \ref{eq:QdepartFinal} and \ref{eq:CDepart} and using the
fact that the singular values $\sigma_{i}\le1$ we have

\[
\left\Vert uv\left(uv\right)^{\dagger}\right\Vert _{\mbox{F}}^{2}=\sum_{i=1}^{d}\sigma_{i}^{4}\le\sum_{i=1}^{d}\sigma_{i}^{2}=\left\Vert uv\right\Vert _{\mbox{F}}^{2}
\]
which proves $\gamma_{2}^{q}-\gamma_{2}^{c}\leq0$. In order to establish
$\gamma_{2}^{iso}\leq\gamma_{2}^{q}\leq\gamma_{2}^{c}$, we need to
show that $\gamma_{2}^{c}-\gamma_{2}^{q}\le\gamma_{2}^{c}-\gamma_{2}^{iso}$.
Eq. \ref{eq:1-pFINAL} after substituting $m\mathbb{E}\left(q_{ij}^{4}\right)=\frac{\beta+2}{\left(m\beta+2\right)}$
from Table \ref{tab:HaarExp} and Eqs. \ref{eq:SVclass}, \ref{eq:SVq}
reads

\begin{eqnarray}
1-p & = & \left(1-d^{-2k-1}\right)\left[1-\left(\frac{k-1}{k}\right)^{2}\right]\left\{ \left(1-\frac{1-d^{-2k+1}}{1+\beta d^{2}/2}\right)\left(\frac{d}{d+1}\right)^{2}\right.\nonumber \\
 &  & \left.\left[\frac{\beta\left(d^{3}+d^{2}-2d+1\right)+4d-2}{\left(d-1\right)\left(\beta d^{2}+2\right)}\right]\right\} \label{eq:1-pSlider}
\end{eqnarray}

We want to show that $0\le1-p\le1$ for any integer $k\ge1,\mbox{ }d\ge2$
and $\mbox{ }\beta\ge1$. All the factors are manifestly $\ge0$,
therefore $1-p\ge0$. The first two factors are clearly $\le1$ so
we need to prove that the term in the braces is too. Further, $k=1$
provides an upper bound as $\left(1-\frac{1-d^{-2k+1}}{1+\beta d^{2}/2}\right)\le\left(1-\frac{1-d^{-3}}{1+\beta d^{2}/2}\right)$.
We rewrite the term in the braces

\begin{equation}
\frac{d\left(\beta d^{3}+2\right)\left[\beta\left(d^{3}+d^{2}-2d+1\right)+4d-2\right]}{\left(\beta d^{2}+2\right)^{2}\left(d+1\right)^{2}\left(d-1\right)},\label{eq:num-denom}
\end{equation}
but we can subtract the denominator from the numerator to get\vspace{-1cm}

\[
\left(\beta d+2\right)\left[\beta\left(d^{4}-2d^{3}\right)+2\left(d^{3}-d^{2}-1\right)\right]\ge0\quad\forall\; d\ge2.
\]

This proves that ($\ref{eq:num-denom}$) is less than one. Therefore,
the term in the braces is less than one and hence $0\le p\le1$. Let
us note the following limits of interest (recall $N-1=2k$)\vspace{-2cm}

\begin{eqnarray*}
\lim_{d\rightarrow\infty}\left(1-p\right) & = & \frac{2k-1}{k^{2}}\overset{k=1}{=}1\\
\lim_{N\rightarrow\infty}\left(1-p\right) & \sim & \frac{1}{N}\rightarrow0
\end{eqnarray*}
the first limit tells us that if we consider having two local terms
and take the local dimension to infinity we have essentially free
probability theory as expected. The second limit shows that in the
thermodynamical limit (i.e., $N\rightarrow\infty$) the convex combination
slowly approaches the classical end. In the limit where $\beta\rightarrow\infty$
the $\beta$ dependence in $\left(1-p\right)$ cancels out. This is
a reconfirmation of the fact that in free probability theory, for
$\beta\rightarrow\infty$, the result should be independent of $\beta$.
We see that the bounds are tight. 

\begin{figure}[H]
\begin{centering}
\includegraphics[scale=0.35]{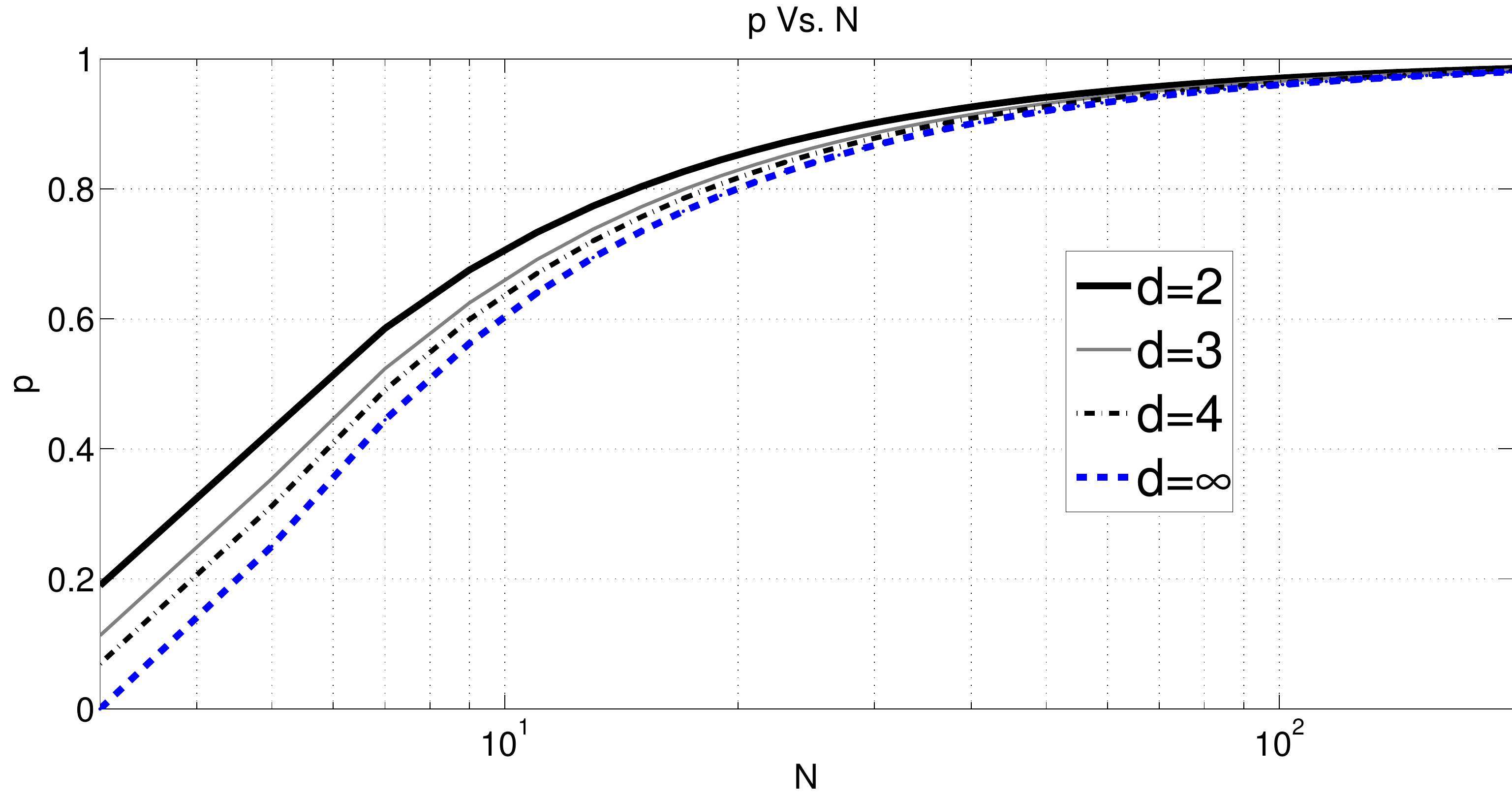}
\par\end{centering}

\centering{}\caption{\label{fig:pVsN}An example: $\beta=1$: the quantum problem for all
$d$ lies in between the iso $(p=0)$ and the classical $(p=1)$. }
\end{figure}
\end{onehalfspace}

\end{proof}
\begin{onehalfspace}
Comment: Entanglement shows itself starting at the fourth moment;
further, in the expansion of the fourth moments only the terms that
involve \textit{a pair} of local terms \textit{sharing a site} differ.
Note that when the QMBS possesses a translational symmetry, there
is an additional complication introduced by the dependence of the
local terms. Though, in this case, the non-iid nature of the local
terms complicates the matter theoretically, we have not seen a practical
limitation of IE in our numerical experiments. 

Comment: One could from the beginning use free approximation instead
of isotropic ($m\rightarrow\infty$), in which case the proofs are
simplified.

We now \textit{\uline{summarize}} the main thesis of this work.
We are interested in the eigenvalue distribution of 

\[
H\equiv H_{\mbox{odd}\vphantom{\mbox{even}}}+H_{\mbox{even}\mbox{\ensuremath{\vphantom{odd}}}}=\sum_{l=1,3,5,\cdots}\mathbb{I}\otimes H_{l,l+1}\otimes\mathbb{I}+\sum_{l=2,4,6,\cdots}\mathbb{I}\otimes H_{l,l+1}\otimes\mathbb{I},
\]
 which in a basis in that $H_{\mbox{odd}\vphantom{\mbox{even}}}$
is diagonal reads $H=A+Q_{q}^{-1}BQ_{q}$. Since this problem has
little hope in being solved exactly we consider along with it two
known approximations:

\begin{eqnarray*}
H_{c} & = & A+\Pi^{-1}B\Pi\\
H & = & A+Q_{q}^{-1}BQ_{q}\\
H_{iso} & = & A+Q^{-1}BQ.
\end{eqnarray*}

We proved that the first three moments of the three foregoing equations
are equal. We then calculated their fourth moments as encoded by their
kurtoses ($\gamma_{2}$'s) analytically and proved that there exists
a $0\le p\le$1 such that 

\[
\gamma_{2}^{q}=p\gamma_{2}^{c}+\left(1-p\right)\gamma_{2}^{iso}.
\]
It turned out that the only terms in the expansion of the fourth moments
that were relevant were

\begin{equation}
1-p=\frac{\mathbb{E}\mbox{Tr}\left\{ \left(A\Pi^{-1}B\Pi\right)^{2}-\left(AQ_{q}^{-1}BQ_{q}\right)^{2}\right\} }{\mathbb{E}\mbox{Tr}\left\{ \left(A\Pi^{-1}B\Pi\right)^{2}-\left(AQ^{-1}BQ\right)^{2}\right\} }.\label{eq:1-P_Effective}
\end{equation}
Through direct calculation we found that the numerator $\mathbb{E}\mbox{Tr}\left\{ \left(A\Pi^{-1}B\Pi\right)^{2}-\left(AQ_{q}^{-1}BQ_{q}\right)^{2}\right\} $
evaluates to be

\[
d\left(2k-1\right)\left(m_{2}^{\mbox{odd}}-m_{1,1}^{\mbox{odd}}\right)\left(m_{2}^{\mbox{even}}-m_{1,1}^{\mbox{even}}\right)\mathbb{E}\left(\left\Vert uv\right\Vert _{\mbox{F}}^{2}-\left\Vert uv\left(uv\right)^{\dagger}\right\Vert _{\mbox{F}}^{2}\right),
\]
and the denominator $\mathbb{E}\mbox{Tr}\left\{ \left(A\Pi^{-1}B\Pi\right)^{2}-\left(AQ^{-1}BQ\right)^{2}\right\} $

\[
\left(m_{2}^{\mbox{odd}}-m_{1,1}^{\mbox{odd}}\right)\left(m_{2}^{\mbox{even}}-m_{1,1}^{\mbox{even}}\right)\left(\frac{km\left(n-1\right)}{n\left(m-1\right)}\right)^{2}\left\{ 1-m\mathbb{E}\left(q_{ij}^{4}\right)\right\} .
\]

Therefore $1-p$ does not depend on the local distribution and can
generally be expressed as

\framebox{\begin{minipage}[t]{1\columnwidth}%
\[
1-p=\frac{d\left(2k-1\right)\mathbb{E}\left(\left\Vert uv\right\Vert _{\mbox{F}}^{2}-\left\Vert uv\left(uv\right)^{\dagger}\right\Vert _{\mbox{F}}^{2}\right)}{\left(\frac{km\left(n-1\right)}{n\left(m-1\right)}\right)^{2}\left\{ 1-m\mathbb{E}\left(q_{ij}^{4}\right)\right\} }.
\]
\end{minipage}}

\bigskip{}

If we further assume that the local eigenvectors are $\beta-\mbox{Haar}$
distributed we get 

\begin{eqnarray*}
1-p & = & \left(1-d^{-2k-1}\right)\left[1-\left(\frac{k-1}{k}\right)^{2}\right]\left(1-\frac{1-d^{-2k+1}}{1+\beta d^{2}/2}\right)\left(\frac{d}{d+1}\right)^{2}\\
 & \times & \left[\frac{\beta\left(d^{3}+d^{2}-2d+1\right)+4d-2}{\left(d-1\right)\left(\beta d^{2}+2\right)}\right].
\end{eqnarray*}

\medskip{}

Next we asserted that this $p$ can be used to approximate the distribution

\[
d\nu^{q}\approx d\nu^{IE}=pd\nu^{c}+\left(1-p\right)d\nu^{iso}.
\]
We argued that the spectra obtained using Isotropic Entanglement (IE)
are accurate well beyond four moments. 

For illustration, we apply IE theory in full detail to a chain with
Wishart matrices as local terms. Other types of local terms (e.g.
GOE, random $\pm1$ eigenvalues) can be treated similarly; therefore
in Section \ref{sec:Other-Examples} we show the plots comparing IE
with exact diagonalization for these cases. 
\end{onehalfspace}

\begin{onehalfspace}

\section{\label{sec:First-Example:-Wishart}A Detailed Example: Wishart Matrices
as Local Terms }
\end{onehalfspace}

\begin{onehalfspace}
As an example take a chain with odd number of sites and for the local
terms in Eq. \ref{eq:Hamiltonian} pick $H^{\left(l\right)}=W^{T}W$,
where $W$ is a rank $r$ matrix whose elements are picked randomly
from a Gaussian distribution ($\beta=1$); these matrices $W^{T}W$
are known as \textit{Wishart matrices. }Clearly the maximum possible
rank is $r=d^{2}$ for each of the local terms.

Any cumulant is equal to the corresponding cumulant of one local term,
denoted by $\kappa$, times the number of summands in Eq. \ref{eq:Hamiltonian}.
In particular, the fourth cumulant of $H$ is $\kappa_{4}^{\left(N-1\right)}=\left(N-1\right)\kappa_{4}$.
Below we drop the superscripts when the quantity pertains to the whole
chain. Next we recall the definitions in terms of cumulants of the
mean $(\mu)$, the variance $(\sigma^{2})$, the skewness $(\gamma_{1})$,
and the kurtosis $(\gamma_{2})$ 

\begin{equation}
\begin{array}{cccccccc}
\mu\equiv\kappa_{1} & \quad & \sigma^{2}\equiv\kappa_{2} & \quad & \gamma_{1}\equiv\frac{\kappa_{3}}{\sigma^{3}} & \quad & \gamma_{2}\equiv\frac{\kappa_{4}}{\sigma^{4}}=\frac{m_{4}}{\sigma^{4}}-3 & .\end{array}\label{eq:momCumul}
\end{equation}

\end{onehalfspace}

\begin{onehalfspace}

\subsection{\label{sub:Wishart-Classical-Case:-Calculation}Evaluation of $p=\frac{\gamma_{2}^{q}-\gamma_{2}^{iso}}{\gamma_{2}^{c}-\gamma_{2}^{iso}}$}
\end{onehalfspace}

\begin{onehalfspace}
The moments of the local terms are obtained from MOPS \cite{mops}; 

\begin{equation}
\begin{array}{c}
m_{1}=\beta r\\
m_{2}=\beta r\left[\beta\left(r+n-1\right)+2\right]\\
m_{3}=\beta r\left\{ \beta^{2}\left[n^{2}+\left(r-1\right)\left(3n+r-2\right)\right]+6\beta\left(n+r-1\right)+8\right\} \\
m_{4}=\beta r\left\{ 48+\beta^{3}\left[n^{3}+6n^{2}(r-1)+n\left(6r-11\right)\left(r-1\right)-6\left(r^{2}+1\right)+r^{3}+11r\right]\right.\\
\left.+2\beta^{2}\left[6\left(n^{2}+r^{2}\right)+17\left(n(r-1)-r\right)+11\right]+44\beta\left(n+r-1\right)\right\} \\
m_{1,1}=\beta^{2}r\left(r-1\right)
\end{array}
\end{equation}
which for real matrices $\beta=1$ yields

\begin{equation}
\begin{array}{c}
m_{1}=r\\
m_{2}=r\left(r+n+1\right)\\
m_{3}=r\left(n^{2}+3n+3rn+3r+r^{2}+4\right)\\
m_{4}=r\left(6n^{2}+21n+6rn^{2}+17rn+21r+6nr^{2}+6r^{2}+n^{3}+r^{3}+20\right)\\
m_{1,1}=r\left(r-1\right).
\end{array}
\end{equation}

The mean, variance, skewness, and kurtosis are obtained from the foregoing
relations, through the cumulants Eq. \ref{eq:momCumul}. We drop the
superscripts when the quantity pertains to the whole chain. Therefore,
using Eq. \ref{eq:momCumul}, we have

\begin{equation}
\begin{array}{ccc}
\mu\equiv\left(N-1\right)r & \quad & \sigma^{2}\equiv r\left(N-1\right)\left(n+1\right)\\
\gamma_{1}\equiv\frac{n^{2}+3n+4}{\left(n+1\right)^{3/2}\sqrt{r\left(N-1\right)}} & \quad & \gamma_{2}^{\left(c\right)}\equiv\frac{n^{2}\left(n+6\right)-rn\left(n+1\right)+21n+2r+20}{r\left(N-1\right)\left(n+1\right)^{2}}.
\end{array}\label{eq:AllClassical}
\end{equation}

From Eq. \ref{eq:m_2} we readily obtain 

\begin{equation}
\frac{1}{m}\mathbb{E}\textrm{Tr}\left(A\Pi^{T}B\Pi\right)^{2}=r^{2}k^{2}\left(rk+n+1\right)^{2}.\label{eq:Classical}
\end{equation}

By The Matching Three Moments theorem we immediately have the mean,
the variance and the skewness for the isotropic case 

\[
\begin{array}{ccc}
\mu=\left(N-1\right)r &  & \sigma^{2}=r\left(N-1\right)\left(n+1\right)\\
 & \gamma_{1}=\frac{n^{2}+3n+4}{\left(n+1\right)^{3/2}\sqrt{r\left(N-1\right)}}.
\end{array}
\]

Note that the denominator in Eq. \ref{eq:convex} becomes, 

\begin{equation}
\gamma_{2}^{c}-\gamma_{2}^{iso}=\frac{\kappa_{4}^{(c)}-\kappa_{4}^{(iso)}}{\sigma^{4}}=\frac{2}{m}\frac{\mathbb{E}\left\{ \textrm{Tr}\left[\left(A\Pi^{T}B\Pi\right)^{2}-\left(AQ^{T}BQ\right)^{2}\right]\right\} }{r^{2}\left(N-1\right)^{2}\left(n+1\right)^{2}}.\label{eq:denomEx}
\end{equation}

In the case of Wishart matrices, $m_{1}^{\mbox{odd}}=m_{1}^{\mbox{even}}=r$,
and $m_{2}^{\mbox{odd}}=m_{2}^{\mbox{even}}=r\left(r+n+1\right)$,
$m_{11}^{\mbox{odd}}=m_{11}^{\mbox{even}}=r\left(r-1\right)$ given
by Eqs. \ref{eq:m_2} and \ref{eq:elemGeneral} respectively. Therefore
we can substitute these into Eq. \ref{eq:Iso-Classical_FINAL} 

\begin{eqnarray}
\frac{1}{m}\mathbb{E}\left\{ \textrm{Tr}\left[\left(A\Pi^{T}B\Pi\right)^{2}-\left(AQ^{T}BQ\right)^{2}\right]\right\}  & = & \left(m_{2}^{(A)}-m_{1,1}^{(A)}\right)\left(m_{2}^{(B)}-m_{1,1}^{(B)}\right)\left\{ 1-m\mathbb{E}\left(q_{ij}^{4}\right)\right\} \nonumber \\
 & = & \frac{\beta\left(m-1\right)}{\left(m\beta+2\right)}\left(\frac{km\left(n-1\right)}{n\left(m-1\right)}\right)^{2}\left(m_{2}-m_{1,1}\right)^{2}\nonumber \\
 & = & \frac{\beta k^{2}m^{2}\left(n-1\right)^{2}}{\left(m\beta+2\right)\left(m-1\right)n^{2}}\left(m_{2}-m_{1,1}\right)^{2}\label{eq:WishartDenom}
\end{eqnarray}
 One can also calculate each of the terms separately and obtain the
same results (see Appendix for the alternative).

From Eq. \ref{eq:Quantum-Classical} we have 

\begin{eqnarray}
\frac{1}{m}\mathbb{E}\left[\mbox{Tr}\left(A\Pi^{T}B\Pi\right)^{2}-\textrm{Tr}\left(AQ_{q}^{T}BQ_{q}\right)^{2}\right] & = & d\left(2k-1\right)\left(m_{2}-m_{1,1}\right)^{2}\nonumber \\
 & \times & \mathbb{E}\left(\left\Vert uv\right\Vert _{\mbox{F}}^{2}-\left\Vert uv\left(uv\right)^{\dagger}\right\Vert _{\mbox{F}}^{2}\right)\nonumber \\
 & = & \left(2k-1\right)\left(m_{2}-m_{1,1}\right)^{2}\nonumber \\
 & \times & \left\{ \frac{1+d\left(d^{3}+d-3\right)}{\left(d^{2}+2\right)^{2}}\right\} .\label{eq:WishartNum}
\end{eqnarray}
We can divide Eq. \ref{eq:WishartNum} by Eq. \ref{eq:WishartDenom}
to evaluate the parameter $p$. 
\end{onehalfspace}

\begin{onehalfspace}

\subsection{\label{sub:Wishart-Summary}Summary of Our Findings and Further Numerical
Results}
\end{onehalfspace}

\begin{onehalfspace}
We summarize the results along with numerical experiments in Tables
\ref{tab:summaryTheory} and \ref{tab:SummaryNumerical} to show the
equivalence of the first three moments and the departure of the three
cases in their fourth moment. As said above,

\begin{equation}
Q_{c}=\mathbb{I}_{d^{N}}\qquad Q_{iso}\equiv Q\;\textrm{Haar}\; d^{N}\times d^{N}\qquad Q_{q}=\left(Q_{q}^{(A)}\right)^{T}Q_{q}^{(B)}
\end{equation}

where, $\left(Q_{q}^{(A)}\right)^{T}Q_{q}^{(B)}$ is given by Eq.
\ref{eq:OddEven}. In addition, from Eq. \ref{eq:AllClassical} we
can define $\Delta$ to be the part of the kurtosis that is equal
among the three cases

$\Delta=$ $\gamma_{2}^{c}$ - $\frac{2m_{2}^{A}m_{2}^{B}}{\sigma^{4}}=$
$\gamma_{2}^{c}-\frac{1}{2}\frac{\left(rk+n+1\right)^{2}}{\left(n+1\right)^{2}}$.

Using $\Delta$ we can obtain the full kurtosis for the iso and quantum
case, and therefore (see Table \ref{tab:summaryTheory} for a theoretical
summary): 

\begin{equation}
p=\frac{\gamma_{2}^{q}-\gamma_{2}^{iso}}{\gamma_{2}^{c}-\gamma_{2}^{iso}}.
\end{equation}

\begin{table}
\begin{centering}
\begin{tabular}{|c||c||c||c|}
\hline 
$\beta=1$ Wishart & Iso & Quantum & Classical \tabularnewline
\hline 
\hline 
Mean $\mu$ & \multicolumn{3}{c|}{$r\left(N-1\right)$}\tabularnewline
\hline 
Variance $\sigma^{2}$ & \multicolumn{3}{c|}{$r\left(N-1\right)\left(d^{2}+1\right)$}\tabularnewline
\hline 
Skewness $\gamma_{1}$ & \multicolumn{3}{c|}{$\frac{d^{4}+3d^{2}+4}{\sqrt{r\left(N-1\right)\left(d^{2}+1\right)^{3}}}$}\tabularnewline
\hline 
$\frac{1}{m}\mathbb{E}\left[\textrm{Tr}\left(AQ_{\bullet}^{T}BQ_{\bullet}\right)^{2}\right]$ & $m_{2}^{A}m_{2}^{B}-$ Eq. \ref{eq:WishartDenom} & $m_{2}^{A}m_{2}^{B}-$ Eq. \ref{eq:WishartNum} & $r^{2}k^{2}\left(rk+n+1\right)^{2}$\tabularnewline
\hline 
Kurtosis $\gamma_{2}^{\left(\bullet\right)}$ & $\frac{2}{m\sigma^{4}}\mathbb{E}\left[\textrm{Tr}\left(AQ^{T}BQ\right)^{2}\right]$$+\Delta$ & $\frac{2}{m\sigma^{4}}\mathbb{E}\left[\textrm{Tr}\left(AQ_{q}^{T}BQ_{q}\right)^{2}\right]$$+\Delta$ & Eq. \ref{eq:AllClassical}\tabularnewline
\hline 
\end{tabular}
\par\end{centering}

\centering{}\caption{\label{tab:summaryTheory}Summary of the results when the local terms
are Wishart matrices. The fourth moment is where the three cases differ.}
\end{table}

\begin{table}
\begin{centering}
\begin{longtable}{|c||c||c||c|c||c||c|c|c|c|}
\hline 
\noalign{\vskip\doublerulesep}
\multicolumn{10}{|c|}{Experiments based on $500000$ trials}\tabularnewline[1sp]
\hline 
\noalign{\vskip\doublerulesep}
\multicolumn{4}{|c|}{$N=3$ } & \multicolumn{3}{c|}{Theoretical value} & \multicolumn{3}{c|}{Numerical Experiment }\tabularnewline[1sp]
\hline 
\noalign{\vskip\doublerulesep}
\multicolumn{4}{|c|}{} & Iso & Quantum & Classical & \multicolumn{1}{c||}{Iso} & \multicolumn{1}{c||}{Quantum} & Classical\tabularnewline[1sp]
\hline 
\hline 
\noalign{\vskip\doublerulesep}
\multicolumn{4}{|c|}{Mean $\mu$} & \multicolumn{3}{c|}{$8$} & $8.007$ & $8.007$ & $7.999$\tabularnewline[1sp]
\hline 
\noalign{\vskip\doublerulesep}
\multicolumn{4}{|c|}{Variance $\sigma^{2}$} & \multicolumn{3}{c|}{$40$} & $40.041$ & $40.031$ & $39.976$\tabularnewline[1sp]
\hline 
\noalign{\vskip\doublerulesep}
\multicolumn{4}{|c|}{Skewness $\gamma_{1}$} & \multicolumn{3}{c|}{$\frac{8}{25}\sqrt{10}=1.01192$} & $1.009$ & $1.009$ & $1.011$\tabularnewline[1sp]
\hline 
\noalign{\vskip\doublerulesep}
\multicolumn{4}{|c|}{Kurtosis $\gamma_{2}$} & \multicolumn{1}{c|}{$\frac{516}{875}=0.590$} & \multicolumn{1}{c|}{$\frac{33}{50}=0.660$} & $\frac{24}{25}=0.960$  & $0.575$ & $0.645$ & $0.953$\tabularnewline[1sp]
\hline 
\noalign{\vskip\doublerulesep}
\multicolumn{10}{|c|}{Experiments based on $500000$ trials}\tabularnewline[1sp]
\hline 
\noalign{\vskip\doublerulesep}
\multicolumn{4}{|c|}{$N=5$ } & \multicolumn{3}{c|}{Theoretical value} & \multicolumn{3}{c|}{Numerical Experiment }\tabularnewline[1sp]
\hline 
\noalign{\vskip\doublerulesep}
\multicolumn{4}{|c|}{} & Iso & Quantum & Classical & \multicolumn{1}{c||}{Iso} & \multicolumn{1}{c||}{Quantum} & Classical\tabularnewline[1sp]
\hline 
\hline 
\noalign{\vskip\doublerulesep}
\multicolumn{4}{|c|}{Mean $\mu$} & \multicolumn{3}{c|}{$16$} & $15.999$ & $15.999$ & $16.004$\tabularnewline[1sp]
\hline 
\noalign{\vskip\doublerulesep}
\multicolumn{4}{|c|}{Variance $\sigma^{2}$} & \multicolumn{3}{c|}{$80$} & $79.993$ & $80.005$ & $80.066$\tabularnewline[1sp]
\hline 
\noalign{\vskip\doublerulesep}
\multicolumn{4}{|c|}{Skewness $\gamma_{1}$} & \multicolumn{3}{c|}{$\frac{8}{25}\sqrt{5}=0.716$} & $0.715$ & $0.715$ & $0.717$\tabularnewline[1sp]
\hline 
\noalign{\vskip\doublerulesep}
\multicolumn{4}{|c|}{Kurtosis $\gamma_{2}$} & \multicolumn{1}{c|}{$\frac{228}{2635}=0.087$} & \multicolumn{1}{c|}{$\frac{51}{200}=0.255$} & $\frac{12}{25}=0.480$  & $0.085$ & $0.255$ & $0.485$\tabularnewline[1sp]
\hline 
\noalign{\vskip\doublerulesep}
\multicolumn{10}{|c|}{Experiments based on $300000$ trials}\tabularnewline[1sp]
\hline 
\noalign{\vskip\doublerulesep}
\multicolumn{4}{|c|}{$N=7$ } & \multicolumn{3}{c|}{Theoretical value} & \multicolumn{3}{c|}{Numerical Experiment }\tabularnewline[1sp]
\hline 
\noalign{\vskip\doublerulesep}
\multicolumn{4}{|c|}{} & Iso & Quantum & Classical & \multicolumn{1}{c||}{Iso} & \multicolumn{1}{c||}{Quantum} & Classical\tabularnewline[1sp]
\hline 
\hline 
\noalign{\vskip\doublerulesep}
\multicolumn{4}{|c|}{Mean $\mu$} & \multicolumn{3}{c|}{$24$} & $23.000$ & $23.000$ & $24.095$\tabularnewline[1sp]
\hline 
\noalign{\vskip\doublerulesep}
\multicolumn{4}{|c|}{Variance $\sigma^{2}$} & \multicolumn{3}{c|}{$120$} & $120.008$ & $120.015$ & $120.573$\tabularnewline[1sp]
\hline 
\noalign{\vskip\doublerulesep}
\multicolumn{4}{|c|}{Skewness $\gamma_{1}$} & \multicolumn{3}{c|}{$\frac{8}{75}\sqrt{30}=0.584$} & $0.585$ & $0.585$ & $0.588$\tabularnewline[1sp]
\hline 
\noalign{\vskip\doublerulesep}
\multicolumn{4}{|c|}{Kurtosis $\gamma_{2}$} & \multicolumn{1}{c|}{$-\frac{16904}{206375}=-0.082$} & \multicolumn{1}{c|}{$\frac{23}{150}=0.153$} & $\frac{8}{25}=0.320$  & $-0.079$ & $0.156$ & $0.331$\tabularnewline[1sp]
\newpage
\hline 
\noalign{\vskip\doublerulesep}
\multicolumn{10}{|c|}{Experiments based on $40000$ trials}\tabularnewline[1sp]
\hline 
\noalign{\vskip\doublerulesep}
\multicolumn{4}{|c|}{$N=9$ } & \multicolumn{3}{c|}{Theoretical value} & \multicolumn{3}{c|}{Numerical Experiment }\tabularnewline[1sp]
\hline 
\noalign{\vskip\doublerulesep}
\multicolumn{4}{|c|}{} & Iso & Quantum & Classical & \multicolumn{1}{c||}{Iso} & \multicolumn{1}{c||}{Quantum} & Classical\tabularnewline[1sp]
\hline 
\hline 
\noalign{\vskip\doublerulesep}
\multicolumn{4}{|c|}{Mean $\mu$} & \multicolumn{3}{c|}{$32$} & $32.027$ & $32.027$ & $31.777$\tabularnewline[1sp]
\hline 
\noalign{\vskip\doublerulesep}
\multicolumn{4}{|c|}{Variance $\sigma^{2}$} & \multicolumn{3}{c|}{$160$} & $160.074$ & $160.049$ & $157.480$\tabularnewline[1sp]
\hline 
\noalign{\vskip\doublerulesep}
\multicolumn{4}{|c|}{Skewness $\gamma_{1}$} & \multicolumn{3}{c|}{$\frac{4}{25}\sqrt{10}=0.506$} & $0.505$ & $0.506$ & $0.500$\tabularnewline[1sp]
\hline 
\noalign{\vskip\doublerulesep}
\multicolumn{4}{|c|}{Kurtosis $\gamma_{2}$} & \multicolumn{1}{c|}{$-\frac{539142}{3283175}=-0.164$} & \multicolumn{1}{c|}{$\frac{87}{800}=0.109$} & $\frac{6}{25}=0.240$  & $-0.165$ & $0.109$ & $0.213$\tabularnewline[1sp]
\hline 
\noalign{\vskip\doublerulesep}
\multicolumn{10}{|c|}{Experiments based on $2000$ trials}\tabularnewline[1sp]
\hline 
\noalign{\vskip\doublerulesep}
\multicolumn{4}{|c|}{$N=11$} & \multicolumn{3}{c|}{Theoretical value} & \multicolumn{3}{c|}{Numerical Experiment }\tabularnewline[1sp]
\hline 
\noalign{\vskip\doublerulesep}
\multicolumn{4}{|c|}{} & \multicolumn{1}{c||}{Iso} & Quantum & Classical & \multicolumn{1}{c||}{Iso} & \multicolumn{1}{c||}{Quantum} & Classical\tabularnewline[1sp]
\hline 
\noalign{\vskip\doublerulesep}
\multicolumn{4}{|c|}{Mean $\mu$} & \multicolumn{3}{c|}{$40$} & $39.973$ & $39.973$ & $39.974$\tabularnewline[1sp]
\hline 
\noalign{\vskip\doublerulesep}
\multicolumn{4}{|c|}{Variance $\sigma^{2}$} & \multicolumn{3}{c|}{$200$} & $200.822$ & $200.876$ & $197.350$\tabularnewline[1sp]
\hline 
\noalign{\vskip\doublerulesep}
\multicolumn{4}{|c|}{Skewness $\gamma_{1}$} & \multicolumn{3}{c|}{$\frac{8}{25}\sqrt{2}=0.452548$} & $0.4618$ & $0.4538$ & $0.407$\tabularnewline[1sp]
\hline 
\noalign{\vskip\doublerulesep}
\multicolumn{4}{|c|}{Kurtosis $\gamma_{2}$} & \multicolumn{1}{c|}{$-\frac{11162424}{52454375}=-0.213$} & \multicolumn{1}{c|}{$\frac{21}{250}=0.084$} & $\frac{24}{125}=0.192$  & $-0.189$ & $0.093$ & $0.102$\tabularnewline[1sp]
\hline 
\end{longtable}\medskip{}

\par\end{centering}

\caption{\label{tab:SummaryNumerical}The mean, variance and skewness of classical,
iso and quantum results match. However, the fourth moments (kurtoses)
differ. Here we are showing results for $d=2,\; r=4$ with an accuracy
of three decimal points.}
\end{table}

\pagebreak{}

The numerical convergence of the kurtoses to the theoretical values
were rather slow. To make sure the results are consistent we did a
large run with $500$ million trials for $N=5$, $d=2$, $r=3$ and
$\beta=1$ and obtained four digits of accuracy

\[
\begin{array}{c}
\qquad\gamma_{2}^{c}-\gamma_{2}^{iso}=0.39340\qquad\textrm{Numerical\;\ experiment}\\
\gamma_{2}^{c}-\gamma_{2}^{iso}=0.39347\qquad\textrm{Theoretical\;\ value.}
\end{array}
\]

Convergence is faster if one calculates $p$ based on the departing
terms alone (Eq. \ref{eq:1-P_Effective}). In this case, for full
rank Wishart matrices with $N=5$ and $d=2$

\begin{tabular}{|c|c|}
\hline 
$\beta=1$, trials: $5$ Million & $1-p$\tabularnewline
\hline 
\hline 
Numerical Experiment & $0.57189$\tabularnewline
\hline 
Theoretical Value & $0.57183$\tabularnewline
\hline 
\end{tabular}\hspace{1cc}%
\begin{tabular}{|c|c|}
\hline 
$\beta=2$, trials: $10$ Million & $1-p$\tabularnewline
\hline 
\hline 
Numerical Experiment & $0.63912$\tabularnewline
\hline 
Theoretical Value & $0.63938$\tabularnewline
\hline 
\end{tabular}

\[
\]

Below we compare our theory against exact diagonalization for various
number of sites $N$, local ranks $r$, and site dimensionality $d$
(Figures \ref{fig:N=00003D3}-\ref{fig:N=00003D11}).

\begin{figure}[H]
\begin{centering}
\includegraphics[scale=0.4]{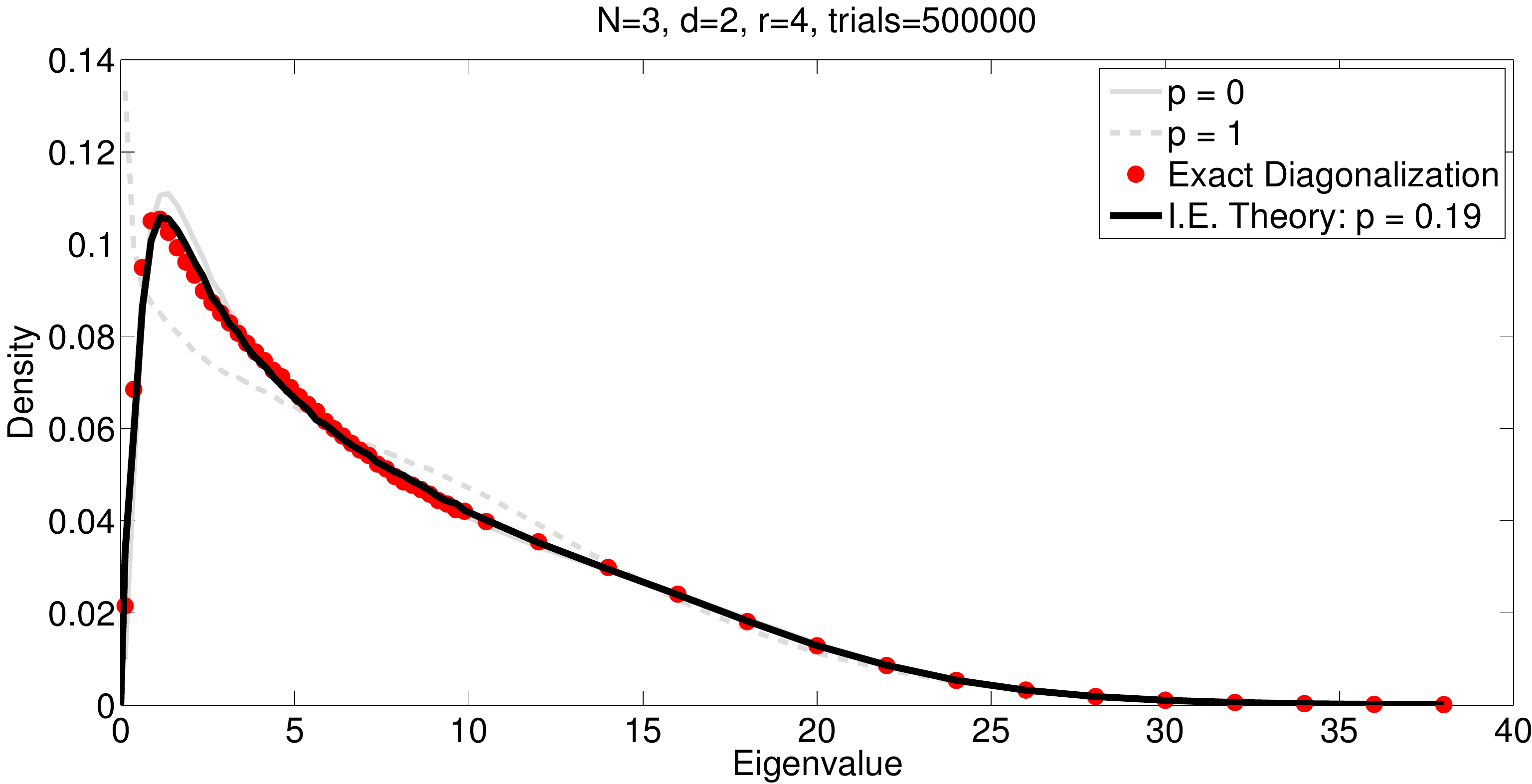}\vspace{0.2in}

\par\end{centering}

\begin{centering}
\includegraphics[scale=0.4]{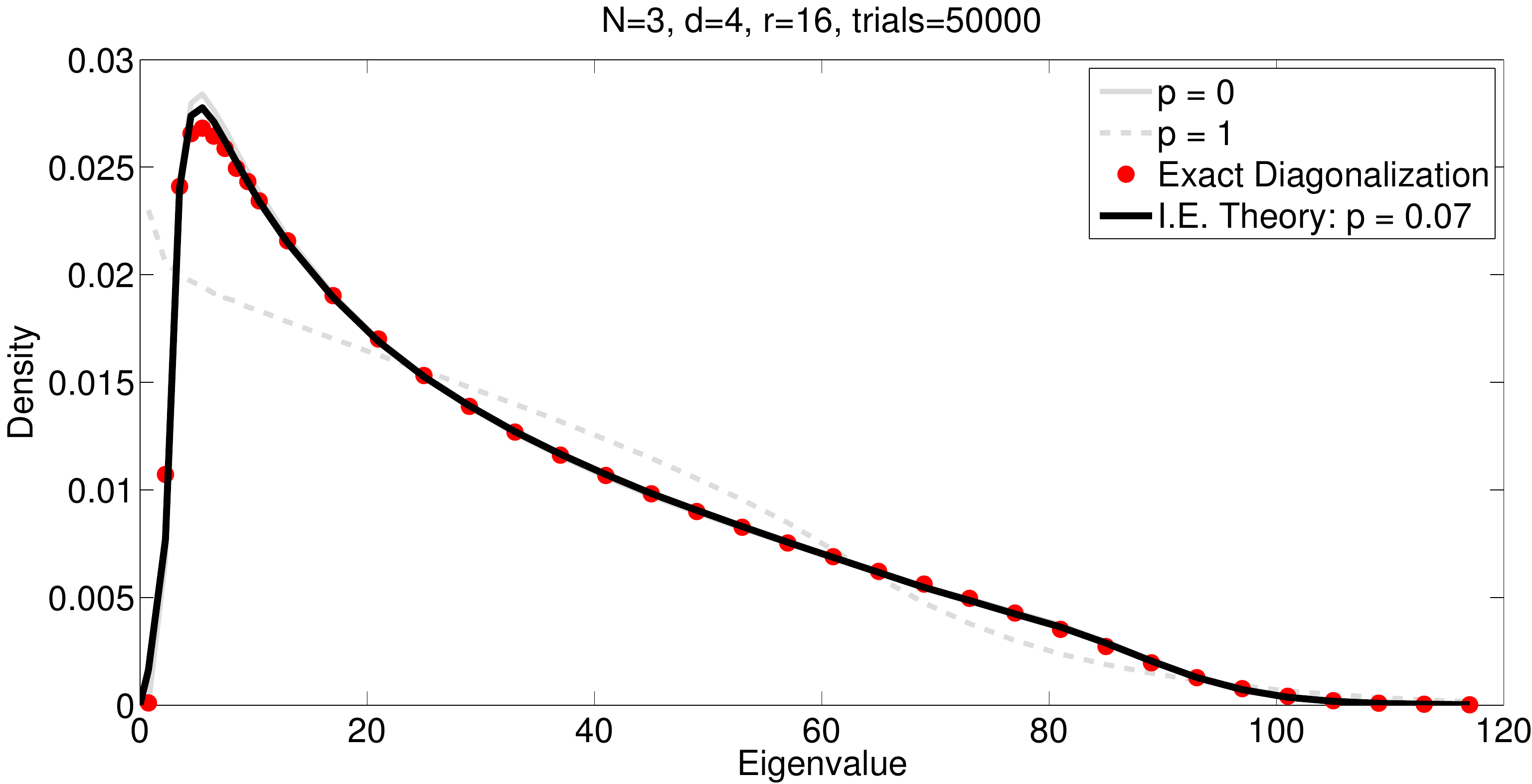}
\par\end{centering}

\begin{centering}
\vspace{0.2in}

\par\end{centering}

\begin{centering}
\includegraphics[scale=0.4]{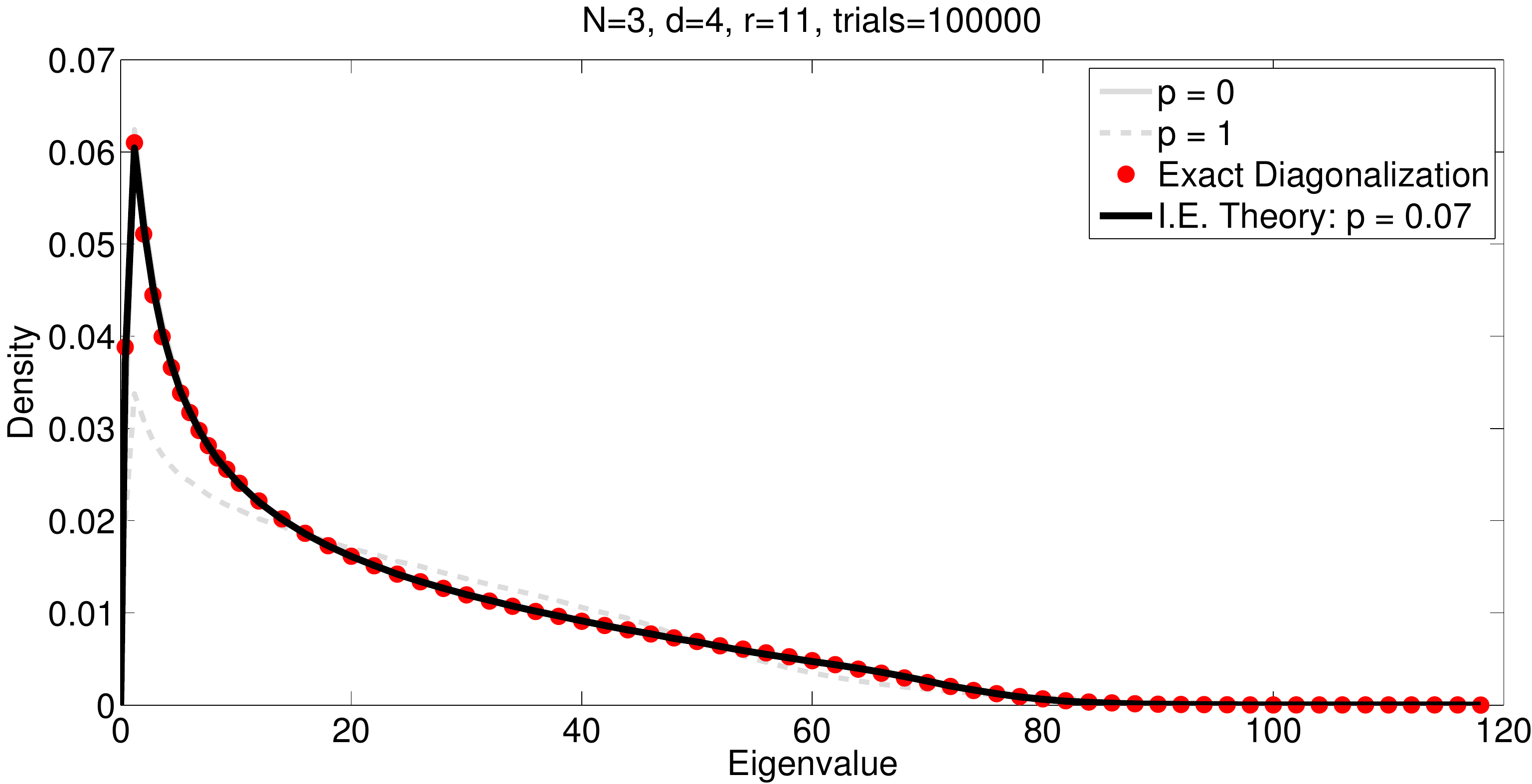}
\par\end{centering}

\centering{}\caption{\label{fig:N=00003D3}$N=3$ examples. Note that the last two plots
have the same $p$ despite having different ranks $r$. This is a
consequence of the Universality Lemma since they have the same $N$
and $d$. }
\end{figure}

\begin{figure}[H]
\begin{centering}
\includegraphics[scale=0.4]{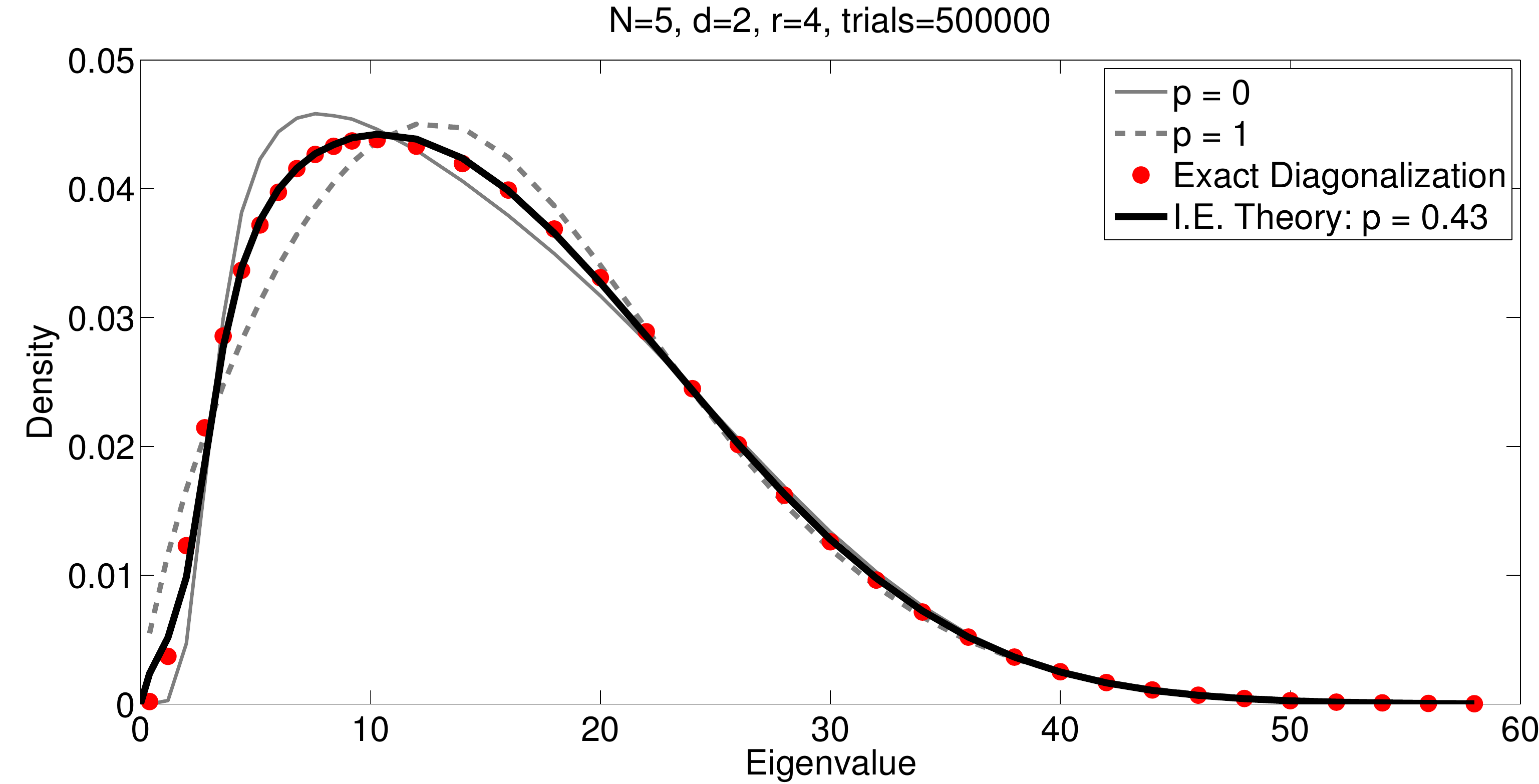}
\par\end{centering}

\begin{centering}
\vspace{0.2in}

\par\end{centering}

\begin{centering}
\includegraphics[scale=0.4]{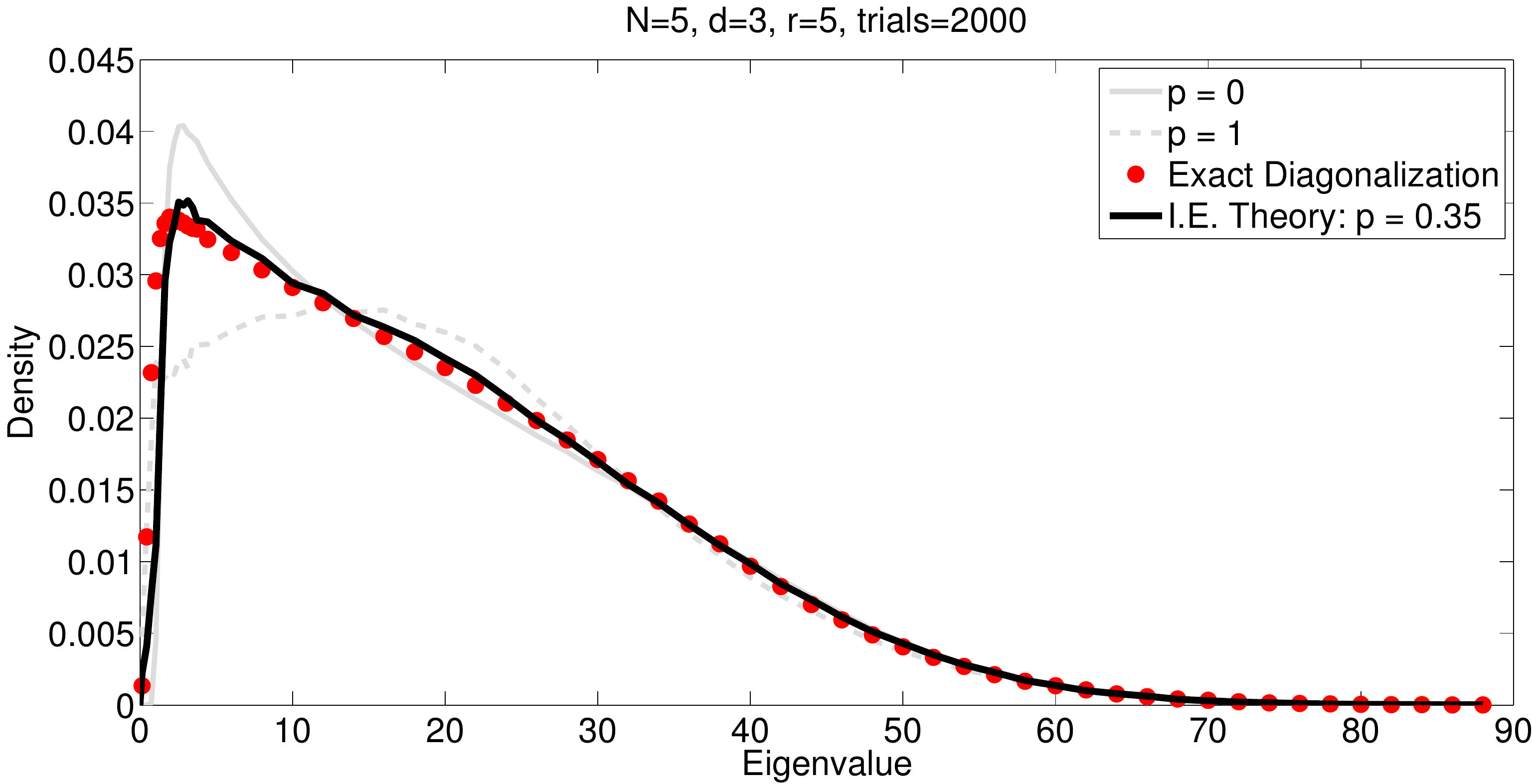}
\par\end{centering}

\centering{}\caption{$N=5$}
\end{figure}

\begin{figure}[H]
\begin{centering}
\includegraphics[scale=0.4]{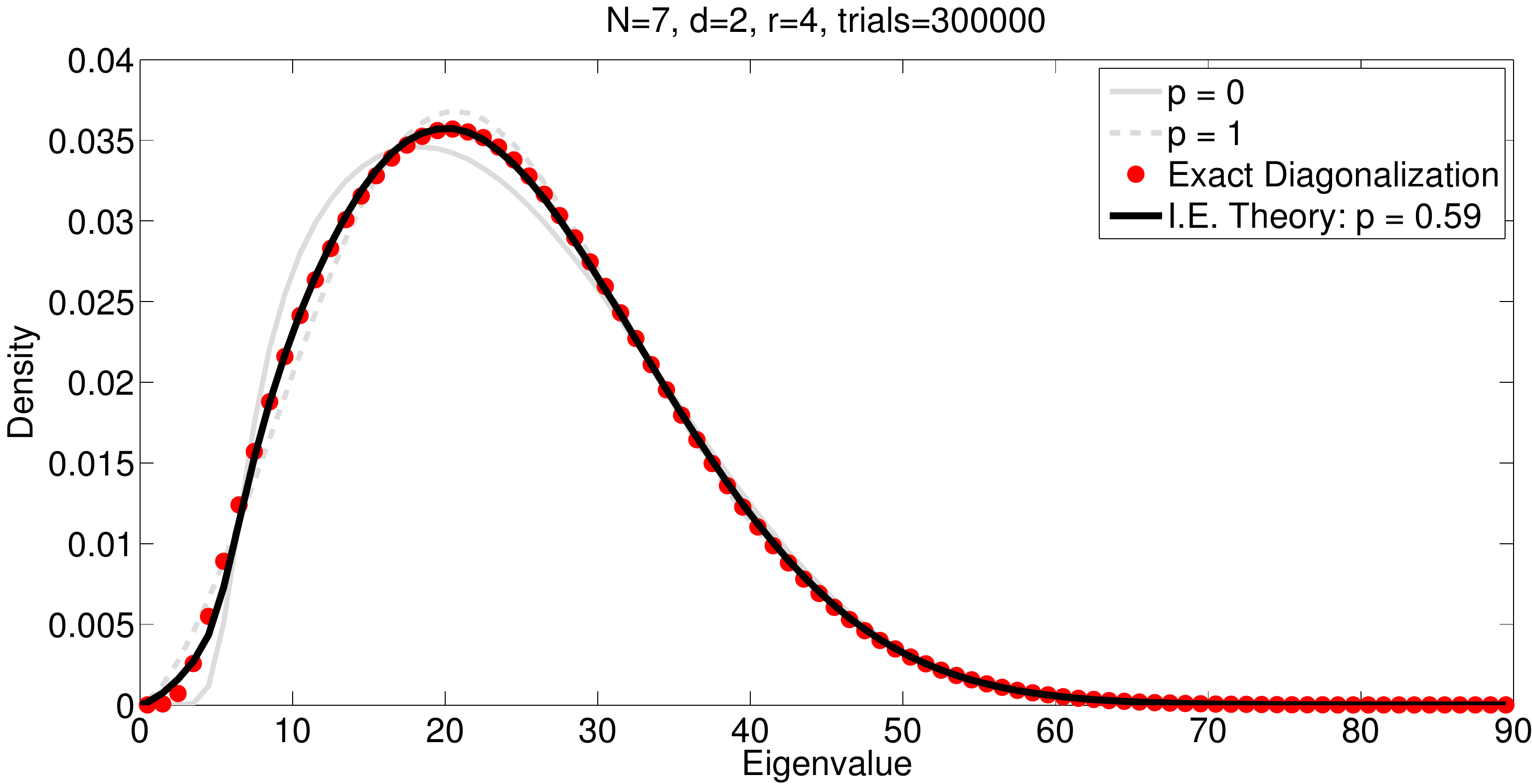}
\par\end{centering}

\centering{}\caption{$N=7$}
\end{figure}

\begin{figure}[H]
\begin{centering}
\includegraphics[scale=0.4]{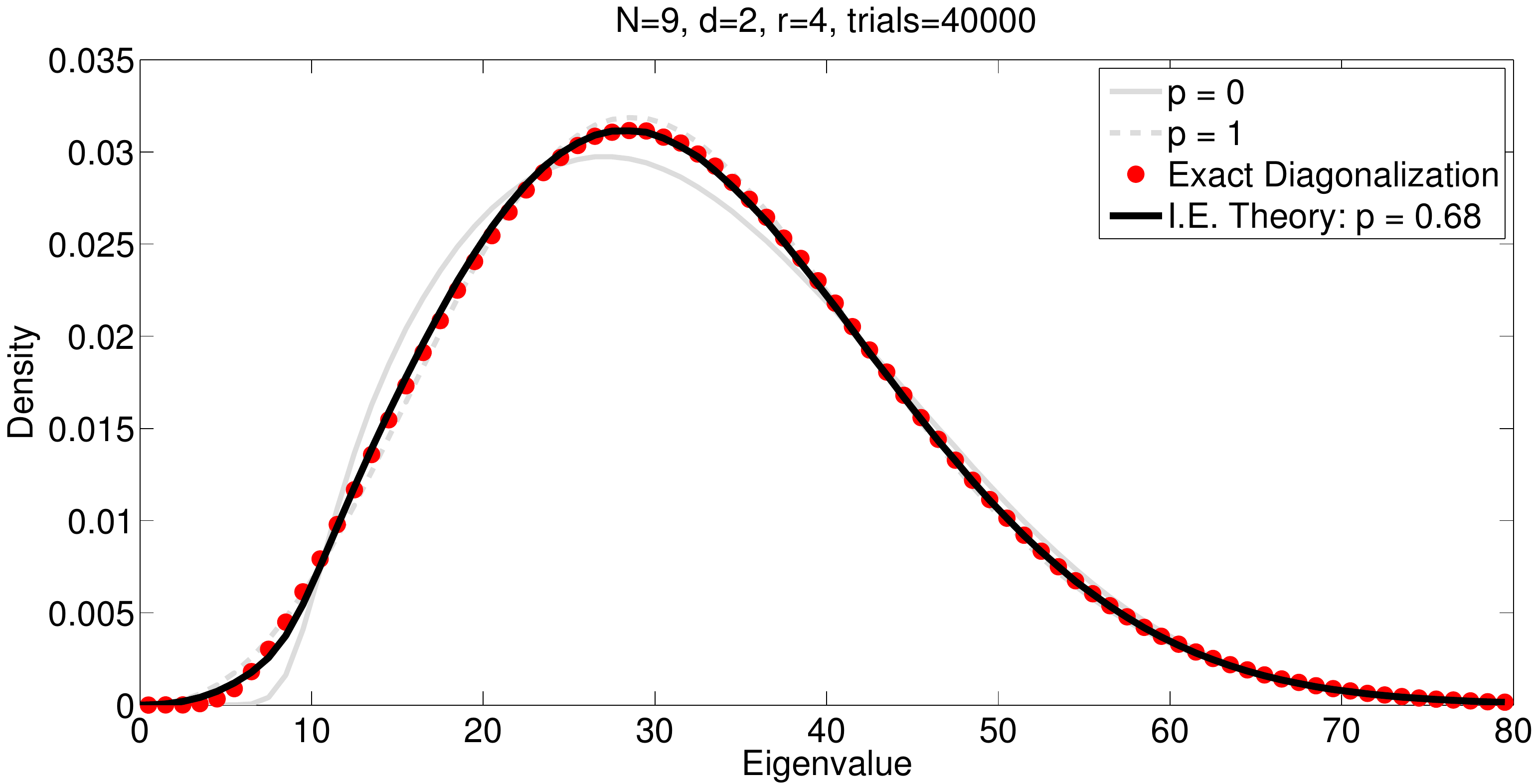}
\par\end{centering}

\begin{centering}
\includegraphics[scale=0.4]{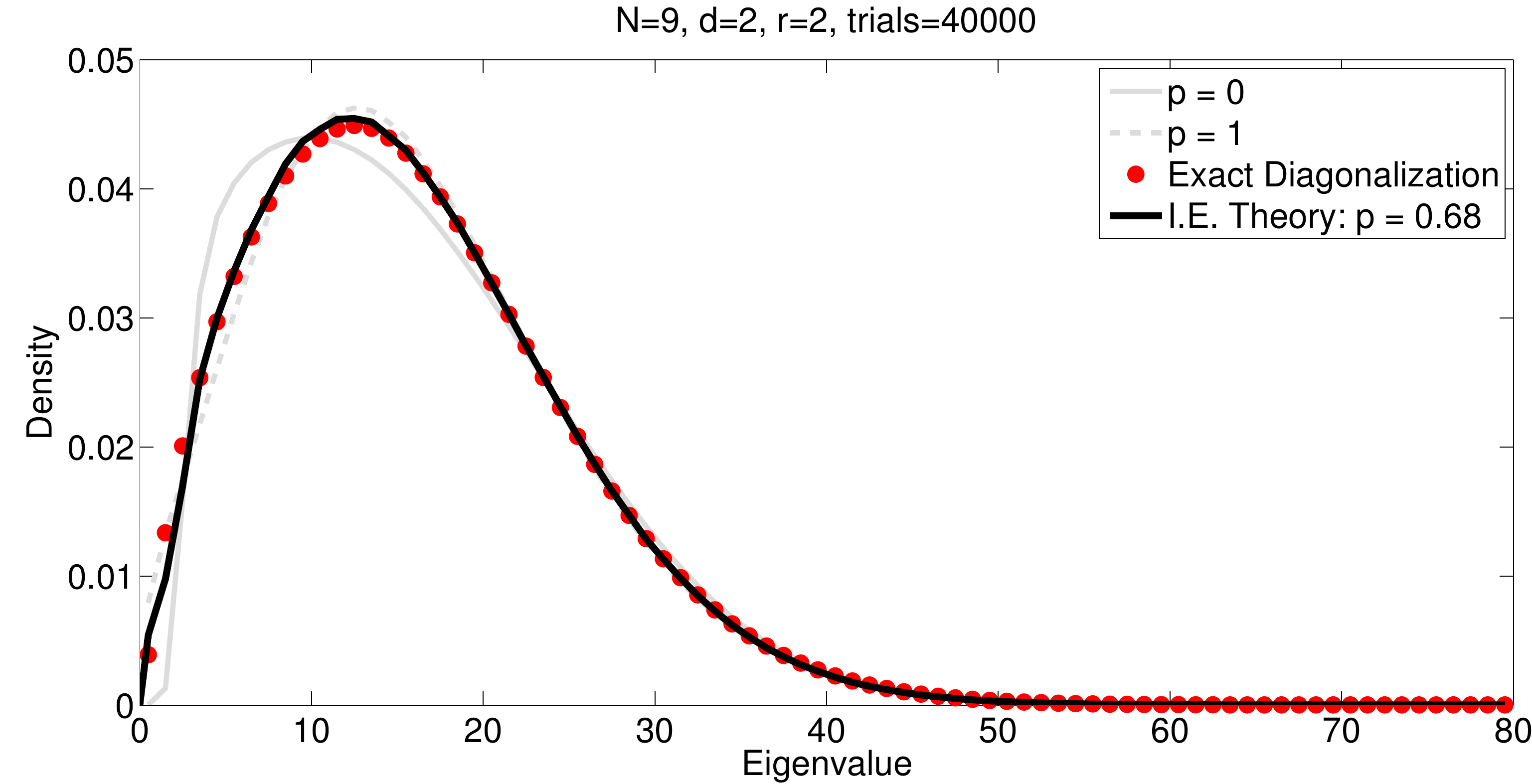}
\par\end{centering}

\centering{}\caption{$N=9$. Note that the two plots have the same $p$ despite having
different local ranks.}
\end{figure}

\begin{figure}[H]
\centering{}\includegraphics[scale=0.4]{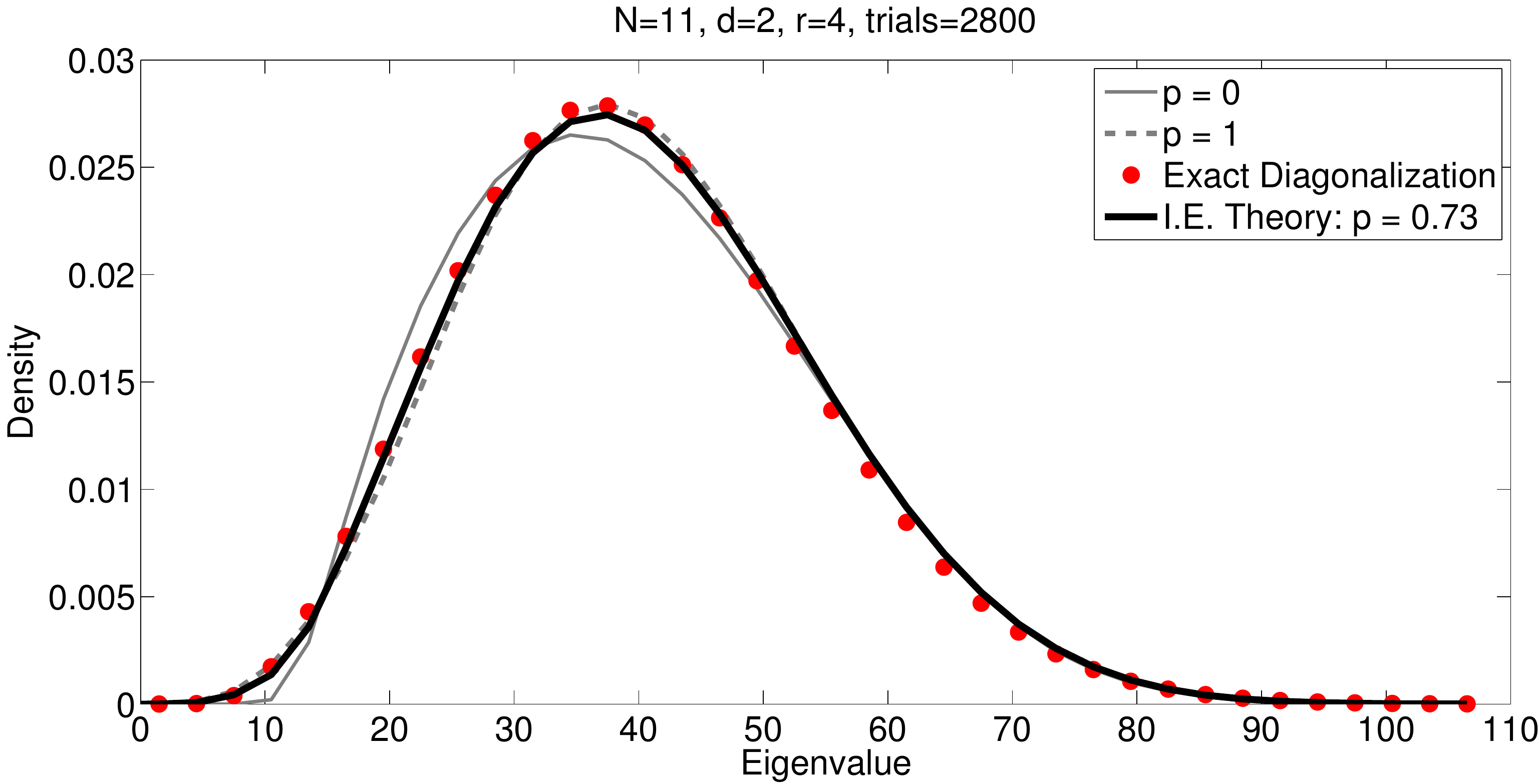}\caption{\label{fig:N=00003D11}$N=11$}
\end{figure}

\end{onehalfspace}

\begin{onehalfspace}

\section{\label{sec:Other-Examples}Other Examples of Local Terms}
\end{onehalfspace}

\begin{onehalfspace}
Because of the Universality lemma, $p$ is independent of the type
of local distribution. Furthermore, as discussed above, the application
of the theory for other types of local terms is entirely similar to
the Wishart case. Therefore, we only show the results in this section.
As a second example consider GOE's as local terms, i.e., $H_{l,l+1}=\frac{G^{T}+G}{2}$,
where $G$ is a full rank matrix whose elements are real Gaussian
random numbers. 

\begin{figure}[H]
\begin{centering}
\includegraphics[scale=0.4]{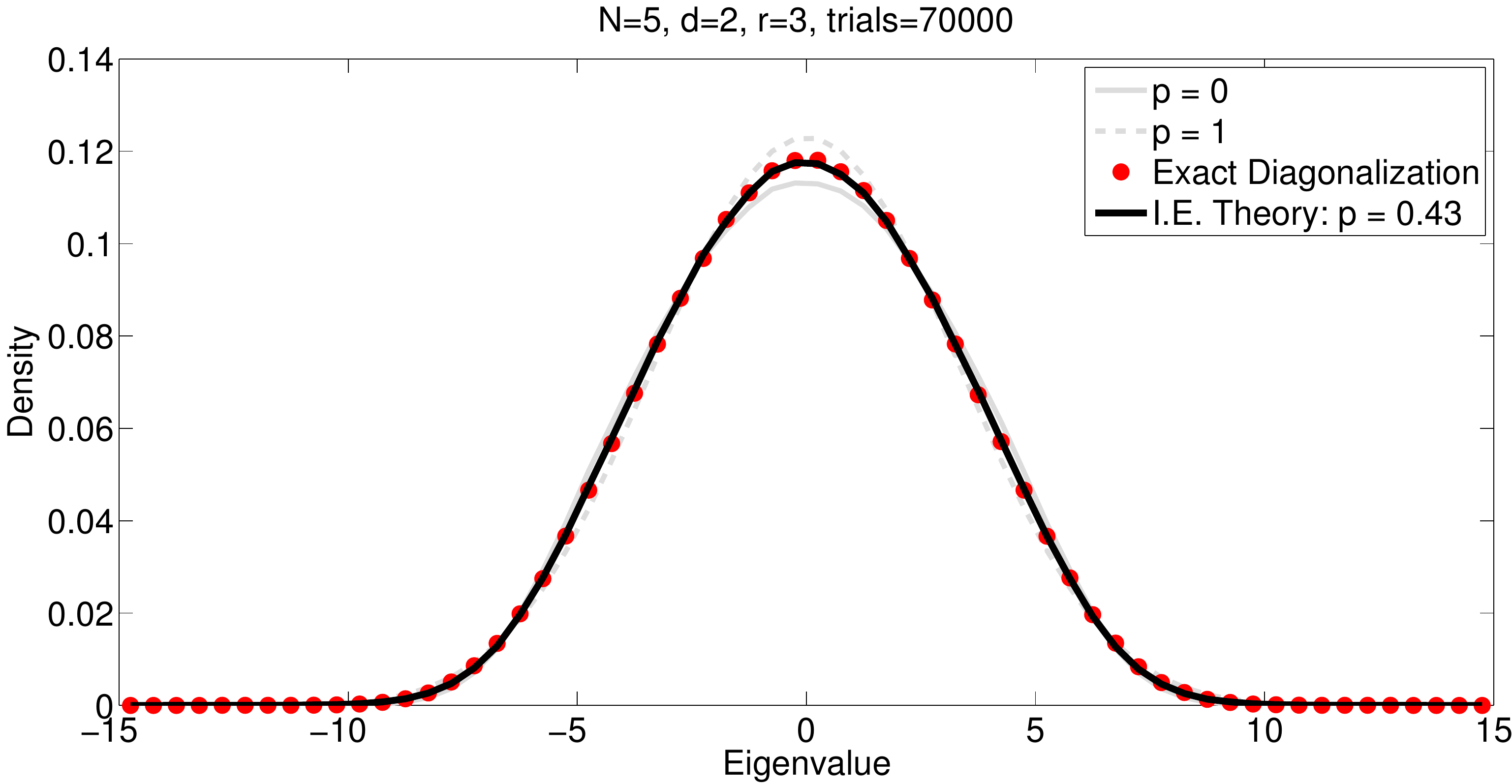}
\par\end{centering}

\begin{centering}
\includegraphics[scale=0.4]{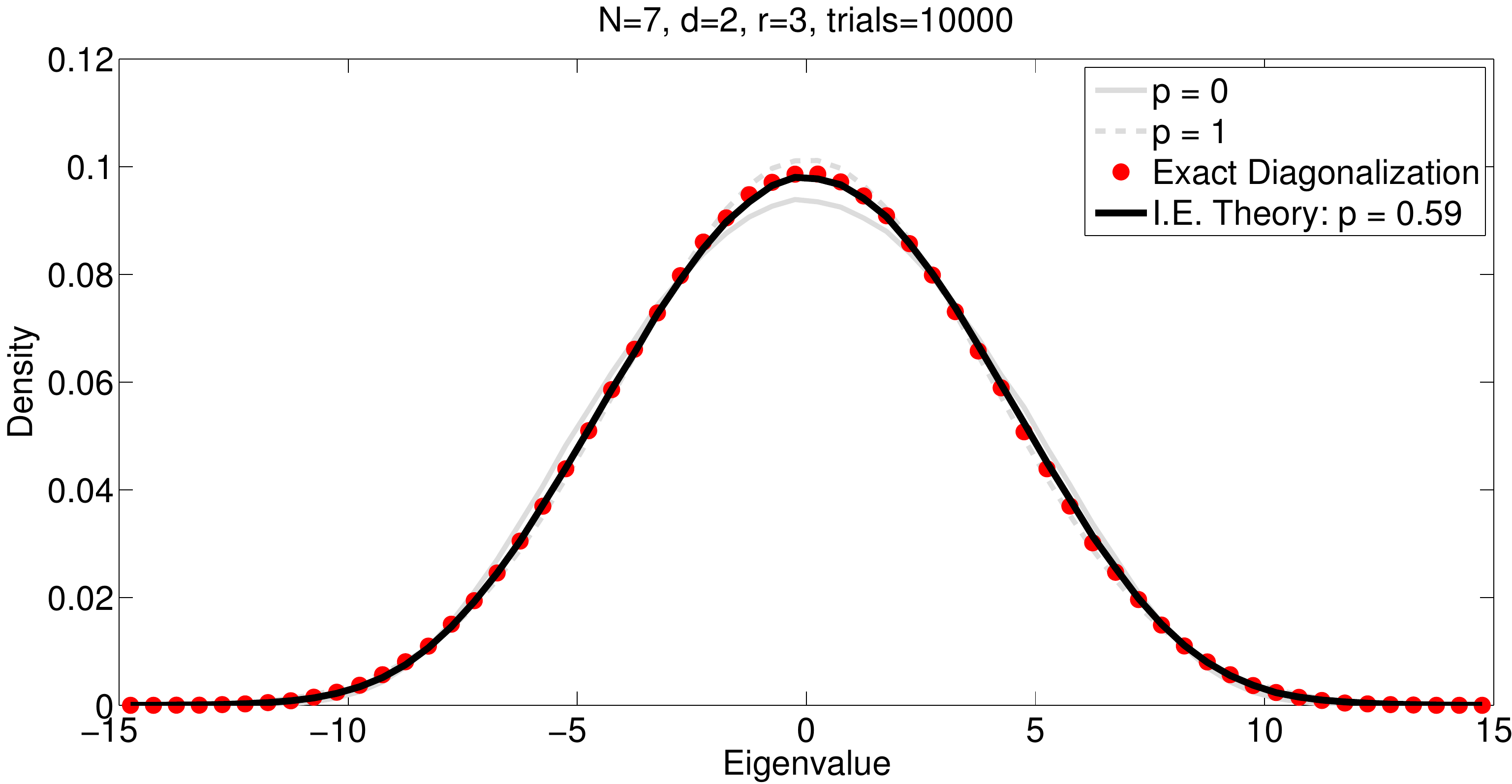}
\par\end{centering}

\begin{centering}
\includegraphics[scale=0.4]{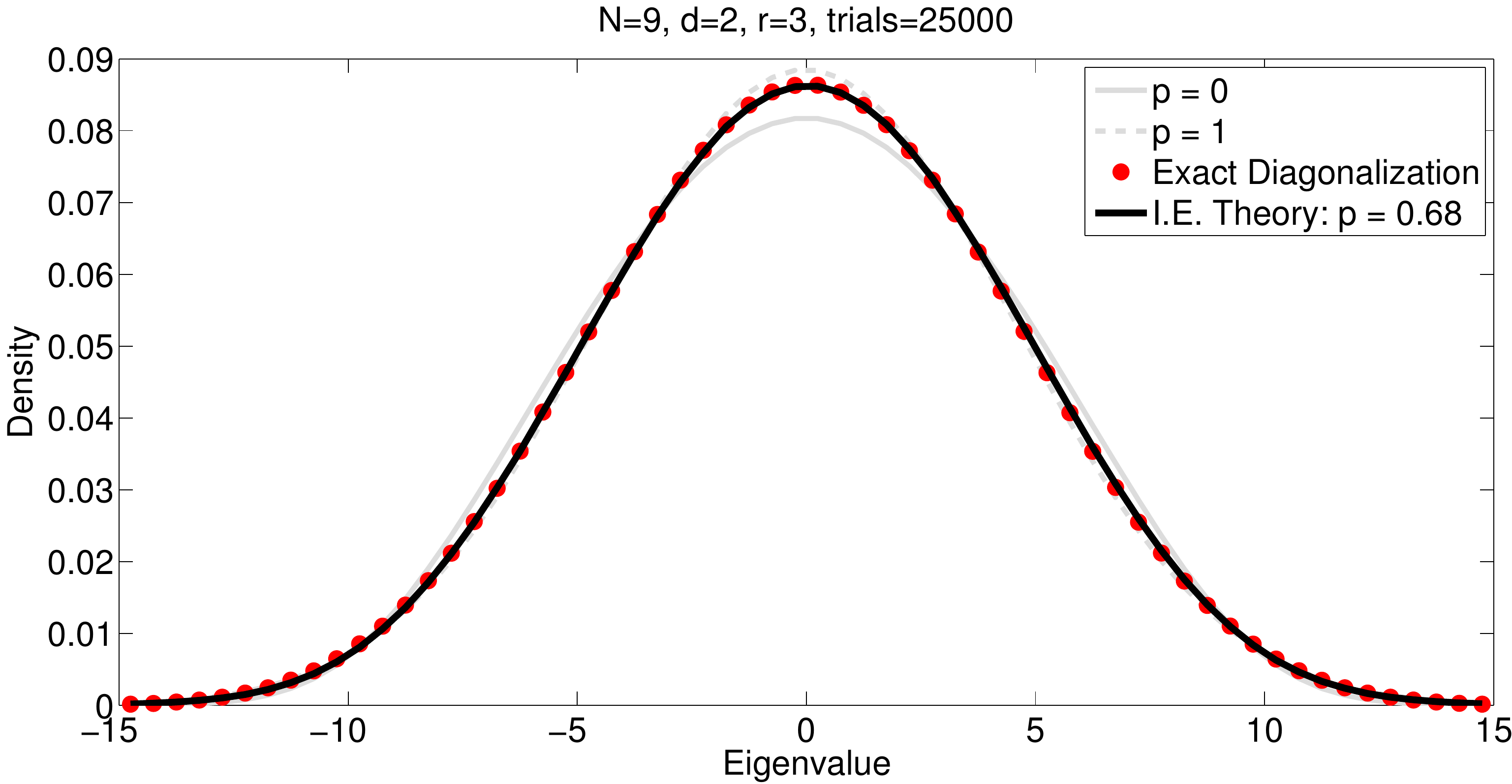}
\par\end{centering}

\caption{\label{fig:GOE's}GOE's as local terms}
\end{figure}

Lastly take the local terms to have Haar eigenvectors but with random
eigenvalues $\pm1$, i.e., $H_{l,l+1}=Q_{l}^{T}\Lambda_{l}Q_{l}$,
where $\Lambda_{l}$ is a diagonal matrix whose elements are binary
random variables $\pm1$ (Figure \ref{fig:binomial}).

\begin{figure}
\begin{centering}
\includegraphics[scale=0.4]{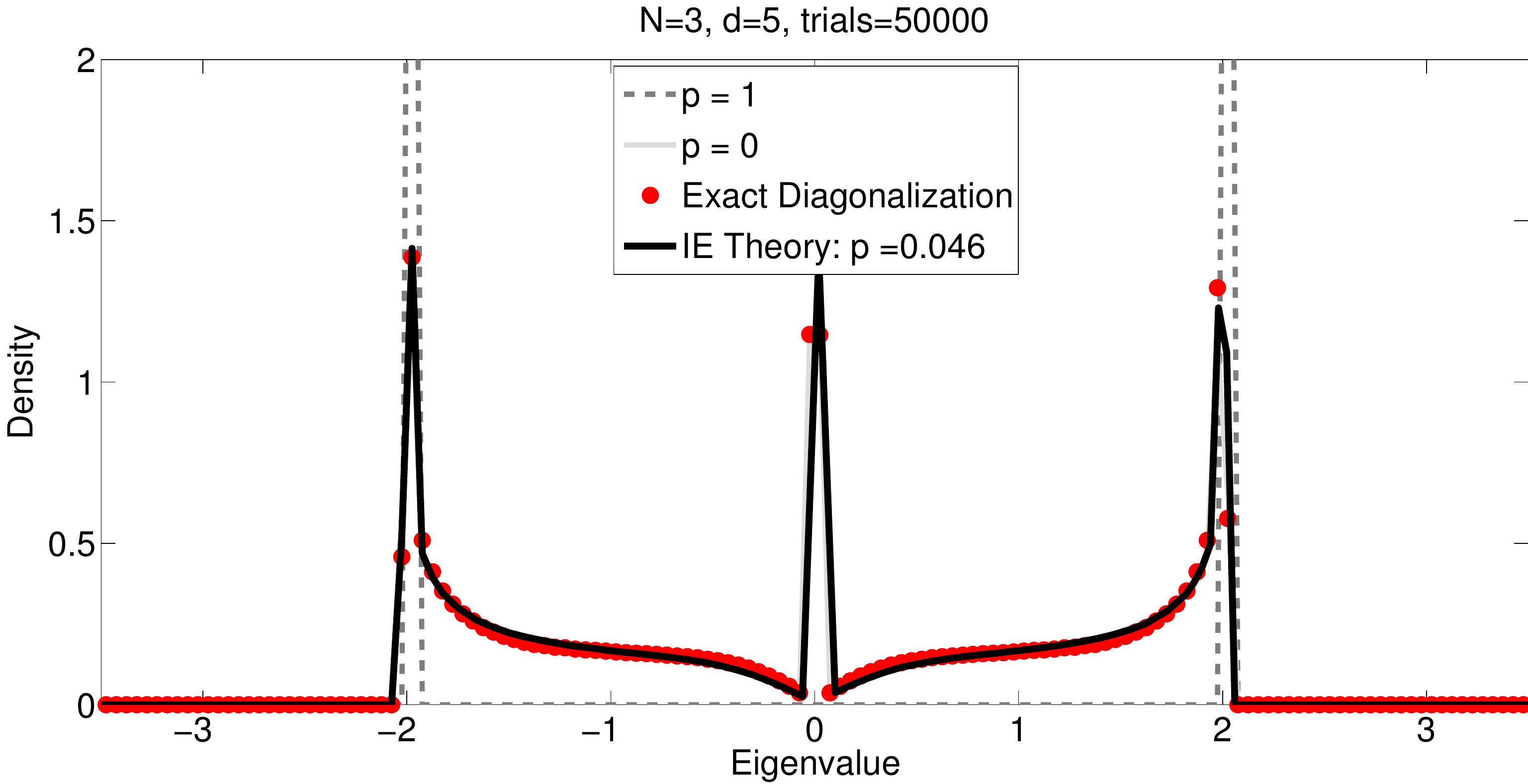}
\par\end{centering}

\caption{\label{fig:binomial}Local terms have a random binomial distribution.}
\end{figure}

In this case the classical treatment of the local terms leads to a
binomial distribution. As expected $p=1$ in Figure \ref{fig:binomial}
has three atoms at $-2,0,2$ corresponding to the randomized sum of
the eigenvalues from the two local terms. The exact diagonalization,
however, shows that the quantum chain has a much richer structure
closer to iso; i.e, $p=0$. This is captured quite well by IE with
$p=0.046$. 
\end{onehalfspace}

\begin{onehalfspace}

\section{\label{sub:Wishart-Beyond-Nearest-Neighbors}Beyond Nearest Neighbors
Interaction: $L>2$}
\end{onehalfspace}

\begin{onehalfspace}
If one fixes all the parameters in the problem and compares $L>2$
with nearest neighbor interactions, then one expects the former to
act more isotropic as the number of random parameters in Eq. \ref{eq:Hamiltonian}
are more. When the number of random parameters introduced by the local
terms, i.e., $\left(N-L+1\right)d^{L}$ and $d^{N}$ are comparable,
we find that we can approximate the spectrum with a high accuracy
by taking the summands to be all isotropic\cite{ramisPI} (See Figures
\ref{fig:L=00003D3}-\ref{fig:L=00003D5}). 

\begin{figure}[H]
\begin{centering}
\includegraphics[scale=0.4]{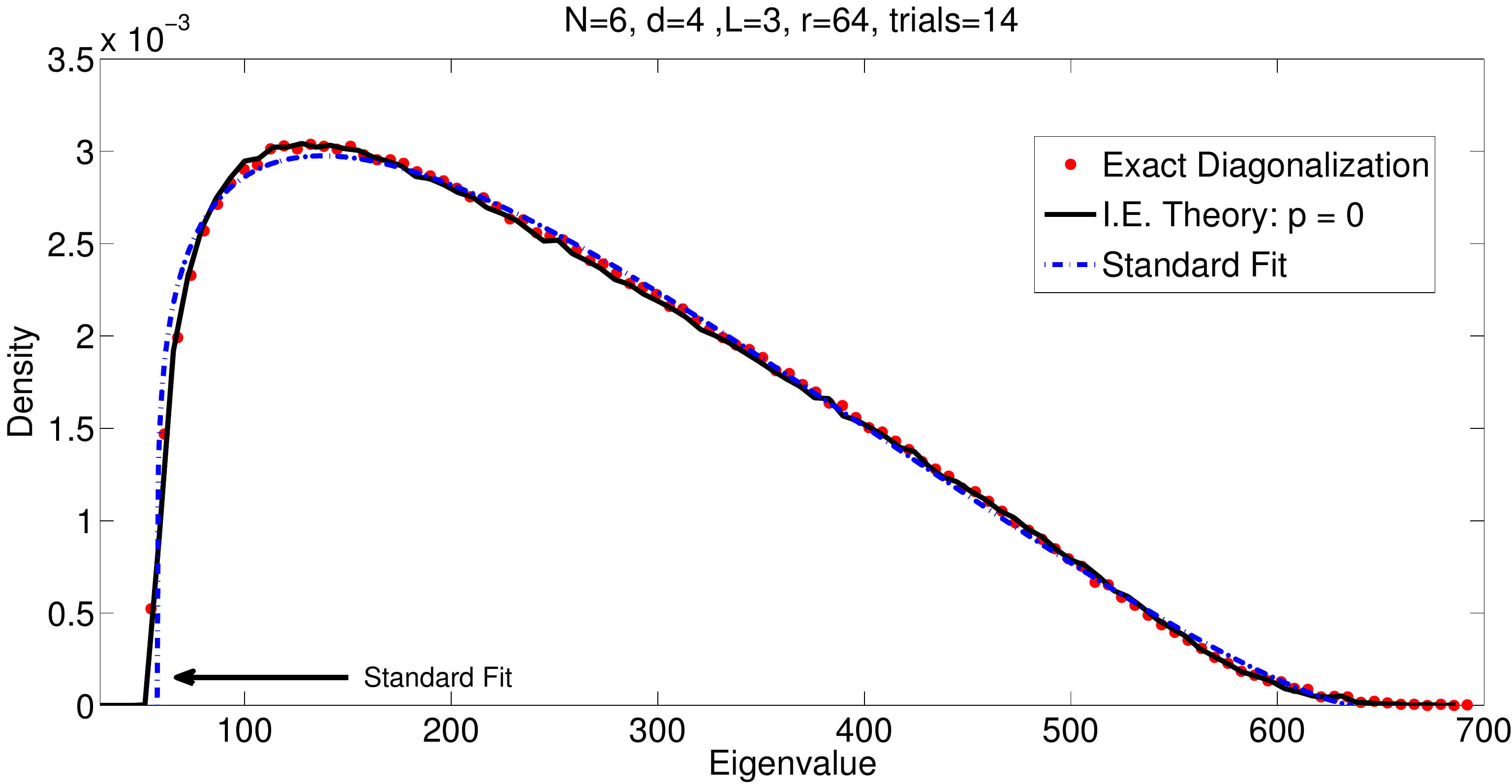}
\par\end{centering}

\caption{\label{fig:L=00003D3}IE method approximates the quantum spectrum
by $H^{IE}=\sum_{l=1}^{4}Q_{l}^{T}H_{l,\cdots,l+2}Q_{l}$}
\end{figure}

\begin{figure}[H]
\begin{centering}
\includegraphics[scale=0.4]{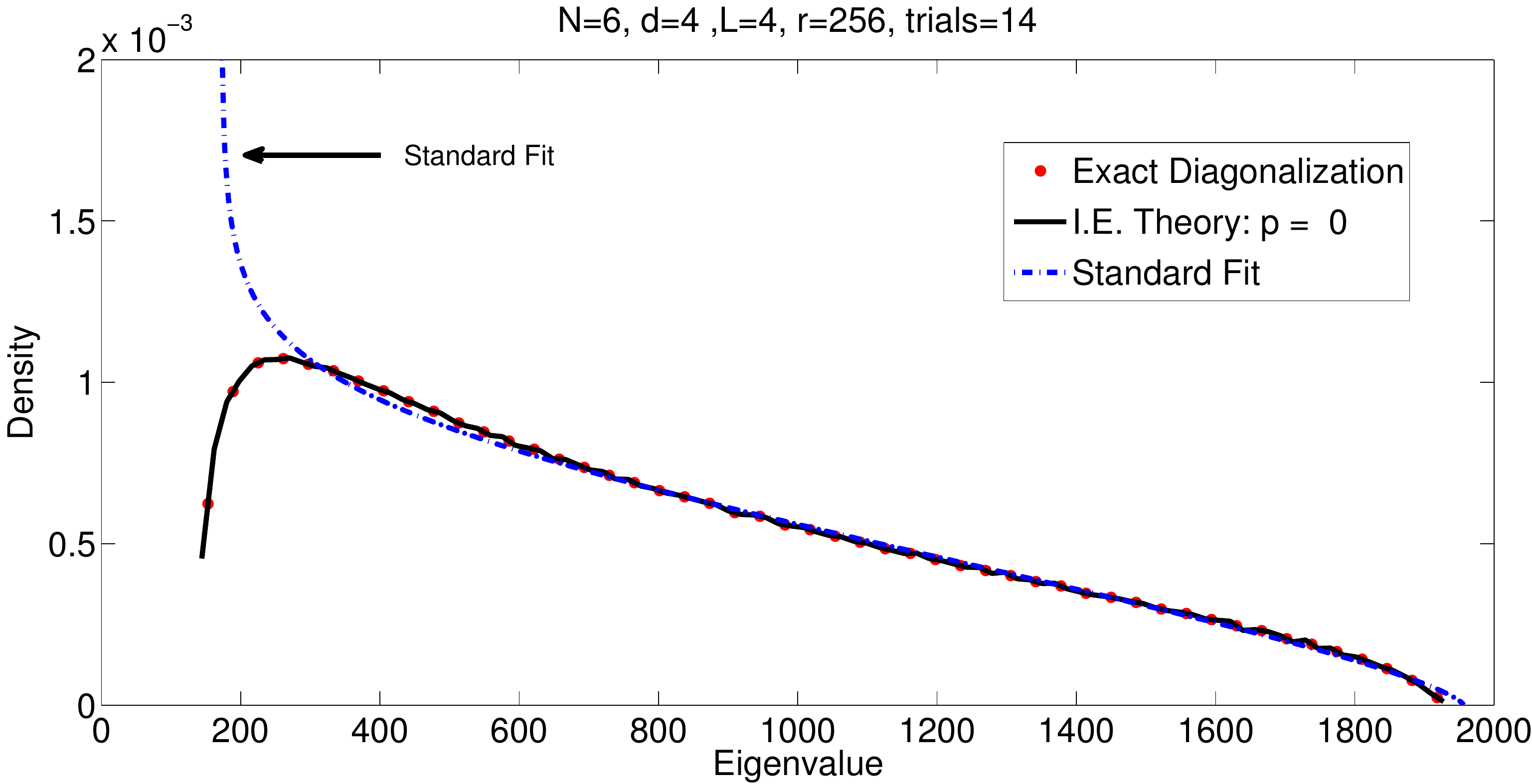}
\par\end{centering}

\caption{\label{fig:L=00003D4}IE method approximates the quantum spectrum
by $H^{IE}=\sum_{l=1}^{3}Q_{l}^{T}H_{l,\cdots,l+3}Q_{l}$. }
\end{figure}

\begin{figure}[H]
\begin{centering}
\includegraphics[scale=0.4]{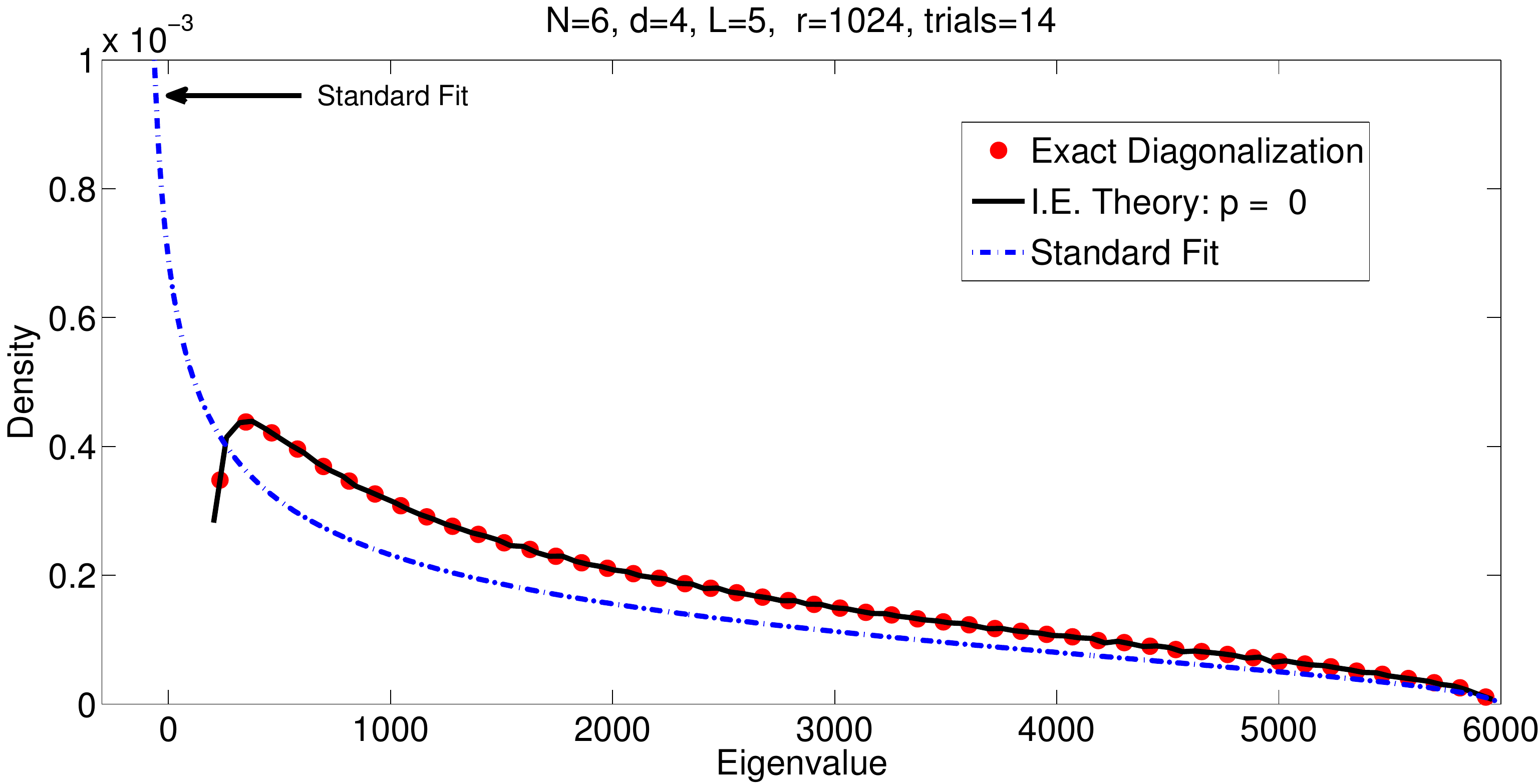}
\par\end{centering}

\caption{\label{fig:L=00003D5}IE method approximates the quantum spectrum
by $H^{IE}=\sum_{l=1}^{2}Q_{l}^{T}H_{l,\cdots,l+4}Q_{l}$}
\end{figure}

Most distributions built solely from the first four moments, would
give smooth curves. Roughly speaking, the mean indicates the center
of the distribution, variance its width, skewness its bending away
from the center and kurtosis how tall and skinny versus how short
and fat the distribution is. Therefore, it is hard to imagine that
the kinks, cusps and local extrema of the quantum problem (as seen
in some of our examples and in particular Figure in \ref{fig:binomial})
could be captured by fitting only the first four moments of the QMBS
Hamiltonian to a known distribution. It is remarkable that a one parameter
(i.e., $p$) interpolation between the isotropic and classical suffices
in capturing the richness of the spectra of QMBS.
\end{onehalfspace}

\begin{onehalfspace}

\section{Conjectures and Open Problems}
\end{onehalfspace}

\begin{onehalfspace}
In this paper we have offered a method that successfully captures
the density of states of QMBS with generic local interaction with
an accuracy higher than one expects solely from the first four moments.
We would like to direct the reader's attention to open problems that
we believe are within reach.
\end{onehalfspace}
\begin{enumerate}
\begin{onehalfspace}
\item We conjecture that the higher moments may be analyzed for their significance.
For example, one can show that the fraction of departing terms in
the expansion of the higher moments (e.g. analogous to bold faced
and underlined terms in Eqs. \ref{eq:fourthMoments},\ref{eq:fifthmoments}
but for higher moments) is asymptotically upper bounded by $1/N^{3}$.
In Section\ref{sub:More-Than-Four} we conjectured that their expectation
values would not change the moments significantly. It would be of
interest to know if 
\end{onehalfspace}

\begin{onehalfspace}
\begin{eqnarray*}
\mathbb{E}\textrm{Tr}\left\{ \ldots Q^{-1}B^{\ge1}QA^{\ge1}Q^{-1}B^{\ge1}Q\ldots\right\}  & \le\\
\mathbb{E}\textrm{Tr}\left\{ \ldots Q_{q}^{-1}B^{\ge1}Q_{q}A^{\ge1}Q_{q}^{-1}B^{\ge1}Q_{q}\ldots\right\}  & \le\\
\mathbb{E}\textrm{Tr}\left\{ \ldots\Pi^{-1}B^{\ge1}\Pi A^{\ge1}\Pi^{-1}B^{\ge1}\Pi\ldots\right\}  & .
\end{eqnarray*}
For example, one wonders if

\[
\mathbb{E}\textrm{Tr}\left\{ \left(AQ^{-1}BQ\right)^{k}\right\} \le\mathbb{E}\textrm{Tr}\left\{ \left(AQ_{q}^{-1}BQ_{q}\right)^{k}\right\} \le\mathbb{E}\textrm{Tr}\left\{ \left(A\Pi^{-1}B\Pi\right)^{k}\right\} 
\]
for $k>2$; we have proved that the inequality becomes an equality
for $k=1$ (Departure Theorem) and holds for $k=2$ (Slider Theorem). 
\end{onehalfspace}

\begin{onehalfspace}
\item Though we focus on a decomposition for s, we believe that the main
theorems may generalize to higher dimensional graphs. Further rigorous
and numerical work in higher dimensions would be of interest. 
\item At the end of Section \ref{sub:Isotropic-theory} we proposed that
more general local terms might be treated by explicitly including
the extra terms (Type I terms). 
\item Application of this method to slightly disordered systems would be
of interest in CMP. In this case, the assumption of fully random local
terms needs to be relaxed. 
\item In our numerical work, we see that the method gives accurate answers
in the presence of an external field. It would be nice to formally
extend the results and calculate thermodynamical quantities.
\item We derived our results for general $\beta$ but numerically tested
$\beta=1,2$. We acknowledge that general $\beta$ remains an abstraction. 
\item Readers may wonder whether it is better to consider ``iso'' or the
infinite limit which is ``free''. We have not fully investigated
these choices, and it is indeed possible that one or the other is
better suited for various purposes.
\item A grander goal would be to apply the ideas of this paper to very general
sums of matrices. \end{onehalfspace}

\end{enumerate}

\section{Appendix}

\begin{onehalfspace}
To help the reader with the random quantities that appear in this
paper, we provide explanations of the exact nature of the random variables
that are being averaged. A common assumption is that we either assume
a uniformly randomly chosen eigenvalue from a random matrix or we
assume a collection of eigenvalues that may be randomly ordered, Random
ordering can be imposed or a direct result of the eigenvector matrix
having the right property. Calculating each of the terms separately
and then subtracting gives the same results.

\begin{equation}
\frac{1}{m}\mathbb{E}\left[\textrm{Tr}\left(AQ^{T}BQ\right)^{2}\right]=\frac{1}{m}\mathbb{E}\left\{ \sum_{1\leq i_{1},i_{2},j_{1},j_{2}\leq m}a_{i_{1}}a_{i_{2}}b_{j_{1}}b_{j_{2}}\left(q_{i_{1}j_{1}}q_{i_{1}j_{2}}q_{i_{2}j_{1}}q_{i_{2}j_{2}}\right)\right\} ,\label{eq:isoStart}
\end{equation}
where $a_{i}$ and $b_{j}$ are elements of matrices $A$ and$B$
respectively. The right hand side of Eq. \ref{eq:isoStart} can have
terms with two collisions (i.e., $i_{1}=i_{2}$ and $j_{1}=j_{2}$),
one collision (i.e. $i_{1}\neq i_{2}$ exclusive-or $j_{1}\neq j_{2}$),
or no collisions (i.e., $i_{1}\neq i_{2}$ and $j_{1}\neq j_{2}$).
Our goal now is to group terms based on the number of collisions.
The pre-factors for two, one and no collisions along with the counts
are summarized in Table \ref{tab:HaarExp}. Using the latter we can
sum the three types of contributions, to get the expectation

\begin{equation}
\begin{array}{c}
\frac{1}{m}\mathbb{E}\left[\textrm{Tr}\left(AQ^{T}BQ\right)^{2}\right]=\frac{\left(\beta+2\right)}{\left(m\beta+2\right)}\mathbb{E}\left(a^{2}\right)\mathbb{E}\left(b^{2}\right)+\\
\frac{\beta\left(m-1\right)}{\left(m\beta+2\right)}\left[\mathbb{E}\left(b^{2}\right)\mathbb{E}\left(a_{1}a_{2}\right)+\mathbb{E}\left(a^{2}\right)\mathbb{E}\left(b_{1}b_{2}\right)\right]-\frac{\beta\left(m-1\right)}{\left(m\beta+2\right)}\mathbb{E}\left(a_{1}a_{2}\right)\mathbb{E}\left(b_{1}b_{2}\right).
\end{array}\label{eq:Isotropic}
\end{equation}

If we take the local terms to be from the same distribution we can
further express the foregoing equation

\begin{equation}
\frac{1}{m}\mathbb{E}\left[\textrm{Tr}\left(AQ^{T}BQ\right)^{2}\right]=\frac{1}{\left(m\beta+2\right)}\left[\left(\beta+2\right)m_{2}^{2}+\beta\left(m-1\right)\mathbb{E}\left(a_{1}a_{2}\right)\left\{ 2m_{2}-\mathbb{E}\left(a_{1}a_{2}\right)\right\} \right].\label{eq:isotropicGeneral}
\end{equation}

The quantity of interest is the difference of the classical and the
iso (see Eq. \ref{eq:denom}),

\begin{equation}
\begin{array}{c}
\frac{1}{m}\mbox{\ensuremath{\mathbb{E}}Tr}\left(A\Pi^{T}B\Pi\right)^{2}-\frac{1}{m}\mbox{\ensuremath{\mathbb{E}}Tr}\left(AQ^{T}BQ\right)^{2}=\\
\frac{\beta\left(m-1\right)}{\left(m\beta+2\right)}\left\{ \mathbb{E}\left(a^{2}\right)\mathbb{E}\left(b^{2}\right)-\mathbb{E}\left(b^{2}\right)\mathbb{E}\left(a_{1}a_{2}\right)-\mathbb{E}\left(a^{2}\right)\mathbb{E}\left(b_{1}b_{2}\right)+\mathbb{E}\left(a_{1}a_{2}\right)\mathbb{E}\left(b_{1}b_{2}\right)\right\} =\\
\frac{\beta\left(m-1\right)}{\left(m\beta+2\right)}\left\{ m_{2}^{(A)}m_{2}^{(B)}-m_{2}^{(B)}m_{1,1}^{(A)}-m_{2}^{(A)}m_{1,1}^{(B)}+m_{1,1}^{(A)}m_{1,1}^{(B)}\right\} =\\
\frac{\beta\left(m-1\right)}{\left(m\beta+2\right)}\left(m_{2}^{(A)}-m_{1,1}^{(A)}\right)\left(m_{2}^{(B)}-m_{1,1}^{(B)}\right)=
\end{array}\label{eq:denom_General}
\end{equation}

If we assume that the local terms have the same distribution: $m_{2}\equiv m_{2}^{(A)}=m_{2}^{\left(B\right)}$,
$m_{11}\equiv m_{1,1}^{(A)}=m_{1,1}^{\left(B\right)}$ as in Eq. \ref{eq:isotropicGeneral},
the foregoing equation simplifies to

\[
\frac{1}{m}\mbox{\ensuremath{\mathbb{E}}Tr}\left(A\Pi^{T}B\Pi\right)^{2}-\frac{1}{m}\mbox{\ensuremath{\mathbb{E}}Tr}\left(AQ^{T}BQ\right)^{2}=\frac{\beta\left(m-1\right)}{\left(m\beta+2\right)}\left(m_{2}-m_{1,1}\right)^{2}.
\]

In the example of Wishart matrices as local terms we have

\begin{equation}
m_{1,1}\equiv\begin{array}{c}
m_{2}\equiv\mathbb{E}\left(a^{2}\right)=rk\left(rk+n+1\right)\\
\mathbb{E}\left(a_{1}a_{2}\right)=k\left(k-1\right)r^{2}+\frac{kr}{m-1}\left\{ \left(tn^{k-1}-1\right)(n+r+1)+tn^{k-1}\left(n-1\right)(r-1)\right\} \\
=k\left(k-1\right)r^{2}+\frac{kr}{m-1}\left\{ tn^{k-1}(nr+2)-n-r-1\right\} .
\end{array}\label{eq:elem}
\end{equation}
\end{onehalfspace}
%% LyX 2.0.2 created this file.  For more info, see http://www.lyx.org/.
%% Do not edit unless you really know what you are doing.
%\documentclass[twoside,english]{mitthesis}
%\usepackage[T1]{fontenc}
%\usepackage[latin9]{inputenc}
%\usepackage{babel}
%\usepackage{amsthm}
%\usepackage{amsmath}
%\usepackage{amssymb}
%\usepackage{graphicx}
%\usepackage{esint}
%\usepackage[unicode=true,pdfusetitle,
% bookmarks=true,bookmarksnumbered=false,bookmarksopen=false,
% breaklinks=false,pdfborder={0 0 1},backref=false,colorlinks=false]
% {hyperref}
%%
%\makeatletter
%%%%%%%%%%%%%%%%%%%%%%%%%%%%%%% Textclass specific LaTeX commands.
%\numberwithin{equation}{section}
%\numberwithin{figure}{section}
%
%%%%%%%%%%%%%%%%%%%%%%%%%%%%%%% User specified LaTeX commands.
%%\usepackage{microtype}
%
%\makeatother
%
%\begin{document}

\chapter{Calculating the Density of States in Disordered Systems Using Free Probability}

In the previous chapter we saw that when a decomposition into two
commuting subsets is possible, one can with a high accuracy obtian
the density of states of quantum spin chains. Spin chains that we
discussed do not possess transport properties such as hopping of electrons.
In this chapter we like to extent the ideas of decomposing the Hamiltonian
to 'easier' pieces and treat different systems. In particular we will
focus on one particle hopping random Schrödinger operator. What follow
in the rest of this chapter also apprears in \cite{RamisAnderson}.\vspace{0.5cm}

\section{Introduction}

Disordered materials have long been of interest for their unique physics
such as localization~\cite{Thouless1974,Evers:2008p706}, anomalous
diffusion~\cite{Bouchaud1990,Shlesinger1993} and ergodicity breaking~\cite{Palmer1982}.
Their properties have been exploited for applications as diverse as
quantum dots~\cite{Barkai2004,Stefani2009}, magnetic nanostructures~\cite{Hernando1999},
disordered metals~\cite{Dyre2000,Dugdale2005}, and bulk heterojunction
photovoltaics~\cite{Peet2009,Difley2010,Yost2011}. Despite this,
theoretical studies have been complicated by the need to calculate
the electronic structure of the respective systems in the presence
of random external potentials. Conventional electronic structure theories
can only be used in conjunction with explicit sampling of thermodynamically
accessible regions of phase space, which make such calculations enormously
more expensive than usual single-point calculations~\cite{Kollman1993}.

Alternatively, we aim to characterize the ensemble of electronic Hamiltonians
that arise from statistical sampling directly using random matrix
theory; this would in principle allow us to sidestep the cost of explicit
statistical sampling. This naturally raises the question of whether
accurate approximations can be made for various characteristics of
random Hamiltonians such as their densities of state (DOSs). We use
techniques from free probability theory, which allow the computation
of eigenvalues of sums of certain matrices~\cite{Voiculescu1991}.
While this has been proposed as a tool applicable to general random
matrices~\cite{Biane1998} and has been used for similar purposes
in quantum chromodynamics~\cite{Zee1996a}, we are not aware of any
quantification of the accuracy of this approximation in practice.
We provide herein a general framework for quantitatively estimating
the error in such situations.

\section{Quantifying the error in approximating a PDF using free probability}

We propose to quantify the deviation between two PDFs using moment
expansions. Such expansions are widely used to describe corrections
to the central limit theorem and deviations from normality, and are
often applied in the form of Gram--Charlier and Edgeworth series~\cite{Stuart1994,Blinnikov1998a}.
Similarly, deviations from non-Gaussian reference PDFs can be quantified
using generalized moment expansions. For two PDFs $w\left(\xi\right)$
and $\tilde{w}\left(\xi\right)$ with finite cumulants $\kappa_{1},\kappa_{2},\dots$
and $\tilde{\kappa}_{1},\tilde{\kappa}_{2},\dots$, and moments $\mu_{1},\mu_{2},\dots$
and $\tilde{\mu}_{1},\tilde{\mu}_{2},\dots$ respectively, we can
define a formal differential operator which transforms $\tilde{w}$
into $w$ and is given by~\cite{Wallace1958,Stuart1994} 
\begin{equation}
w\left(\xi\right)=\exp\left[\sum_{n=1}^{\infty}\frac{\kappa_{n}-\tilde{\kappa}_{n}}{n!}\left(-\frac{d}{d\xi}\right)^{n}\right]\tilde{w}\left(\xi\right).\label{eq:formal-expansion}
\end{equation}
This operator is parameterized completely by the cumulants of both
distributions.

The first $k$ for which the cumulants $\kappa_{k}$ and $\tilde{\kappa}_{k}$
differ then allows us to define a degree to which the approximation
$w\approx\tilde{w}$ is valid. Expanding the exponential and using
the well-known relationship between cumulants and moments allows us
to state that if the first $k-1$ cumulants agree, but the $k$th
cumulants differ, that this is equivalent to specifying that
\begin{equation}
w\left(\xi\right)=\tilde{w}\left(\xi\right)+\frac{\mu_{k}-\tilde{\mu}_{k}}{k!}\left(-1\right)^{k}\tilde{w}^{\left(k\right)}\left(\xi\right)+O\left(\tilde{w}^{\left(k+1\right)}\right).\label{eq:error-term}
\end{equation}

At this point we make no claim on the convergence of the series defined
by the expansion of \eqref{eq:formal-expansion}, but use it as a
justification for calculating the error term defined in \eqref{eq:error-term}.
We will examine this claim later.

\section{The free convolution}

We now take the PDFs to be DOSs of random matrices. For a random matrix
$Z$, the DOS is defined in terms of the eigenvalues $\left\{ \lambda_{n}^{\left(m\right)}\right\} $
of the $M$ samples $Z_{1},\ldots,Z_{m},\dots,Z_{M}$ according to
\begin{equation}
\rho^{\left(Z\right)}\left(\xi\right)=\lim_{M\rightarrow\infty}\frac{1}{M}\sum_{m=1}^{M}\frac{1}{N}\sum_{n=1}^{N}\delta\left(\xi-\lambda_{n}^{\left(m\right)}\right).
\end{equation}
The central idea to our approximation scheme is to split the Hamiltonian
$H=A+B$ into two matrices $A$ and $B$ whose DOSs, $\rho^{\left(A\right)}$
and $\rho^{\left(B\right)}$ respectively, can be determined easily.
The eigenvalues of the sum is in general not the sum of the eigenvalues.
Instead, we propose to approximate the exact DOS with the free convolution
$A\boxplus B$, i.e.~$\rho^{\left(H\right)}\approx\rho^{\left(A\boxplus B\right)}$,
a particular kind of ``sum'' which can be calculated without exact
diagonalization of $H$. The moment expansion presented above quantifies
the error of this approximation in terms of the onset of discrepancies
between the $k$th moment of the exact DOS, $\mu_{k}^{\left(H\right)}$,
and that for the free approximant $\mu_{k}^{\left(A\boxplus B\right)}$.
By definition, the exact moments are 
\begin{equation}
\mu_{k}^{\left(H\right)}=\mu_{k}^{\left(A+B\right)}=\left\langle \left(A+B\right)^{k}\right\rangle ,
\end{equation}
where $\left\langle Z\right\rangle =\mathbb{E}\mbox{ Tr }\left(Z\right)/N$
denotes the normalized expected trace (NET) of the $N\times N$ matrix
$Z$. The $k$th moment can be expanded using the (noncommutative)
binomial expansion of $\left(A+B\right)^{k}$; each resulting term
will have the form of a joint moment $\left\langle A^{n_{1}}B^{m_{1}}\cdots A^{n_{r}}B^{n_{r}}\right\rangle $
with each exponent $n_{s},m_{s}$ being a positive integer such that
$\sum_{s=1}^{r}\left(n_{s}+m_{s}\right)=k$. The free convolution
$\tilde{\mu}_{k}$ is defined similarly, except that $A$ and $B$
are assumed to be freely independent, and therefore that each term
must obey, by definition~\cite{Nica2006a}, relations of the form

\begin{subequations}
\begin{align}
0 & =\left\langle \Pi_{s=1}^{r}\left(A^{n_{s}}-\left\langle A^{n_{s}}\right\rangle \right)\left(B^{m_{s}}-\left\langle B^{m_{s}}\right\rangle \right)\right\rangle \label{eq:centered-joint-moment}\\
 & =\left\langle \Pi_{s=1}^{r}A^{n_{s}}B^{m_{s}}\right\rangle +\mbox{lower order terms},\label{eq:free-indep-matrices}
\end{align}
\end{subequations}where the degree $k$ is the sum of exponents $n_{s}$,
$m_{s}$ and the second equality is formed by expanding the first
line using linearity of the NET\@. Note for $k\le3$ that this is
identical to the statement of (classical) independence~\cite{Nica2006a}.
Testing for $\mu_{k}^{\left(A+B\right)}\ne\mu_{k}^{\left(A\boxplus B\right)}$
then reduces to testing whether each centered joint moment of the
form in \eqref{eq:centered-joint-moment} is statistically nonzero.
The cyclic permutation invariance of the NET means that the enumeration
of all the centered joint moments of degree $k$ is equivalent to
the combinatorial problem of generating all binary necklaces of length
$k$, for which efficient algorithms exist~\cite{Sawada2001}.

The procedure we have described above allows us to ascribe a degree
$k$ to the approximation $\rho^{\left(H\right)}\approx\rho^{\left(A\boxplus B\right)}$
given the splitting $H=A+B$. For each positive integer $n$, we generate
all unique centered joint moments of degree $n$, and test if they
are statistically nonzero. The lowest such $n$ for which there exists
at least one such term is the degree of approximation $k$. We expect
that $k\ge4$ in most situations, as the first three moments of the
exact and free PDFs match under very general conditions~\cite{Movassagh2010}.
However, we have found examples, as described in the next section,
where it is possible to do considerably better than degree 4.

\section{Decomposition of the Anderson Hamiltonian}

As an illustration of the general method, we focus on Hamiltonians
of the form
\begin{equation}
H=\left(\begin{array}{cccc}
h_{1} & J\\
J & h_{2} & \ddots\\
 & \ddots & \ddots & J\\
 &  & J & h_{N}
\end{array}\right),\label{eq:anderson}
\end{equation}
where $J$ is constant and the diagonal elements $h_{i}$ are identically
and independently distributed (iid) random variables with probability
density function (PDF) $p_{h}\left(\xi\right)$. This is a real, symmetric
tridiagonal matrix with circulant (periodic) boundary conditions on
a one-dimensional chain. Unless otherwise stated, we assume herein
that $h_{i}$ are normally distributed with mean $0$ and variance
$\sigma^{2}$. We note that $\sigma/J$ gives us a dimensionless order
parameter to quantify the strength of disorder. 

So far, we have made no restrictions on the decomposition scheme $H=A+B$
other than $\rho^{\left(A\right)}$ and $\rho^{\left(B\right)}$ being
easily computable. A natural question to pose is whether certain choices
of decompositions are intrinsically superior to others. For the Anderson
Hamiltonian, we consider two reasonable partitioning schemes:\begin{subequations}
\begin{equation}
H=A_{1}+B_{1}=\left(\begin{array}{cccc}
h_{1}\\
 & h_{2}\\
 &  & h_{3}\\
 &  &  & \ddots
\end{array}\right)+\left(\begin{array}{cccc}
0 & J\\
J & 0 & J\\
 & J & 0 & \ddots\\
 &  & \ddots & \ddots
\end{array}\right)\label{eq:scheme-i}
\end{equation}
\begin{equation}
H=A_{2}+B_{2}=\left(\begin{array}{ccccc}
h_{1} & J\\
J & 0\\
 &  & h_{3} & J\\
 &  & J & 0\\
 &  &  &  & \ddots
\end{array}\right)+\left(\begin{array}{ccccc}
0\\
 & h_{2} & J\\
 & J & 0\\
 &  &  & h_{4} & \cdots\\
 &  &  & \vdots & \ddots
\end{array}\right).\label{eq:scheme-ii}
\end{equation}
\end{subequations}We refer to these as Scheme~I and II respectively.
For both schemes, each fragment matrix on the right hand side has
a DOS that is easy to determine. In Scheme~I, we have $\rho_{A_{1}}=p_{h}$.
$B_{1}$ is simply $J$ multiplied by the adjacency matrix of a one-dimensional
chain, and therefore has eigenvalues $\lambda_{n}=2J\cos\left(2n\pi/N\right)$~\cite{Strang1999}.
Then the DOS of $B_{1}$ is $\rho_{B_{1}}\left(\xi\right)=\sum_{n=1}^{N}\delta\left(\xi-\lambda_{n}\right)$
which converges as $N\rightarrow\infty$ to the arcsine distribution
with PDF $p_{AS}\left(\xi\right)=1/\left(\pi\sqrt{4J^{2}-\xi^{2}}\right)$
on the interval $\left[-2\left|J\right|,2\left|J\right|\right]$.
In Scheme~II, we have that $\rho_{A_{2}}=\rho_{B_{2}}=\rho_{X}$
where $\rho_{X}$ is the DOS of $X=\left(\begin{array}{cc}
h_{1} & J\\
J & 0
\end{array}\right)$. Since $X$ has eigenvalues $\epsilon_{\pm}\left(\xi\right)=h_{1}\left(\xi\right)/2\pm\sqrt{h_{1}^{2}\left(\xi\right)/4+J^{2}}$,
their distribution can be calculated to be 
\begin{align}
\rho_{X}\left(\xi\right) & =\left(1+\frac{J^{2}}{\xi^{2}}\right)p_{h}\left(\xi-\frac{J^{2}}{\xi}\right).
\end{align}

\section{Numerical free convolution}

We now calculate the free convolution $A\boxplus B$ numerically by
sampling the distributions of $A$ and $B$. We define $\rho^{\left(A\boxplus B\right)}$
as simply the average DOS of the free approximant $Z=A+Q^{-1}BQ$,
where $Q$ is a $N\times N$ random matrix of Haar measure. For real
symmetric Hamiltonians it is sufficient to consider orthogonal matrices
$Q$, which can be generated from the $QR$ decomposition of a Gaussian
orthogonal matrix~\cite{Diaconis2005}. (This can be generalized
readily to unitary and symplectic matrices for complex and quaternionic
Hamiltonians respectively.) In the $N\rightarrow\infty$ limit, this
converges to the free convolution $A\mbox{\ensuremath{\boxplus}}B$~\cite{Voiculescu1991,Voiculescu1994}.

The exact DOS $\rho^{\left(A+B\right)}$ and free approximant $\rho^{\left(A\boxplus B\right)}$
are plotted in Figure~\ref{fig:dos}(a)--(c) for both schemes for
low, moderate and high noise regimes ($\sigma/J=$0.1, 1, 10 respectively).

\begin{figure}
\begin{centering}
\includegraphics[width=12.6cm]{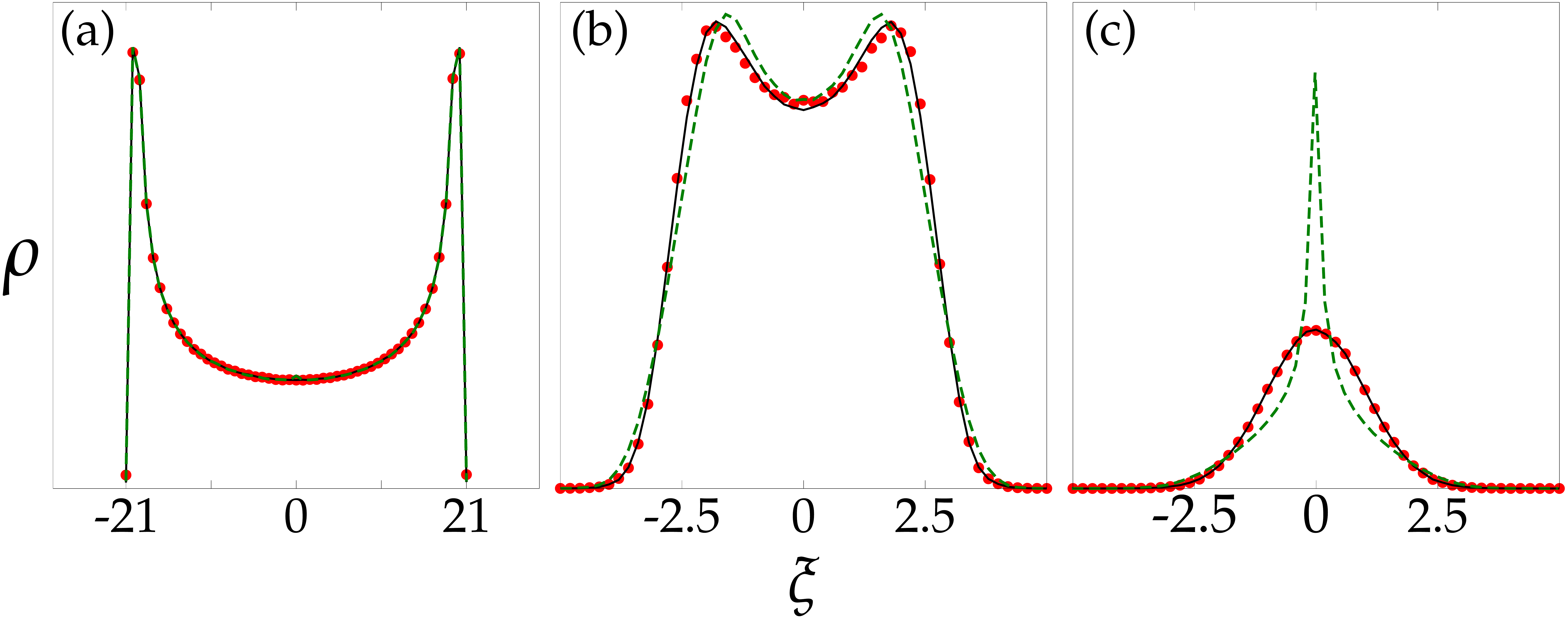}
\par\end{centering}

\caption{\label{fig:dos}Calculation of the DOS, $\rho(\xi)$, of the Hamiltonian
$H$ of (\ref{eq:anderson}) with $M=5000$ samples of $2000\times2000$
matrices for (a) low, (b) moderate and (c) high noise ($\sigma/J$=0.1,
1 and 10 respectively with $\sigma=1$). For each figure we show the
results of free convolution defined in Scheme I ($\rho^{\left(A_{1}\boxplus B_{1}\right)}$;
black solid line), Scheme II ($\rho^{\left(A_{2}\boxplus B_{2}\right)}$;
green dashed line) and exact diagonalization ($\rho^{\left(H\right)}$;
red dotted line).}
\end{figure}
We observe that for Scheme~I we have excellent agreement between
$\rho^{\left(H\right)}$ and $\rho^{\left(A_{1}\boxplus B_{1}\right)}$
across all values of $\sigma/J$, which is evident from visual inspection;
in contrast, Scheme~II shows variable quality of fit.

We can understand the starkly different behaviors of the two partitioning
schemes using the procedure outlined above to analyze the accuracy
of the approximations $\rho^{\left(H\right)}\approx\rho^{\left(A_{1}\boxplus B_{1}\right)}$
and $\rho^{\left(H\right)}\approx\rho^{\left(A_{2}\boxplus B_{2}\right)}$.
For Scheme~I, we observe that the approximation \eqref{eq:error-term}
is of degree $k=8$; the discrepancy lies solely in the term $\left\langle \left(A_{1}B_{1}\right)^{4}\right\rangle $~\cite{Popescu2011}.
Free probability expects this term to vanish, since both $A_{1}$
and $B_{1}$ are centered (i.e. $\left\langle A_{1}\right\rangle =\left\langle B_{1}\right\rangle =0$)
and hence must satisfy \eqref{eq:free-indep-matrices} with $n_{1}=m_{1}=\cdots=n_{4}=m_{4}=1$.
In contrast, we can calculate its true value from the definitions
of $A_{1}$ and $B_{1}$. By definition of the NET $\left\langle \cdot\right\rangle $,
only closed paths contribute to the term. Hence, only two types of
terms can contribute to $\left\langle \left(A_{1}B_{1}\right)^{4}\right\rangle $;
these are expressed diagrammatically in Figure~\ref{fig:hopping}.
The matrix $A_{1}$ weights each path by a factor of $h$, while $B_{1}$
weights each path by $J$, and in addition forces the path to hop
to an adjacent site.

\begin{figure}
\begin{centering}
\includegraphics{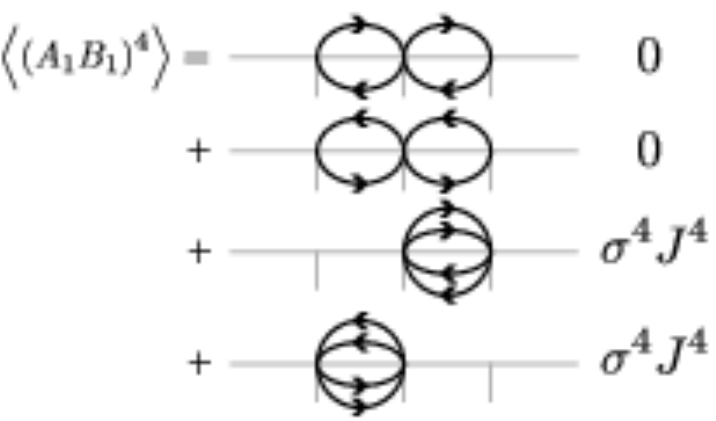}
\par\end{centering}

\caption{\label{fig:hopping}Diagrammatic expansion of the term $\left\langle A_{1}B_{1}A_{1}B_{1}A_{1}B_{1}A_{1}B_{1}\right\rangle $
in terms of allowed paths dictated by the matrix elements of $A_{1}$
and $B_{1}$ of Scheme~I in~\eqref{eq:scheme-i}.}
\end{figure}
 Consequently, we can write explicitly
\begin{align}
\left\langle \left(A_{1}B_{1}\right)^{4}\right\rangle = & \frac{1}{N}\sum_{i}\mathbb{E}\left(h_{i}Jh_{i-1}Jh_{i}Jh_{i+1}J\right)\nonumber \\
 & +\frac{1}{N}\sum_{i}\mathbb{E}\left(h_{i}Jh_{i+1}Jh_{i}Jh_{i-1}J\right)\nonumber \\
 & +\frac{1}{N}\sum_{i}\mathbb{E}\left(h_{i}Jh_{i-1}Jh_{i}Jh_{i-1}J\right)\nonumber \\
 & +\frac{1}{N}\sum_{i}\mathbb{E}\left(h_{i}Jh_{i+1}Jh_{i}Jh_{i+1}J\right)\nonumber \\
= & 2J^{4}\mathbb{E}\left(h_{i}\right)^{2}\mathbb{E}\left(h_{i}^{2}\right)+2J^{4}\mathbb{E}\left(h_{i}^{2}\right)^{2}=0+2J^{4}\sigma^{4},
\end{align}
where the second equality follows from the independence of the $h_{i}$'s.
As this is the only source of discrepancy at the eighth moment, this
explains why the agreement between the free and exact PDFs is so good,
as the leading order correction is in the eighth derivative of $\rho^{\left(A_{1}\boxplus B_{1}\right)}$
with coefficient $2\sigma^{4}J^{4}/8!=\left(\sigma J\right)^{4}/20160$.
In contrast, we observe for Scheme~II that the leading order correction
is at $k=4$, where the discrepancy lies in $\left\langle A_{2}^{2}B_{2}^{2}\right\rangle $.
Free probability expects this to be equal to $\left\langle A_{2}^{2}B_{2}^{2}\right\rangle =\left\langle A_{2}^{2}\right\rangle \left\langle B_{2}^{2}\right\rangle =\left\langle X^{2}\right\rangle ^{2}=\left(J^{2}+\sigma^{2}/2\right)^{2}$,
whereas the exact value of this term is $J^{2}\left(J^{2}+\sigma^{2}\right)$.
Therefore the discrepancy is in the fourth derivative of $\rho^{\left(A\boxplus B\right)}$
with coefficient $\left(-\sigma^{4}/4\right)/4!=-\sigma^{4}/96$.

\section{Analytic free convolution}

Free probability allows us also to calculate the limiting distributions
of $\rho^{\left(A\boxplus B\right)}$ in the macroscopic limit of
infinite matrix sizes $N\rightarrow\infty$ and infinite samples $M\rightarrow\infty$.
In this limit, the DOS $\rho^{\left(A\boxplus B\right)}$ is given
as a particular type of integral convolution of $\rho^{\left(A\right)}$
and $\rho^{\left(B\right)}$. We now calculate the free convolution
analytically in the macroscopic limit for the two partitioning schemes
discussed above, thus sidestepping the cost of sampling and matrix
diagonalization altogether.

For our first example, we take $A$ and $B$ as in Scheme~I, but
with each iid $h_{i}$ following a Wigner semicircle distribution
with PDF $p_{W}\left(\xi\right)=\sqrt{4-\xi^{2}}/4\pi$ on the interval
$\left[-2,2\right]$. (Using semicircular noise instead of Gaussian
noise simplifies the analytic calculation considerably.) Then, $\rho^{\left(A\right)}=p_{W}$
and $\rho^{\left(B\right)}=p_{AS}$. The key tool to performing the
free convolution analytically is the $R$-transform $r\left(w\right)=g^{-1}\left(w\right)-w^{-1}$~\cite{Voiculescu1985},
where $g^{-1}$ is defined implicitly via the Cauchy transform
\begin{equation}
w=\int_{\mathbb{R}}\frac{\rho\left(\xi\right)}{g^{-1}\left(w\right)-\xi}d\xi.
\end{equation}
For freely independent $A$ and $B$, the $R$-transforms linearize
the free convolution, i.e. $r^{\left(A\boxplus B\right)}\left(w\right)=r^{\left(A\right)}\left(w\right)+r^{\left(B\right)}\left(w\right)$,
and that the PDF can be recovered from the Plemelj--Sokhotsky inversion
formula by\begin{subequations} 
\begin{align}
\rho^{\left(A\boxplus B\right)}\left(\xi\right) & =\frac{1}{\pi}\mbox{Im}\left(\left(g^{\left(A\boxplus B\right)}\right)^{-1}\left(\xi\right)\right)\\
g^{\left(A\boxplus B\right)}\left(w\right) & =r^{\left(A\boxplus B\right)}\left(w\right)+w^{-1}.
\end{align}
\end{subequations}Applying this to Scheme~I, we have $r^{\left(A\right)}\left(w\right)=w$
and $r^{\left(B\right)}\left(w\right)=\left(-1+\sqrt{4J^{2}+w^{2}}\right)/w$,
so that $g^{\left(A\boxplus B\right)}\left(w\right)=w+\left(\sqrt{4J^{2}+w^{2}}\right)/w$.
The need to calculate the functional inverse $\left(g^{\left(A\boxplus B\right)}\right)^{-1}$
in this procedure unfortunately precludes our ability to write $\rho^{\left(A\boxplus B\right)}\left(\xi\right)$
in a compact closed form; nevertheless, the inversion can be calculated
numerically. We present calculations of the DOS as a function of noise
strength $\sigma/J$ in Figure~\ref{fig:dos-semicircle}.

\begin{figure}
\begin{centering}
\includegraphics[width=12.6cm]{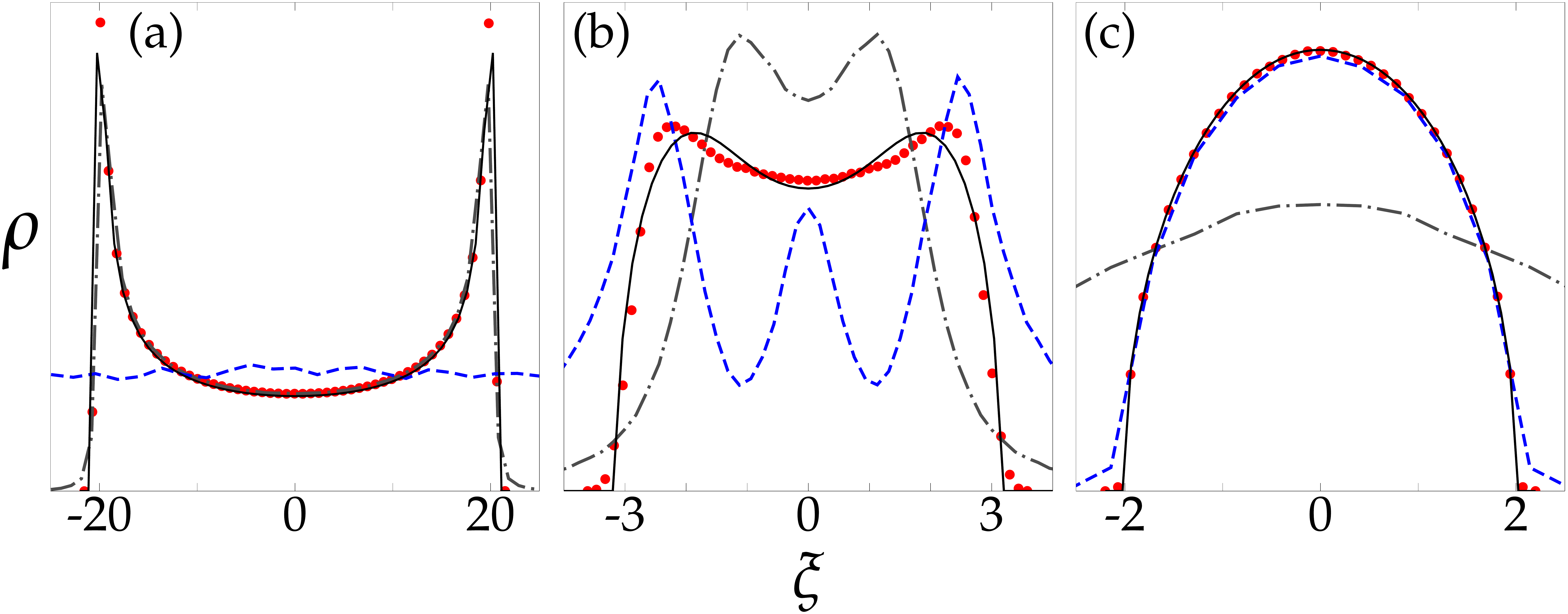}
\par\end{centering}

\caption{\label{fig:dos-semicircle}DOS, $\rho(\xi)$, of the Hamiltonian~\eqref{eq:anderson}
with $M=5000$ samples of $2000\times2000$ matrices with (a) low,
(b) moderate and (c) high semicircular on-site noise ($\sigma/J$=0.1,
1 and 10 respectively with $\sigma=1$), as calculated with exact
diagonalization (red dotted line), free convolution (black solid line),
and perturbation theory with $A_{1}$ as reference (blue dashed line)
and $B_{1}$ as reference (gray dash-dotted line). The partitioning
scheme is Scheme~I of~\eqref{eq:scheme-i}.}
\end{figure}

\section{Comparison with other approximations}

For comparative purposes, we also performed calculations using standard
second-order matrix perturbation theory~\cite{Horn1990} for both
partitioning schemes. The results are also shown in Figure~\ref{fig:dos-semicircle}.
Unsurprisingly, perturbation theory produces results that vary strongly
with $\sigma/J$, and that the different series, based on whether
$A$ is considered a perturbation of $B$ or vice versa, have different
regimes of applicability. Furthermore it is clear even from visual
inspection that the second moment of the DOS calculated using second-order
perturbation theory is in general incorrect. In contrast, the free
convolution produces results with a more uniform level of accuracy
across the entire range of $\sigma/J$, and that we have at least
the first three moments being correct~\cite{Movassagh2010}.

It is also natural to ask what mean-field theory, another standard
tool, would predict. Interestingly, the limiting behavior of Scheme~I
as $N\rightarrow\infty$ is equivalent to a form of mean-field theory
known as the coherent potential approximation (CPA)~\cite{Neu1994,Neu1995a,Neu1995b}
in condensed matter physics, and is equivalent to the Blue's function
formalism in quantum chromodynamics for calculating one-particle irreducible
self-energies~\cite{Zee1996a}. The breakdown in the CPA in the term
$\left\langle \left(A_{1}B_{1}\right)^{4}\right\rangle $ is known~\cite{Blackman1971,Thouless1974};
however, to our knowledge, the magnitude of the deviation was not
explained. In contrast, our error analysis framework affords us such
a quantitative explanation.

Finally, we discuss the predictions of isotropic entanglement theory,
which proposes a linear interpolation between the classical convolution
\[
\rho^{\left(A*B\right)}\left(\xi\right)=\int_{-\infty}^{\infty}\rho^{\left(A\right)}\left(\xi\right)\rho^{\left(B\right)}\left(x-\xi\right)dx
\]
and the free convolution $\rho^{\left(A\boxplus B\right)}\left(\xi\right)$
in the fourth cumulant~\cite{Movassagh2010,Movassagh2011a}. The
classical convolution can be calculated directly from the random matrices
$A$ and $B$; by diagonalizing the matrices as $A=Q_{A}^{-1}\Lambda_{A}Q_{A}$
and $B=Q_{B}^{-1}\Lambda_{B}Q_{B}$, the classical convolution $\rho^{\left(A*B\right)}\left(\xi\right)$
can be computed from the eigenvalues of random matrices of the form
$Z_{cl}=\Lambda_{A}+\Pi^{-1}\Lambda_{B}\Pi$ where $\Pi$ is a $N\times N$
random permutation matrix. It is instructive to compare this with
the free convolution, which can be sampled from matrices of the form
$Z^{\prime}=\Lambda_{A}+Q^{-1}\Lambda_{B}Q$, which can be shown by
orthogonal invariance of the Haar measure random matrices $Q$ to
be equivalent to sampling matrices of the form $Z=A+Q^{-1}BQ$ described
previously.

As discussed previously, the lowest three moments of $Z$ and $H$
are identical; this turns out to be true also for $Z_{cl}$~\cite{Movassagh2010}.
Therefore IE proposes to interpolate via the fourth cumulant, with
interpolation parameter $p$ defined as
\begin{equation}
p=\frac{\kappa_{4}^{\left(H\right)}-\kappa_{4}^{\left(A\boxplus B\right)}}{\kappa_{4}^{\left(A*B\right)}-\kappa_{4}^{\left(A\boxplus B\right)}}
\end{equation}

We observe that IE appears to always favor the free convolution limit
($p=0$) as opposed to the classical limit ($p=1$); this is not surprising
as we know from our previous analysis that $\kappa_{4}^{\left(H\right)}=\kappa_{4}^{\left(A\boxplus B\right)}$,
and that the agreement with the exact diagonalization result is excellent
regardless of $\sigma/J$. In Scheme~II, we observe the unexpected
result that $p$ can sometimes be negative and that the agreement
varies with $\sigma/J$. From the moment expansion we understand why;
we have that the first three moments match while $\kappa_{4}^{\left(A_{2}+B_{2}\right)}-\kappa_{4}^{\left(A_{2}\boxplus B_{2}\right)}=-\sigma^{4}/4$.
The discrepancy lies in the term $\left\langle A_{2}^{2}B_{2}^{2}\right\rangle $
which is expected to have the value $\left\langle A_{2}^{2}\right\rangle \left\langle B_{2}^{2}\right\rangle =\left(J^{2}+\sigma^{2}/2\right)^{2}$
in free probability but instead has the exact value $J^{2}\left(J^{2}+\sigma^{2}\right)$.
Furthermore, we have that $\kappa_{4}^{\left(A_{2}*B_{2}\right)}\ne\kappa_{4}^{\left(A_{2}\boxplus B_{2}\right)}$where
the only discrepancy lies is in the so-called departing term $\left\langle A_{2}B_{2}A_{2}B_{2}\right\rangle $~\cite{Movassagh2010,Movassagh2011a}.
This term contributes 0 to $\kappa_{4}^{\left(A\boxplus B\right)}$
but has value $\left\langle A_{2}^{2}\right\rangle \left\langle B_{2}^{2}\right\rangle =\left(J^{2}+\sigma^{2}/2\right)^{2}$
in $\kappa_{4}^{\left(A_{2}*B_{2}\right)}$, since for the classical
convolution we have that 
\begin{equation}
\left\langle \Pi_{s=1}^{r}\left(A_{2}^{n_{s}}B_{2}^{m_{s}}\right)\right\rangle =\left\langle A_{2}^{\sum_{s=1}^{r}n_{s}}\right\rangle \left\langle B_{2}^{\sum_{s=1}^{r}m_{s}}\right\rangle .
\end{equation}
This therefore explains why we observe a negative $p$, as this calculation
shows that
\begin{equation}
p=\frac{\kappa_{4}^{\left(A_{2}+B_{2}\right)}-\kappa_{4}^{\left(A_{2}\boxplus B_{2}\right)}}{\kappa_{4}^{\left(A_{2}*B_{2}\right)}-\kappa_{4}^{\left(A_{2}\boxplus B_{2}\right)}}=-2\left(2\left(\frac{\sigma}{J}\right)^{-2}+1\right)^{-2}
\end{equation}
which is manifestly negative.

In conclusion, we have demonstrated that the free probability of random
matrices can provide unexpectedly accurate approximations for the
DOS of disordered Hamiltonians, both for finite dimensional systems
and in the macroscopic limit $N\rightarrow\infty$. Our results illustrate
variable accuracies of the approximations predicated on our particular
choices of partitioning schemes, which can be quantitatively estimated
using moment expansions. These results represent an optimistic beginning
to addressing the electronic structure of disordered condensed matter
systems using the tools of random matrix theory. We are currently
investigating the predictions of free probability on the localization
properties of eigenvectors, as well as studying the general applicability
of free probability to lattices in higher dimensions, as well as systems
with off-diagonal and correlated disorder.
%\end{document}

\part{Eigenvectors}
%% LyX 2.0.2 created this file.  For more info, see http://www.lyx.org/.
%% Do not edit unless you really know what you are doing.
%\documentclass[letterpaper,english]{mitthesis}
%\usepackage[T1]{fontenc}
%\usepackage[latin9]{inputenc}
%\usepackage{verbatim}
%\usepackage{float}
%\usepackage{amsthm}
%\usepackage{amsmath}
%\usepackage{amssymb}
%\usepackage{graphicx}
%\usepackage{setspace}
%
%\makeatletter
%
%%%%%%%%%%%%%%%%%%%%%%%%%%%%%%% LyX specific LaTeX commands.
%\pdfpageheight\paperheight
%\pdfpagewidth\paperwidth
%
%\floatstyle{ruled}
%\newfloat{algorithm}{tbp}{loa}
%\providecommand{\algorithmname}{Algorithm}
%\floatname{algorithm}{\protect\algorithmname}
%
%%%%%%%%%%%%%%%%%%%%%%%%%%%%%%% Textclass specific LaTeX commands.
%  \theoremstyle{plain}
%  \newtheorem{lem}{\protect\lemmaname}
%  \theoremstyle{definition}
%  \newtheorem{defn}{\protect\definitionname}
%  \theoremstyle{plain}
%  \newtheorem*{thm*}{\protect\theoremname}
%
%\makeatother
%
%\usepackage{babel}
%  \providecommand{\definitionname}{Definition}
%  \providecommand{\lemmaname}{Lemma}
%  \providecommand{\theoremname}{Theorem}
%
%\begin{document}
\chapter{Generic Quantum Spin Chains} 

In this chapter I discuss frustration free condition, evolution in
time and imaginary time within MPS representation along with numerical
methods. I then discuss the degeneracy and frustration free condition
for generic quantum spin chains with local interactions. I leave the
discussion of the entanglement of the ground states for the next chapter.

\section{Frustration Free Condition}

An $L-$local Hamiltonian where each interaction acts nontrivially
on $L$ particles can be written as

\[
H=\sum_{i=1}^{M}H_{i}
\]
where $i$ indexes $M$ groups of $L$ spins. By spectral decomposition
any such Hamiltonian on $N$ spins each with $d-$states can be written
as 

\[
H=m_{g}\lambda_{g}\left|g\left\rangle \right\langle g\right|+\sum_{\alpha=1}^{\alpha_{max}}\lambda_{\alpha}\left|\alpha\left\rangle \right\langle \alpha\right|,\mbox{ with\,\ensuremath{\alpha_{max}+m_{g}=d^{N}}}
\]
where $g$ refers to the ground state quantities, $m_{g}$ the possible
multiplicity due to degeneracy and $\alpha$'s label the excited states.
Similarly for each of the local terms

\begin{equation}
H_{i}=m_{g}^{i}\lambda_{g}^{i}\left|g_{i}\left\rangle \right\langle g_{i}\right|+\sum_{p=1}^{r}\lambda_{p}^{i}\left|v_{i}^{p}\left\rangle \right\langle v_{i}^{p}\right|\mbox{ with \ensuremath{r+m_{g}^{i}=d^{L}}}.\label{eq:LocalGeneral}
\end{equation}

Energy is meaningful in a relative sense, hence one can shift the
local Hamiltonian such that the ground state has energy zero. This
defines an effective Hamiltonian $H_{i}'$

\[
H'_{i}\equiv H_{i}-\lambda_{g}^{i}\mathbb{I}_{d^{L}}=\sum_{p=1}^{r}\left(\lambda_{p}^{i}-\lambda_{g}^{i}\right)\left|v_{i}^{p}\left\rangle \right\langle v_{i}^{p}\right|
\]

The effective local Hamiltonian has zero energy ground states and
eigenstates corresponding to the excited states with positive eigenvalues.
By definition, $\lambda_{p}^{i}-\lambda_{g}^{i}>0$ for all $p$,
where we have not made as of yet any assumptions on the possible numerical
values they can take. The span and Kernel of $H'_{s,t}$ is therefore
the same as 
\begin{equation}
H"_{i}=\sum_{p=1}^{r}\left|v_{i}^{p}\left\rangle \right\langle v_{i}^{p}\right|,\label{eq:sumsProj}
\end{equation}
where we replaced all $\lambda_{p}^{i}-\lambda_{g}^{i}>0$ by ones
to get projectors of rank $r$ as our local terms. As discussed at
the end of this section, there are advantages in doing so. Locally
there are $d^{2}-r$ zero energy states; however, the projectors are
not mutually exclusive. This makes the problem of counting the kernel
of $H$ non-trivial, namely we cannot assert that the number of ground
states is $\left(d^{2}-r\right)^{M}$. In particular the ground state
of the global Hamiltonian may not be a local ground state of some
local term(s): frustrated.
\begin{lem}
Suppose $E_{gr}^{1}$ and $E_{gr}^{2}$ are the smallest eigenvalues
of Hamiltonians $H^{1}$ and $H^{2}$ respectively and $E_{gr}^{1,2}$
that of $H^{1}+H^{2}$. Further, let $E_{max}^{1}$ and $E_{max}^{2}$
be the largest eigenvalues of Hamiltonians $H^{1}$ and $H^{2}$ respectively
and $E_{max}^{1,2}$ that of $H^{1}+H^{2}$ then

\begin{eqnarray*}
E_{gr}^{1}+E_{gr}^{2} & \leq & E_{gr}^{1,2}\\
E_{max}^{1}+E_{max}^{2} & \ge & E_{max}^{1,2}.
\end{eqnarray*}
In the first (second) case equality holds if $H^{1}$ and $H^{2}$
have the same eigenstate for their smallest (largest) eigenvalues.\end{lem}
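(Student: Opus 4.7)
The plan is to prove both inequalities via the Rayleigh--Ritz variational characterization of extreme eigenvalues of Hermitian operators, which is the cleanest tool here since $H^1$, $H^2$, and $H^1+H^2$ are all Hermitian (being physical Hamiltonians). Recall that for a Hermitian operator $H$, the smallest eigenvalue is $E_{gr}(H)=\min_{\|\psi\|=1}\langle\psi|H|\psi\rangle$ and the largest is $E_{max}(H)=\max_{\|\psi\|=1}\langle\psi|H|\psi\rangle$, each attained on the corresponding eigenvector.

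First I would handle the ground-state inequality. Let $|\psi_{12}\rangle$ be a normalized eigenvector of $H^1+H^2$ attaining $E_{gr}^{1,2}$. Then, by linearity of the inner product,
\[
E_{gr}^{1,2}=\langle\psi_{12}|H^1|\psi_{12}\rangle+\langle\psi_{12}|H^2|\psi_{12}\rangle \;\geq\; E_{gr}^{1}+E_{gr}^{2},
\]
where each term is individually bounded below by the Rayleigh quotient inequality applied to $H^1$ and $H^2$ separately. For the largest eigenvalue I would run the identical argument on the maximizer $|\psi_{12}^{\max}\rangle$ of $H^1+H^2$, with every inequality reversed, giving $E_{max}^{1,2}\leq E_{max}^{1}+E_{max}^{2}$.

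For the equality clauses, the direction ``common extremal eigenstate $\Rightarrow$ equality'' is immediate: if $|\phi\rangle$ simultaneously satisfies $H^1|\phi\rangle=E_{gr}^{1}|\phi\rangle$ and $H^2|\phi\rangle=E_{gr}^{2}|\phi\rangle$, then $(H^1+H^2)|\phi\rangle=(E_{gr}^{1}+E_{gr}^{2})|\phi\rangle$, so $E_{gr}^{1}+E_{gr}^{2}$ is in the spectrum of $H^1+H^2$ and, combined with the just-established lower bound, must equal $E_{gr}^{1,2}$. The same reasoning handles the max case. (The converse direction, equality $\Rightarrow$ common eigenstate, also follows by noting that if equality holds then the minimizer $|\psi_{12}\rangle$ must simultaneously saturate the Rayleigh bound for both $H^1$ and $H^2$, forcing it to be a ground state of each; but the lemma as stated only claims the forward direction, so this is optional.)

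There is no real obstacle here; the entire argument is one or two lines once the variational principle is invoked. The only thing to be mindful of is keeping the direction of inequalities straight when passing from min to max, since the sign of the inequality flips. This lemma is what will be used in the sequel to reduce the ground-space analysis of $H=\sum_i H_i''$ (with $H_i''$ projectors as in Eq.~\ref{eq:sumsProj}) to the unfrustration condition $\bigcap_i \ker H_i''\neq\{0\}$, in which case the common-eigenstate clause gives the global ground energy $0$.
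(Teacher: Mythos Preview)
Your proof is correct and is essentially the same as the paper's: both evaluate $\langle H^1\rangle+\langle H^2\rangle$ on the extremal eigenstate of $H^1+H^2$ and invoke the variational bound for each summand, and the paper explicitly remarks that the Rayleigh quotient or min--max argument works. The equality clause is handled identically.
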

\begin{proof}
The proof follows from the fact that the average is greater than the
least eigenvalue and smaller than the largest eigenvalue. Suppose
we are in the states in which $H^{1}+H^{2}$ takes the eigenvalue
$E_{gr}^{1,2}$, then $\left\langle H^{1}+H^{2}\right\rangle =E_{gr}^{1,2}$
, but $\left\langle H^{1}+H^{2}\right\rangle =\left\langle H^{1}\right\rangle +\left\langle H^{2}\right\rangle \geq E_{gr}^{1}+E_{gr}^{2}$.
The equality holds if $H^{1}$ and $H^{2}$ have the same eigenstate
for their smallest eigenvalues. Similarly if we are in the state in
which $H^{1}+H^{2}$ takes the eigenvalue $E_{max}^{1,2}$, then $\left\langle H^{1}+H^{2}\right\rangle =E_{max}^{1,2}$
, but $\left\langle H^{1}+H^{2}\right\rangle =\left\langle H^{1}\right\rangle +\left\langle H^{2}\right\rangle \le E_{max}^{1}+E_{max}^{2}$.
The equality holds if $H^{1}$ and $H^{2}$ have the same eigenstate
for their largest eigenvalues.
\end{proof}
One can prove this using Rayleigh quotient or min-max theorems as
well.

Therefore, the summation in the first Eq. \ref{eq:sumsProj} can result
in ``lifting'' the global ground state relative to that of the local
ones. We seek conditions under which global ground state has zero
energy, i.e., remains ``unlifted'' (see Section \ref{sec:Why-Care-AboutFF}
for motivation).

It is necessary and sufficient for a state to be the ground state
if it is orthogonal to all the local terms $H'_{i}$ because all the
terms in the Hamiltonian will be zero. If the local terms commute
by the foregoing lemma the global and local ground states will be
the same, i.e., $E_{gr}^{1,2}=E_{gr}^{1}+E_{gr}^{2}=0+0=0.$ This
is too strong a requirement. It could be that the lowest eigenvectors
of each summand are aligned, in which case the system is Frustration
Free (FF) or unfrustrated. A classical analogue would be the ferromagnet.
\begin{defn}
The ground state is unfrustrated if it is also a common ground state
of all of the local terms $H_{i}$.
\end{defn}
Let us go back to Eq. \ref{eq:hamiltonian} and investigate chains
of $d$-dimensional quantum spins (qudits) on a line with nearest-neighbor
interactions. The Hamiltonian of the system, 
\begin{equation}
H=\sum_{k=1}^{N-1}H_{k,k+1}\label{eq:hamiltonian}
\end{equation}
is 2-local (each $H_{k,k+1}$ acts non-trivially only on two neighboring
qudits). Our goal is to find the necessary and sufficient conditions
for the quantum system to be {\em unfrustrated} 

The local terms can be written as 
\begin{eqnarray}
H_{k,k+1}=E_{0}^{(k)}P_{k,k+1}^{(0)}+\sum_{p}E_{p}^{(k)}P_{k,k+1}^{(p)},\label{excited}
\end{eqnarray}
 where $E_{0}^{(k)}$ is the ground state energy of $H_{k,k+1}$ and
each $P_{k,k+1}^{(p)}$ is a projector onto the subspace spanned by
the eigenstates of $H_{k,k+1}$ with energy $E_{p}^{(k)}$. The question
of existence of a common ground state of all the local terms is equivalent
to asking the same question for a Hamiltonian whose interaction terms
are 
\begin{eqnarray}
H_{k,k+1}' & = & \mathbb{I}_{1,\dots,k-1}\otimes P_{k,k+1}\otimes\mathbb{I}_{k+2,\dots,N},\label{eq:projham}
\end{eqnarray}
 with $P_{k,k+1}=\sum_{p=1}^{r}P_{k,k+1}^{(p)}$ projecting onto the
excited states of each original interaction term $H_{k,k+1}$. When
this modified system is unfrustrated, its ground state energy is zero
(all the terms are positive semi-definite). The unfrustrated ground
state belongs to the intersection of the ground state subspaces of
each original $H_{k,k+1}$ and is annihilated by all the projector
terms.

As far as the question of (un)frustration and count of the ground
states are concerned, Eq. \ref{eq:projham} yields the same result
as Eq. \ref{excited}.

\section{\label{sec:Why-Care-AboutFF}Why Care About Frustration Free Systems?}

There are many models such as the Heisenberg ferromagnetic chain,
AKLT, Parent Hamiltonians that are frustration free (FF) \cite{Koma95,AKLT87,spinGlass,cirac}.
Besides such models and the mathematical convenience of working with
projectors as local terms, what is the significance of FF systems?
In particular, do FF systems describe systems that can be realized
in nature? Some answers can be given:
\begin{enumerate}
\item It has been proved in \cite{hastings06} all gapped Hamiltonians can
be approximated by frustration free Hamiltonians if one allows for
the range of interaction to be $\mathcal{O}\mathcal{\left(\log N\right)}$.
It is believed that any type of gapped ground state is adequately
described by a frustration free model \cite{spinGlass}. 
\item The ground state is stable against variation of the Hamiltonian against
perturbations \cite{kraus_Markov} 
\[
H\left(g\right)=\sum_{k}g_{k}H_{k,k+1},\quad g_{k}\ge0
\]
as the kernels of the local terms remain invariant.
\item In quantum complexity theory the classical SAT problem is generalized
to the so called qSAT \textbf{\cite{bravyi}}. The qSAT problem is:
Given a collection of $k-$local projectors on $n$ qubits, is there
a state $\psi\rangle$ that is annihilated by all the projectors?
Namely, is the system frustration free?
\item Ground states of frustration free Hamiltonians, namely MPS can be
prepared by dissipation \textbf{\small \cite{VerstraeteNature}.}{\small \par}
\end{enumerate}
\begin{comment}
{*}{*}Question: If the local terms are not commuting does the equivalence
in the foregoing lemma become an IFF?

Note: It is a coincidence that zero energy state and applicability
of the lemma match. This is due to ground state being the zero energy
state.

{*}{*} Where does non-frustration stand relative to commutativity
of local terms? Commutativity implies non-frustration

{*}{*} Talk about stability of ground state under perturbations.
\end{comment}

\section{Why Does Imaginary Time Evolution Work?}

By imaginary time evolution we mean $it\rightarrow\tau$. The Hamiltonian
evolution becomes 

\[
e^{-itH}|\psi\rangle\rightarrow e^{-\tau H}|\psi\rangle.
\]
Intuitively one might like to see imaginary time evolution as dissipation
of energy such that for sufficiently long time the system relaxes
to its lowest energy state. Mathematically, one can do a spectral
decomposition $H=\sum_{\alpha}E_{\alpha}|\alpha\rangle\langle\alpha|$
where $E_{\alpha}$ are energies associated with states $|\alpha\rangle$.
The imaginary time evolution

\[
e^{-\tau H}|\psi\rangle=e^{-\tau\sum E_{\alpha}|\alpha\rangle\langle\alpha|}|\psi\rangle,
\]
which implies an exponential suppression of the overlap of $|\psi\rangle$
with states that have energies higher than that of the ground state.
My numerical implementation of the imaginary time evolution is the
same as described in \cite{daniel}.

\section{Numerical Study of Quantum Spin Chains Using MPS}

Suppose we want to evolve the MPS representation of the quantum spin
system, (to be explicit we include the $\Lambda$'s see Eq. \ref{eq:RamisBrief})
starting at time $t$

\[
\psi_{t}\rangle={\displaystyle \mathcal{P}}\left\{ \bigotimes_{p=1}^{N-1}\Gamma_{t}\left(i_{p}\right)\Lambda_{t}^{\left(p\right)}|i_{p}\rangle\right\} \quad\quad\quad\mbox{OBC}.
\]

The quantum mechanical time evolution in $\Delta t$ is given by

\[
\psi_{t+\Delta t}\rangle=\left\{ \bigotimes_{p=1}^{N-1}\exp\left(-i\Delta tH_{p,p+1}\right)\right\} |\psi_{t}\rangle.
\]

As in Chapter 2, we once again decompose the Hamiltonian into two
pieces $H\equiv H_{1}+H_{2}$, where $H_{1}\equiv\sum_{p=1,3,\cdots}H_{p,p+1}$
and $H_{2}\equiv\sum_{p=2,4,\cdots}H_{p,p+1}$ are made up of terms
that all commute with one another. One can show \cite[Exercise 4.47]{NC} 

\begin{eqnarray}
e^{-itH_{1}} & = & e^{-itH_{1,2}}e^{-itH_{3,4}}\cdots e^{-itH_{N-2,N-1}}\label{eq:Trott1}\\
e^{-itH_{2}} & = & e^{-itH_{2,3}}e^{-itH_{4,5}}\cdots e^{-itH_{N-1,N}}.\nonumber 
\end{eqnarray}

In order to evolve the system in time we need to make use of Trotter's
formula \cite[Thm 4.3]{NC}
\begin{thm*}
(Trotter formula) Let $A$ and $B$ be Hermitian operators. Then for
any real $t$, 

\begin{eqnarray}
e^{i\left(A+B\right)t} & = & \lim_{n\rightarrow\infty}\left(e^{iAt/n}e^{iBt/n}\right)^{n}.\label{eq:TrotterFormula}
\end{eqnarray}

\end{thm*}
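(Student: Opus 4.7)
The plan is to reduce the identity to a first-order Taylor expansion and then bound the accumulated error when iterating $n$ times. First I would write, for each Hermitian $A$ and $B$ and for large $n$,
\begin{eqnarray*}
e^{iAt/n} &=& \mathbb{I} + \frac{iAt}{n} + R_A(n), \\
e^{iBt/n} &=& \mathbb{I} + \frac{iBt}{n} + R_B(n),
\end{eqnarray*}
with operator-norm remainders $\|R_A(n)\|,\|R_B(n)\| = O(1/n^2)$ (this follows from the absolutely convergent series definition of the matrix exponential, together with $\|A^k\|\le\|A\|^k$). Multiplying these two expressions and collecting terms gives
\[
e^{iAt/n}\, e^{iBt/n} \;=\; \mathbb{I} + \frac{i(A+B)t}{n} + S(n),
\]
where again $\|S(n)\| = O(1/n^2)$; crucially the cross term $-(AB)t^2/n^2$ that would distinguish this from $e^{i(A+B)t/n}$ is already of order $1/n^2$, and the non-commutativity of $A$ and $B$ only enters at this order.

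Next I would compare the $n$-th power of the above against $e^{i(A+B)t}$ itself, which by the same Taylor bound satisfies
\[
e^{i(A+B)t/n} \;=\; \mathbb{I} + \frac{i(A+B)t}{n} + T(n), \qquad \|T(n)\| = O(1/n^2).
\]
Thus the single-step discrepancy $D(n) := e^{iAt/n}e^{iBt/n} - e^{i(A+B)t/n}$ has $\|D(n)\| = O(1/n^2)$. The key algebraic step is the telescoping identity
\[
X^n - Y^n \;=\; \sum_{k=0}^{n-1} X^{k}(X-Y)Y^{n-1-k},
\]
applied with $X = e^{iAt/n}e^{iBt/n}$ and $Y = e^{i(A+B)t/n}$. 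Taking norms and using unitarity of $Y$ (since $A+B$ is Hermitian) together with $\|X\|\le e^{(\|A\|+\|B\|)t/n}$, which is bounded uniformly in $n$, I would conclude
\[
\bigl\| X^n - Y^n \bigr\| \;\le\; n\,C\,\|D(n)\| \;=\; n\cdot O(1/n^2) \;=\; O(1/n),
\]
for some constant $C$ depending on $\|A\|$, $\|B\|$, $t$ but not on $n$. Since $Y^n = e^{i(A+B)t}$ exactly, letting $n\to\infty$ yields the claim.

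The routine part is the Taylor expansion bookkeeping; the main obstacle is keeping the telescoping error under control, which requires a uniform-in-$n$ bound on $\|X^k\|$ and $\|Y^{n-1-k}\|$. For Hermitian $A,B$ on finite-dimensional Hilbert space this is immediate from unitarity of $Y$ and the explicit bound on $\|X\|$ above, so in our finite-dimensional setting no functional-analytic subtleties arise; in infinite dimensions one would need to invoke Trotter's original argument using strong operator convergence on a common core, but that level of care is unnecessary here.
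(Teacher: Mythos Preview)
The paper does not prove this theorem; it merely states it and cites \cite[Thm~4.3]{NC} (Nielsen--Chuang). Your argument is correct and is essentially the standard proof one finds in that reference: Taylor-expand each factor to first order, observe that the single-step discrepancy is $O(1/n^2)$, and telescope. One small simplification you could make: since $A$ and $B$ are Hermitian, both $e^{iAt/n}$ and $e^{iBt/n}$ are unitary, so $X$ is itself unitary and $\|X^k\|=1$ exactly for all $k$; your bound $\|X\|\le e^{(\|A\|+\|B\|)|t|/n}$ is valid but unnecessarily loose here.
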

Among other things, one can use the proof of this theorem to show
\cite[see Eq. 4.103]{NC},

\begin{equation}
e^{iH\Delta t}=e^{iH_{1}\Delta t}e^{iH_{2}\Delta t}+\mathcal{O}\left(\Delta t^{2}\right)\label{eq:TrotterUSE}
\end{equation}
where we used $A\equiv H_{1}$ and $B\equiv H_{2}$.

Using Eqs. \ref{eq:Trott1} and \ref{eq:TrotterUSE} 

\[
\psi_{t+\Delta t}\rangle=\left\{ \bigotimes_{p\mbox{ odd}}\exp\left(-i\Delta tH_{p,p+1}\right)\bigotimes_{p\mbox{ even}}\exp\left(-i\Delta tH_{p,p+1}\right)\right\} \psi_{t}\rangle+\mathcal{O}\left(\Delta t^{2}\right)
\]
which implies that Trotterization and evolution in time can be implemented
by evolving the even terms first and then the odd terms for small
time intervals. 

An advantage of MPS is that one can locally update the state \cite[Lemma 2]{vidal1},
i.e., apply the operator on the local terms $n$ and $n+1$, we first
define

\[
\Theta_{t}^{i_{n},i_{n+1}}\equiv\Gamma_{t}\left(i_{n}\right)\Lambda_{t}^{\left(n\right)}\Gamma_{t}\left(i_{n+1}\right)
\]
Let the imaginary time evolution operator be

\[
V_{i'_{n},i'_{n+1}}^{i_{n},i_{n+1}}\equiv\exp\left(-\Delta tH_{i'_{n},i'_{n+1}}^{i_{n},i_{n+1}}\right),
\]
which after the imaginary time evolution update reads

\begin{equation}
\Theta_{t+\Delta t}^{i_{n},i_{n+1}}=V_{i'_{n},i'_{n+1}}^{i_{n},i_{n+1}}\Theta_{t}^{i'_{n},i'_{n+1}}\label{eq:ThetaUpdate}
\end{equation}

To get the updated Matrix Products we use SVD

\begin{equation}
\Gamma_{t+\Delta t}\left(i_{n}\right)\Lambda_{t+\Delta t}^{\left(n\right)}\Gamma_{t+\Delta t}\left(i_{n+1}\right)\equiv\mbox{SVD}\left(\Theta_{t+\Delta t}^{i_{n},i_{n+1}}\right)\label{eq:updatedGammas}
\end{equation}

The state after imaginary time evolution will in general not be normalized.
As in \cite{daniel}, I normalize the state after every update step.
The work in \cite{daniel} was confined to translationally invariant
chains with $d=2$ where $\Gamma$'s and $\Lambda$'s could be taken
to be the same at every site and bond respectively. I generalized
the numerical code to implement the updates for any $d$ and regardless
of whether there was translational invariance. A major hinderance
was implementation of $\Theta$ update (Eq. \ref{eq:ThetaUpdate})
in an efficient manner. I overcame this by first reshaping the matrices
and applying the update rules followed by undoing of the reshaping.
I then took the SVD to update the chain and normalized the state.
This was done for every pair of nearest neighbors qudits in $H_{1}$
followed by updates on every pair of nearest neighbor qudits in $H_{2}$
(see the code in Algorithm \ref{alg:This-code-updates}). See the
section \ref{sec:mpsnumerics} for further discussion and numerical
results.

\begin{algorithm}
\begin{raggedright}
\includegraphics[scale=0.9]{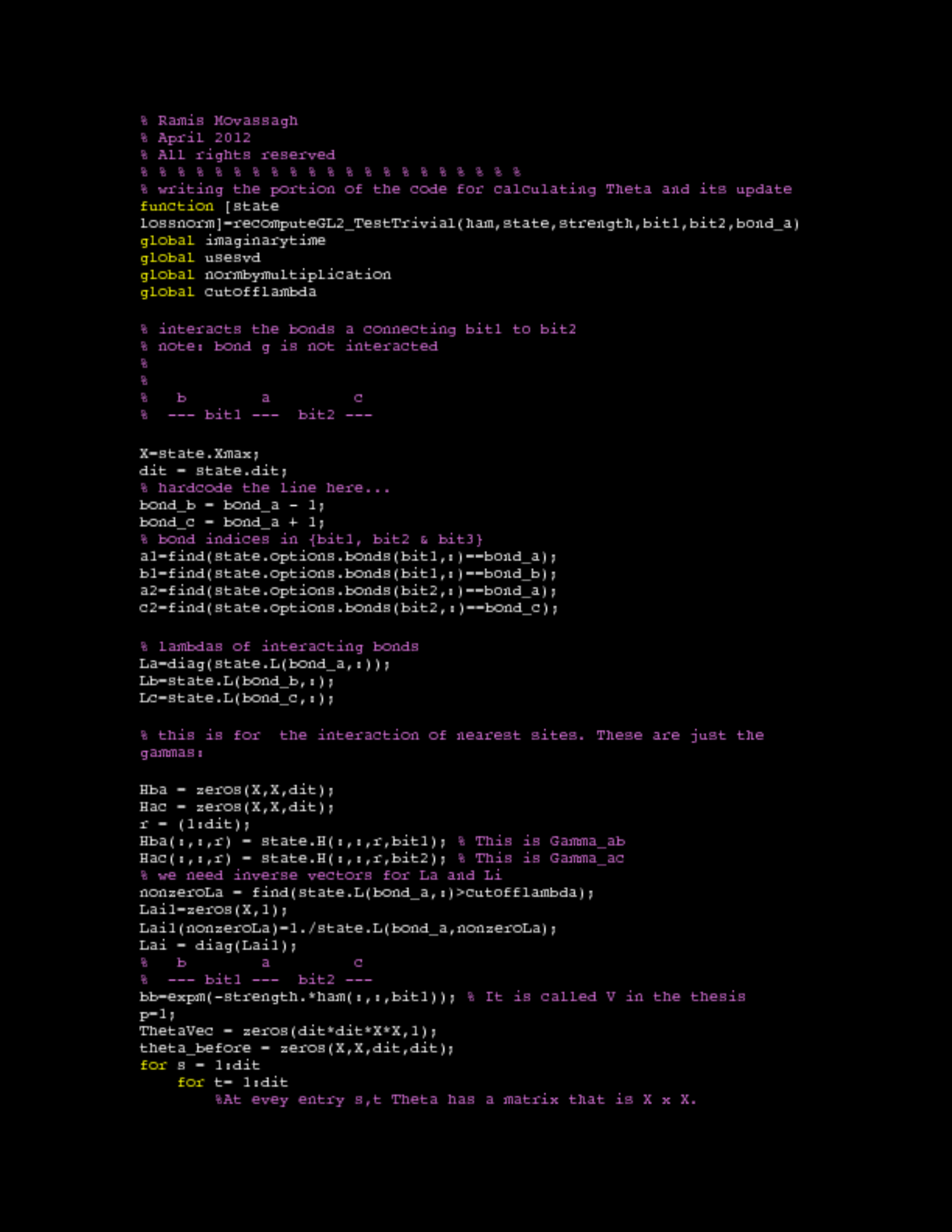}
\par\end{raggedright}

\caption{\label{alg:This-code-updates}This code updates Eqs.\ref{eq:ThetaUpdate}
and\ref{eq:updatedGammas} }
\end{algorithm}

\begin{flushleft}
\includegraphics[scale=0.9]{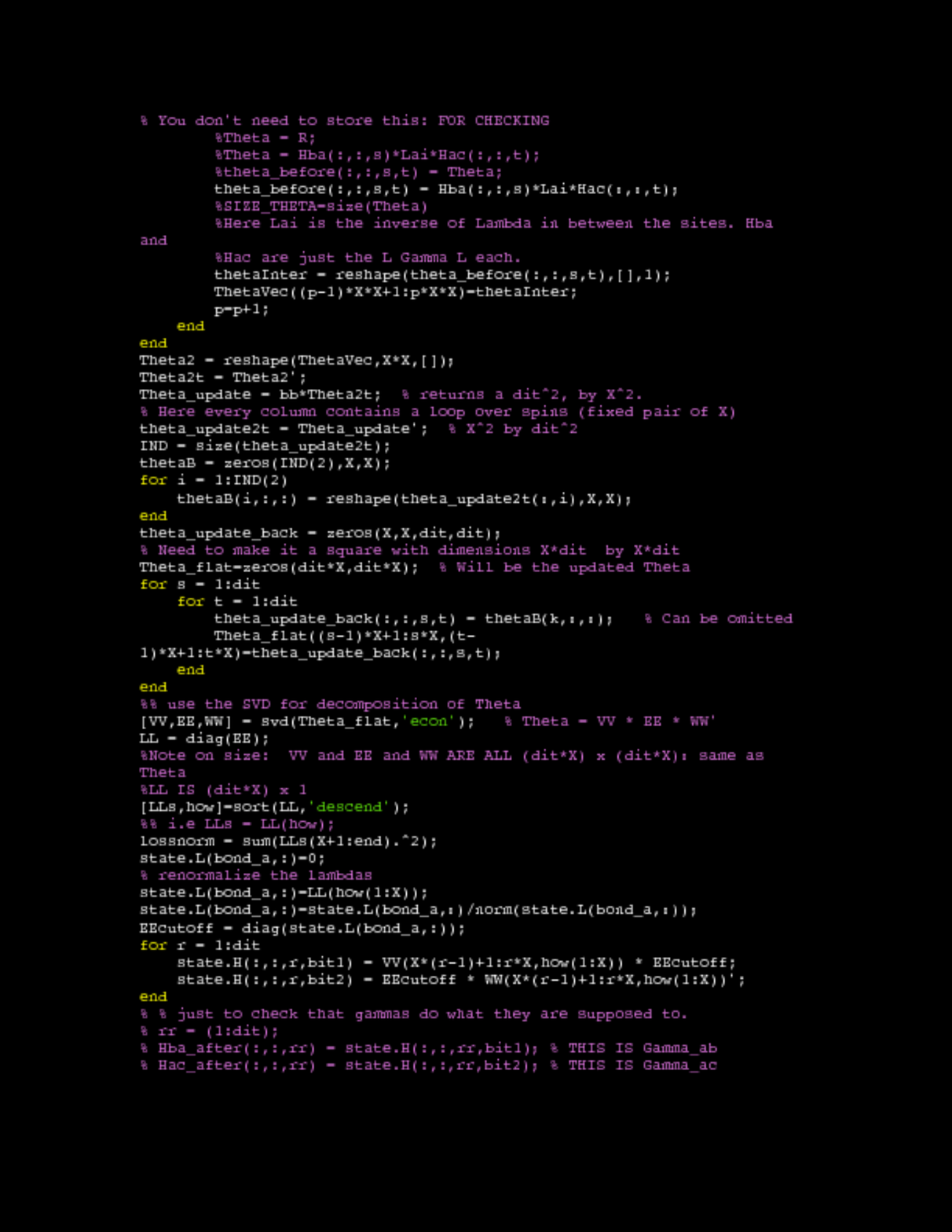}
\par\end{flushleft}

\section{Degeneracy and Non-Frustration Condition for Generic Local Terms}

We choose to investigate chains of $d$-dimensional quantum spins
(qudits) with 2-local nearest-neighbor interactions %
\footnote{The rest of this chapter is based on \cite{MFGNOS}; however, I have
corrected the erroneous assertion we made regarding choosing solutions. %
}. Our first result is an analytic derivation of the necessary and
sufficient conditions for such quantum systems to be unfrustrated.
Second, we look at their ground state properties and find a range
of parameters where we conjecture that these states are highly entangled
and thus may be difficult to find computationally. We then corroborate
this by a numerical investigation using a Matrix Product State (MPS)
method.

While the MPS formulation has been shown to work very well numerically
for most one-dimensional particle systems, complexity theory issues
seem to show there must be exceptions to this rule. Finding the ground-state
energy of a one-dimensional qudit chain with $d=11$ has been shown
to be as hard as any problem in QMA \cite{q2satline,nagajthesis}.
It is not believed that classical computers can efficiently solve
problems in QMA. However, to our knowledge until now there have not
been any concrete examples (except at phase transitions) for which
MPS methods do not appear to work reasonably well. This research was
undertaken to try to discover natural examples of Hamiltonians for
which MPS cannot efficiently find or approximate the ground states.

In Section \ref{sec:zeroenergy} we show that the question of non-frustration
for qudit chain Hamiltonians with general nearest-neighbor interactions
can be simplified to only Hamiltonians that are sums of projector
terms\cite{bravyi}. We then analytically show under what conditions
zero energy ground states for this system exist. Second, in Section
\ref{sec:mpsnumerics} we show how to search for and approximate the
ground states numerically and analyze the efficiency of finding the
required MPS. We identify an interesting class of unfrustrated qudit
chain Hamiltonians, on which our MPS methods do not work well. Led
by our numerical work, we conjecture that these ground states are
highly entangled. Finally, we summarize our results and conclude with
an outlook to further work in Section \ref{sec:conclusions}.

%%%%%%%%%%%%%%%%%%%%%%%%%%%%%%%%%%%%%%%%%%%%%%%%%%%%%%%%%%%%%%%%
%%%%%%%%%%%%%%%%%%%%%%%%%%%%%%%%%%%%%%%%%%%%%%%%%%%%%%%%%%%%%%%%

\section{Generic Interactions}

\label{sec:zeroenergy}

We now choose to focus on a class of Hamiltonians whose local terms
have generic eigenvectors. This implies 
\begin{eqnarray}
P_{k,k+1}=\sum_{p=1}^{r}|v_{k,k+1}^{p}\rangle\langle v_{k,k+1}^{p}|\label{eq:definev}
\end{eqnarray}
is a {\em random} rank $r$ projector acting on a $d^{2}$-dimensional
Hilbert space of two qudits, chosen by picking an orthonormal set
of $r$ random vectors (a different set for every qudit pair -- we
are not assuming translational invariance). 

We now find conditions governing the existence of zero energy ground
states (from now on, called solutions in short). We do so by counting
the number of solutions possible for a subset of the chain, and then
adding another site and imposing the constraints given by the Hamiltonian.

Suppose we have a set of $D_{n}$ linearly independent solutions for
the first $n$ sites of the chain in the form 
\begin{equation}
\psi_{\alpha_{n}}^{i_{1},\dots,i_{n}}=\Gamma_{\alpha_{1}}^{i_{1},[1]}\Gamma_{\alpha_{1}\alpha_{2}}^{i_{2},[2]}\dots\Gamma_{\alpha_{n-2},\alpha_{n-1}}^{i_{n-1},[n-1]}\Gamma_{\alpha_{n-1},\alpha_{n}}^{i_{n},[n]}
\end{equation}
 similar to MPS, with $i_{k}=1,\dots,d$ and $\alpha_{k}=1,\dots,D_{k}$;
here and below all the repeated indices are summed over. The $\Gamma$'s
satisfy the linear independence conditions%
\footnote{Note that this is not the standard MPS form, which also requires linear
independence in the other direction, i.e. 
\[
y_{\alpha_{k-1}}\Gamma_{\alpha_{k-1},\alpha_{k}}^{i_{k},[k]}=0,\forall i_{k},\alpha_{k}\Longleftrightarrow y_{\alpha_{k-1}}=0,\;\forall\alpha_{k-1}
\]
 In this case $s_{k}$ would be the Schmidt rank for the partition
of the qudits into $(1,\dots,k)$ and $(k+1,\dots,n).$ %
} 
\[
\Gamma_{\alpha_{k-1},\alpha_{k}}^{i_{k},[k]}x_{\alpha_{k}}=0,\forall i_{k},\alpha_{k-1}\Longleftrightarrow x_{\alpha_{k}}=0,\;\forall\alpha_{k}.
\]
 We now add one more site to the chain, impose the constraint $P_{n,n+1}$
and look for the %number $s_{n+1}$ of 
zero-energy ground states for $n+1$ sites in the form 
\begin{equation}
\psi_{\alpha_{n+1}}^{i_{1},\dots,i_{n+1}}=\psi_{\alpha_{n-1}}^{i_{1},\dots,i_{n-1}}\Gamma_{\alpha_{n-1},\alpha_{n}}^{i_{n},[n]}\Gamma_{\alpha_{n},\alpha_{n+1}}^{i_{n+1},[n+1]}.
\end{equation}
 The unknown matrix $\Gamma_{\alpha_{n},\alpha_{n+1}}^{i_{n+1},[n+1]}$
must satisfy %$P_{n,n+1} \psi_{\alpha_{n+1}}^{i_{1},\dots,i_{n+1}} = 0$, i.e.
\begin{eqnarray}
\langle v_{n,n+1}^{p}|i_{n}i_{n+1}\rangle\Gamma_{\alpha_{n-1},\alpha_{n}}^{i_{n},[n]}\Gamma_{\alpha_{n},\alpha_{n+1}}^{i_{n+1},[n+1]}=0
\end{eqnarray}
 for all values of $\alpha_{n-1},\alpha_{n+1}$ and $p$, with $|v_{n,n+1}^{p}\rangle$
vectors defined in \eqref{eq:definev}. This results in a system of
linear equations 
\begin{eqnarray}
C_{p\alpha_{n-1},i_{n+1}\alpha_{n}}\Gamma_{\alpha_{n},\alpha_{n+1}}^{i_{n+1},[n+1]}=0,\label{eq:linearsystem}
\end{eqnarray}
 with $C_{p\alpha_{n-1},i_{n+1}\alpha_{n}}=\langle v_{n,n+1}^{p}|i_{n}i_{n+1}\rangle\Gamma_{\alpha_{n-1},\alpha_{n}}^{i_{n},[n]}$
a matrix with dimensions $rD_{n-1}\times dD_{n}$. If $dD_{n}\geq rD_{n-1}$
and the matrix $C$ has rank $rD_{n-1}$, the conditions given by
Eq. \ref{eq:linearsystem} are independent and we can construct $dD_{n}-rD_{n-1}$
linearly independent $\Gamma_{\alpha_{n},\alpha_{n+1}}^{i_{n+1},[n+1]}$,
corresponding to solutions for the $n+1$ qudit chain (see the appendix
for a proof of $\mbox{rank}\left(C\right)=rD_{n-1}$). The freedom
we have now is to use only a subset of them for constructing solutions
(see the next chapter). Previously in our work \cite{MFGNOS} we asserted
that the choice made implies 
\begin{equation}
s_{n+1}\leq ds_{n}-rs_{n-1},\label{eq:theequation}
\end{equation}
%which the number of solutions constructed in this fashion must obey
valid for all $n$. In order for the forgoing inequality to hold,
one needs to prove that the possible dependences resulting from making
choices do not affect the inequality (see next chapter). For example,
excluding a subspace at a given step could break the full rankness
of $C$ at a later step.

\begin{figure}
\begin{centering}
\includegraphics[width=8cm]{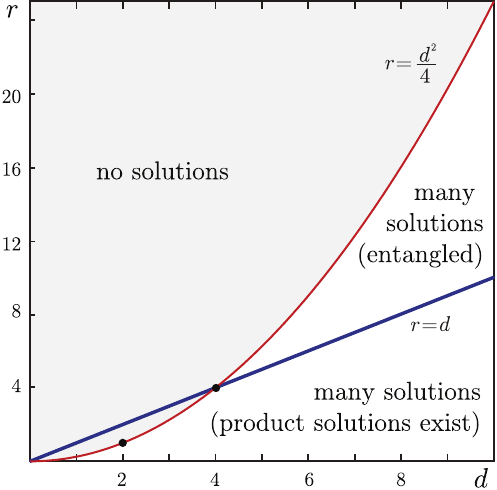}
\par\end{centering}

\caption{\label{fig:The-existence-solutions}The existence of zero energy ground
states for a qudit chain with $d$-dimensional qudits and $r$ projectors
per pair. We highlight two notable cases: $d=2,r=1$ and $d=4,r=4$.}
\end{figure}

Let $D_{0}=1$ and $D_{1}=d$ as the only constraint on $\Gamma_{\alpha_{1}}^{[1],i_{1}}$
is linear independence. The recursion relation above at each gives
$D_{n}$ linearly independent zero energy states, where 
\begin{eqnarray}
D_{n} & = & dD_{n-1}-rD_{n-2},\label{eq:recursion}
\end{eqnarray}
for all $n$ with. The solution of this recursion relation is 
\begin{eqnarray*}
D_{n} & = & \frac{f^{n+1}-g^{n+1}}{f-g}
\end{eqnarray*}
 with $f+g=d$ and $fg=r$. Hence, 
\[
f=\frac{d}{2}+\sqrt{\frac{d^{2}}{4}-r},\quad g=\frac{d}{2}-\sqrt{\frac{d^{2}}{4}-r}.
\]
 There are three interesting regimes for $r$ and $d$ which yield
different behaviors of $D_{n}$ (Figure \ref{fig:The-existence-solutions}): 
\begin{enumerate}
\item $r>\frac{d^{2}}{4}$ gives $D_{n}=r^{\frac{n}{2}}\frac{\sin(n+1)\theta}{\sin\theta}$
with $\cos\theta=\frac{d}{2\sqrt{r}}$. $D_{n}$ becomes negative
when $n+1>\frac{\pi}{\theta}$ and thus no zero energy states can
be constructed for a long chain if $r>\frac{d^{2}}{4}$. 
\item $r=\frac{d^{2}}{4}$ results in $D_{n}=\left(\frac{d}{2}\right)^{n}(n+1)$,
an exponential growth in $n$ (except when $d=2$, which gives linear
growth).%, r=1$).

\item $r<\frac{d^{2}}{4}$ implies $f>\frac{d}{2}$ and $f>g$ so for large
$n$, $D_{n}\sim f^{n}\left(1-\frac{g}{f}\right)^{-1}$ and the number
of zero energy states grows exponentially. 
\end{enumerate}
%%%%%%%%%%%%%%%%%%%%%%%%%%%%%%%%%%%%%%%%%%%%%%%%%%%%%%%%%%%%%%%%
%%%%%%%%%%%%%%%%%%%%%%%%%%%%%%%%%%%%%%%%%%%%%%%%%%%%%%%%%%%%%%%%

\section{Numerical investigation using Matrix Product States}

\label{sec:mpsnumerics}

\begin{figure}
\begin{centering}
\includegraphics[width=12cm]{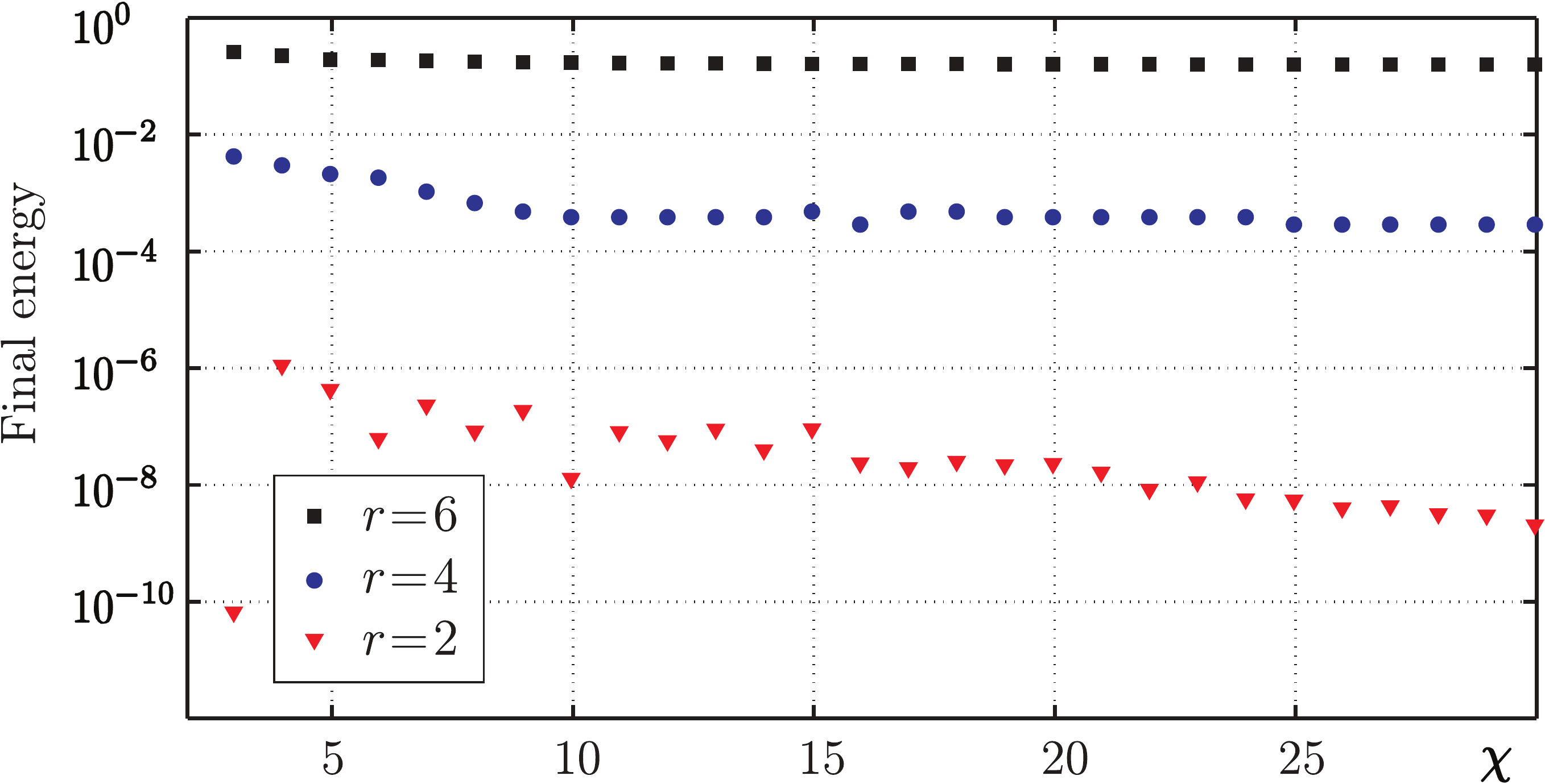} 
\par\end{centering}

\caption{\label{fig:MPSplot1}(Color online) Ground state energy from imaginary
time evolution vs. $\chi$ for different ranks of the Hamiltonian.
This is a plot for d = 4, and projector ranks of 2, 4, 6. Exact description
would require $\chi=d^{\frac{N}{2}}=2^{20}.$}
\end{figure}

In this Section we use the methods described above to numerically
search for the ground states of our class of random projector Hamiltonians
\eqref{eq:projham}. We probe the relations obtained in the previous
Section, and see how well the energy coming from our small-$\chi$
MPS imaginary time evolution converges to zero. The numerical technique
we use is similar to Vidal's \cite{vidal1,vidal2}. We use imaginary
time evolution to bring the system from a known state to its ground
state: $|\Psi_{\textrm{grd}}\rangle=\lim_{\tau\rightarrow\infty}\frac{e^{-H\tau}|\Psi_{0}\rangle}{||e^{-H\tau}|\Psi_{0}\rangle||}$.
Our experimentation with the parameters for a linear chain of length
$N=20$ is shown in Figures \ref{fig:MPSplot1},\ref{fig:MPSplot2}
and \ref{fig:MPSplot3}; all the plots are on semi-log scale and the
quantities being plotted are dimensionless.

We see that for $r<d$ the final energy converges to the zero energy
ground state relatively fast with $\chi\ll d^{N/2}$. This can be
seen in all the figures by the lowest curves (marked by triangles).
As can be seen the final energy obtained from imaginary time evolution
tends toward zero with a steep slope, indicating that the ground state
can be approximated efficiently with a small $\chi$ in MPS ansatz. 

\begin{figure}
\begin{centering}
\includegraphics[width=12cm]{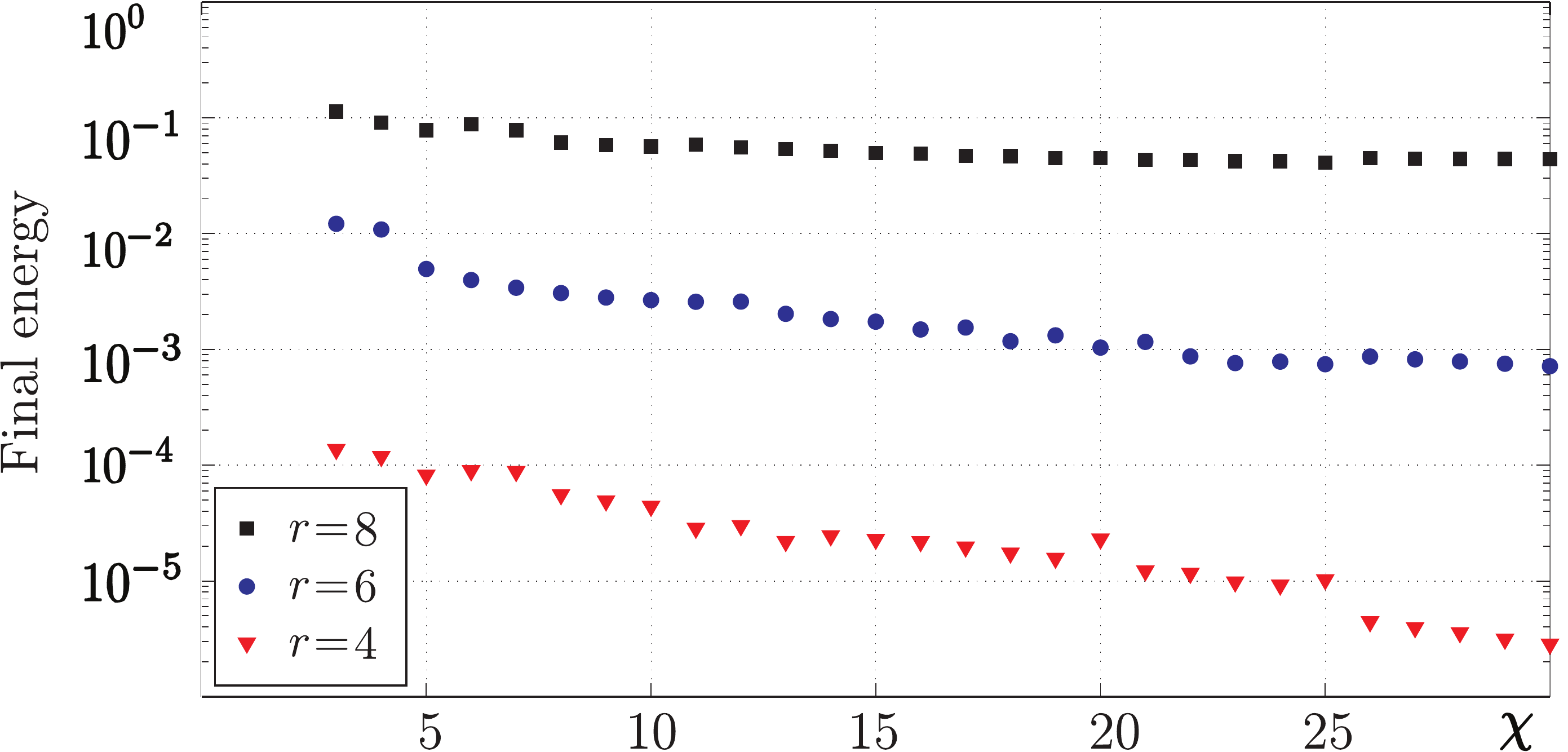} 
\par\end{centering}

\caption{\label{fig:MPSplot2}This is a plot for d = 5, and projector ranks
of 4, 6, 8. Exact description would require $\chi=5^{10}$.}
\end{figure}

\begin{figure}
\begin{centering}
\includegraphics[width=12cm]{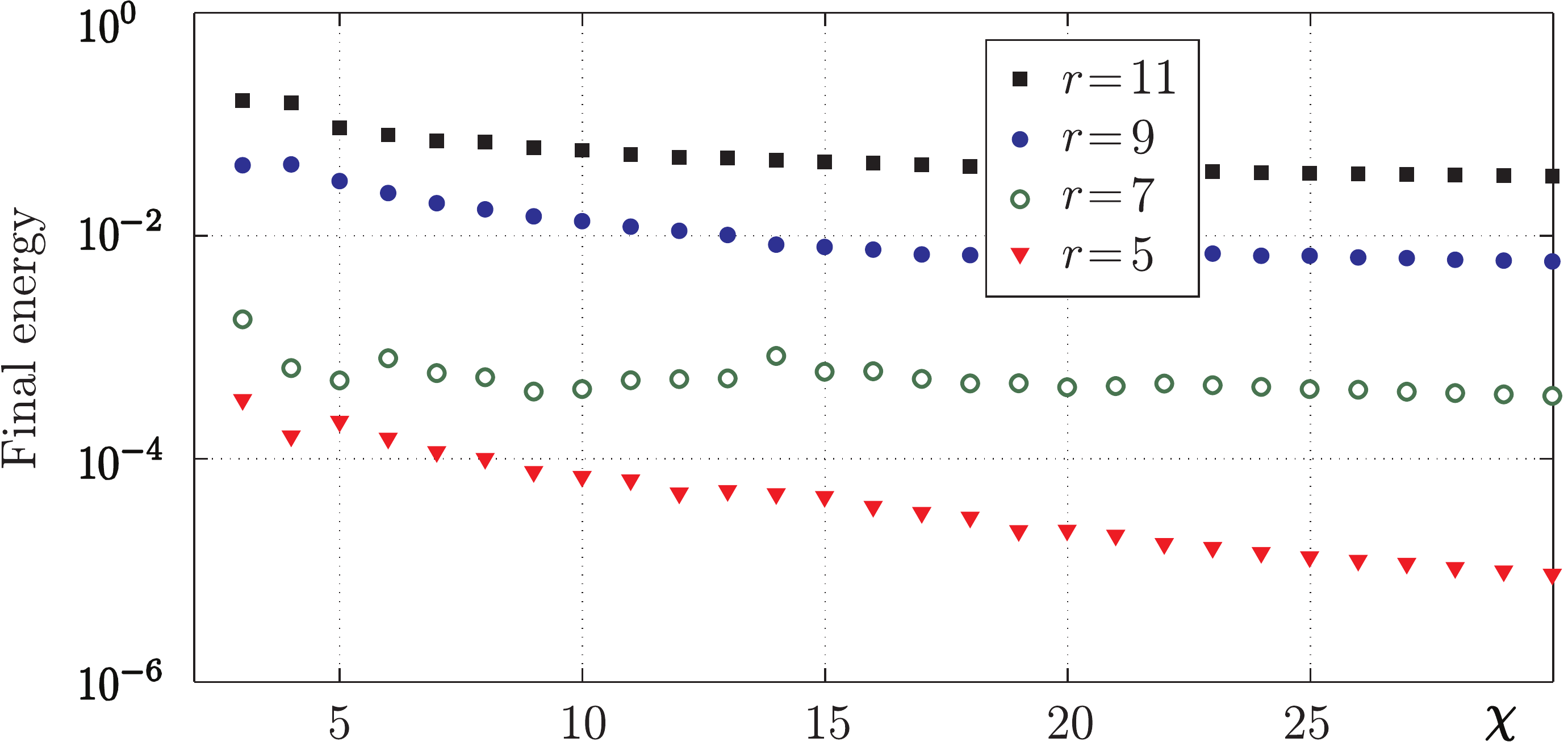}
\par\end{centering}

\caption{\label{fig:MPSplot3}The case of d = 6, and projector ranks: 5, 7,
9, 11. Exact description in general would require $\chi=6^{10}$.}
\end{figure}

The \textit{$r>d^{2}/4$ }case, marked by squares, is shown by the
top curves in all the figures. One sees that the final energy plateaus
relatively fast in all three cases. This shows that the numerics have
converged to a nonzero value and that increasing $\chi$ will not
yield a lower value of energy. Therefore, the numerical results suggest
that there are no ground states with zero energy. 

In the previous section we analytically showed that when $d\leq r\leq d^{2}/4$
there are many zero energy ground states. However, when we try to
numerically find these states we see that the final energy converges
to zero slowly. This is shown in all the Figures by the curves marked
by circles. Out of these there are the critical cases, where $r=\frac{d^{2}}{4}$.
These correspond to the curves marked by closed circles in Figures
3 and 5. The numerical investigation of the case $d\leq r\leq d^{2}/4$
is interesting because it suggests that for large number of spins
finding the ground state with small $\chi$, tractable on a normal
computer, is very hard. We interpret this as high amount of entanglement
among the zero energy ground states and leave the analytical proof
of this statement for a follow up paper.\\

%%%%%%%%%%%%%%%%%%%%%%%%%%%%%%%%%%%%%%%%%%%%%%%%%%%%%%%%%%%%%%%%
%%%%%%%%%%%%%%%%%%%%%%%%%%%%%%%%%%%%%%%%%%%%%%%%%%%%%%%%%%%%%%%%

\section{Summary}

\label{sec:conclusions}

We have investigated the no-frustration conditions for a system of
qudits on a line with $d$ states per site and random rank $r$ local
projector Hamiltonians acting between the nearest neighbor sites.
We proved that there are no ground states with zero energy for $r>\frac{d^{2}}{4}$
and sufficiently large $N$. The system is not frustrated for $r\leq\frac{d^{2}}{4}$.
This second parameter region further splits into two. For $d\leq r\leq\frac{d^{2}}{4}$,
many entangled zero energy ground states exist. On the other hand,
for $r<d$ we can also construct separable zero-energy ground states
(see the next chapter and also Figure \ref{fig:The-existence-solutions}).

We have verified the above numerically, in particular we have seen
that when $d\leq r\leq d^{2}/4$ approximating the ground state energy
(finding the ground states) is hard as the states seem to be highly
entangled. 

%%%%%%%%%%%%%%%%%%%%%%%%%%%%%%%%%%%%%%%%%%%%%%%%%%%%%%%%%%%%%%%%
%%%%%%%%%%%%%%%%%%%%%%%%%%%%%%%%%%%%%%%%%%%%%%%%%%%%%%%%%%%%%%%%

\section{Appendix}

$\;$

We would like to say that for random $|v\rangle$ the rank of $C$
is generically the maximum rank allowed, $\min(rD_{n-1},dD_{n})$.
The full rankness of $C$ in Eq. \ref{eq:linearsystem} is not obviously
true. In particular in the regime $r\leq\frac{d^{2}}{4}$, to which
we restrict ourselves from now on, $D_{n}$ grows exponentially in
$n$, while the number of parameters in the $|v_{k,k+1}^{p}\rangle$
on which $C$ depends only grows linearly. Thus $C$ is far from a
generic matrix of its size, but we now prove that its rank is indeed
$rD_{n-1}.$

The argument used by Laumann et al \cite{laumann} to prove their
``geometrization theorem'' also applies to our problem. It shows
that for a chain of N qudits with random $|v_{k,k+1}^{p}\rangle$,
i.e. for a Hamiltonian $H$ as in equations (2), (4) and (5) the number
of zero-energy states, i.e. $\dim(\ker(H))$, is with probability
one (which is what we mean by generic) equal to its minimum value.
The calculation leading to the recursion relation Eq. \ref{eq:recursion}
and its solution, shows that in the regime $r\leq d^{2}/4$ this minimum
is $\geq D_{N},$ since if the rank of the $rD_{k-1}\times dD_{k}$
matrix $C$ is ever less than $rD_{k-1}$ we can choose $D_{k+1}$
of them. Hence it is sufficient to find a single set of $|v_{k,k+1}^{p}\rangle$
for which $\dim(\ker(H))=D_{N}$ to prove that $D_{N}$ is the generic
value, i.e. that greater values occur with probability zero. This
implies that the rank of each $C$ is generically $rD_{k-1}$, since
otherwise at the first $k$ where $C$ had smaller rank we could construct
more than $D_{k+1}$ solutions for a chain of length $k+1$. 

\begin{singlespace}
We construct $|v_{k,k+1}^{p}\rangle$ with the property $\langle v_{k,k+1}^{p}|i_{k}i_{k+1}\rangle=0$
unless $i_{k}\leq\frac{d}{2}$ and $i_{k+1}>\frac{d}{2}$. This can
be done for $r$ linearly independent $|v^{p}\rangle$ if $r\leq\frac{d^{2}}{4}$.
We now assume $d$ is even; the modifications for $d$ odd are obvious.
We proceed by induction on $n$. Assume that in each $\Gamma_{\alpha_{k-1},\alpha_{k}}^{i_{k},[k]}$
with $k\leq n$, $\alpha_{k}$ runs from $1$ to $D_{k}$. From the
definition of $C$ (following Eq. \ref{eq:linearsystem}) and the
special choice of $|v\rangle$, $C_{p\alpha_{n-1},i_{n+1}\alpha_{n}}=0$
for $i_{n+1}\leq\frac{d}{2}$ and so from Eq. \ref{eq:linearsystem}
$\Gamma_{\alpha_{n},\alpha_{n+1}}^{i_{n+1},[n+1]}$ is unconstrained
for $i_{n+1}\leq\frac{d}{2}$. This allows us to choose, for $1\leq\alpha_{n+1}\leq\frac{d}{2}D_{n}$,
\end{singlespace}

\noindent $\Gamma_{\alpha_{n},\alpha_{n+1}}^{i_{n+1},[n+1]}=1$ when
$\alpha_{n+1}=\frac{d}{2}(\alpha_{n}-1)+i_{n+1}$ 

\noindent with $1\leq\alpha_{n}\leq D_{n}$, $1\leq i_{n+1}\leq\frac{d}{2}$

\noindent $\quad\Gamma_{\alpha_{n},\alpha_{n+1}}^{i_{n+1},[n+1]}=0$
otherwise.

\begin{singlespace}
As part of our induction, we assume that for $1\leq\alpha_{n}\leq\frac{d}{2}D_{n-1}$,
\end{singlespace}

\noindent $\quad\Gamma_{\alpha_{n-1},\alpha_{n}}^{i_{n},[n]}=1$ when
$\alpha_{n}=\frac{d}{2}(\alpha_{n-1}-1)+i_{n}$ 

\noindent with $1\leq\alpha_{n-1}\leq D_{n-1}$, $1\leq i_{n}\leq\frac{d}{2}$ 

\noindent $\quad\Gamma_{\alpha_{n-1},\alpha_{n}}^{i_{n},[n]}=0$ otherwise.

Now we can show that the rows of $C_{p\alpha_{n-1},i_{n+1}\alpha_{n}}$
are linearly independent. For if, $\sum_{p,\alpha_{n-1}}y_{p,\alpha_{n-1}}C_{p\alpha_{n-1},i_{n+1}\alpha_{n}}=0$
for all $i_{n+1},\alpha_{n}$ this is true in particular for all $i_{n+1}>d/2$,
$\alpha_{n}\leq\frac{d}{2}D_{n-1}$, when it becomes $\sum_{p}y_{p\alpha_{n-1}}\langle v_{n,n+1}^{p}|i_{n}i_{n+1}\rangle=0$
for all $i_{n}\leq\frac{d}{2}$, $i_{n+1}>\frac{d}{2}$, and $\alpha_{n-1}\leq D_{n-1}$.
Since the $|v^{p}\rangle$ are linearly independent, this is only
true if $y_{p\alpha_{n-1}}=0$ for all $p$, $\alpha_{n-1}$. Hence
the rank of $C$ is $rD_{n-1}$ and $\alpha_{n+1}$ can take altogether
$dD_{n}-rD_{n-1}=D_{n+1}$ values, which is what we wanted to prove.
%\end{document}

\chapter{Entanglement of The Ground States}

The next natural question to ask is: how entangled are the ground
states of qudit chains with generic local interaction? In this chapter
I summarize my efforts in proving various results, which have not
appeared elsewhere. As of now we have not succeeded in proving the
main theorem (see the conjecture below); I include the partial results
with the hope that they inspire future progress. In the previous chapter
we showed that there are exponentially many ground states when $d\le r\le d^{2}/4$
(see Table). 

\begin{center}
\begin{tabular}{|c|c|c|}
\hline 
Parameter range & Number of ground states & Frustrated?\tabularnewline
\hline 
\hline 
$d\leq r\leq d^{2}/4$ & $\sim exp\left(n\right)$ & No\tabularnewline
\hline 
$r>d^{2}/4$ & ---------- & Yes\tabularnewline
\hline 
\end{tabular}
\par\end{center}

We will see that when $r<d$ the ground states can be product states.
How entangled are the ground states in the regime $d\le r\le d^{2}/4$
when $d\ge4$? Simple theorems of algebraic geometry tell us that
among the many ground states there are highly entangled ground states.
Can there be product states? If not, are \textit{all} the ground states
highly entangled (see the conjecture below)? By highly entangled we
mean their Schmidt rank is exponentially lower bounded. 

There are two methods that can yield the bounds needed: 1) Choosing
solutions as we build the ground states marching from one end of the
chain and proving lower bounds on the number of solutions that need
to be kept to build any state.2) Relating the construction of the
previous chapter to the Schmidt rank by working from both ends and
matching solutions.

\section{Set Up}

\subsection{Genericity of $C^{[n]}$}

Due to generic local interactions, the complete set of eigenstates
has parameter count equal to a polynomial in the number of spins,
though the dimensionality is exponential. This implies that $C^{[n]}$
is non-generic despite its entries being functions of polynomial random
variables. However, as long as the variables are continuous, statements
about the rank of $C^{[n]}$ can be made in a 'generic' sense. That
is if the rank is full for some choice of random variables, the full
rankness holds with probability one for random choice of those variables.

\subsection{Basis for Solutions and $C^{[n]}$ }

We want to understand the Kernel of $C^{[n]}$ as give by

\begin{equation}
C_{p\alpha_{n-1},i_{n+1}\alpha_{n}}^{[n]}\equiv\langle v_{n,n+1}^{p}|i_{n}i_{n+1}\rangle\Gamma_{\alpha_{n-1},\alpha_{n}}^{[n],i_{n}}\label{eq:cOriginal}
\end{equation}

\begin{flushleft}
Further,
\par\end{flushleft}

\begin{equation}
|v_{n,n+1}^{p}\rangle=\beta_{i_{n},i_{n+1}}^{p}|i_{n}i_{n+1}\rangle\label{eq:v}
\end{equation}
 with $\beta's$ drawn randomly, say from a Gaussian distribution.
$C^{[n]}$ solves:

\begin{equation}
C_{p\alpha_{n-1},i_{n+1}\alpha_{n}}^{[n]}\Gamma_{\alpha_{n},\alpha_{n+1}}^{[n+1],i_{n+1}}=0.\label{eq:constraint-1}
\end{equation}
Inserting Eq.\ref{eq:v} in Eq.\ref{eq:cOriginal} ($\beta$'s are
real) we get

\begin{equation}
C_{p\alpha_{n-1},i_{n+1}\alpha_{n}}^{[n]}=\beta_{i_{n},i_{n+1}}^{p}\Gamma_{\alpha_{n-1},\alpha_{n}}^{[n],i_{n}}.\label{eq:c_index}
\end{equation}

Let $B$ be the $r\times d^{2}$ matrix of $\beta$'s which we can
write as $d$ blocks of matrices ($B_{k}$) of size $r\times d$ put
next to one another (not multiplied): 

\begin{equation}
B\equiv\left[B_{1}B_{2}\cdots B_{d}\right],
\end{equation}
with, $B_{k}=\left[\begin{array}{ccc}
\beta_{1k}^{1} & \cdots & \beta_{d,k}^{1}\\
\vdots & \ddots & \vdots\\
\beta_{1k}^{r} & \cdots & \beta_{d,k}^{r}
\end{array}\right],\quad k=1,\ldots,d.$ The matrix $B$ has rank $r$; whereas $\mbox{rank}\left(B_{k}\right)=\mbox{min}\left(r,d\right)$
(both with probability one).

In matrix notation,

\begin{equation}
C^{[n]}=\left[\begin{array}{cccc}
\left(\mathbb{I}_{D_{n-1}}\otimes B_{1}\right)\Gamma^{[n]} & \left(\mathbb{I}_{D_{n-1}}\otimes B_{2}\right)\Gamma^{[n]} & \cdots & \left(\mathbb{I}_{D_{n-1}}\otimes B_{d}\right)\Gamma^{[n]}\end{array}\right].\label{eq:injectivity}
\end{equation}

\begin{flushleft}
Remark: In what follows, by the word \textit{generic} we mean with
probability one. 
\par\end{flushleft}
\begin{lem}
For $r<d$ there are product states, i.e., there exists $\chi=1$
solutions.\end{lem}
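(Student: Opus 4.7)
The plan is to build a product state $|\psi\rangle = |\phi_1\rangle \otimes |\phi_2\rangle \otimes \cdots \otimes |\phi_N\rangle$ site by site, enforcing at each bond the constraint $P_{k,k+1}\bigl(|\phi_k\rangle \otimes |\phi_{k+1}\rangle\bigr) = 0$. Since $P_{k,k+1}$ acts only on sites $k,k+1$ and the state factorizes, the global ground state condition $H|\psi\rangle = 0$ reduces to the collection of local bilinear conditions $\langle v_{k,k+1}^p | \phi_k \otimes \phi_{k+1}\rangle = 0$ for all $p = 1,\dots,r$ and $k = 1,\dots,N-1$. For fixed $|\phi_k\rangle$, these are $r$ linear equations on $|\phi_{k+1}\rangle \in \mathbb{C}^d$, encoded by the $r \times d$ matrix $M_k(\phi_k)$ whose entries are $M_k(\phi_k)_{p,j} = \sum_i \bar\beta_{ij}^{(p,k)} (\phi_k)_i$; the constraint becomes $M_k(\phi_k)\,|\phi_{k+1}\rangle = 0$, so $|\phi_{k+1}\rangle$ must lie in $\ker M_k(\phi_k)$.

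The first key observation is that $M_k(\phi_k)$ depends linearly on $\phi_k$ and on the random block matrices $B_1,\dots,B_d$ appearing in the decomposition of $C^{[n]}$ above. For any fixed nonzero $\phi_k$, the entries of $M_k(\phi_k)$ are generic linear combinations of the random coefficients $\beta_{ij}^{(p,k)}$, so with probability one $\mathrm{rank}\,M_k(\phi_k) = \min(r,d) = r$ (since $r < d$). Consequently $\dim \ker M_k(\phi_k) = d - r \geq 1$, giving us a nonzero space of continuations $|\phi_{k+1}\rangle$.

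The inductive construction then runs as follows: pick $|\phi_1\rangle \in \mathbb{C}^d$ generic, and assume $|\phi_k\rangle$ has been chosen so that $\mathrm{rank}\,M_k(\phi_k) = r$. Pick $|\phi_{k+1}\rangle$ to be a generic element of the $(d-r)$-dimensional subspace $\ker M_k(\phi_k)$, say by taking a generic linear combination of a fixed basis of this kernel with coefficients drawn independently of the random data on bonds $\geq k+1$. The main step to justify is that this choice generically satisfies $\mathrm{rank}\,M_{k+1}(\phi_{k+1}) = r$, so the induction can continue. This follows because the random data $\{|v_{k+1,k+2}^p\rangle\}$ determining $M_{k+1}$ is independent of everything fixed so far (the previous $|v_{j,j+1}^p\rangle$, the previously chosen $|\phi_j\rangle$, and the basis of $\ker M_k(\phi_k)$); hence, conditional on this previous data, $M_{k+1}(\phi_{k+1})$ is a generic $r \times d$ matrix-valued linear form evaluated at a fixed nonzero $\phi_{k+1}$, which has full rank $r$ almost surely.

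The delicate point I would want to verify carefully is exactly this independence-and-genericity step: one must confirm that the rank-deficiency locus $\{\phi : \mathrm{rank}\,M_{k+1}(\phi) < r\}$, cut out by the $r \times r$ minors of $M_{k+1}$, does not contain the affine subspace $\ker M_k(\phi_k)$ with positive probability. The argument above reduces this to showing that a single generic vector in that subspace escapes the rank-deficient locus, which is a standard Zariski-openness statement, but it hides a measure-theoretic union bound over the $N-1$ induction steps that should be made explicit. Once this is checked, the resulting $|\phi_1\rangle \otimes \cdots \otimes |\phi_N\rangle$ is by construction annihilated by every $P_{k,k+1}$, hence a zero-energy ground state of $H$, and it is a product state so its Schmidt rank across every cut is $\chi = 1$.
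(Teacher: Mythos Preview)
Your approach is essentially the same as the paper's: build the product state inductively, observing that at each step the constraint on $|\phi_{k+1}\rangle$ is a system of $r$ linear equations in $d$ unknowns encoded by an $r\times d$ matrix (your $M_k(\phi_k)$ is exactly the paper's $C^{[n]}$ specialized to $\chi=1$, namely $[B_1\gamma_n\,|\,\cdots\,|\,B_d\gamma_n]$), which for $r<d$ has a nontrivial kernel. Your treatment is in fact more careful than the paper's, which simply asserts that this matrix ``clearly has full rank equal to $r$'' and that the underdetermined system ``can always be satisfied,'' without spelling out the independence/genericity argument across induction steps that you rightly flag; the paper implicitly relies on the same genericity principle (see the Appendix to the preceding chapter and the \cite{laumann}-style argument) but does not make it explicit in this lemma.
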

\begin{proof}
We show that the states can be satisfied by product states. In this
case, taking the Schmidt rank to be one, we propose that $\Gamma^{[n]}=\overrightarrow{\gamma_{n}}$
is a vector of size $d$ corresponding to the physical index $i_{n}$.
The constraint matrix $C^{[n]}$ takes the form $C_{p,i_{n+1}}^{[n]}=\left[\begin{array}{cccc}
B_{1}\overrightarrow{\gamma_{n}} & B_{2}\overrightarrow{\gamma_{n}} & \cdots & B_{d}\overrightarrow{\gamma_{n}}\end{array}\right]$, which is a $r\times d$ matrix. It clearly has full rank equal to
$r$. The question becomes, can we build $\overrightarrow{\gamma_{n+1}}$
to be a vector of size $d$? The system of $r<d$ equations with random
coefficients can always be satisfied. Therefore indeed we can take
$\Gamma^{[n+1]}=\overrightarrow{\gamma_{n+1}}$. \end{proof}
\begin{cor}
When $r\geq d$, generically there are no product states.\end{cor}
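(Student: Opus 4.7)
The plan is to show that the locus in $\beta$-parameter space admitting any product ground state is a proper algebraic subvariety, hence has Lebesgue measure zero under any absolutely continuous distribution on $\beta$. First, I will re-express the ground-state condition as a bilinear system. Writing $|v^{(k),p}\rangle=\sum_{i,j}\beta^{(k),p}_{ij}|ij\rangle$, the product state $|\gamma_1\rangle\otimes\cdots\otimes|\gamma_N\rangle$ annihilates $H$ iff $|\gamma_k\rangle\otimes|\gamma_{k+1}\rangle\in\ker P_{k,k+1}$ for every bond, namely
\begin{equation}
\sum_{i,j}\beta^{(k),p}_{ij}(\gamma_k)_i(\gamma_{k+1})_j=0,\quad 1\le k\le N-1,\; 1\le p\le r.
\end{equation}
Equivalently, the $r\times d$ matrix $M^{(k)}[\gamma_k]$ with entries $M^{(k)}[\gamma_k]_{p,j}=\sum_i\beta^{(k),p}_{ij}(\gamma_k)_i$ must annihilate $|\gamma_{k+1}\rangle$. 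For generic $\beta^{(k)}$ and generic $|\gamma_k\rangle$ with $r\ge d$ this matrix has full column rank $d$, so its kernel is trivial; the substance of the proof is to rule out, globally along the chain, the exceptional loci where the rank drops.

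I would do this via an incidence/dimension count. Working over $\mathbb{C}$ (the real statement will follow), let $\mathcal{B}$ denote the affine space of all coefficients $\beta^{(k),p}_{ij}$, of dimension $(N-1)rd^2$, and form
\begin{equation}
\calI=\bigl\{(\beta,[\gamma_1],\ldots,[\gamma_N])\in\mathcal{B}\times(\mathbb{CP}^{d-1})^N:\text{the bilinear system holds}\bigr\}.
\end{equation}
Projection onto $(\mathbb{CP}^{d-1})^N$ exhibits $\calI$ as a linear fibration whose fiber over any point with every $|\gamma_k\rangle\neq 0$ is cut out from $\mathcal{B}$ by $(N-1)r$ linear equations: the equation labeled $(k,p)$ restricts the $d^2$-dimensional block $\beta^{(k),p}$ to the hyperplane orthogonal to the nonzero tensor $|\gamma_k\rangle\otimes|\gamma_{k+1}\rangle\in\mathbb{C}^{d^2}$. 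Since these equations live on disjoint coordinate blocks of $\mathcal{B}$, they are automatically linearly independent, so
\begin{equation}
\dim\calI=\dim\mathcal{B}+N(d-1)-(N-1)r.
\end{equation}

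The image $\pi_\mathcal{B}(\calI)\subset\mathcal{B}$ is exactly the set of parameter tuples admitting at least one product ground state, so the dimension-of-image inequality gives $\dim\pi_\mathcal{B}(\calI)\le\dim\calI$. For $r\ge d$ the bound $(N-1)r>N(d-1)$ is equivalent to $N>d$, and in that regime $\pi_\mathcal{B}(\calI)$ lies inside a proper Zariski-closed subset of $\mathcal{B}$. Such a complex subvariety meets the real locus of $\beta$ in a set of Lebesgue measure zero, and hence generic (e.g.\ iid Gaussian) $\beta$ admit no product ground state with probability one.

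The main obstacle I expect is the rigorous verification that the $(N-1)r$ fiber equations really are independent; pleasantly, this is automatic from the per-bond block structure of $\mathcal{B}$, once one observes that each $|\gamma_k\rangle$ is nonzero. A minor caveat worth noting is the short-chain regime $N\le d$, in which product ground states may still persist by iterating the $r<d$ construction for a few steps even when $r\ge d$; the corollary should therefore be read as holding once $N>d$, the regime relevant to the Schmidt-rank questions pursued in the rest of this chapter.
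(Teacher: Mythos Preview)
Your argument is correct and takes a genuinely different, more complete route than the paper's. The paper's proof is a two-line heuristic: it notes that the $r\times d$ constraint matrix $C^{[n]}[\gamma_n]$ is over-determined when $r\ge d$, so ``generically'' $\gamma_{n+1}=0$. Taken literally this step-by-step argument is incomplete, because the entries of $C^{[n]}$ depend on $\gamma_n$, which one is free to choose; for instance when $r=d$, the degree-$d$ hypersurface $\det C^{[1]}[\gamma_1]=0$ in $\mathbb{CP}^{d-1}$ is nonempty for every generic $\beta$, so a single bond never obstructs product states and the over-specification must be felt globally along the chain.

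Your incidence-variety dimension count handles exactly this global issue: fibering $\calI$ over $(\mathbb{CP}^{d-1})^N$ and using the per-$(k,p)$ block structure of $\mathcal{B}$ to get independence of the $(N-1)r$ linear constraints yields the sharp inequality $(N-1)r>N(d-1)$, and the chain-length caveat you flag is real content absent from the paper's statement. Two small corrections: the inequality $(N-1)r>N(d-1)$ is \emph{equivalent} to $N>d$ only when $r=d$; for $r=d+s$ with $s\ge 1$ the threshold is $N>(d+s)/(s+1)$, which is strictly weaker, though of course $N>d$ suffices uniformly. And the passage to the real/measure-zero statement deserves one more word: the image $\pi_{\mathcal{B}}(\calI)$ is constructible by Chevalley, its Zariski closure is therefore a proper subvariety of $\mathcal{B}$, and the real points of a proper complex variety form a proper real-algebraic set, hence have Lebesgue measure zero.
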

\begin{proof}
The only solutions is the trivial solution $\overrightarrow{\gamma_{n+1}}=0$.
The problem is over specified; $r$ constraints and $d$ variables
generically cannot be satisfied for $r\geq d$. 
\end{proof}
Comment: When $r\geq d$ , $\left(\mathbb{I}\otimes B_{1}\right)$
is a full rank injective map. $\Gamma^{[n]}$ has $D_{n}$ independent
columns and is a tall rectangular matrix and is therefore an injective
map. Their composition $\left(\mathbb{I}\otimes B_{1}\right)\Gamma^{[n]}$
is yet another full rank injective map, whose rank is the number of
columns. It is however not clear that when we concatenate this map
by $\left(\mathbb{I}\otimes B_{k}\right)\Gamma^{[n]}$ with $1<k\leq d$
to obtain Eq. \ref{eq:injectivity}, the resulting matrix has rank
equal to the number of rows. 

\begin{flushleft}
Comment: In our previous work we proved that there are many zero energy
ground states when $d\leq r\leq d^{2}/4$ and none when $r>d^{2}/4$. 
\par\end{flushleft}

\subsection{\label{sub:Constructing-Solutions-for}Constructing Solutions for
$d\leq r\leq d^{2}/4$}

In Eq. \ref{eq:noChoice} taking $D_{0}=1$ and $D_{1}=d$, we build
the solutions recursively by marching along the chain. The very first
set of solutions $\Gamma_{D_{0},D_{1}}^{[1],i_{1}}$, is a $d\times d$
diagonal matrix. From Eq. \ref{eq:injectivity} we have $C^{[1]}=\left(B_{1}\Gamma^{[1]}\;\cdots\; B_{d}\Gamma^{[1]}\right)\equiv B$.
To find a basis for the kernel we row reduce $C^{[1]}$ to put it
in the row echelon form: 
\[
C_{\mbox{echel}}^{[1]}{\scriptstyle \equiv\left[\begin{array}{cccccc}
1 &  &  & \tilde{\beta}_{d^{2}-r}^{1} & \cdots & \tilde{\beta}_{d^{2}}^{1}\\
 & \ddots &  & \vdots &  & \vdots\\
 &  & 1 & \tilde{\beta}_{d^{2}-r}^{r} &  & \tilde{\beta}_{d^{2}}^{r}
\end{array}\right]\equiv\left[\begin{array}{cccccc}
 &  &  & \uparrow &  & \uparrow\\
 & \mathbb{I}_{r} &  & \tilde{\beta}_{1} & \cdots & \tilde{\beta}_{d^{2}-r}\\
 &  &  & \downarrow &  & \downarrow
\end{array}\right]},
\]
Recall that $\Gamma^{[2]}$ is the set of solutions in the Kernel
of $C^{[1]}$. The null space of $C^{[1]}$ is

\[
\mathrm{\mathcal{N}}\left(C^{[1]}\right)=\mbox{span}\left(\begin{array}{c}
\uparrow\\
-\tilde{\beta}_{j}\\
\downarrow\\
\hline \\
e_{j}\\
\\
\end{array}\right),\mbox{ }1\leq j\leq d^{2}-r\mbox{ }
\]
the horizontal line depicts the partition of the $\tilde{\beta}$'s
from the unit vectors and does not signify any mathematical operation. 

Consequently, $\Gamma^{[2]}$ can be expressed as

\[
\Gamma^{[2]}=\left(\begin{array}{ccc}
\uparrow &  & \uparrow\\
-\tilde{\beta}_{1} & \cdots & -\tilde{\beta}_{d^{2}-r}\\
\downarrow &  & \downarrow\\
\hline \\
 & \mathbb{I}_{d^{2}-r}\\
\\
\end{array}\right).
\]

This will in turn define the $rd\times d^{2}$ matrix $C^{[2]}$ 

\[
C^{[2]}=\left[\begin{array}{cccc}
\left(\mathbb{I}_{D_{1}}\otimes B_{1}\right)\Gamma^{[2]} & \left(\mathbb{I}_{D_{1}}\otimes B_{2}\right)\Gamma^{[2]} & \cdots & \left(\mathbb{I}_{D_{1}}\otimes B_{d}\right)\Gamma^{[2]}\end{array}\right],
\]
whose Kernel defines $\Gamma^{[3]}$. Continuing this way we arrive
at 

\[
\Gamma^{[n]}=\left(\begin{array}{ccc}
\uparrow &  & \uparrow\\
-\tilde{\beta}_{1} & \cdots & -\tilde{\beta}_{D_{n}}\\
\downarrow &  & \downarrow\\
\hline \\
 & \mathbb{I}_{D_{n}}\\
\\
\end{array}\right),\mbox{ giving }
\]

\begin{equation}
C^{[n]}=\left(\begin{array}{ccccccc}
\uparrow &  & \uparrow &  & \uparrow &  & \uparrow\\
\gamma_{1}^{\left(1\right)} & \cdots & \gamma_{D_{n}}^{\left(1\right)} & \centerdot\centerdot\centerdot & \gamma_{1}^{\left(d\right)} &  & \gamma_{D_{n}}^{\left(d\right)}\\
\downarrow &  & \downarrow &  & \downarrow &  & \downarrow\\
\hline B_{1} &  &  &  & B_{d}\\
 & \ddots &  & \centerdot\centerdot\centerdot &  & \ddots\\
 &  & B_{1} &  &  &  & B_{d}
\end{array}\right)\label{eq:Cmatrix}
\end{equation}
 where 

\[
\left(\begin{array}{c}
\uparrow\\
\gamma_{h}^{\left(k\right)}\\
\downarrow
\end{array}\right)\equiv\mathbb{I}\otimes B_{k}\left(\begin{array}{c}
\uparrow\\
-\tilde{\beta}_{h}\\
\downarrow
\end{array}\right);\mbox{ }1\leq h\leq D_{n}
\]

Therefore these constraints give 

\begin{equation}
D_{n+1}\geq dD_{n}-rD_{n-1};\label{eq:FirstInequ}
\end{equation}
solutions with equality for $C^{[n]}$ being full rank. In our previous
work we proved that $C^{[n]}$ is full rank:

\begin{equation}
D_{n+1}=dD_{n}-rD_{n-1}.\label{eq:noChoice}
\end{equation}

The main theorem of this chapter remains to be proved: 
\begin{conjecture}
Generically all the ground states of the qudit chain with generic
local interactions are all highly entangled in the regime $d\le r\le d^{2}/4$.
\end{conjecture}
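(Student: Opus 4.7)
The plan is to reformulate the conjecture as the statement that a certain naturally defined linear subspace of $V_{L}\otimes V_{R}$ avoids the determinantal variety of low-rank tensors, and then to establish this via the combined dimension-count-plus-explicit-example strategy already used in the appendix of the previous chapter. First, fix a cut at site $k$ with both $k$ and $N-k$ growing with $N$. Let $V_{L}$ (respectively $V_{R}$) denote the subspace of states on sites $1,\ldots,k$ (respectively $k+1,\ldots,N$) annihilated by all projectors internal to that side; by the recursion $D_{n+1}=dD_{n}-rD_{n-1}$ we have $\dim V_{L}=D_{k}$ and $\dim V_{R}=D_{N-k}$. The full ground-state space is
\[
G \;=\; (V_{L}\otimes V_{R})\cap\ker P_{k,k+1}, \qquad \dim G = D_{N}.
\]
Any $|\psi\rangle\in G$ has a unique expansion $|\psi\rangle=\sum_{a,b}M_{ab}\,|L_{a}\rangle|R_{b}\rangle$ in orthonormal bases of $V_{L},V_{R}$, and its Schmidt rank at the cut equals $\operatorname{rank} M$. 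Writing $\Sigma_{\chi}$ for the affine variety of $D_{k}\times D_{N-k}$ matrices of rank at most $\chi$, the conjecture is equivalent to the assertion that $G\cap\Sigma_{\chi}=\{0\}$ for some $\chi$ growing exponentially in $\min(k,N-k)$.

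Next comes a dimension count. Using the asymptotic $D_{n}\sim f^{n+1}/(f-g)$ with $f=d/2+\sqrt{d^{2}/4-r}$, one verifies that $D_{k}D_{N-k}-D_{N}$ is of the same exponential order $f^{N}$ as $D_{N}$ itself, while $\dim\Sigma_{\chi}=\chi(D_{k}+D_{N-k}-\chi)$. The naive transversality bound $D_{N}+\dim\Sigma_{\chi}<D_{k}D_{N-k}$ is therefore satisfied for $\chi$ up to a constant fraction of $D_{\min(k,N-k)}$, so a \emph{generic} $D_{N}$-dimensional subspace of $V_{L}\otimes V_{R}$ already misses $\Sigma_{\chi}$. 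What remains is to show that $G$ itself is in sufficiently general position, despite being cut out of $V_{L}\otimes V_{R}$ by the same family of random $\beta$-parameters that defines $V_{L}$ and $V_{R}$.

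The third step is to promote this to the desired statement via the geometrization / semicontinuity argument used in the previous chapter. The condition $\dim(G\cap\Sigma_{\chi})\ge 1$ is Zariski closed in the $\beta$-parameter space, so it suffices to exhibit a single choice of $\{|v_{n,n+1}^{p}\rangle\}$ for which $G\cap\Sigma_{\chi}=\{0\}$. The natural candidate is the block construction from the previous appendix, in which the $\Gamma^{[n]}$'s are supported only on index strings with $i_{k}\le d/2$ and $i_{k+1}>d/2$; there the middle matrix $M$ acquires a block structure whose rank can be analyzed by elementary linear algebra applied to closed walks on a bipartite graph, and I would aim to show that every non-zero combination of the $D_{N}$ basis solutions activates at least $\sim f^{\min(k,N-k)}$ such walks.

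The main obstacle, and the reason this conjecture has resisted proof, is precisely the universal quantifier: one must control $M$ for \emph{every} vector in the $D_{N}$-dimensional solution space, including highly degenerate directions in which many rows of $M$ could vanish simultaneously. The block model of the previous appendix was tailored only to witness the minimum count $D_{n}$, not to certify large rank on all of $G$. If the direct semicontinuity route fails, I would fall back on a marching-from-both-ends matching strategy: at each step $n$ bound the codimension, inside the space of partially-built left (resp.\ right) solutions, of the locus along which the Schmidt rank at the fixed cut $k$ can drop by one, and sum these codimensions over $n$ to show that the bad locus has smaller dimension than $G$. Either route will require a sharper quantitative understanding of the algebraic dependencies among rows of $C^{[n]}$ than the qualitative ``rank $=rD_{n-1}$'' established in the previous chapter.
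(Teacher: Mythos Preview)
This statement is a \emph{conjecture} in the paper, not a theorem: the paper explicitly says ``The main theorem of this chapter remains to be proved'' and ``As of now we have not succeeded in proving the main theorem.'' There is no proof in the paper for you to compare against. What the paper does offer is exactly the two strategies you have rediscovered: (i) a ``matching solutions'' reformulation in which one writes the ground space as $\{X:\sum_{i_m}(\mathbb{I}\otimes B_{i_m})\Gamma(i_n)X\Gamma(i_m)=0\}$ and asks that this $D_N$-dimensional solution space miss the determinantal variety of $D_n\times D_m$ matrices of rank $\le\chi_0$; and (ii) a ``choosing solutions'' recursion-based approach. Your first three paragraphs are precisely the paper's Second Method, and your semicontinuity-plus-explicit-example reduction is the same geometrization argument the paper invokes (citing Laumann et al.).

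Your proposal is therefore not a proof but a restatement of the program, and you have correctly located the gap yourself. The crux is the existence of a \emph{single} choice of local projectors for which every nonzero $X$ in the solution space has exponentially large rank. The block construction from the appendix of the previous chapter was designed only to certify $\mathrm{rank}(C^{[n]})=rD_{n-1}$, i.e.\ to hit the minimal count $D_N$; it was \emph{not} shown to force large Schmidt rank on all of $G$, and the paper says outright: ``We have not yet succeeded in finding such examples in the regime of interest $d\le r\le d^{2}/4$.'' Your claim that in the block model ``the middle matrix $M$ acquires a block structure whose rank can be analyzed by elementary linear algebra applied to closed walks on a bipartite graph'' is exactly the step that is missing and that nobody has carried out; you would need to exhibit, for every nonzero linear combination of the $D_N$ basis solutions, exponentially many linearly independent rows in $M$, and the block model gives no obvious handle on cancellations among those basis solutions. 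Your fallback ``marching-from-both-ends'' codimension bound is likewise only a hope: the paper notes that the entries of $X$ are polynomials of enormous degree (Eq.~\eqref{eq:polyDegree}) in only $O(N)$ random parameters, so $G$ is very far from a generic subspace of $V_L\otimes V_R$, and a naive transversality argument does not apply. In short, your outline matches the paper's, and it stalls at the same unresolved point.
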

In order to obtain entanglement bounds one can take at least two different
approaches.

\section{First Method: Choosing Solutions}

Let the $n^{\mbox{th}}$ step be the first step in which we make a
choice by throwing away $u_{n}$ solutions and keeping $s_{n}\equiv D_{n}-u_{n}$
solutions. For example if we throw out the second and the last column
of $\Gamma^{[n]}$ we obtain

\begin{equation}
\Gamma^{[n]}=\left(\begin{array}{ccccc}
\uparrow & \uparrow & \uparrow &  & \uparrow\\
-\tilde{\beta}_{1} & -\tilde{\beta}_{3} & -\tilde{\beta}_{4} & \cdots & -\tilde{\beta}_{D_{n}-1}\\
\downarrow & \downarrow & \downarrow &  & \downarrow\\
\hline 1\\
0 & 0\\
 & 1\\
 &  & 1\\
 &  &  & \ddots\\
 &  &  &  & 1\\
 &  &  &  & 0
\end{array}\right).\label{eq:ExampleChoice}
\end{equation}
This in turn defines the $rD_{n-1}\times ds_{n}$ matrix $C^{[n]}$

\[
C^{[n]}=\left[\begin{array}{cccc}
\left(\mathbb{I}_{s_{n}}\otimes B_{1}\right)\Gamma^{[n]} & \left(\mathbb{I}_{s_{n}}\otimes B_{2}\right)\Gamma^{[n]} & \cdots & \left(\mathbb{I}_{s_{n}}\otimes B_{d}\right)\Gamma^{[n]}\end{array}\right],
\]
whose dimension of Kernel dictates the number of independent solutions
we can build, i.e., $\Gamma^{[n+1]}$. We can keep making choices
by excluding solutions to build some arbitrary state, i.e., $s_{k}=D_{k}-u_{k}$.
\begin{lem}
\label{lem:lower-bound}A lower bound on $s_{k}$ is a lower bound
on $\chi_{k}$.\end{lem}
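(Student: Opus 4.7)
The plan is to prove the stronger statement that the minimum $s_k$ over all valid constructions producing a given ground state $|\Psi\rangle$ is exactly the Schmidt rank $\chi_k$ of $|\Psi\rangle$ at bond $k$; the lemma is an immediate corollary. One direction, $\chi_k \leq s_k$, is immediate from the MPS form: if $s_k$ columns are kept at step $k$, then in
\[
|\Psi\rangle = \sum_{\alpha_k = 1}^{s_k} |L_{\alpha_k}\rangle \otimes |R_{\alpha_k}\rangle
\]
the Schmidt rank cannot exceed $s_k$. The content of the lemma lies in the other direction, $s_k \geq \chi_k$, which I would establish by producing a valid construction of $|\Psi\rangle$ with $s_k = \chi_k$.

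First, I would use the standard frustration-free argument: because $H \geq 0$ and $\langle\Psi|H|\Psi\rangle = 0$, each local term $H_{j,j+1}$ annihilates $|\Psi\rangle$, and performing the Schmidt decomposition $|\Psi\rangle = \sum_{i=1}^{\chi_k} \lambda_i |L_i\rangle |R_i\rangle$ at bond $k$ forces every $|L_i\rangle$ to lie in the common kernel of $H_{1,2},\dots,H_{k-1,k}$. Hence each $|L_i\rangle$ is a linear combination of the $D_k$ basis partial solutions $\psi^{i_1\dots i_k}_{\alpha_k}$ furnished by the construction of Section~5.1.3 with all solutions retained. Performing a change of basis on $\Gamma^{[k]}$ so that its first $\chi_k$ columns correspond to $|L_1\rangle,\dots,|L_{\chi_k}\rangle$, discarding the remaining $D_k - \chi_k$ columns, and then defining $\Gamma^{[k+1]},\dots,\Gamma^{[N]}$ from the standard Schmidt decompositions of $|R_1\rangle,\dots,|R_{\chi_k}\rangle$ on the complementary region yields a candidate construction with $s_k = \chi_k$. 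Running the same frustration-free argument on successive bonds gives $s_m = \chi_m$ for every $m \geq k$, which in particular is $\leq D_m$.

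The main obstacle is verifying that this candidate is legitimate in the strict sense of Section~5.1.3, i.e.\ that at every subsequent step the reduced $\Gamma^{[m]}$ still satisfies the right-injectivity condition and that $\Gamma^{[m+1]}$ can be read off from $\ker C^{[m]}$. Discarding columns from $\Gamma^{[k]}$ can in principle destroy the generic full-rank property of $C^{[k]}$ on which the recursion $D_{n+1} = dD_n - rD_{n-1}$ rests, so one must show that the particular reduction induced by the Schmidt basis is compatible with the kernel structure. I expect this to follow from an inductive gauge argument: the Schmidt matrices on sites $k+1,\dots,N$ are themselves ground-state solutions (by frustration-freeness restricted to that half) and can therefore be fit into the canonical form dictated by the construction, perhaps after a $\mathrm{GL}(\chi_m)$ change of basis on each bond. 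An alternative, cleaner route worth exploring is to bypass the canonical construction entirely and prove the equivalent intrinsic statement: the minimum bond dimension at site $k$ of any MPS representation of $|\Psi\rangle$ whose matrices satisfy the kernel-of-$C^{[m]}$ condition equals $\chi_k$, so that lower bounds extracted from the recursion automatically descend to $\chi_k$.
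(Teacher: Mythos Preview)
Your plan is the same as the paper's argument, but you are working much harder than necessary. The paper's proof is a single observation: the canonical MPS form (with bond dimension $\chi_k$ at each cut) is obtained from the left-to-right $s_k$ construction by imposing the \emph{additional} right-to-left linear-independence condition (condition~2 of Theorem~1 in \cite{cirac}). Hence the canonical form is a constrained special case of the $s_k$ construction, $\chi_k$ is an admissible value of $s_k$, and any lower bound valid for all admissible $s_k$ is automatically a lower bound on $\chi_k$.

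Your technical worries are largely misplaced for \emph{this} lemma. Right-injectivity of the reduced $\Gamma^{[m]}$ is automatic for the canonical form (Schmidt vectors are orthonormal, hence the columns are independent), and the requirement $\Gamma^{[m+1]}\in\ker C^{[m]}$ is nothing but the statement that $|\Psi\rangle$ is annihilated by the local projector $P_{m,m+1}$, which holds because $|\Psi\rangle$ is a frustration-free ground state. The concern you raise about rank deficiency of $C^{[m]}$ after discarding columns is real, but it pertains to whether the recursion $s_{n+1}\le ds_n - rs_{n-1}$ survives the making of choices---a separate and genuinely delicate issue that the paper uses \emph{elsewhere} to derive lower bounds on $s_k$. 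It is not needed to prove the lemma itself, which only asserts that such bounds, once obtained, transfer to $\chi_k$.
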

\begin{proof}
$\chi_{k}$ can be obtained from $s_{k}$ by applying the canonicality
condition \cite[condition 2, in Theorem 1]{cirac} to the $\Gamma^{[k]}$;
equivalently a further constraint of linear independence from right
to left (see the previous chapter) needs to be imposed on the solutions.
Since solutions with $\chi_{n}$ are contained in solutions with $s_{n}$
one can take the $\chi_{k}=s_{k}^{\mbox{lb}}$, (lb denotes lower
bound) to allow construction of any state.
\end{proof}

\subsection{Rank of $C^{[n]}$}

Suppose at some step $n$ we throw away $u_{n}$ and keep $s_{n}\equiv D_{n}-u_{n}$
of the solutions. In our previous work we proved that for $u_{n}=0$
for all $n$, $C^{[n]}$ is full rank. The full rankness however does
not generally hold if we throw away solutions. When a choice is made,
as in the example shown in Eq. \eqref{eq:ExampleChoice}, the matrix
of constraints becomes

\begin{equation}
C^{[n]}=\left(\begin{array}{ccccccc}
\uparrow &  & \uparrow &  & \uparrow &  & \uparrow\\
\gamma_{1}^{\left(1\right)} & \cdots & \gamma_{s_{n}}^{\left(1\right)} & \centerdot\centerdot\centerdot & \gamma_{1}^{\left(d\right)} &  & \gamma_{s_{n}}^{\left(d\right)}\\
\downarrow &  & \downarrow &  & \downarrow &  & \downarrow\\
\hline B_{1}^{1} &  &  &  & B_{d}^{1}\\
 & \ddots &  & \centerdot\centerdot\centerdot &  & \ddots\\
 &  & B_{1}^{\frac{s_{d}}{d}} &  &  &  & B_{d}^{\frac{s_{d}}{d}}
\end{array}\right)\equiv\left(\begin{array}{c}
\mathbb{R}^{[n]}\\
\hline \mathbb{B}^{[n]}
\end{array}\right).\label{eq:Cmatrix_choice}
\end{equation}
Each solution that is excluded will reduce the columns of $C^{[n]}$
$d-$fold and the superscripts on $B_{k}$'s remind us that there
may be some columns missing as a consequence of having made a choice.
It is easy to see how this comes about. Suppose we exclude the first
column of $\Gamma^{[n]}$, then each of the top foremost $B_{k}$'s
in Eq. \eqref{eq:Cmatrix} loses its first column. $ $ In particular
if we throw away the last $d$ solutions in $\Gamma^{[n]}$, one can
see that $C^{[n]}$ above would have $r$ rows all zeros as its last
rows. %

\subsection{Towards Entanglement Bounds}

Suppose we march along a semi-infinite line and choose solutions randomly
then it is plausible to assume the recursion $s_{n+1}\le ds_{n}-rs_{n-1}$
(i.e., no rank deficiency) to hold. In this case, the entanglement
bounds can nicely be obtained
\begin{lem}
Suppose $s_{n+1}\le ds_{n}-rs_{n-1}$ holds everywhere, then $s_{n+1}\ge qs_{n}$
where $q\equiv\frac{r}{d}\left(1+\frac{r}{d^{2}-r}\right)$, which
for $d\ge4$ and $d\le r\le d^{2}/4$ implies exponentially large
lower bound on $\chi_{n+1}$.\end{lem}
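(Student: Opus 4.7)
The plan is to bootstrap the assumed one-step upper bound $s_{n+1}\le ds_n-rs_{n-1}$ into the claimed lower bound $s_{n+1}\ge qs_n$ by applying the hypothesis once and then reinserting the resulting weak inequality into itself. Throughout I would use only two ingredients: the assumed recursion, and the fact that on the semi-infinite chain each $s_m$ is a non-negative integer, being the dimension of a chosen subspace inside $\ker C^{[m-1]}$.

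The first step produces a crude bound. Applying the hypothesis at index $n+1$ gives $s_{n+2}\le ds_{n+1}-rs_n$, and combining with $s_{n+2}\ge 0$ yields
\[
s_{n+1}\ \ge\ \frac{r}{d}\,s_n \qquad \text{for every } n.
\]
This is the bound coming purely from positivity; since $r/d<1$ in our regime of interest it cannot by itself give exponential growth. The key move is to bootstrap: since the above inequality holds at every index, in particular $s_{n+2}\ge (r/d)\,s_{n+1}$. Substituting this into $s_{n+2}\le ds_{n+1}-rs_n$ gives
\[
\tfrac{r}{d}\,s_{n+1}\ \le\ d\,s_{n+1}-r\,s_n,
\]
which rearranges to $s_{n+1}\ge \frac{rd}{d^2-r}\,s_n = q\,s_n$, matching the constant in the statement via the identity $\frac{r}{d}\bigl(1+\frac{r}{d^2-r}\bigr)=\frac{rd}{d^2-r}$.

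Once the one-step bound $s_{n+1}\ge qs_n$ is in hand, iteration gives $s_n\ge q^n s_0$, so exponential growth is equivalent to $q>1$. The latter reduces to $r(d+1)>d^2$, i.e.\ $r>d^2/(d+1)$. Since $d^2/(d+1)<d$ for every $d\ge 1$, the hypothesis $r\ge d$ already forces $q>1$; for the nonempty regime $d\ge 4$ and $d\le r\le d^2/4$ the two boundary values give $q=d/(d-1)>1$ at $r=d$ and $q=d/3\ge 4/3$ at $r=d^2/4$, with the interior interpolating between these. Combining with Lemma \ref{lem:lower-bound}, which converts a lower bound on $s_n$ into a lower bound on $\chi_n$, one obtains $\chi_{n+1}\ge q^{n}$, an exponential lower bound on the Schmidt rank.

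The main subtlety is not in this algebra but in the hypothesis itself: the assumption that $s_{n+1}\le ds_n-rs_{n-1}$ holds at every index silently rules out configurations in which repeated pruning of solutions produces rank deficiency in $C^{[n]}$, letting its kernel exceed the nominal dimension $ds_n-rs_{n-1}$. Proving that no adversarial sequence of choices can drive the system out of this regime, thereby upgrading the present conditional statement into the unconditional conjecture of the chapter, is what remains genuinely difficult and is why the overall entanglement bound for the regime $d\le r\le d^2/4$ is still open.
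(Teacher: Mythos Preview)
Your proof is correct and follows essentially the same approach as the paper: both obtain the crude bound $s_{n+1}\ge (r/d)s_n$ from positivity of $s_{n+2}$, then bootstrap using the recursion at the next index together with positivity of $s_{n+3}$ to upgrade this to $s_{n+1}\ge qs_n$. Your presentation is slightly cleaner in that it avoids the auxiliary variable $u_n$ the paper introduces, but the underlying argument and the ingredients used are identical, as is your concluding caveat about the rank-deficiency hypothesis.
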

\begin{proof}
We apply the set of inequalities, $s_{n+2}\le ds_{n+1}-rs_{n}$ and
existence of solutions $s_{k}>0$ for all $k$. Positivity of $s_{n+2}\Rightarrow s_{n+1}>\frac{r}{d}s_{n}$.
Let $s_{n+1}\equiv\frac{r}{d}s_{n}+u_{n}$ which implies $s_{n+2}\le du_{n}$.
We now bound $u_{n}$ using $s_{n+3}\le ds_{n+2}-rs_{n+1}\Rightarrow s_{n+3}\le d^{2}u_{n}-rs_{n+1}=d^{2}u_{n}-\frac{r^{2}}{d}s_{n}-ru_{n}=u_{n}\left(d^{2}-r\right)-\frac{r^{2}}{d}s_{n}$.
Now $s_{n+3}>0$ implies $u_{n}>\frac{r^{2}}{d\left(d^{2}-r\right)}s_{n}.$
Combining this with $s_{n+1}\equiv\frac{r}{d}s_{n}+u_{n}$ we get

\[
s_{n+1}>\frac{r}{d}\left(1+\frac{r}{d^{2}-r}\right)s_{n}\equiv qs_{n}=\begin{cases}
\begin{array}{c}
\frac{d}{d-1}s_{n}\\
\\
\frac{d}{3}s_{n}
\end{array} & \begin{array}{c}
r=d\\
\\
r=\frac{d^{2}}{4}.
\end{array}\end{cases}
\]
which proves $s_{n}=q^{n}$ are lower bounds on solutions that grow
exponentially with $n$ for $d\ge4$. Using Lemma \ref{lem:lower-bound}
we conclude $\chi_{n}$ is exponentially lower bounded. 
\end{proof}

Comment: The formulation above does not take into account the finiteness
of the chain. Further it ignores the affect of possible rank deficiency
due to choosing solutions by assuming $s_{n+1}\le ds_{n}-rs_{n-1}$
holds at every step. In particular, for a finite chain of length $N$,
$\chi_{N-1}\le d$ whereas, $q^{N-1}\sim\exp\left(N-1\right)$.

\section{Second Method: Matching Solutions}

One can approach the problem by marching along both ways and match
solutions in between the two ends and ask what the lower bound on
the Schmidt rank must be. It is notationally convenient to index the
marching along from left and right differently; let $N=m+n$. Suppose
we march along from left $n$ sites building the $\Gamma$ matrices
up to and including $\Gamma_{\alpha_{n-1},\alpha_{n}}^{i_{n},[n]}$
and suppose we we march along from right $m$ sites with $\Gamma_{\alpha_{m},\alpha_{m-1}}^{i_{m},[m]}$
. Recall that marching along from left (right) only imposed linear
independence of the solutions from left (right). To build any state
we need to match the solutions and apply the last constraint between
sites $n$ and $m$,

\begin{eqnarray}
\beta_{i_{n},i_{m}}^{p}\Gamma_{\alpha_{n-1},\alpha_{n}}^{i_{n},[n]}X_{\alpha_{n},\alpha_{m}}\Gamma_{\alpha_{m},\alpha_{m-1}}^{i_{m},[m]} & = & 0\label{eq:matching}\\
\Leftrightarrow C_{p\alpha_{n-1},\alpha_{m-1};\alpha_{n},\alpha_{m}}X_{\alpha_{n},\alpha_{m}} & = & 0,\nonumber 
\end{eqnarray}
where $C_{p\alpha_{n-1},\alpha_{m-1};\alpha_{n},\alpha_{m}}\equiv\beta_{i_{n},i_{m}}^{p}\Gamma_{\alpha_{n-1},\alpha_{n}}^{i_{n},[n]}\Gamma_{\alpha_{m},\alpha_{m-1}}^{i_{m},[m]}$
defines $rD_{n-1}D_{m-1}$ constraints on $D_{n}D_{m}$ variables--entries
of $X$. It can be checked that $D_{n+m}=D_{n}D_{m}-rD_{n-1}D_{m-1}$
as expected. Let us make a crisp problem definition. Given,
\begin{enumerate}
\item Eq. \ref{eq:matching} has $D_{n+m}\equiv D_{N}$ solutions. Let us
call the space of solutions $\mathcal{S}$
\item $\Gamma^{[n]}$ has $D_{n}$ independent columns: If $\Gamma_{\alpha_{n-1},\alpha_{n}}^{i_{n},[n]}b_{\alpha_{n}}=0,\quad\forall i_{n},\alpha_{n-1}\Rightarrow b_{\alpha_{n}}=0$
\item $\Gamma^{[m]}$ has $D_{m}$ independent columns: If $c_{\alpha_{m}}\Gamma_{\alpha_{m},\alpha_{m-1}}^{i_{m},[m]}=0,\quad\forall i_{m},\alpha_{m-1}\Rightarrow c_{\alpha_{m}}=0$
\item $\beta_{i_{n},i_{m}}^{p}$ are generic;
\end{enumerate}
prove that 
\[
\left\{ \mathcal{S}\right\} \cap\left\{ \mbox{determinental variety of \ensuremath{D_{n}\times D_{m}}matrices with rank \ensuremath{\chi\le\chi_{0}}}\right\} =\emptyset
\]
with probability one for some large $\chi_{0}$. Using matrix notation,
one wants to bound the rank of matrix $X$ that satisfies

\[
\sum_{i_{m}=1}^{d}\left(\mathbb{I}_{D_{n-1}}\otimes B_{i_{m}}\right)\Gamma\left(i_{n}\right)X\Gamma\left(i_{m}\right)=0
\]
where each $B_{i_{m}}$ is a $r\times d$ generic matrix as discussed
above and $\Gamma\left(i_{n}\right)$ is $dD_{n-1}\times D_{n}$ with
$D_{n}$ independent columns and each $\Gamma\left(i_{m}\right)$
for a fixed $i_{m}$ is $D_{m}\times D_{m-1}$. 

How do the entries of $X$ depend on the local constraints that were
imposed at previous steps? The entries are polynomials of very high
degree%
\footnote{The degrees of polynomials, shown below, were obtained by Jeffrey
Goldstone.%
} -- much larger than the size of the matrix. One can see this by using
the following basis for building solutions 

\[
\left[\begin{array}{ccc}
X & | & Y\end{array}\right]\left[\begin{array}{c}
\left(\mbox{adj}X\right)^{-1}\\
-I
\end{array}\right]
\]

\[
\begin{array}{ccccccccc}
\begin{array}{c}
1\\
\centerdot
\end{array} & \cdots & \begin{array}{c}
n-1\\
\centerdot
\end{array} & \begin{array}{c}
n\\
\centerdot
\end{array} & X & \begin{array}{c}
m\\
\centerdot
\end{array} & \begin{array}{c}
m-1\\
\centerdot
\end{array} & \cdots & \begin{array}{c}
1\\
\centerdot
\end{array}\end{array}
\]
The elements of $X$ have entries that are homogeneous polynomial
functions of local terms at the previous steps 

\begin{eqnarray}
X_{ij} & = & \mbox{poly}\left\{ \left[\left(\beta_{12}\right)^{r^{n}D_{0}D_{1}\cdots D_{n-1}}\cdots\left(\beta_{n-1,n}\right)^{r^{2}D_{n-2}D_{n-1}}\right]^{D_{n-1}}\right.\label{eq:polyDegree}\\
 &  & \left.\beta_{nm}^{rD_{n-1}D_{m-1}}\left[\left(w_{m,m-1}\right)^{r^{2}D_{m-1}D_{m-2}}\cdots w_{21}^{r^{m}D_{m-1}\cdots D_{0}}\right]^{D_{m-1}}\right\} ,\nonumber 
\end{eqnarray}
where we denote the matrix of the local terms between sites $k$ and
$k+1$ by $\beta_{k,k+1}$ if we are marching from the left and by
$w_{k,k+1}$ if we are marching from the right. For example $\left(\beta_{12}\right)^{r^{n}D_{0}D_{1}\cdots D_{n-1}}$
means that the polynomial dependence of entries of $X$ on the first
set of local terms from left are homogeneous of degree $r^{n}D_{0}D_{1}\cdots D_{n-1}$
with respect to the elements of the random local terms between the
first and the second sites.

The conjecture can be answered if the lower bound on $\mbox{rank}\left(X\right)$
is found. In particular, can one use the fact that the entries are
very high powers of random parameters to argue in favor of an effective
genericity for the matrix $X$.

\section{What Does Algebraic Geometry Buy You?}

Though the Hamiltonian does have generic local interactions, it is
highly non-generic. In particular, it has to obey the restrictive
tensor product structure. Moreover, the number of random parameters
scale linearly with the system's size and are no more than $rd^{2}N$,
whereas the size of the Hamiltonian is $d^{N}$. This prevents one
from utilizing the techniques of Algebraic Geometry in any direct
way to prove statements about the entanglement of all the ground states.
However, one can use basic ideas of Algebraic Geometry \cite{Hayden_Generic,cubitt}
to show that among the many ground states there is at least a highly
entangled state \cite{cubitt}. We can consider the space of all the
ground states and use the following proposition given in \cite[prop. 10]{cubitt}. 
\begin{prop*}
Every bipartite system $\mathbb{C}^{D_{n}}\otimes\mathbb{C}^{D_{m}}$
has a subspace $\mathcal{S}$ of Schmidt rank $\ge\chi$, and of dimension
$\mbox{dim}\left(\mathcal{S}\right)=(D_{n}\lyxmathsym{\textminus}\chi+1)(D_{m}\lyxmathsym{\textminus}\chi+1)$
\end{prop*}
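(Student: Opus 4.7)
The plan is to recast the statement in matrix-theoretic language: under the standard isomorphism $\mathbb{C}^{D_n}\otimes\mathbb{C}^{D_m}\simeq\mathbb{C}^{D_n\times D_m}$ that sends $\sum_{ij}M_{ij}\,|i\rangle\otimes|j\rangle$ to the matrix $M$, the Schmidt rank of a vector equals the rank of the corresponding matrix. So the proposition reduces to exhibiting a linear subspace $\mathcal{S}\subset\mathbb{C}^{D_n\times D_m}$ of dimension $(D_n-\chi+1)(D_m-\chi+1)$ that meets the determinantal variety $V_{\chi-1}:=\{M:\mathrm{rank}(M)\le\chi-1\}$ only at the zero matrix.

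First I would record the dimension of $V_{\chi-1}$. Using the surjective parameterization $(U,V)\mapsto UV$ from $\mathbb{C}^{D_n\times(\chi-1)}\times\mathbb{C}^{(\chi-1)\times D_m}$ onto $V_{\chi-1}$, modulo the $(\chi-1)^2$-dimensional $GL_{\chi-1}$ gauge action, one gets $\dim V_{\chi-1}=(\chi-1)(D_n+D_m-\chi+1)$. A direct computation shows that the codimension of $V_{\chi-1}$ in the ambient space $\mathbb{C}^{D_nD_m}$ equals exactly $(D_n-\chi+1)(D_m-\chi+1)$, which is the target dimension of $\mathcal{S}$.

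Next I would pass to projective space. Since $V_{\chi-1}$ is a Zariski-closed cone defined by the vanishing of all $\chi\times\chi$ minors, its projectivization $\mathbb{P}(V_{\chi-1})\subset\mathbb{P}^{D_nD_m-1}$ is a closed subvariety of dimension $(\chi-1)(D_n+D_m-\chi+1)-1$. A generic linear subspace $\mathbb{P}(\mathcal{S})$ of dimension $(D_n-\chi+1)(D_m-\chi+1)-1$ has the property that $\dim\mathbb{P}(\mathcal{S})+\dim\mathbb{P}(V_{\chi-1})=D_nD_m-2<\dim\mathbb{P}^{D_nD_m-1}$, so by the standard dimension-count/transversality theorem for projective intersections, a generic such subspace is disjoint from $\mathbb{P}(V_{\chi-1})$. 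Pulling back to the affine cone, such an $\mathcal{S}$ satisfies $\mathcal{S}\cap V_{\chi-1}=\{0\}$, which is precisely the claim: every nonzero $|\psi\rangle\in\mathcal{S}$ has Schmidt rank at least $\chi$.

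To make ``generic'' effective I would note that the bad locus in the Grassmannian $\mathrm{Gr}\bigl((D_n-\chi+1)(D_m-\chi+1),\,D_nD_m\bigr)$ is a proper Zariski-closed subset, hence of measure zero under the Haar measure; so a random subspace works with probability one. The main step that deserves care is verifying that the generic-position argument really applies, i.e.\ that the subvariety $V_{\chi-1}$ is irreducible of the claimed dimension (classical, via the parameterization above) and that the transverse-intersection bound is attained by some subspace, not merely upper-bounded. If an explicit witness is preferred over the abstract genericity argument, one can give a Vandermonde/Hankel construction: pick $(D_n-\chi+1)(D_m-\chi+1)$ distinct complex parameters and build basis tensors whose matrix representatives are Hankel blocks indexed by these parameters; the nonvanishing of the relevant Vandermonde determinants forces every nonzero linear combination to have rank $\ge\chi$, providing a concrete $\mathcal{S}$ of the required dimension.
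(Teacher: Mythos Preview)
Your argument is correct. Note, however, that the paper does not actually prove this proposition: it is quoted verbatim from \cite[Prop.~10]{cubitt} (Cubitt--Montanaro--Winter) and used as a black box. Your approach---identifying Schmidt rank with matrix rank, computing the dimension of the determinantal variety $V_{\chi-1}$, and then taking a generic linear subspace of complementary projective dimension so that it avoids $\mathbb{P}(V_{\chi-1})$---is precisely the strategy of that cited reference, so you have reconstructed the intended proof rather than found an alternative.
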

It is straighforward to calculate $\chi$ needed for the space of
solutions with $\mbox{dim}\left(\mathcal{S}\right)=D_{N}$ to have
an intersection when a cut is made in the middle $D_{m}=D_{n}=D_{\frac{N}{2}}$.
In the previous chapter we obtained the functional dependence of $D_{n}$
on $n$

\begin{eqnarray*}
D_{n} & \sim & f^{n}\left(1-\frac{g}{f}\right)^{-1}\qquad r<\frac{d^{2}}{4}\\
D_{n} & = & \left(\frac{d}{2}\right)^{n}(n+1)\qquad r=\frac{d^{2}}{4}
\end{eqnarray*}
with

\[
f=\frac{d}{2}+\sqrt{\frac{d^{2}}{4}-r},\quad g=\frac{d}{2}-\sqrt{\frac{d^{2}}{4}-r}.
\]
For large $N$ one gets $\chi\sim f^{N/2}$ when $r<\frac{d^{2}}{4}$
and $\chi\sim N\left(\frac{d}{2}\right)^{N/2}$ when $r=\frac{d^{2}}{4}$.
It is not surprising to see that there is at least one solutions with
a high Schmidt rank; the conjecture requires a stronger result. 

\chapter{Examples of Quantum 2-SAT and Combinatorial Techniques}

Here I describe two examples of quantum $2$-SAT (both of which are
due to Sergey Bravyi \cite{Bravyi_Notes}) on a chain of length $2n$
with three-dimensional ($d=3$) and four-dimensional ($d=4$) qudits,
both of which have unique highly entangled ground states. In the $d=3$
case, the ground state has a Schmidt rank that grows linearly with
the number of sites and in $d=4$ case, the ground state has Schmidt
rank $\chi=2^{n+1}-1$. Below I show that the entanglement entropies
for $d=3$ and $d=4$ cases are $H=\frac{1}{2}\log n+0.645$ and $H=\left(\sqrt{2}-1\right)n+\frac{1}{2}\log_{2}n+\frac{1}{2}\log_{2}\left(\frac{\sqrt{2}\pi}{3+2\sqrt{2}}\right)$
respectively. I provide numerical simulations to verify these formulas.
In the next chapter we prove that the gap closes polynomially in the
$d=3$ case. As far as we know the technique we use for proving the
gap in this case is new. The $d=3$ example below gives the combinatorial
background for the next chapter and $d=4$ example does not appear
elsewhere.

\section{Quantum 2-SAT for balanced parentheses $(d=3)$}

We describe an example of a frustration-free $2$-local Hamiltonian
on a chain of $n$ qutrits which has a unique highly entangled ground
state $\psi_{0}$. More precisely, if one cuts the chain in the middle,
the Schmidt rank of $\psi_{0}$ is $\chi\approx n/2$, while the entanglement
entropy is $S\approx(1/2)\log_{2}{n}$. The Hamiltonian is likely
to have a polynomial spectral gap.

Define a $3$-letter alphabet (see Fig. \ref{fig:qutritStates })
\[
\Sigma=\{l,r,0\}.
\]

\begin{figure}
\begin{centering}
\includegraphics[scale=0.4]{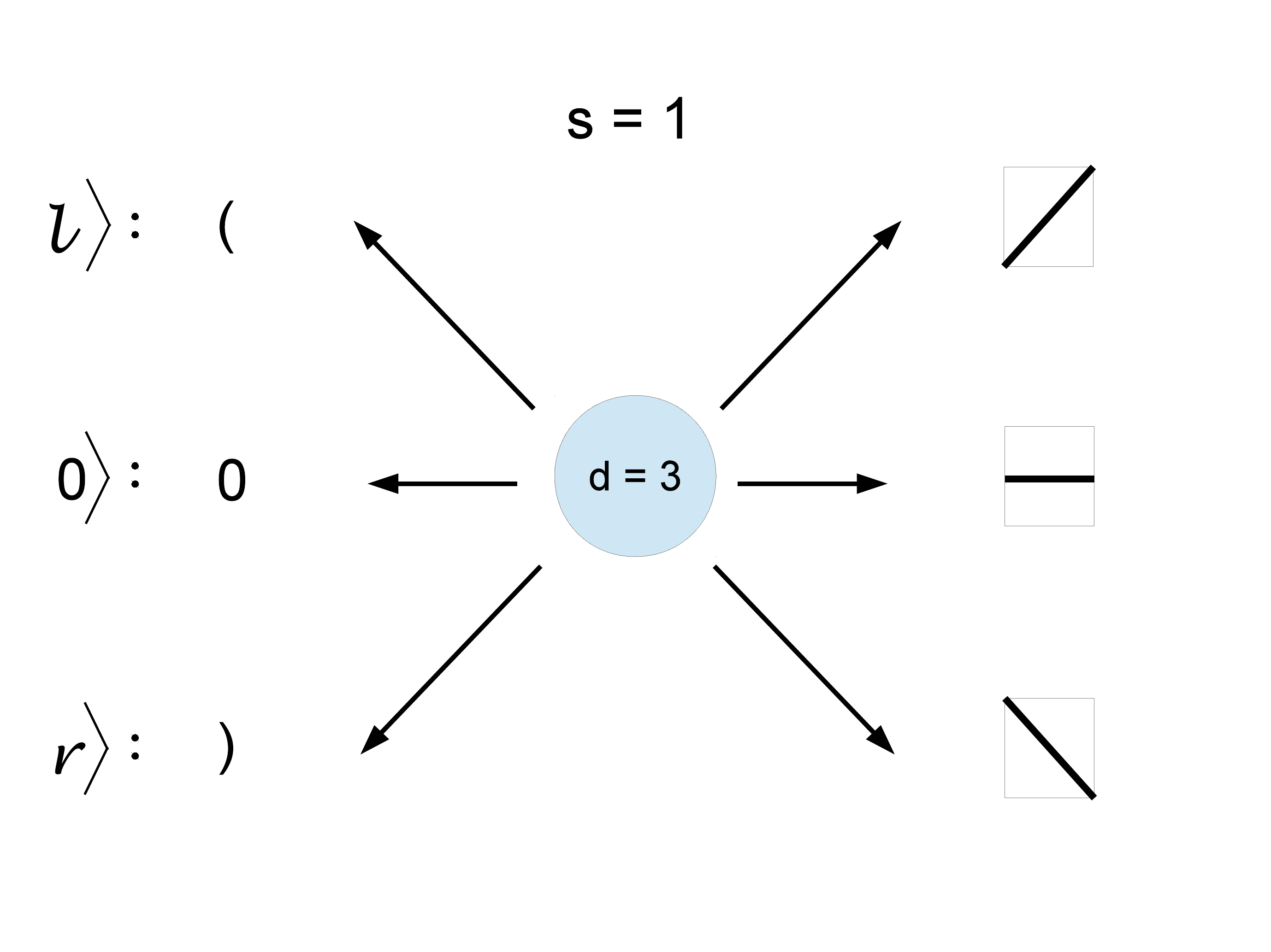}
\par\end{centering}

\caption{\label{fig:qutritStates }The states of the qudit.}

\end{figure}

We shall identify $l$ and $r$ with the left and right brackets respectively,
that is, $l\equiv[$ and $r\equiv]$. Let us say that a string $s\in\Sigma^{n}$
is {\em balanced} iff after removing all zeroes from $s$ one gets
a balanced sequence of brackets. More formally, for any $s\in\Sigma^{n}$
let $L_{i}(s)$ and $R_{i}(s)$ be the number of left and right brackets
in $s$ located in the interval $1,\ldots,i$.
\begin{defn}
A string $s\in\Sigma^{n}$ is called balanced iff $L_{i}(s)\ge R_{i}(s)$
for all $i=1,\ldots,n-1$ and $L_{n}(s)=R_{n}(s)$.
\end{defn}
For example, for $n=2$ there are only two balanced strings: $00$
and $lr$. For $n=3$ there are four balanced strings: $000$, $0lr$,
$lr0$, and $l0r$. For $n=4$ there are nine balanced strings:

\[
\begin{array}{cc}
0000 & l00r\\
00lr & l0r0\\
0l0r & lr00\\
0lr0 & llrr\\
 & lrlr
\end{array}
\]
 We would like to construct a Hamiltonian whose unique ground state
is the uniform superposition of all balanced strings,

\[
|\psi\rangle=\sum_{s\in\mathcal{B}}|s\rangle.
\]

First we need to find a more local description of balanced strings.
We shall say that a pair of strings $s,t\in\Sigma^{n}$ are equivalent,
$s\sim t$, if one can obtain $s$ from $t$ by a sequence of local
moves

\begin{equation}
00\longleftrightarrow lr,\quad0l\longleftrightarrow l0,\quad0r\longleftrightarrow r0.\label{eq:moves}
\end{equation}
applied to pairs of consecutive letters. For any integers $p,q\ge0$
let $u_{p,q}\in\Sigma^{n}$ be the string that has $p$ leading $r$'s
and $q$ tailing $l$'s, that is, 
\[
u_{p,q}\equiv\underbrace{r\ldots r}_{p}\underbrace{0\ldots0}_{n-p-q}\underbrace{l\ldots l}_{q}.
\]
 In particular, $u_{0,0}\equiv0^{n}$. 
\begin{prop*}
A string $s\in\Sigma^{n}$ is balanced iff it is equivalent to the
all-zeros string, $s\sim0^{n}$. Any string $s\in\Sigma^{n}$ is equivalent
to one and only one string $u_{p,q}$ for some integers $p,q\ge0$. \end{prop*}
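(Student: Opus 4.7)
The plan is to identify invariants of the equivalence relation generated by \eqref{eq:moves} and then prove that any string can be reduced to a canonical form $u_{p,q}$. I would work in two stages: first establishing the invariants so as to obtain uniqueness, and then describing an explicit reduction procedure that yields existence.

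For the invariants I would use the canonical stack matching of brackets: scanning $s$ left-to-right, each $r$ either pops an $l$ off the stack (a matched pair) or is declared unmatched if the stack is empty, and each $l$ is declared unmatched iff it remains on the stack at the end. Let $p(s)$ and $q(s)$ denote the numbers of unmatched $r$'s and unmatched $l$'s respectively. I would then check that each of the three moves preserves both $p$ and $q$. The swap moves $0l \leftrightarrow l0$ and $0r \leftrightarrow r0$ preserve the order of brackets after deleting zeros, hence preserve the matching and therefore $p$ and $q$. The move $00 \leftrightarrow lr$ introduces or removes a canonically matched pair, since an $l$ immediately followed by $r$ is necessarily popped right after being pushed, so it too preserves $p$ and $q$. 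Since $p(u_{p,q}) = p$ and $q(u_{p,q}) = q$, uniqueness of the $u_{p,q}$ will be immediate, and the equivalence ``balanced $\Longleftrightarrow s \sim 0^n$'' will reduce to the special case $p = q = 0$ of the existence claim, because by inspection $s$ is balanced iff $p(s) = q(s) = 0$.

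For existence, the key structural observation is that in any string all unmatched $r$'s must precede (in position) all unmatched $l$'s: if an unmatched $l$ at some site $i$ were to the left of an unmatched $r$ at site $j > i$, then at the moment the $r$ is scanned the stack would contain the $l$, forcing the $r$ to be matched. Hence the non-zero letters of $s$ have the shape ($p(s)$ unmatched $r$'s), then a balanced middle, then ($q(s)$ unmatched $l$'s). I would then give a reduction algorithm: repeatedly pick an innermost canonically matched pair $(l,r)$ in the balanced middle, slide the intervening zeros out of the way with the swap moves until the chosen $l$ and $r$ sit at adjacent positions, and apply $lr \to 00$ to erase them. Each iteration strictly decreases the number of matched pairs, so the procedure terminates with a string containing only unmatched brackets and zeros, and a final sweep of swap moves pushes every remaining $r$ to the leftmost available site and every remaining $l$ to the rightmost, producing $u_{p(s),q(s)}$.

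The main obstacle will be the bookkeeping in the reduction step, namely the claim that an innermost matched pair really can be brought adjacent using only the allowed swap moves without perturbing the rest of the matching. The choice of \emph{innermost} is essential: between the chosen $l$ and $r$ there can be no other bracket (any bracket strictly between them is either matched to something outside the pair, contradicting the stack structure, or matched inside, contradicting innermostness), so the letters separating them are all zeros, and sliding the $l$ rightward or the $r$ leftward past these zeros is exactly what the moves $0l \leftrightarrow l0$ and $0r \leftrightarrow r0$ permit. Invariance of $p$ and $q$ under these swaps, proved in the first stage, then ensures that no spurious matched pairs are created or destroyed during the sliding, so the reduction is well-defined and terminates in $u_{p(s),q(s)}$ as claimed.
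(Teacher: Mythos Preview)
Your proof is correct and follows essentially the same route as the paper: both prove uniqueness via the invariant counting unmatched brackets (you via explicit stack matching, the paper via the prefix quantity $R_j(s)-L_j(s)$, which encodes the same information), and both prove existence by repeatedly collapsing a matched $lr$ pair separated only by zeros and then sliding the residual unmatched brackets to the ends.

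One small correction: the claim that the non-zero letters decompose as ``$p(s)$ unmatched $r$'s, then a balanced middle, then $q(s)$ unmatched $l$'s'' is false in general---in $r\,l\,r\,r$ the matched pair $lr$ sits between two unmatched $r$'s. But your algorithm does not actually rely on this: picking an innermost matched pair is well-defined wherever it sits, and your argument that only zeros lie between its endpoints (since any bracket strictly inside a matched pair must itself be matched inside, contradicting innermostness) goes through unchanged.
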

\begin{proof}
Indeed, applying the local moves Eq.~(\ref{eq:moves}) we can make
sure that $s$ does not contain substrings $lr$ and $l0\ldots0r$.
It means that if $s$ contains at least one $l$, then all letters
on the right of $l$ are $l$ or $0$. Similarly, if $s$ contains
at least one $r$, then all letters on the left of $r$ are $r$ or
$0$. Since we can swap $0$ with any other letter by the local moves,
$s$ is equivalent to $u_{p,q}$ for some $p,q$. It remains to show
that different strings $u_{p,q}$ are not equivalent to each other.
Indeed, suppose $u_{p,q}\sim u_{p',q'}$ such that $p\ge p'$. Then
$R_{p}(s)-L_{p}(s)\le p'$ for any string $s$ equivalent to $u_{p',q'}$.
This is a contradiction unless $p=p'$. Similarly one shows that $q=q'$.
See Figure 
\end{proof}
\begin{figure}
\begin{centering}
\includegraphics[scale=0.4]{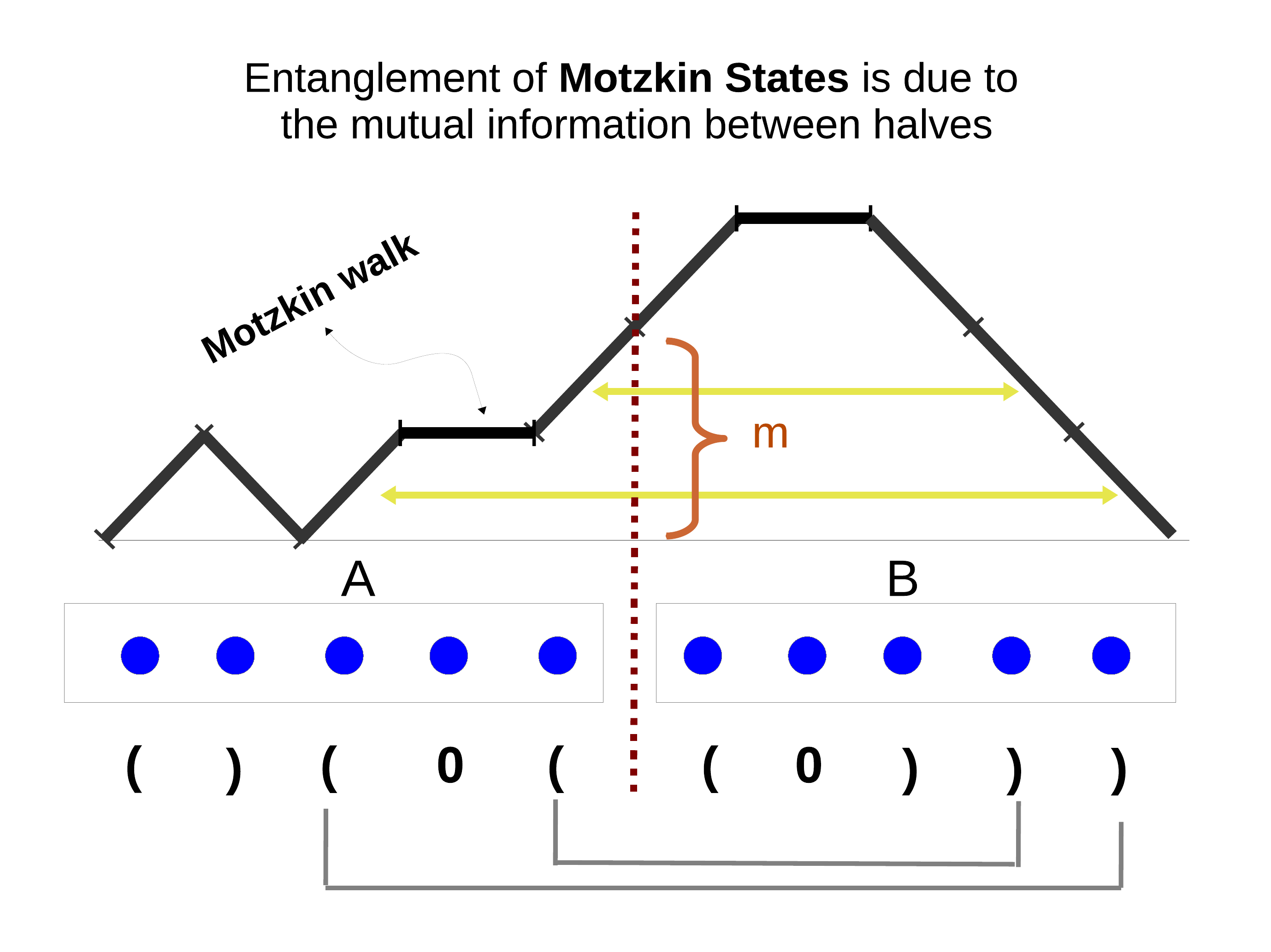}
\par\end{centering}

\caption{A state in the ground state for $d=3$ example. The high amount of
entanglement is due to the high mutual information between the two
halves of the chain.}

\end{figure}

It follows that the set of all strings $\Sigma^{n}$ is a disjoint
union of the equivalence classes $[u_{p,q}]$. We shall now introduce
projectors $Q$ that ``implement'' the local moves Eq. (\ref{eq:moves})
and a frustration-free Hamiltonian

\begin{equation}
H^{\mbox{prop}}=\sum_{j=1}^{n-1}Q_{j,j+1}\label{eq:Hprop}
\end{equation}
such that ground states of $H^{\mbox{prop}}$ are

\begin{equation}
|\psi_{p,q}\rangle=\sum_{s\sim u_{p,q}}|s\rangle\label{eq:psipq}
\end{equation}

Define quantum states $|A\rangle,|B\rangle,|C\rangle\in\mathbb{C}^{3}\otimes\mathbb{C}^{3}$
as 
\[
|A\rangle=|00\rangle-|lr\rangle,
\]
 
\[
|B\rangle=|0r\rangle-|r0\rangle,
\]
 and 
\[
|C\rangle=|0l\rangle-|l0\rangle.
\]
Define a projector

\[
Q=\frac{1}{2}\left(|A\rangle\langle A|+|B\rangle\langle B|+|C\rangle\langle C|\right)
\]
If a state $|\psi\rangle$ obeys $Q_{j,j+1}\,|\psi\rangle=0$ for
all $1\le j\le n-1$ then $\langle s|\psi\rangle=\langle s'|\psi\rangle$
for any pair of equivalent strings $s,s'$. Hence $H^{\mbox{prop}}$
is indeed frustration free and its ground subspace is spanned by the
states $\psi_{p,q}$.

How can we exclude the unwanted ground states $\psi_{p,q}$ with $p\ne0$
and/or $q\ne0$ ? The key observation is that the equivalence class
$\mathcal{B}_{n}=[u_{0,0}]$ is the only class in which every string
$s$ satisfies $s_{1}\ne r$ and $s_{n}\ne l$. Hence we can modify
our Hamiltonian as 

\[
H=H^{\mbox{prop}}+|r\rangle\langle r|_{1}+|l\rangle\langle l|_{n}.
\]
Now $H$ is a frustration-free Hamiltonian with the unique ground
state $|\psi_{0}\rangle$.

Let us now show that the Schmidt rank of $|\psi_{n}\rangle$ grows
linearly with $n$. Consider a bipartition $\{1,\ldots,n\}=AB$ where
$A$ and $B$ is the left and the right halves of the chain (we assume
for simplicity that $n$ is even). For any string $s\in\Sigma^{n}$
let $s_{A}$ and $s_{B}$ be the restrictions of $s$ onto $A$ and
$B$. If $s$ is a balanced string, one must have $s_{A}\sim u_{0,p}$
and $s_{B}\sim u_{p,0}$ for some $0\le p\le n/2$, since each unbalanced
left bracket in $A$ must have a matching unbalanced right bracket
in $B$. It follows that the Schmidt decomposition of $\psi_{0}$
can be written (ignoring the normalization) as

\[
|\psi_{0}\rangle=\sum_{p=0}^{n/2}|\psi_{0,p}\rangle_{A}\otimes|\psi_{p,0}\rangle_{B},
\]
where the states $\psi_{p,q}$ are defined in Eq.~(\ref{eq:psipq}).
Therefore, the reduced density matrix of $A$ has rank $\chi=1+n/2$.

Numerical simulation shows that the entanglement entropy of $A$ grows
logarithmically, $S(A)\approx(1/2)\log_{2}(n)$, while the spectral
gap of $H$ decays polynomially, $\Delta\sim1/n^{3}$ (see next chapter).

\subsection{Entanglement entropy }

To have a balanced string, we need to have an even number of slots
available for $r$'s and $l$'s after $0$'s have been removed. First,
recall that the Catalan numbers are 
\begin{eqnarray*}
C_{k} & = & \frac{1}{k+1}\left(\begin{array}{c}
2k\\
k
\end{array}\right),\quad k=0,1,\cdots\\
 & = & \left\{ 1,1,2,5,14,\cdots\right\} ,
\end{eqnarray*}
which among many other things, count the number of ways one can deposite
and withdraw a dollar a day such that after $k$ days one starts and
ends with zero dollars without ever going negative. It is immediate
to see that a string that has $2n-2k$ zeros has $w_{k}$ number of
configurations

\begin{eqnarray}
w_{k} & = & C_{k}\left(\begin{array}{c}
2n\\
2k
\end{array}\right)=\frac{1}{k+1}\left(\begin{array}{c}
2k\\
k
\end{array}\right)\left(\begin{array}{c}
2n\\
2k
\end{array}\right),\mbox{ with}\label{eq:qutrit}\\
k & = & \left\{ 1,\cdots,n\right\} \nonumber 
\end{eqnarray}
This count (with zeros taken into account) is also known as \textit{Motzkin
Numbers}. Let $Z_{2n}=\sum_{k=1}^{n}w_{k}$, the ground state reads

\[
|\psi_{g}\rangle=\frac{1}{\sqrt{Z_{2n}}}\sum_{i=1}^{Z_{2n}}|i\rangle,
\]
where $i$ is any of the walks that start and end with height zero
without ever going negative on $2n$ qutrits allowing only $\nearrow$,
$\searrow$, $\longrightarrow$ moves. 

In order to calculate the entanglement entropy, we need to consider
the correlation between the left and the right when a cut is made
at some arbitrary bond. For simplicity we put the cut in the middle,
i.e., at $n$. Let us define $\mathcal{L}_{m}$ to be the space of
all states with $m$ excess left parentheses on the first $n$ qutrits;
moreover $\mathcal{R}_{m}$ would be the space of all states with
$m$ excess right parentheses on the remaining $n$ qutrits: 

\begin{eqnarray*}
\mathcal{L}_{m} & = & \left\{ |s_{m}\rangle_{1\cdots n}|s_{m}\in\left\{ 0,r,l\right\} ^{n}\mbox{ with }m\mbox{ excess }l\mbox{'s}\right\} \\
\mathcal{R}_{m} & = & \left\{ |s_{m}\rangle_{n+1\cdots2n}|s_{m}\in\left\{ 0,r,l\right\} ^{n}\mbox{ with }m\mbox{ excess }r\mbox{'s}\right\} 
\end{eqnarray*}

We want the states of the form

\[
|\psi\rangle=\frac{1}{\sqrt{N}}\sum_{m=0}^{n}\left(\sum_{|x\rangle\in\mathcal{L}_{m}}|x\rangle\right)\left(\sum_{|y\rangle\in\mathcal{R}_{m}}|y\rangle\right).
\]
Tracing over the right $n$ qutrits we have

\[
\mbox{Tr}_{R}|\psi\rangle\langle\psi|=\frac{1}{N}\sum_{m=0}^{n}M_{m,n}\left(\sum_{|x\rangle\in\mathcal{L}_{m}}|x\rangle\right)\left(\sum_{|x\rangle\in\mathcal{L}_{m}}\langle x|\right).
\]

Recalling that we can have $0$ or $l$ and $r$, the number of states
in $\mathcal{L}_{m}$ denoted by $M_{m}$ become 

\begin{eqnarray}
M_{m,n} & = & \sum_{k=0}^{n-m}\left(\begin{array}{c}
n\\
k
\end{array}\right)\left(\mbox{\# nn walks of height \ensuremath{m\;}on remaining \ensuremath{\left(\ensuremath{n-k}\right)} qutrits}\right)\label{eq:MiWord}\\
N & = & \sum_{m=0}^{n}M_{m,n}^{2}.
\end{eqnarray}
where nn means none-negative. The number of walks such that we start
with zero and end with zero without going negative is given by the
Catalan numbers. The same problem but ending with a positive height
is given by a theorem, originally due to D. André (1887), the so called
Ballot problem \cite[p. 8]{Naryana}:
\begin{thm*}
(D. André 1887) Let $a,b$ be integers satisfying $1\leq b\leq a$.
The number of lattice paths $\mathcal{N}\left(p\right)$ joining the
origina $O$ to the point $\left(a,b\right)$ and not touching the
diagonal $x=y$ except at O is given by 

\[
\mathcal{N}\left(p\right)=\frac{a-b}{a+b}\left(\begin{array}{c}
a+b\\
b
\end{array}\right)
\]
 In other words, given a ballot at the end of which candidates $P$,
$Q$ obtain $a$, $b$ votes respectively, the probability that $P$
leads $Q$ throughout the counting of votes is $\frac{a-b}{a+b}$
.
\end{thm*}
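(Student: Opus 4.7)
The plan is to prove the theorem via the reflection principle, which is the classical approach due to André himself. First, I would set up the counting framework: lattice paths from $O=(0,0)$ to $(a,b)$ using unit east and north steps number $\binom{a+b}{b}$ in total. A path ``not touching the diagonal $x=y$ except at $O$'' must leave $O$ via an east step (since the first north step would land at $(0,1)$, which is on the wrong side if we insist on $P$ leading) and thereafter remain strictly in the region $x>y$. So the problem reduces to counting lattice paths from $(1,0)$ to $(a,b)$ that avoid the diagonal entirely.

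Next, I would introduce the reflection bijection. Let $\mathcal{A}$ denote the set of all monotone lattice paths from $(1,0)$ to $(a,b)$ and let $\mathcal{B}\subset\mathcal{A}$ be the ``bad'' subset of those paths that do touch the diagonal at some point after leaving $(1,0)$. For any bad path, let $T$ be the first lattice point on $x=y$ that it visits; I reflect the initial segment running from $(1,0)$ to $T$ across the diagonal, leaving the tail from $T$ to $(a,b)$ untouched. The reflected segment is a legitimate monotone path from $(0,1)$ to $T$, so concatenation produces a path from $(0,1)$ to $(a,b)$. Conversely, any monotone path from $(0,1)$ to $(a,b)$ must cross the diagonal at some first lattice point (starting strictly above it and ending at $(a,b)$ with $a\geq b$, so on or below), and reflecting its initial segment up to that first crossing returns a bad path starting at $(1,0)$. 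These two operations are mutually inverse, establishing the bijection $\mathcal{B}\leftrightarrow\{\text{paths }(0,1)\to(a,b)\}$.

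The count then follows by subtraction and a short algebraic simplification:
\[
\mathcal{N}(p)=|\mathcal{A}|-|\mathcal{B}|=\binom{a+b-1}{b}-\binom{a+b-1}{b-1}=\frac{a-b}{a+b}\binom{a+b}{b},
\]
and the ballot-theoretic probability interpretation is immediate on dividing by the total number of ballot orderings $\binom{a+b}{b}$.

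The main obstacle will be making the bijection airtight, in particular verifying that the reflection map is well-defined in both directions: one must confirm that every path from $(0,1)$ to $(a,b)$ genuinely hits the diagonal (so that $T$ exists for the inverse map), and that the reflected initial segment together with the original tail really produces a monotone lattice path without repeated or inconsistent steps at the junction $T$. These checks are routine but need care. If the reflection argument feels too terse, a fallback would be the cycle lemma of Dvoretzky--Motzkin or an induction on $a+b$ via Pascal-type recursions, but André's reflection is the cleanest and most self-contained route and is what I would write up.
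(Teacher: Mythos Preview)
Your proposal is correct and is precisely Andr\'e's reflection method. The paper does not actually prove this theorem where it is stated; it is quoted from Narayana's book and used as a black box. That said, the paper does prove an equivalent formulation later (Lemma~\ref{lemma:reflection}, ``Andr\'e's reflection method,'' in the Supplementary Material) via the same idea, but implemented on bracket strings rather than lattice paths: there the authors take any string $s\notin\calD_{n,k}$, locate the \emph{first unmatched right bracket}, and reflect (swap $l\leftrightarrow r$) the \emph{tail} after that position, yielding a bijection with strings having one fewer $l$ and one more $r$; subtraction of binomials then gives the count. Your version reflects the \emph{initial} segment of a lattice path up to the first diagonal touch instead. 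These are the two standard dual presentations of the reflection bijection and are equivalent; either is perfectly acceptable here.
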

First note that $a=b+1$ gives the Catalan numbers. For us $a+b=m-k+1$
and $a-b=m+1$. Using this in Eq. \ref{eq:MiWord}$ $ and taking
care of the parity

\begin{eqnarray}
M_{n,m} & = & \sum_{k=0}^{n-m}\frac{m+1}{n-k+1}\left(\begin{array}{c}
n\\
k
\end{array}\right)\left(\begin{array}{c}
n-k+1\\
\frac{1}{2}\left(n-k-m\right)
\end{array}\right)\nonumber \\
 & \underset{=}{k\rightarrow n-m-2i} & \sum_{i=0}^{\left(n-m\right)/2}\binom{n}{2i+m}\cdot\left\{ \binom{2i+m}{i}-\binom{2i+m}{i-1}\right\} \label{eq:Mm_d=00003D3_start}\\
 & = & \sum_{i\geq0}\binom{n}{2i+m}\cdot\left\{ \binom{2i+m}{i}-\binom{2i+m}{i-1}\right\} \\
 & = & n!\left(m+1\right)\sum_{i\geq0}\frac{1}{\left(i+m+1\right)!i!\left(n-2i-m\right)!}\label{eq:Mm_d=00003D3}\\
\end{eqnarray}

To check Eq. \ref{eq:Mm_d=00003D3_start}, we see that $M_{n}=1$,
corresponding to all left parentheses and 

\[
M_{0}=\sum_{k=0}^{n}\frac{1}{n-k+1}\left(\begin{array}{c}
n\\
k
\end{array}\right)\left(\begin{array}{c}
n-k+1\\
\frac{1}{2}\left(n-k\right)
\end{array}\right)=\sum_{k=0}^{n}\left(\begin{array}{c}
n\\
k
\end{array}\right)C_{\frac{1}{2}\left(n-k\right)}
\]
as expected. This count is also known as \textit{Motzkin triangles}
\cite[p. 4]{motzkin_tri}. Consequently the Schmidt numbers become
$p_{m}\equiv\frac{M_{m}^{2}}{N}$ and the entanglement entropy is
$H\left(n\right)=-\sum_{m=0}^{n}p_{m}\log_{2}p_{m}$. Before making
approximations, using Maple one can express

\[
M_{m,n}=\frac{n!\left(m+1\right)\mbox{ }_{2}F_{1}\left([-\frac{1}{2}\left(n-m\right),-\frac{1}{2}\left(n-m-1\right)],[m+2],4\right)}{\Gamma\left(m+1\right)\Gamma\left(n-m+1\right)}
\]
where $\mbox{}_{2}F_{1}$ denotes Hypergeometric function. 

For simplicity we have made the cut in the middle and will confine
to this restriction below. However, more generally, one can place
the cut at site $1\le h\le2n-1$ then 

\[
|\psi\rangle=\frac{1}{\sqrt{N}}\sum_{m=0}^{\min\left(h,2n-h\right)}\left(\sum_{|x\rangle\in\mathcal{L}_{m}}|x\rangle\right)\left(\sum_{|y\rangle\in\mathcal{R}_{m}}|y\rangle\right)
\]
where

\begin{eqnarray*}
\mathcal{L}_{m} & = & \left\{ |s_{m}\rangle_{1\cdots h}|s_{m}\in\left\{ 0,r,l\right\} ^{h}\mbox{ with }m\mbox{ excess }l\mbox{'s}\right\} \\
\mathcal{R}_{m} & = & \left\{ |s_{m}\rangle_{h+1\cdots2n}|s_{m}\in\left\{ 0,r,l\right\} ^{2n-h}\mbox{ with }m\mbox{ excess }r\mbox{'s}\right\} 
\end{eqnarray*}
and Schmidt numbers would become $p_{m}=\frac{M_{h,m}M_{2n-h,m}}{N}$,
where $M_{h,m}$ is defined as above but of height $m$ on $h$ sites,
similarly for $M_{2n-h,m}$. The normalization being $N=\sum_{m=0}^{\min\left(h,2n-h\right)}M_{h,m}M_{2n-h,m}$.

Now we analyze the sum given by Eq. \ref{eq:Mm_d=00003D3} carefully,
\[
M_{n,m}=(m+1)\sum_{i\ge0}\frac{n!}{(i+m+1)!i!(n-2i-m)!}
\]
First, though, let's do a little calculation. We will analyze a trinomial
coefficient, where $x+y+z=0$.

We first use Stirling's formula to get 
\begin{eqnarray*}
{n \choose \frac{n}{3}+x\ \frac{n}{3}+y\ \frac{n}{3}+z} & = & \left(\frac{54\pi n}{8\pi^{3}(n+3x)(n+3y)(n+3z)}\right)^{1/2}\cdot\\
 &  & \left(\frac{n}{n+3x}\right)^{n/3+x}\left(\frac{n}{n+3y}\right)^{n/3+y}\left(\frac{n}{n+3z}\right)^{n/3+z}3^{n}.
\end{eqnarray*}

Let's expand this by saying 
\begin{eqnarray*}
\left(\frac{n}{n+3x}\right)^{n/3+x} & = & \exp\left(-\left(\frac{n}{3}+x\right)\ln\left(1+3\frac{x}{n}\right)\right)\\
 & \approx & \exp\left(-\frac{n}{3}\left(3\frac{x}{n}-\frac{1}{2}\cdot9\frac{x^{2}}{n^{2}}\right)-3\frac{x^{2}}{n}\right)\\
 & = & \exp\left(-x-\frac{3}{2}\frac{x^{2}}{n}\right).
\end{eqnarray*}
 Thus, since $x+y+z=0$, we get 
\[
{n \choose \frac{n}{3}+x\ \ \frac{n}{3}+y\ \ \frac{n}{3}+z}\approx\frac{3\sqrt{3}}{2\pi}\sqrt{\frac{n}{(n+3x)(n+3y)(n+3z)}}\exp\left(-\frac{3}{2}\frac{x^{2}+y^{2}+z^{2}}{n}\right)3^{n}.
\]

Now, we take the formula for $M_{n}$, let $m=\alpha\sqrt{n}$ and
$i=\frac{n}{3}+\beta\sqrt{n}$, and use this approximation. The term
inside the square root is approximately $1/n^{2}$, so we make this
substitution to get 
\begin{eqnarray*}
M_{n,m,i} & = & \frac{(m+1)}{n+1}{n+1 \choose i+m+1\ \ i\ \ n-2i-m}\\
 & \approx & \frac{3\sqrt{3}}{2\pi n}\frac{\alpha\sqrt{n}}{n}\exp\left[\frac{-3}{2}\left(\left(\alpha+\beta\right)^{2}+\beta^{2}+\left(\alpha+2\beta\right)^{2}\right)\right]3^{n+1}\\
 & = & \frac{3\sqrt{3}}{2\pi n^{3/2}}\alpha\exp\left[\frac{-3}{2}\left(2\alpha^{2}+6\alpha\beta+6\beta^{2}\right)\right]3^{n+1}\\
 & = & \frac{3\sqrt{3}}{2\pi n^{3/2}}3^{n+1}\alpha\exp\left(-3\alpha^{2}-9\alpha\beta-9\beta^{2}\right).
\end{eqnarray*}

We need to evaluate the sum of $M_{n,m,i}$ from $i=0$ to $i=n$.
We approximate this by integrating over $i$. Since we have $i=\frac{n}{3}+\beta\sqrt{n}$,
we get $di=\sqrt{n}d\beta$, Since the maximum is near $i=\frac{n}{3}$,
we can turn this sum into an integral from $-\infty$ to $\infty$.
The integral we need to evaluate is thus 
\begin{eqnarray*}
M_{n,m} & \approx & \frac{3\sqrt{3}}{2\pi n^{3/2}}3^{n+1}\alpha\int_{-\infty}^{\infty}\exp\left(-3\alpha^{2}-9\alpha\beta-9\beta^{2}\right)\sqrt{n}d\beta\\
 & = & \frac{3\sqrt{3}}{2\pi n}3^{n+1}\alpha\int_{-\infty}^{\infty}\exp\left(-9\left(\beta-\alpha/2\right)^{2}-\frac{3}{4}\alpha^{2}\right)\sqrt{n}d\beta\\
 & = & \frac{\sqrt{3}}{2\sqrt{\pi}n}3^{n+1}\alpha\exp\left(-\frac{3}{4}\alpha^{2}\right).
\end{eqnarray*}
 This is maximized when 
\[
\frac{d}{d\alpha}\alpha\exp\left(-\frac{3}{4}\alpha^{2}\right)=0,
\]
 or $\alpha=\sqrt{2/3}$.

Now, we need to figure out the entropy of the probability distribution
proportional to $M_{n,m}^{2}$. Recalling that $m=\alpha\sqrt{n}$,
and noticing that the normalization factor cancels, the entropy is
\[
H(\{M_{n,m}^{2}\})\approx-\frac{1}{T}\sum_{m=0}^{n}\frac{m^{2}}{n}\exp\left(-\frac{3}{2}\frac{m^{2}}{n}\right)\log\left[\frac{1}{T}\frac{m^{2}}{n}\exp\left(-\frac{3}{2}\frac{m^{2}}{n}\right)\right]
\]
 where 
\[
T=\sum_{m=0}^{n}\frac{m^{2}}{n}\exp\left(-\frac{3}{2}\frac{m^{2}}{n}\right).
\]
 We can approximate the sum with an integral. Since the integrand
goes to 0 rapidly, we can extend the upper limit of integration to
$\infty$, getting 
\[
H(\{M_{n,m}^{2}\})\approx-\frac{1}{T'}\int_{0}^{\infty}\frac{m^{2}}{n}\exp\left(-\frac{3}{2}\frac{m^{2}}{n}\right)\log\left[\frac{1}{T'}\frac{m^{2}}{n}\exp\left(-\frac{3}{2}\frac{m^{2}}{n}\right)\right]dm,
\]
 where 
\[
T'=\int_{m=0}^{\infty}\frac{m^{2}}{n}\exp\left(-\frac{3}{2}\frac{m^{2}}{n}\right)dm.
\]
 Now, we can replace $m/\sqrt{n}$ by $\alpha$ again. This gives
\[
H(\{M_{n,m}^{2}\})\approx\log\sqrt{n}-\frac{1}{T''}\int_{0}^{\infty}\alpha^{2}\exp(-\frac{3}{2}\alpha^{2})\log\left[\frac{1}{T''}\alpha^{2}\exp\left(-\frac{3}{2}\alpha^{2}\right)\right]d\alpha,
\]
 where 
\[
T''=\int_{0}^{\infty}\alpha^{2}\exp(-\frac{3}{2}\alpha^{2})d\alpha.
\]
 Here the extra $\log\sqrt{n}$ term comes from the fact that $T'=\sqrt{n}T''$.
The integral is just a constant, so we can evaluate it to get 
\begin{eqnarray}
H(\{M_{n,m}^{2}\}) & \approx & \frac{1}{2}\log n-\frac{1}{2}+\gamma+\frac{1}{2}(\log2+\log\pi-\log3)\mathrm{\ \ nats}\label{eq:qutrit_theory}\\
 & \approx & \frac{1}{2}\log_{2}n+0.64466547\mathrm{\ \ bits},\nonumber 
\end{eqnarray}
 where $\gamma$ is Euler's constant. In Figure \ref{fig:Hd=00003D3}
we compare Eq. \ref{eq:qutrit_theory} with numerical evaluation of
Eq. \ref{eq:Mm_d=00003D3_start}.

\begin{figure}[H]
\begin{centering}
\includegraphics[scale=0.45]{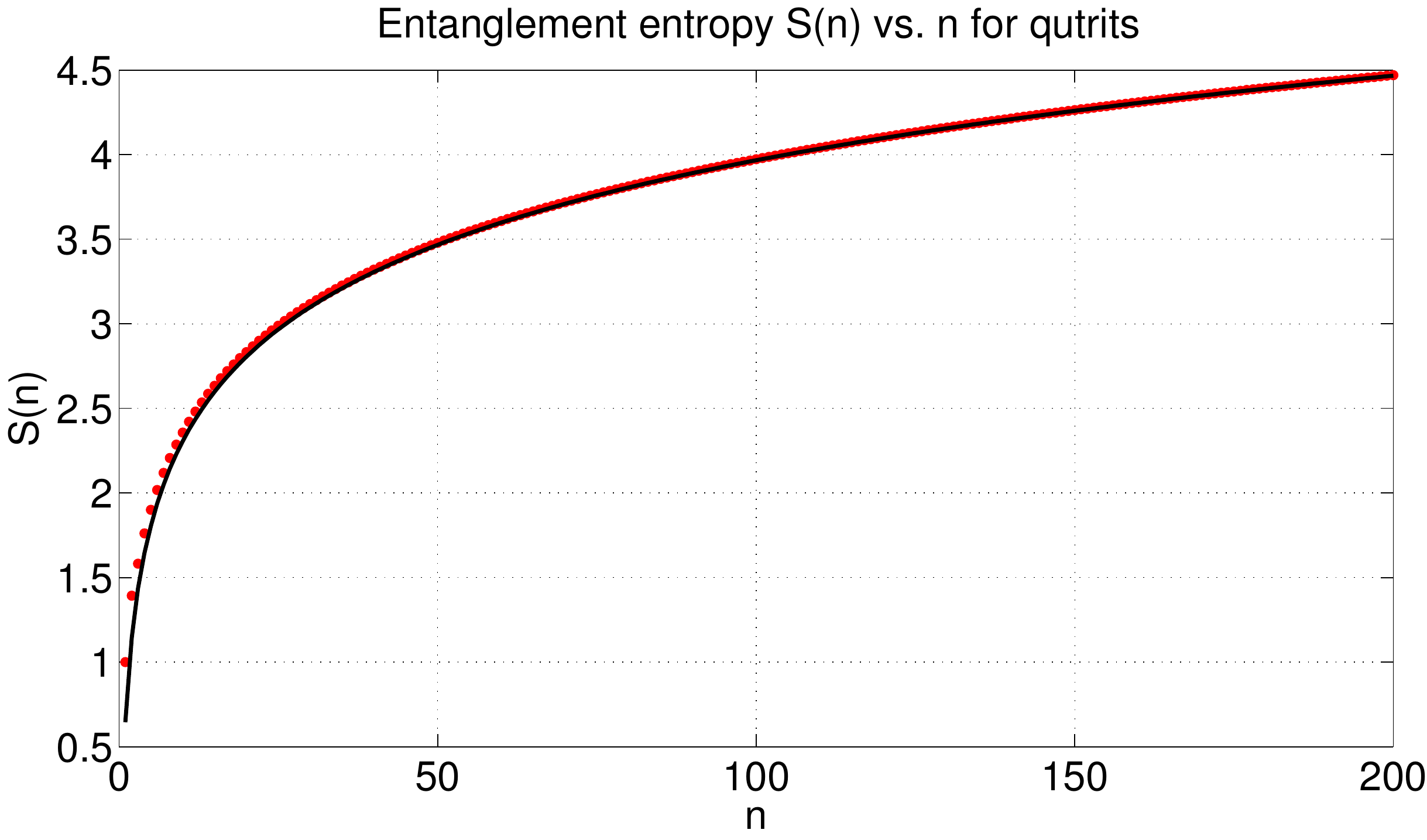}
\par\end{centering}

\centering{}\includegraphics[scale=0.48]{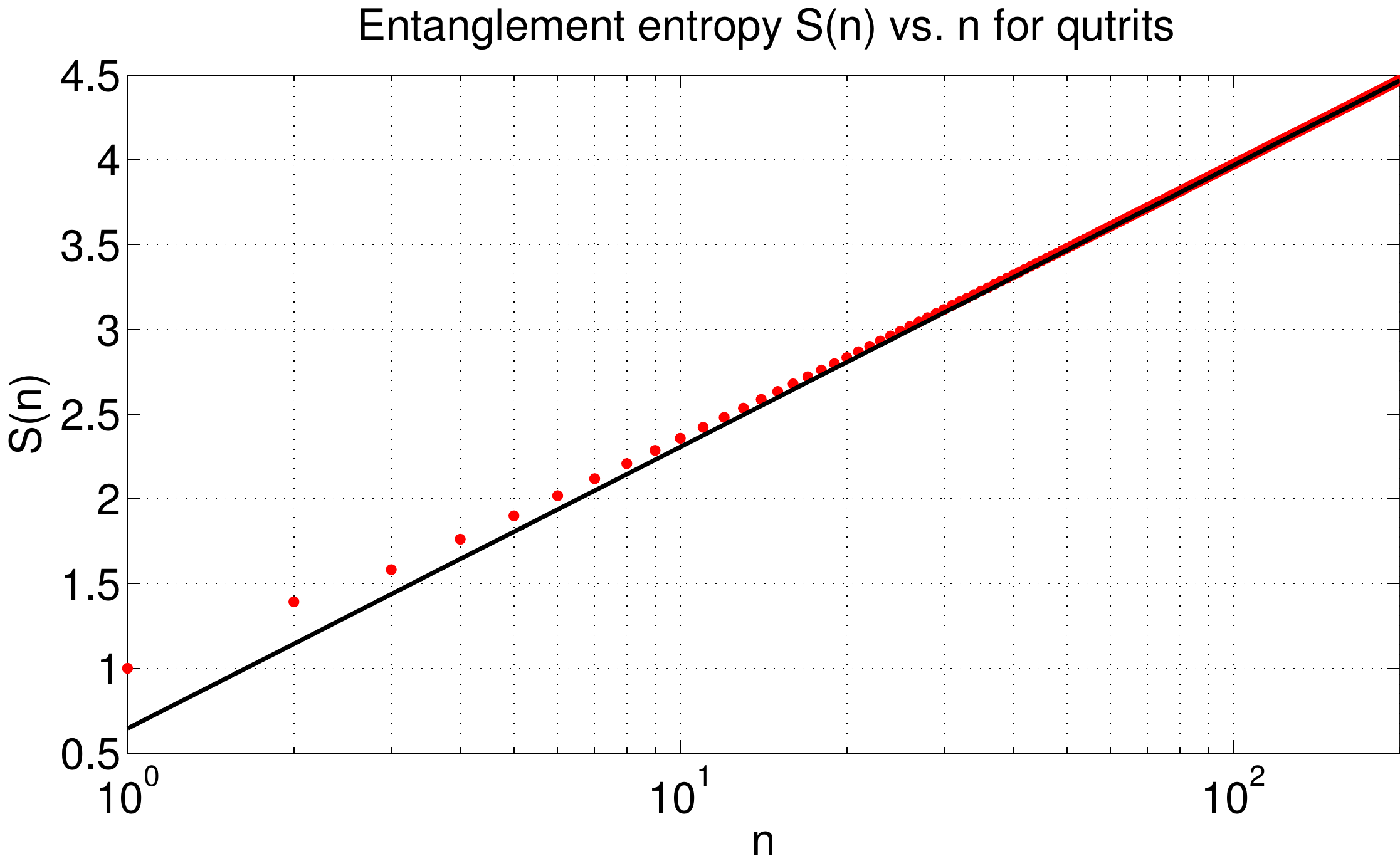}\caption{\label{fig:Hd=00003D3}Entanglement entropy for the case of qutrits}
\end{figure}

\section{Quantum 2-SAT for Mirror Symmetric States $(d=4)$}

We describe an example of a quantum $2$-SAT on a chain of $2n$ four-dimensional
($d=4$) qudits which has a unique satisfying state with the Schmidt
rank growing exponentially with $n$ \cite{Bravyi_Notes}.

Let $A$ and $B$ be the left and the right halves of the chain containing
$n$ qudits each, that is, 
\[
A=\{1,2,\ldots,n\}\quad\mbox{and}\quad B=\{n+1,n+2,\ldots,2n\}.
\]
 Basis states of each qudit will be labeled using the alphabet 
\[
\Sigma=\{0,\alpha,\beta,\gamma\}.
\]
 Let us first informally describe the idea behind the construction.
The letters $\alpha,\beta,\gamma$ represent three `particle types'
while the state $0$ represents the `vacuum'. The particles can propagate
freely through the vacuum, although they cannot pass through each
other. Furthermore, the boundary between $A$ and $B$ is impenetrable
for $\alpha$ and $\beta$ particles, while $\gamma$ particles can
propagate freely across the boundary. Let us first describe the role
of $\alpha$ and $\beta$ particles. The only place where $\alpha,\beta$
can be created or annihilated is the boundary between $A$ and $B$.
Specifically, one can create/annihilate pairs $\alpha\alpha$ or $\beta\beta$
from the vacuum at qudits $(n,n+1)$. This will create a `gas' of
$\alpha$ and $\beta$ particles such that the gas contained in $A$
is the `mirror image' of the gas contained in $B$ if one ignores
all zeroes. For example, $\alpha\alpha000\beta:0\beta\alpha0\alpha0$
represents an admissible gas of particles for $n=6$ (here : represents
the boundary between $A$ and $B$). All possible admissible configurations
of the gas will appear in superposition in the ground state. The mirror
symmetry between $A$ and $B$ will be responsible for the exponentially
large Schmidt rank. To maintain the mirror symmetry will will forbid
pairs $\alpha\beta$ and $\beta\alpha$ on the boundary between $A$
and $B$. This however is not sufficient by itself because it does
not guarantee that $A$ and $B$ contain the same number of particles.
For example, a string $000:\alpha\beta\alpha$ does not have forbidden
pairs on the boundary but it cannot propagate to any other string.
Such strings could give rise to unwanted ground states with no entanglement
between $A$ and $B$. This is where $\gamma$ particles come to play.
The rules for creation/annihilation of $\gamma$ particles are as
follows: 
\begin{itemize}
\item Every $\alpha$ or $\beta$ particle located in $A$ can emit/absorb
$\gamma$ particle on its \textit{right}, that is, $\alpha0\leftrightarrow\alpha\gamma$
and $\beta0\leftrightarrow\beta\gamma$. 
\item Every $\alpha$ or $\beta$ particle located in $B$ can emit/absorb
$\gamma$ particle on its \textit{left}, that is, $0\alpha\leftrightarrow\gamma\alpha$
and $0\beta\leftrightarrow\gamma\beta$. 
\item $\gamma$ particles are forbidden at qudits $1$ and $2n$ 
\end{itemize}
Note that $\gamma$-particles cannot tell the difference between $\alpha$
and $\beta$ particles, so they cannot maintain the mirror symmetry
between $A$ and $B$ by themselves. The purpose of $\gamma$-particles
is to ensure that the total number of $\alpha$ and $\beta$ particles
is the same in $A$ and $B$. In the above example a string $000:\alpha\beta\alpha$
can now propagate to a forbidden string: the leftmost $\alpha$-particle
in $B$ emits $\gamma$-particle obtaining $00\gamma:\alpha\beta\alpha$
which now can propagate to a forbidden string $\gamma00:\alpha\beta\alpha$
since $\gamma$-particles can move freely through the vacuum. On the
other hand, a balanced string like $00\alpha0:00\alpha0$ cannot propagate
to a forbidden string since $\gamma$-particles are confined to the
interval between the two $\alpha$ particles. We will prove below
(see Lemma~\ref{lemma:good}) that the only strings that cannot propagate
to a forbidden string are those representing a gas of $\alpha,\beta$
particles where the intervals between adjacent particles may be filled
by $0$'s and $\gamma$'s and the gas contained in $A$ is the mirror
image of the gas contained in $B$ (if one ignores all $0$'s and
all $\gamma$'s).

Let us now describe this construction more formally. We shall say
that a pair of strings $s,t\in\Sigma^{2n}$ is equivalent, $s\sim t$,
iff one can obtain $s$ from $t$ by a sequence of local moves listed
below. These moves can be applied to some pairs of consecutive qudits
$(j,j+1)$. We say that the pair is inside $A$ iff $1\le j\le n-1$.
We say that the pair is inside $B$ iff $n+1\le j\le2n-1$. We say
that the pair is on the boundary iff $j=n$.

\noindent \begin{center}
\textbf{}%
\parbox[t]{12cm}{%
\textbf{Move 1:} $\quad0\alpha\leftrightarrow\alpha0$, $0\beta\leftrightarrow\beta0$,
$0\gamma\leftrightarrow\gamma0$ (inside $A$ or inside $B$)%
}\textbf{ }\\
\textbf{}%
\parbox[t]{12cm}{%
\textbf{Move 2:} $\quad\alpha0\leftrightarrow\alpha\gamma$, $\beta0\leftrightarrow\beta\gamma$
(inside $A$ or on the boundary)%
}\textbf{ }\\
\textbf{}%
\parbox[t]{12cm}{%
\textbf{Move 3:} $\quad0\alpha\leftrightarrow\gamma\alpha$, $0\beta\leftrightarrow\gamma\beta$
(inside $B$ or on the boundary) %
}\textbf{ }\\
\textbf{}%
\parbox[t]{12cm}{%
\textbf{Move 4: }$\quad00\leftrightarrow\alpha\alpha$, $00\leftrightarrow\beta\beta$,
$0\gamma\leftrightarrow\gamma0$ (on the boundary) %
}\textbf{ }\\

\par\end{center}

In addition to these moves we shall impose several constraints:

\noindent \begin{center}
\textbf{}%
\parbox[t]{12cm}{%
\textbf{Constraint 1:} $\quad$Pairs $\alpha\beta$, $\beta\alpha$
are forbidden on the boundary.%
}\textbf{ }\\
\textbf{}%
\parbox[t]{12cm}{%
\textbf{Constraint 2:} $\quad$The first qudit of $A$ is not $\gamma$.%
}\textbf{ }\\
\textbf{}%
\parbox[t]{12cm}{%
\textbf{Constraint 3:} $\quad$The last qudit of $B$ is not $\gamma$.%
}\textbf{ }\\

\par\end{center}
\begin{defn}
A string $s\in\Sigma^{2n}$ is called good iff all strings in the
equivalence class of $s$ obey Constraints~1,2,3. Otherwise a string
$s$ is called bad. 
\end{defn}
Given a pair of strings $s,s'\in\{\alpha,\beta\}^{m}$, we shall say
that $s'$ is the mirror image of $s$ iff $s'_{i}=s_{m-i+1}$ for
all $i=1,\ldots,m$. For any string $s\in\Sigma^{2n}$ let us denote
$s_{A}$ and $s_{B}$ the restrictions of $s$ onto $A$ and $B$.
\begin{defn}
A string $s=(s_{A},s_{B})\in\Sigma^{2n}$ has mirror symmetry iff
after removing all zeroes and all $\gamma$'s the strings $s_{A}$
and $s_{B}$ become mirror images of each other. \end{defn}
\begin{lem}
\label{lemma:good} A string is good iff it has mirror symmetry, the
leftmost particle in $A$ (if any) is $\alpha$ or $\beta$, and the
rightmost particle in $B$ (if any) is $\alpha$ or $\beta$. Any
good string is equivalent to the all-zeroes string. \end{lem}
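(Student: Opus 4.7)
The plan is to reformulate the hypothesis of the lemma as a conjunction of three local invariants preserved by every elementary move, and then to exhibit a constructive reduction of any good string to $0^{2n}$. For any $s\in\Sigma^{2n}$ let $c^A(s)$ and $c^B(s)$ denote the sequences obtained from $s_A$ and $s_B$ by deleting every $0$ and every $\gamma$, and let $\hat{c}^B$ be the reverse of $c^B$. I would work with three invariants: (I1) $c^A = \hat{c}^B$ (mirror symmetry); (I2) every $\gamma$ in $A$ has an $\alpha$ or $\beta$ at a strictly smaller position in $A$; (I3) every $\gamma$ in $B$ has an $\alpha$ or $\beta$ at a strictly larger position in $B$. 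A short unpacking shows (I2) is equivalent to ``the leftmost non-zero letter in $s_A$, if any, lies in $\{\alpha,\beta\}$'', and similarly (I3) corresponds to the rightmost non-zero letter in $s_B$; hence the hypothesis of the lemma is exactly (I1)$\wedge$(I2)$\wedge$(I3).

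Next I would verify preservation of (I1)--(I3) under each move. Move~1 simply swaps a non-zero letter with a zero inside one side and leaves $c^A$, $c^B$, and every left/right witness intact. Moves~2 and~3 create or annihilate a $\gamma$ directly adjacent to its $\alpha/\beta$ witness on the correct side, so none of the three invariants change. For Move~4 of type $00\leftrightarrow xx$, the letter $x$ is appended to $c^A$ and prepended to $c^B$ simultaneously, so $c^A = \hat{c}^B$ is preserved; in the reverse direction one must check that the $\alpha$ (or $\beta$) at site $n$ being removed was not the unique left-witness of some $\gamma$ in $A$, which follows because any such $\gamma$ sits at some $q<n$ and already has an earlier witness by (I2). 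For Move~4 of type $0\gamma\leftrightarrow \gamma0$ across the boundary, (I1) together with the witness condition on the departing side supplies a witness on the arriving side (e.g.\ a $\gamma$ crossing from $B$ to $A$ needs an $\alpha/\beta$ in $A$ at a position $<n$, which exists because mirror symmetry forces a matching particle in $A$ and the boundary site is $0$).

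The forward direction ``good $\Rightarrow$ (I1)$\wedge$(I2)$\wedge$(I3)'' I would prove by contrapositive. If (I2) fails, fix the offending $\gamma$ in $A$; every site to its left in $A$ lies in $\{0,\gamma\}$, so processing the leftmost $\gamma$ first and repeatedly applying Move~1 slides it to site $1$, violating Constraint~2. Failure of (I3) is symmetric. If (I1) fails, use Move~1 to push the $\alpha/\beta$ particles in $A$ to the right end and those in $B$ to the left end, then repeatedly apply Move~4 in reverse to cancel matching pairs $xx$ at sites $(n,n+1)$; since $c^A\ne \hat{c}^B$, this process either eventually exhibits a pair $\alpha\beta$ or $\beta\alpha$ at the boundary (violating Constraint~1) or terminates with all surviving particles on one side only, say $A$. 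In the latter case, the last non-zero site of $A$ sits at position $n$, and Move~2 on $(n,n+1)$ produces a $\gamma$ at $n+1$ which Move~1 inside $B$ then slides to position $2n$, violating Constraint~3 (symmetrically for the other side).

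Finally, I would reduce any good string to $0^{2n}$ in three steps. (a) Absorb every $\gamma$ in $A$ by sliding it left past $0$'s, processing the leftmost $\gamma$ first so that no two $\gamma$'s obstruct one another; invariant (I2) guarantees an $\alpha/\beta$ witness is eventually reached, at which point Move~2 in reverse annihilates the $\gamma$. The symmetric procedure using Move~3 in reverse clears every $\gamma$ in $B$. (b) Use Move~1 inside each side to shuffle the remaining $\alpha/\beta$ particles to the boundary, reaching $0^{n-k}c_1\cdots c_k$ in $s_A$ and $d_1\cdots d_k 0^{n-k}$ in $s_B$, where the common length $k$ is given by (I1). (c) Repeatedly apply Move~4 in reverse to cancel $c_k d_1 = xx$ at sites $(n,n+1)$, which is legal because (I1) forces $c_{k-i+1}=d_i$ throughout; after the cancellation a further round of Move~1 repushes the shorter configurations to the boundary, and after $k$ iterations we arrive at $0^{2n}$. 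I expect the main subtlety to lie in step~(a), namely ordering the $\gamma$-absorptions so that Move~1 is never obstructed by an intervening $\gamma$ or $\alpha/\beta$; invariant (I2) together with the ``leftmost first'' ordering makes the book-keeping routine.
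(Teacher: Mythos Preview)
Your proof is correct and follows essentially the same route as the paper's: absorb the $\gamma$'s using the boundary-particle condition, compact the $\alpha/\beta$'s toward the center, and cancel matching pairs via Move~4. Your explicit invariant framework (I1)--(I3) is a somewhat cleaner organization of the converse direction than the paper's more ad hoc argument, but the underlying mechanics are identical.

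Two small points to tighten. First, your invariance check for Moves~2 and~3 (``adjacent to its $\alpha/\beta$ witness on the correct side'') only covers applications strictly inside $A$ or inside $B$. When Move~2 or Move~3 acts on the boundary pair $(n,n+1)$, the emitted $\gamma$ lands on the \emph{opposite} side from the emitting $\alpha/\beta$, so the emitter is not a valid (I2)- or (I3)-witness; you need exactly the (I1)-based argument you already gave for the $0\gamma\leftrightarrow\gamma0$ case of Move~4. Second, in your contrapositive step for $\neg$(I1), Move~1 alone cannot push $\alpha/\beta$ particles past intervening $\gamma$'s; you should first absorb all $\gamma$'s (legitimate since (I2) and (I3) hold in this branch), exactly as in your reduction step~(a), before compacting toward the boundary.
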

\begin{proof}
Let $s=(s_{A},s_{B})\in\Sigma^{2n}$ be any good string. It is clear
that the leftmost particle in $A$ cannot be $\gamma$ since otherwise
Move~1 would propagate $\gamma$ to the first qudit of $A$ violating
Constraint~2. By the same reason the rightmost particle in $B$ cannot
be $\gamma$. Let us show that $s$ has mirror symmetry. If both $s_{A},s_{B}$
are all-zeroes strings we are done, so let us assume that $s_{A}$
contains at least one non-zero. Suppose $s_{A}$ contains at least
one $\gamma$. Consider the left-most $\gamma$ and push it to the
left until it gets absorbed by $\alpha$ or $\beta$ (Move~2). Applying
this to every $\gamma$-particle in $s_{A}$ we can assume that $s_{A}$
contains only $0$, $\alpha$, and $\beta$. Applying Move~1 we can
transform $s_{A}$ to the following canonical form 
\[
s_{A}=(\underbrace{0\ldots0}_{n-m},x_{1},\ldots,x_{m})\quad\mbox{where \ensuremath{x=(x_{1},\ldots,x_{m})\in\{\alpha,\beta\}^{m}}}
\]
 for some $m>0$. If $s_{B}$ is all-zeroes string, we can apply Move~2
to the rightmost particle in $A$ (which is $x_{m}\in\{\alpha,\beta\}$)
to emit $\gamma$-particle, $x_{m}0\to x_{m}\gamma$. Propagating
this $\gamma$-particle to the last qudit of $B$ we violate Constraint~3.
It shows that $s_{B}$ must contain at least one $\alpha$ or $\beta$.
Using the same arguments as above, we can apply Moves~1,3 to transform
$s_{B}$ into canonical form 
\[
s_{B}=(y_{k},\ldots,y_{1},\underbrace{0\ldots0}_{n-k})\quad\mbox{where \ensuremath{y=(y_{k},\ldots,y_{1})\in\{\alpha,\beta\}^{k}}}
\]
 for some $k>0$. Using Move~4 and keeping in mind that $s$ satisfies
Constraint~1, we can consecutively annihilate all pairs $x_{i}y_{i}$
until we arrive at $s_{A}=0$ or $s_{B}=0$. However the same arguments
as above show that if $s_{A}=0$ then the sting $s_{B}$ cannot contain
$\alpha$ or $\beta$, that is, $s_{B}=0$ (and vice verse). Thus
we proved that any good string is equivalent to the all-zeroes string.
Since all moves used above preserve the mirror symmetry, we also proved
that any good string has mirror symmetry.

Conversely, suppose a string $s\in\Sigma^{2n}$ has mirror symmetry,
the leftmost particle in $A$ (if any) is $\alpha$ or $\beta$, and
the rightmost particle in $B$ (if any) is $\alpha$ or $\beta$.
Let $t$ be any string equivalent to $s$. We have to show that $t$
obeys Constraints~1,2,3. Since all Moves~1,2,3 preserve the order
of $\alpha$, $\beta$ particles in $A$ and $B$, it clear that $t$
satisfies Constraint~1. Since $\alpha$ or $\beta$ particles located
in $A$ can emit $\gamma$ particles only on their right (Move~2),
no $\gamma$ particle emitted in $A$ can violate Constraint~2. A
$\gamma$-particle emitted in $B$ or on the boundary (Move~3) can
violate Constraint~2 only if $B$ contains at least one $\alpha$
or $\beta$ particle while $A$ does not. However this contradicts
to the mirror symmetry. Thus $t$ satisfies Constraint 2. The same
arguments show that $t$ satisfies Constraint 3.
\end{proof}
Consider a state $|\psi_{2n}\rangle\in(\mathbb{C}^{4})^{\otimes2n}$
defined as the uniform superposition of all good strings, 
\[
|\psi_{2n}\rangle=\sum_{\substack{s\in\Sigma^{2n}\\
s\,\,\mathrm{is\,\, good}
}
}\;|s\rangle.
\]

\begin{lem}
The state $|\psi_{2n}\rangle$ considered as a bipartite state shared
by $A$ and $B$ has Schmidt rank 
\begin{equation}
\chi_{n}=2^{n+1}-1.\label{eq:rank}
\end{equation}
\end{lem}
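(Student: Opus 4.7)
The plan is to exhibit an explicit Schmidt decomposition of $|\psi_{2n}\rangle$ across the cut between $A$ and $B$ by classifying good strings according to the skeleton of $\alpha,\beta$ particles in $A$. For any string $s_A\in\Sigma^n$, define $\sigma(s_A)\in\{\alpha,\beta\}^{*}$ to be the word obtained from $s_A$ by deleting all occurrences of $0$ and $\gamma$; define $\sigma(s_B)$ analogously. By the characterization of good strings in Lemma~\ref{lemma:good}, a pair $(s_A,s_B)$ is good if and only if (i) the leftmost non-zero letter of $s_A$ (if any) lies in $\{\alpha,\beta\}$, (ii) the rightmost non-zero letter of $s_B$ (if any) lies in $\{\alpha,\beta\}$, and (iii) $\sigma(s_B)$ is the mirror image of $\sigma(s_A)$. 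Thus, the skeleton $x:=\sigma(s_A)\in\{\alpha,\beta\}^{m}$ with $0\le m\le n$ is a well-defined invariant partitioning the set of good strings.

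Next, for each admissible skeleton $x\in\{\alpha,\beta\}^{m}$, $0\le m\le n$, define the unnormalized vectors
\[
|L_x\rangle \;=\; \sum_{\substack{s_A\in\Sigma^n\\ \sigma(s_A)=x,\ s_A\text{ obeys (i)}}}|s_A\rangle, \qquad
|R_{\bar x}\rangle \;=\; \sum_{\substack{s_B\in\Sigma^n\\ \sigma(s_B)=\bar x,\ s_B\text{ obeys (ii)}}}|s_B\rangle,
\]
where $\bar x$ is the mirror image of $x$. Using the three conditions above, the defining sum for $|\psi_{2n}\rangle$ factors across the cut as
\[
|\psi_{2n}\rangle \;=\; \sum_{m=0}^{n}\ \sum_{x\in\{\alpha,\beta\}^{m}}\ |L_x\rangle\otimes|R_{\bar x}\rangle .
\]

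Now I would verify that this is (up to normalization) a Schmidt decomposition. Orthogonality is immediate: the vectors $\{|L_x\rangle\}$ have pairwise disjoint supports in the computational basis since the map $s_A\mapsto\sigma(s_A)$ is deterministic, so distinct $x$ give disjoint sets of basis strings. The same applies to $\{|R_{\bar x}\rangle\}$. Each $|L_x\rangle$ is moreover nonzero: one can always realize $x$ inside a length-$n$ string by padding with zeros (e.g.\ placing $x$ at the right end of $A$), and this padded string satisfies condition (i) vacuously when $m=0$ and with $x_1\in\{\alpha,\beta\}$ otherwise. The analogous statement holds for $|R_{\bar x}\rangle$. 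Hence the cross-terms are mutually orthogonal tensor products of nonzero vectors, and the Schmidt rank equals the number of admissible skeletons:
\[
\chi_{n}\;=\;\sum_{m=0}^{n}\bigl|\{\alpha,\beta\}^{m}\bigr|\;=\;\sum_{m=0}^{n}2^{m}\;=\;2^{n+1}-1 ,
\]
which is \eqref{eq:rank}. The main technical point to pin down is the clean ``skeleton'' factorization: one must check that condition (i) on $s_A$ depends only on $x$ (trivially when $m\ge 1$, since $x_1\in\{\alpha,\beta\}$ is the leftmost non-zero letter, and vacuously when $m=0$), so that the inner sum over good strings with fixed $x$ really does split as an unconstrained sum over $s_A$ with $\sigma(s_A)=x$ times an unconstrained sum over $s_B$ with $\sigma(s_B)=\bar x$; I expect this to be the only step that needs genuine care, and it follows directly from Lemma~\ref{lemma:good}.
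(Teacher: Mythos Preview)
Your proof is correct and essentially identical to the paper's: your $|L_x\rangle$ and $|R_{\bar x}\rangle$ are precisely the paper's $|A(m,x)\rangle$ and $|B(m,x)\rangle$, and both arrive at the count $\sum_{m=0}^{n}2^{m}=2^{n+1}-1$ in the same way.

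One small correction to your closing paragraph: condition (i) does \emph{not} depend only on the skeleton $x$. For $m\ge 1$, a string such as $s_A=\gamma\,0\,x_1\cdots x_m\,0\cdots 0$ has skeleton $x$ but its leftmost nonzero letter is $\gamma$, violating (i); and for $m=0$ condition (i) is not vacuous but forces $s_A=0^n$ (any $\gamma$ would be the leftmost particle). This slip is harmless, however, because you already built condition (i) into the definition of $|L_x\rangle$. The factorization you need holds not because (i) is automatic, but because Lemma~\ref{lemma:good} expresses ``good'' as the conjunction of the skeleton-match with two purely one-sided constraints, so the sum over good strings with fixed $x$ splits as a product of the constrained $A$-sum and the constrained $B$-sum.
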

\begin{proof}
Indeed, the Schmidt basis of $|\psi_{2n}\rangle$ can be easily constructed
using Lemma~\ref{lemma:good}. Choose any integer $m\in[0,n]$ and
any string $x\in\{\alpha,\beta\}^{m}$. Let $|A(m,x)\rangle$ be the
uniform superposition of all strings $s\in\Sigma^{n}$ of the form
\[
s=(Z_{0},x_{1},Z_{1},x_{2},Z_{2},\ldots,x_{m},Z_{m}),
\]
 where $Z_{0}$ is a string of zeroes, and $Z_{1},\ldots,Z_{m}$ are
arbitrary strings of zeroes and $\gamma$'s. Any of the strings $Z_{0},\ldots,Z_{m}$
can be empty. Similarly, let $|B(m,x)\rangle$ be the uniform superposition
of all strings $s\in\Sigma^{n}$ of the form 
\[
s=(Z_{m},x_{m},Z_{m-1},x_{m-1},\ldots,Z_{1},x_{1},Z_{0}),
\]
 where $Z_{0}$ is a string of zeroes, and $Z_{1},\ldots,Z_{m}$ are
arbitrary strings of zeroes and $\gamma$'s. Any of the strings $Z_{0},\ldots,Z_{m}$
can be empty. Using the characterization of good strings given by
Lemma~\ref{lemma:good} we conclude that 
\[
|\psi_{2n}\rangle=\sum_{m=0}^{n}\;\sum_{x\in\{\alpha,\beta\}^{m}}\;|A(m,x)\rangle\otimes|B(m,x)\rangle
\]
 is the Schmidt decomposition of $|\psi_{2n}\rangle$ (up to normalization
of the Schmidt basis vectors). It immediately implies Eq. (\ref{eq:rank}). 
\end{proof}
Since the set of good strings is specified by $2$-local moves and
constraints, we can specify the state $|\psi_{2n}\rangle$ by $2$-local
projectors acting on nearest-neighbor qudits. Define auxiliary states

\[
\begin{array}{c}
|M_{\alpha}\rangle\sim|0\alpha\rangle-|\alpha0\rangle\\
|M_{\beta}\rangle\sim|0\beta\rangle-|\beta0\rangle\\
|M_{\gamma}\rangle\sim|0\gamma\rangle-|\gamma0\rangle\\
|-\rangle\sim|0\rangle-|\gamma\rangle\\
|C_{\alpha}\rangle\sim|00\rangle-|\alpha\alpha\rangle\\
|C_{\beta}\rangle\sim|00\rangle-|\beta\beta\rangle
\end{array}
\]

We assume that all above states are normalized. Define a propagation
Hamiltonian $H^{prop,A}$ responsible for `implementing' Moves~1,2
for consecutive pairs of qudits inside $A$, namely

\begin{eqnarray*}
H^{\mbox{prop,}A} & = & |M_{\alpha}\rangle\langle M_{\alpha}|+|M_{\beta}\rangle\langle M_{\beta}|+|M_{\gamma}\rangle\langle M_{\gamma}|\\
 &  & +|\alpha\rangle\langle\alpha|\otimes|-\rangle\langle-|+|\beta\rangle\langle\beta|\otimes|-\rangle\langle-|
\end{eqnarray*}

Define a propagation Hamiltonian $H^{prop,B}$ responsible for `implementing'
Moves~1,3 for consecutive pairs of qudits inside $B$, namely

\begin{eqnarray*}
H^{\mbox{prop,}B} & = & |M_{\alpha}\rangle\langle M_{\alpha}|+|M_{\beta}\rangle\langle M_{\beta}|+|M_{\gamma}\rangle\langle M_{\gamma}|\\
 &  & +|-\rangle\langle-|\otimes|\alpha\rangle\langle\alpha|+|-\rangle\langle-|\otimes|\beta\rangle\langle\beta|
\end{eqnarray*}

Define a propagation Hamiltonian $H^{prop,AB}$ responsible for `implementing'
Moves~2,3,4 on the boundary, namely

\begin{eqnarray*}
H^{\mbox{prop,}AB} & = & |M_{\gamma}\rangle\langle M_{\gamma}|+|\alpha\rangle\langle\alpha|\otimes|-\rangle\langle-|+|\beta\rangle\langle\beta|\otimes|-\rangle\langle-|\\
 &  & +|-\rangle\langle-|\otimes|\alpha\rangle\langle\alpha|+|-\rangle\langle-|\otimes|\beta\rangle\langle\beta|+|C_{\alpha}\rangle\langle C_{\alpha}|+|C_{\beta}\rangle\langle C_{\beta}|
\end{eqnarray*}

This Hamiltonian acts on the pair of qudits $(n,n+1)$. Finally, define
Hamiltonians imposing Constraints~1,2,3, namely, 

\[
H^{\mbox{con,}A}=|\gamma\rangle\langle\gamma|_{1},\quad H^{\mbox{con,}B}=|\gamma\rangle\langle\gamma|_{2n},\quad H^{\mbox{con,}AB}=|\alpha\beta\rangle\langle\alpha\beta|+|\beta\alpha\rangle\langle\beta\alpha|.
\]

Here $H^{con,AB}$ acts on the pair of qudits $(n,n+1)$. 
\begin{lem}
The state $|\psi_{2n}\rangle$ is the unique state annihilated by
all the Hamiltonians $H^{prop,A}$, $H^{prop,B}$, $H^{prop,AB}$,
$H^{con,A}$, $H^{con,B}$, and $H^{con,AB}$. \end{lem}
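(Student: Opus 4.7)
The strategy is to exploit the fact that the total Hamiltonian $H = H^{\mathrm{prop},A} + H^{\mathrm{prop},B} + H^{\mathrm{prop},AB} + H^{\mathrm{con},A} + H^{\mathrm{con},B} + H^{\mathrm{con},AB}$ is a sum of (rank-one) projectors and hence positive semi-definite, so $\ker H$ equals the common kernel of all terms. I will first translate the condition ``$|\phi\rangle$ lies in the kernel of every term'' into two combinatorial requirements on the coefficients $\phi_s \equiv \langle s|\phi\rangle$ in the computational basis, and then invoke the characterization of good strings from Lemma~\ref{lemma:good} to conclude that these two requirements pin down $|\phi\rangle$ up to a single scalar.

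The first step is to observe that every propagation term is a rank-one projector onto a vector of the form $|u\rangle - |v\rangle$ (up to normalization) where $u,v \in \Sigma^{2n}$ are related by exactly one of the local moves~1--4 acting on the appropriate pair of sites. For instance, $|\alpha\rangle\langle\alpha| \otimes |-\rangle\langle -|$ is a projector onto $|\alpha\rangle \otimes (|0\rangle - |\gamma\rangle)/\sqrt{2}$, which annihilates $|\phi\rangle$ exactly when $\phi_{\ldots\alpha 0\ldots} = \phi_{\ldots\alpha\gamma\ldots}$, i.e.\ Move~2. Running through all six Hamiltonian blocks, one sees that their combined kernel condition is precisely
\begin{equation}
\phi_s = \phi_t \quad \text{whenever } s \sim t,
\end{equation}
i.e.\ $\phi$ is constant on each equivalence class under the moves. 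The second step is to observe that each constraint term is a diagonal rank-one projector onto a single forbidden basis state (or sum of such), so being in its kernel amounts to $\phi_s = 0$ whenever $s$ locally violates Constraint~1, 2, or 3.

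Combining these two requirements, I would argue as follows. Suppose $|\phi\rangle \in \ker H$ and fix any equivalence class $[s]$. If $[s]$ contains even one string $t$ that violates some constraint, the constraint projector forces $\phi_t = 0$, and then the propagation relations force $\phi_{s'} = \phi_t = 0$ for every $s' \in [s]$; by the definition of ``bad,'' this is exactly the case when $s$ is bad. Conversely, on the union of good equivalence classes, $\phi$ is free to take one complex value per class. Now Lemma~\ref{lemma:good} is the decisive input: it asserts that every good string is equivalent to $0^{2n}$, so there is a \emph{single} good equivalence class, and thus a single free parameter. Hence $\ker H$ is one-dimensional, and since $|\psi_{2n}\rangle = \sum_{s \text{ good}} |s\rangle$ manifestly satisfies both requirements (it is constant on the good class and vanishes on all bad strings), $|\psi_{2n}\rangle$ spans $\ker H$.

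The only nontrivial content is Lemma~\ref{lemma:good}, which has already been proved, so there is no serious obstacle here; the proof is essentially bookkeeping to verify that every listed projector really does implement either the advertised move or the advertised constraint, with no unintended extra relations. The one point to double-check carefully is the boundary block $H^{\mathrm{prop},AB}$, whose list of terms must be verified to cover Moves~2, 3, and 4 acting on qudits $(n,n+1)$ and nothing else, so that no forbidden identification of amplitudes across the $A/B$ boundary is imposed.
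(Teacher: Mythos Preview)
Your proposal is correct and follows essentially the same approach as the paper's proof: both show that the propagation projectors force equal amplitudes across the equivalence $\sim$, the constraint projectors force zero amplitude on any string violating a constraint, and Lemma~\ref{lemma:good} then collapses everything to a single free parameter. The only cosmetic difference is that the paper invokes the stoquasticity of $H$ to restrict attention to ground states with real non-negative amplitudes, whereas you argue directly from the kernel conditions of the individual rank-one projectors; your route is slightly more explicit and does not actually need the stoquastic observation, since the condition $\langle u|\phi\rangle = \langle v|\phi\rangle$ follows for arbitrary complex amplitudes from the kernel of $|u\rangle - |v\rangle$.
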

\begin{proof}
Indeed let $H$ be Hamiltonian defined as the sum of all above Hamiltonians.
It is clear that $|\psi_{2n}\rangle$ is annihilated by $H$. Since
$H$ is a stoquastic Hamiltonian, it suffices to consider ground states
$|\psi\rangle$ with real non-negative amplitudes. If $|\psi\rangle$
has a positive amplitude on some string $s$, the propagation Hamiltonians
ensure that $|\psi\rangle$ has the same amplitude on any string equivalent
to $s$. The Hamiltonians implementing the constraints then ensure
that only good strings can appear in $|\psi\rangle$. Lemma~\ref{lemma:good}
implies that there is only one equivalence class of good strings.
Hence $H$ has unique ground state $|\psi_{2n}\rangle$. 
\end{proof}

\subsection{Entanglement entropy}

Recall that we impose three \textbf{constraints }on the states of
the $2n$ qudits:
\begin{enumerate}
\item Pairs $\alpha\beta$ and $\beta\alpha$ are forbidden at the boundary.
\item The first qudit of $A$ is not $\gamma.$
\item The last qudit of $B$ is not $\gamma.$
\end{enumerate}
The particles $\gamma$ can propagate freely through the the vacuum
state given by $0$ states.

We want to count the number of mirror symmetric states that obey the
constraints 2 and 3 (the first constraint is implied by mirror symmetry).
First let us ask: how many strings can there be in $A$ alone? Well
out of the $4^{n}$ possible strings the ones that violate constraint
2 need to be excluded (we are not worrying about $B$ yet). The complete
list of the excluded states is (each row represents a forbidden string
in $s_{A}$)

\begin{equation}
\begin{array}{ccccccc}
\gamma & \# & \# & \# & \# & : & 4^{n-1}\\
0 & \gamma & \# & \# & \# & : & 4^{n-2}\\
0 & 0 & \gamma & \# & \# & : & 4^{n-3}\\
 &  &  & \ddots\\
0 & 0 & 0 & 0 & \gamma & : & 1
\end{array}\label{eq:exclude}
\end{equation}
where $\#\in\left\{ \alpha,\beta,\gamma,0\right\} $ denotes any state
and the counts are written to the right. Therefore, the total number
of possible strings in $s_{A}$ is, 

\[
\mbox{Number of allowed strings in }s_{A}=4^{n}-\Sigma_{k=0}^{n-1}4^{k}.
\]

\begin{defn*}
($m-$dense string) A string of size $m$ is $m-$dense if it has
no $0$'s or $\gamma$'s. 
\end{defn*}
We wish to find the number of symmetric states where every string
on $A$ is $m-$dense (i.e., there are $m$ qudits on $A$ that are
not $\gamma$ or $0$). The number of mirror symmetric states becomes

\begin{eqnarray}
\left(\mbox{Count of }n-m\mbox{ particles of type }0,\gamma\mbox{ in }A\right)^{2} & {\scriptstyle \left\{ \left(\begin{array}{c}
m\\
0
\end{array}\right)+\left(\begin{array}{c}
m\\
1
\end{array}\right)+\cdots+\left(\begin{array}{c}
m\\
m
\end{array}\right)\right\} }\label{eq:essence}\\
=\left(\mbox{Count of }n-m\mbox{ particles of type }0,\gamma\mbox{ in }A\right)^{2} & 2^{m} & .
\end{eqnarray}
where as before $\left\{ \left(\begin{array}{c}
m\\
0
\end{array}\right)+\left(\begin{array}{c}
m\\
1
\end{array}\right)+\cdots+\left(\begin{array}{c}
m\\
m
\end{array}\right)\right\} $ is the number of ways that $\alpha$ and $\beta$ particles can be
positioned in $m$ slots. To find the number of allowed $m-$dense
strings in $A$ we first count all possible (unconstrained) ways of
putting $n-m$ of $0$ or $\gamma$ particles and $m$ of $\alpha$
and $\beta$ particles on the $n$ qudits. We then subtract from it
the forbidden states. The total number of ways one can have an $m-$dense
chain is (without imposing the constraints)

\begin{eqnarray*}
\left(\mbox{Number of ways to choose }n-m\mbox{ slots for }0\mbox{ and }\gamma\right) & \times\\
\left(\mbox{Number of ways to place }0,\gamma\mbox{ on the }n-m\mbox{ qudits}\right) & \times\\
\left(\mbox{Number of ways to place }\alpha,\beta\mbox{ on the remaining }m\right) & .
\end{eqnarray*}
Mathematically

\[
\mbox{Number of unconstraint }m-\mbox{dense chains}=\left(\begin{array}{c}
n\\
m
\end{array}\right)2^{n-m}2^{m}=\left(\begin{array}{c}
n\\
m
\end{array}\right)2^{n}
\]

The number of states that we need to exclude in $A$ are

\begin{equation}
{\scriptstyle \begin{array}{cccccccc}
\gamma_{1} & \# & \# & \# & \# & \cdots & \#: & 1.\left(\begin{array}{c}
n-1\\
n-m-1
\end{array}\right)\left\{ 2^{n-m-1}\right\} 2^{m}=\left(\begin{array}{c}
n-1\\
m
\end{array}\right)2^{n-1}\\
0 & \gamma_{2} & \# & \# & \# & \cdots & \#: & 1.\left(\begin{array}{c}
n-2\\
n-m-2
\end{array}\right)\left\{ 2^{n-m-2}\right\} 2^{m}=\left(\begin{array}{c}
n-2\\
m
\end{array}\right)2^{n-2}\\
 &  & \ddots &  &  &  &  & \vdots\\
0 & 0 & \gamma_{k} & \# & \# & \cdots & \#: & 1.\left(\begin{array}{c}
n-k\\
n-m-k
\end{array}\right)\left\{ 2^{n-m-k}\right\} 2^{m}=\left(\begin{array}{c}
n-k\\
m
\end{array}\right)2^{n-k}\\
 &  & \vdots &  &  &  & \vdots & \vdots\\
0 & 0 & 0 & 0 & \gamma_{n-m} & \cdots & \#: & 1.\left(\begin{array}{c}
m\\
0
\end{array}\right)\left(\begin{array}{c}
0\\
0
\end{array}\right)2^{m}=2^{m}
\end{array}}\label{eq:listExclude}
\end{equation}
where, $1\leq k\leq n-m$ is the first $k-1$ zeros followed by a
$\gamma$; $\#$ is means \textit{it can be} any state as long as
we have a total of $n-m$ of $0$'s and $\gamma$'s and $m$ of $\alpha,\beta$.
The number of states that need to be excluded are therefore

\begin{equation}
\mbox{number of }m-\mbox{dense states to exclude}=\sum_{k=1}^{n-m}\left(\begin{array}{c}
n-k\\
m
\end{array}\right)2^{n-k}.\label{eq:NumExclude}
\end{equation}
Comment: Mathematica erroneously expresses the foregoing equation
in terms of a Hypergeometric function, that has poles for integer
$n$.

\textit{In summary the number of allowed $m-$dense state on the $n$
qudits are }

\begin{equation}
\left(\begin{array}{c}
n\\
m
\end{array}\right)2^{n}-\sum_{k=1}^{n-m}\left(\begin{array}{c}
n-k\\
m
\end{array}\right)2^{n-k}.\label{eq:-1}
\end{equation}

The ground states are

\[
|\psi\rangle=\sum_{m=0}^{n}\sum_{x\in\left\{ \alpha,\beta\right\} ^{m}}|A\left(m,x\right)\rangle\otimes|B\left(m,x\right)\rangle
\]

The number of ways that $0,\gamma$ can be put on $n$ qudits are

\begin{equation}
M_{m,n}\equiv2^{n-m}\left\{ \left(\begin{array}{c}
n\\
m
\end{array}\right)-\sum_{k=1}^{n-m}\left(\begin{array}{c}
n-k\\
m
\end{array}\right)2^{-k}\right\} .\label{eq:Mm4}
\end{equation}
Consequently, the Schmidt numbers are $p_{m,n}\equiv\frac{M_{m,n}^{2}}{N}$
where $M_{m,n}$ is given by Eq. \ref{eq:Mm4} and $N\equiv\sum_{m=0}^{n}2^{m}M_{m,n}^{2}$
is the normalization constant. The entanglement entropy becomes

\begin{equation}
H\left(\left\{ p_{m,n}\right\} \right)=-\sum_{m=0}^{n}2^{m}p_{m,n}\log_{2}p_{m,n}.\label{eq:Sd=00003D4}
\end{equation}

Let us rewrite Eq. \ref{eq:Mm4} as

\begin{eqnarray}
M_{m,n} & = & 2^{n-m}\left(\begin{array}{c}
n\\
m
\end{array}\right)\label{eq:Mmd4Exact}\\
 & \times & \left\{ 1-\frac{1}{2}\left(1-\frac{m}{n}\right)-\cdots-\frac{1}{2^{n-m}}\left(1-\frac{m}{n}\right)\cdots\left(1-\frac{m}{m+1}\right)\right\} \\
 & \simeq & 2^{n-m}\left(\begin{array}{c}
n\\
m
\end{array}\right)\left\{ 1-\sum_{k=1}^{n-m}r^{k}\right\} =2^{n-m+1}\left(\begin{array}{c}
n\\
m
\end{array}\right)\frac{m}{n+m}\nonumber 
\end{eqnarray}
where $r=\frac{1}{2}\left(1-\frac{m}{n}\right)$. Next we use Stirling's
approximation $n!\sim\left(n/e\right)^{n}\sqrt{2\pi n}$ , similarly
for $m!$ and $\left(n-m\right)!$, to obtain (below all the logrithms
are in base $2$ unless stated otherwise)

\begin{eqnarray}
M_{m,n}^{2} & = & \frac{2nm}{\pi\left(n-m\right)}\frac{\exp\left[f\left(m,n\right)\right]}{\left(n+m\right)^{2}},\label{eq:Mmd=00003D4}\\
f\left(n,m\right) & \equiv & 2\left[\left(n-m\right)\log\left(2\right)+n\log n-m\log m-\left(n-m\right)\log\left(n-m\right)\right]\nonumber 
\end{eqnarray}
Let $m=\alpha n$, giving 

\begin{eqnarray}
M_{\alpha,n}^{2} & = & \frac{2\alpha}{\pi n\left(1-\alpha\right)}\frac{\exp\left[nf\left(\alpha\right)\right]}{\left(1+\alpha\right)^{2}},\nonumber \\
f\left(\alpha\right) & \equiv & 2\left[\left(1-\alpha\right)\log\left(2\right)+\log n-\alpha\log\alpha n-\left(1-\alpha\right)\log n\left(1-\alpha\right)\right]\label{eq:Mmd4-alpha}\\
 & = & 2\left[\left(\alpha-1\right)\log\left(1-\alpha\right)-\alpha\log\alpha+\left(1-\alpha\right)\log\left(2\right)\right]
\end{eqnarray}

\begin{eqnarray*}
N & = & \sum_{m=0}^{n}2^{m}M_{m,n}^{2}=\sum_{m=0}^{n}\frac{2nm\exp\left[g\left(n,m\right)\right]}{\pi\left(n-m\right)\left(n+m\right)^{2}}\\
g\left(n,m\right) & \equiv & f\left(n,m\right)+m\log2.
\end{eqnarray*}

We can approximate this sum with an integral over $\alpha$

\begin{eqnarray}
N & \simeq & \int_{0}^{1}d\alpha\frac{2\alpha\exp\left[ng\left(\alpha\right)\right]}{\pi\left(1-\alpha\right)\left(1+\alpha\right)^{2}},\label{eq:N(alpha)}\\
g\left(\alpha\right) & \equiv & f\left(\alpha\right)+\alpha\log2\nonumber \\
 & = & 2\left[\left(\alpha-1\right)\log\left(1-\alpha\right)-\alpha\log\alpha-\frac{\alpha}{2}\log2+\log\left(2\right)\right],\nonumber 
\end{eqnarray}
note that the factor of $n$ cancelled because of change of variables
from $m$ to $\alpha$. In anticipation of the steepest descent approximation
to the entanglement entropy, we evaluate

\begin{eqnarray*}
g' & \equiv & \frac{\partial g\left(\alpha\right)}{\partial\alpha}=2\left[\log\left(1-\alpha\right)-\log\alpha-\frac{1}{2}\log2\right]\\
g'' & \equiv & \frac{\partial^{2}g\left(\alpha\right)}{\partial\alpha^{2}}=2\left[\frac{1}{\alpha\left(\alpha-1\right)}\right].
\end{eqnarray*}
$g'=0\Rightarrow\alpha_{0}=\frac{1}{1+\sqrt{2}}=\sqrt{2}-1$ and $g''\left(\alpha_{0}\right)=-\sqrt{2}\left(3+2\sqrt{2}\right)$,
which implies $\alpha_{0}$ is a maximum. Let us proceed in calculating
the entanglement entropy given by $H\left(m,n\right)=-\sum_{m=0}^{n}2^{m}\frac{M_{m}^{2}}{N}\log_{2}\frac{M_{m}^{2}}{N}$
by first approximating the sum with an integral over $\alpha$ and
then performing the steepest descent approximation (in nats)

\begin{eqnarray*}
H\left(\left\{ M\left(\alpha,n\right)\right\} \right) & \simeq & -\frac{1}{N}\int_{0}^{1}d\alpha\frac{2\alpha\exp\left[ng\left(\alpha\right)\right]}{\pi\left(1-\alpha\right)\left(1+\alpha\right)^{2}}\log\frac{M^{2}\left(\alpha,n\right)}{N}\\
 & \simeq & -\log\frac{M^{2}\left(\alpha_{0},n\right)}{N}\left\{ \frac{1}{N}\int_{0}^{1}d\alpha\frac{2\alpha\exp\left[ng\left(\alpha\right)\right]}{\pi\left(1-\alpha\right)\left(1+\alpha\right)^{2}}\right\} \mbox{ }\\
 & \simeq & -\log\frac{M^{2}\left(\alpha_{0},n\right)}{N},
\end{eqnarray*}
by the definition of $N$. It remains to calculate $-\log\frac{M^{2}\left(\alpha_{0},n\right)}{N}$

\begin{eqnarray*}
-\log\frac{M^{2}\left(\alpha_{0},n\right)}{N} & = & -\log\frac{\frac{1}{n}\exp\left[nf\left(\alpha_{0}\right)\right]}{\int_{0}^{1}d\alpha\exp\left[ng\left(\alpha\right)\right]}\\
 & \simeq & -\log\frac{\frac{1}{n}\exp\left[nf\left(\alpha_{0}\right)\right]}{\int_{-\infty}^{+\infty}d\alpha\exp\left\{ n\left[g\left(\alpha_{0}\right)+\frac{1}{2}g''\left(\alpha_{0}\right)\left(\alpha-\alpha_{0}\right)^{2}\right]\right\} }\\
 & = & -\log\left\{ \frac{1}{n}\exp\left[nf\left(\alpha_{0}\right)-ng\left(\alpha_{0}\right)\right]\sqrt{\frac{n\left|g''\left(\alpha_{0}\right)\right|}{2\pi}}\right\} \\
 & = & -\log\left\{ \exp\left[-n\alpha_{0}\log2\right]\sqrt{\frac{\left(3+2\sqrt{2}\right)}{n\sqrt{2}\pi}}\right\} .
\end{eqnarray*}

Expanding this we obtain

\begin{eqnarray*}
H\left(\left\{ M\left(\alpha,n\right)\right\} \right) & \simeq & \left(\sqrt{2}-1\right)n\log2+\frac{1}{2}\log n+\frac{1}{2}\log\left(\frac{\sqrt{2}\pi}{3+2\sqrt{2}}\right)\quad\mbox{nats}\\
 & = & \left(\sqrt{2}-1\right)n+\frac{1}{2}\log_{2}n+\frac{1}{2}\log_{2}\left(\frac{\sqrt{2}\pi}{3+2\sqrt{2}}\right)\quad\mbox{bits.}
\end{eqnarray*}

\begin{figure}[H]
\centering{}\includegraphics[scale=0.4]{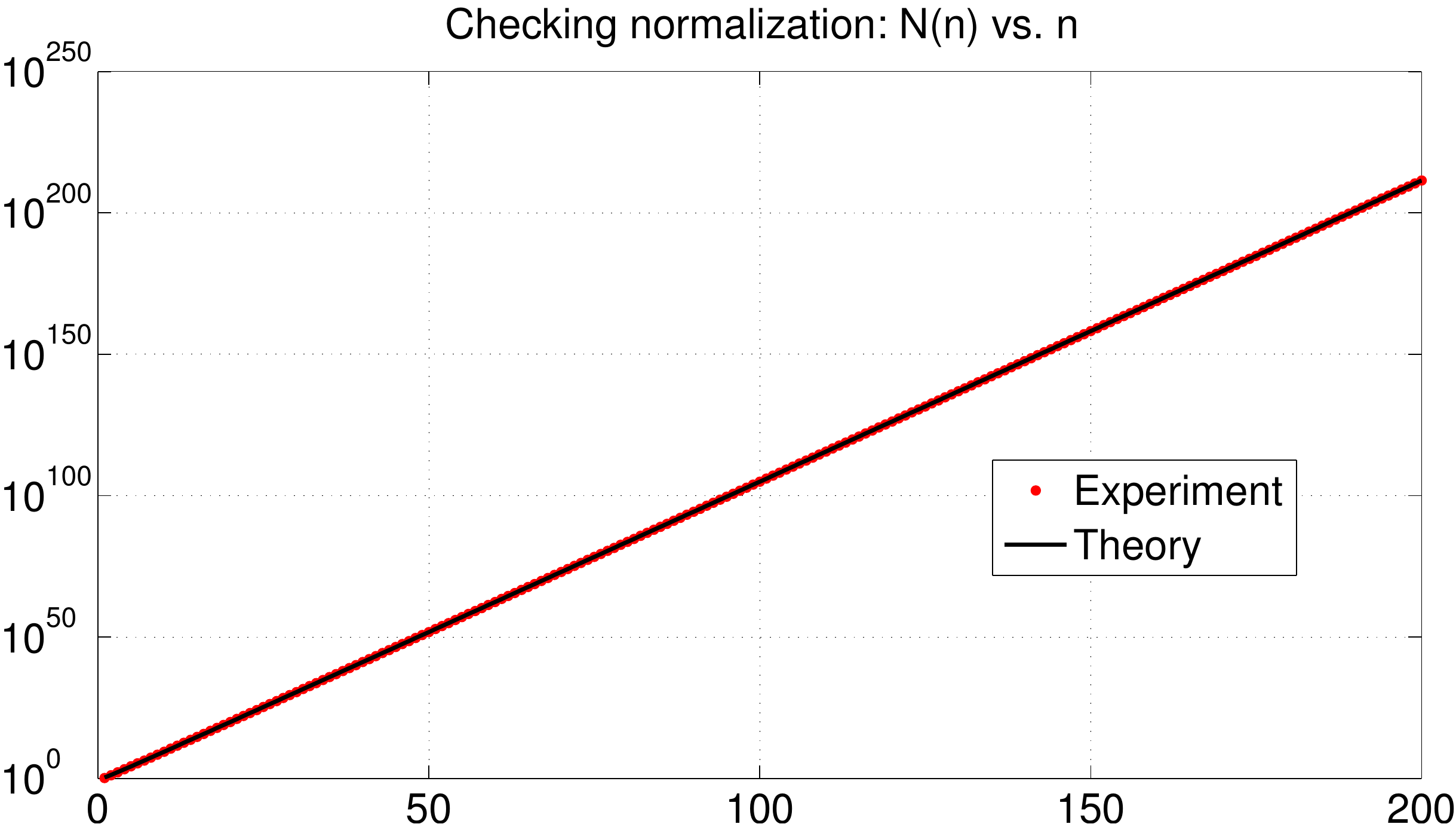}\caption{\label{fig:Nd=00003D4}Normalization as a function of $n$. }
\end{figure}

\begin{figure}[H]
\centering{}\includegraphics[scale=0.4]{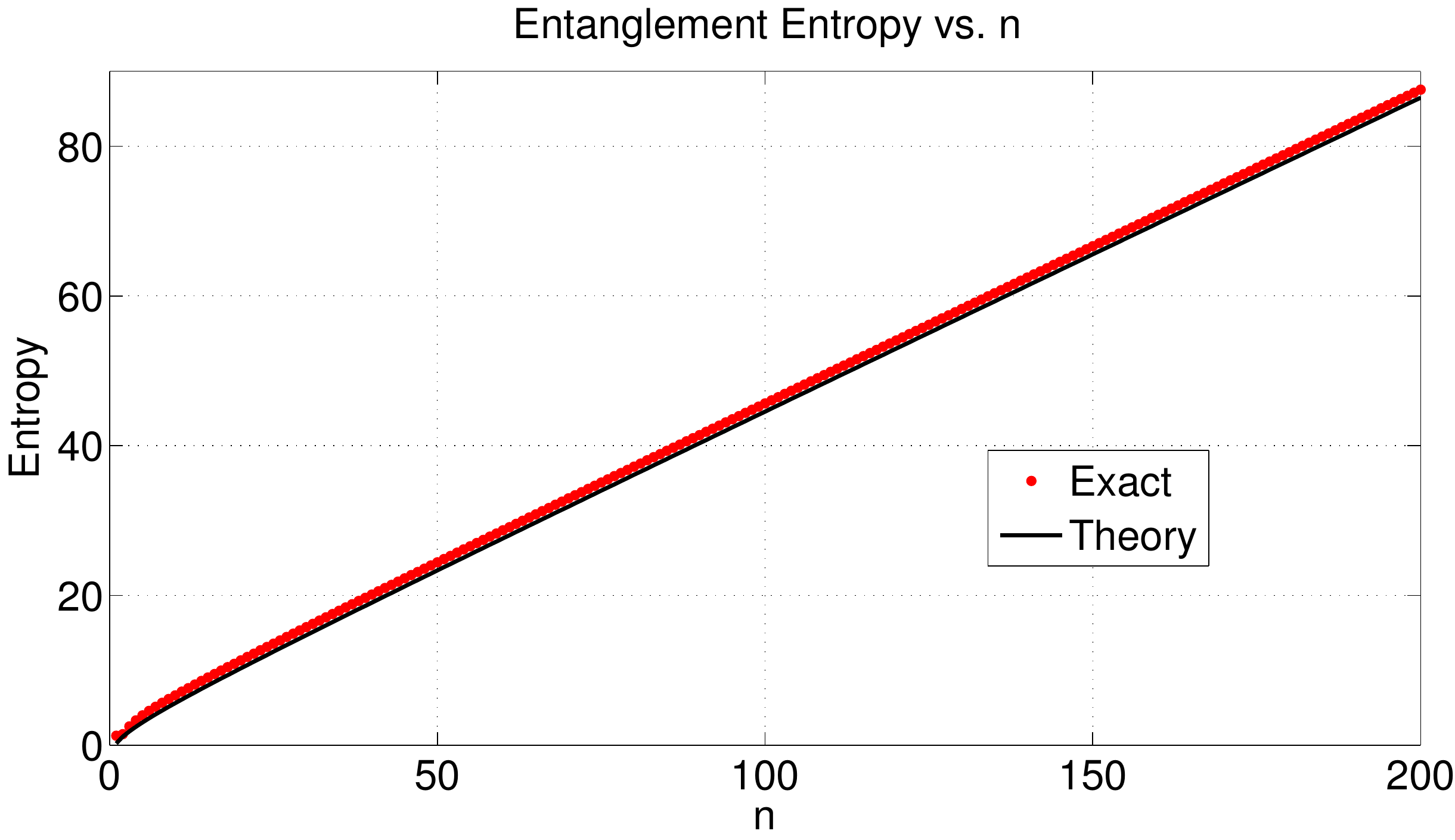}\caption{\label{fig:Entropyd=00003D4}Entropy $H\left(\left\{ M_{m}^{2}\right\} \right)$
vs. $n$ for $d=4$ case. We include the figure on left to demonstrate
the closeness of the approximation.}
\end{figure}

%\end{document}

\chapter{Criticality Without Frustration for Quantum Spin-$1$ Chains }

In the previous chapter we showed two examples of FF qudit chains with high entanglement and introduced the mathematical techniques needed for calculating their entanglement entropies. Here we elaborate on the $d=3$ model- balanced parenthesis model. While FF spin-1/2 chains are known to have unentangled ground states, the case s=1 remains less explored. We propose the first example of a FF translation-invariant spin-1 chain that has a unique highly entangled ground state and exhibits some signatures of a critical behavior.  The rest of this chapter also appears in ~\cite{RamisMotzkin}.

\section{Motivation}
The presence of long-range entanglement in the ground states of critical spin chains
with only short-range interactions
is one of the most fascinating discoveries in the theory of quantum phase transitions~\cite{sachdev,Vidal03,Korepin04}.
It can be quantified by the scaling
law $S(L)\sim \log{L}$, where
$S(L)$ is the entanglement entropy of
a block of $L$ spins. In contrast, non-critical
spin chains characterized by a non-vanishing energy gap
obey an area law~\cite{Hastings07,Eisert08,Arad11}
asserting that $S(L)$ has a constant upper bound independent of $L$.

One can ask how stable is the long-range ground state entanglement
against small variations of Hamiltonian parameters?  The scaling theory predicts~\cite{Vidal03,Latorre03} that
a chain whose Hamiltonian is controlled by some parameter $g$
follows the  law $S(L)\sim \log{L}$
only if $L$ does not exceed the correlation length
$\xi\sim |g-g_c|^{-\nu}$, where $\nu>0$ is the critical exponent
and $g_c$ is the critical point. For larger $L$ the entropy $S(L)$
saturates at a constant value. Hence achieving the scaling $S(L)\sim \log{L}$
requires fine-tuning of the parameter $g$ with
precision scaling polynomially with $1/L$ posing a serious experimental challenge.

The stringent precision requirement described above can be partially avoided
for spin chains described by {\em frustration-free} Hamiltonians.
Well-known (non-critical) examples of such Hamiltonians are the
Heisenberg ferromagnetic chain~\cite{Koma95}, the AKLT model~\cite{AKLT87},
and parent Hamiltonians of matrix product states~\cite{spinGlass,PEPS07}.
More generally, we consider Hamiltonians of a form  $H=\sum_j g_j \Pi_{j,j+1}$, where
$\Pi_{j,j+1}$ is a projector acting on spins $j,j+1$ and
$g_j>0$ are some coefficients. The Hamiltonian is called frustration-free (FF)
if the projectors $\Pi_{j,j+1}$ have a common zero eigenvector $\psi$.
Such zero eigenvectors $\psi$ span the ground subspace of $H$.
Clearly, the ground subspace  does not depend on the
coefficients $g_j$ as long as they remain positive.
This inherent stability against variations of the Hamiltonian parameters
motivates a question  of whether FF Hamiltonians can describe critical spin chains.

In this Letter  we propose a  toy model describing a FF translation-invariant spin-$1$ chain
with open boundary conditions
 that has a unique ground state with a logarithmic
scaling of entanglement entropy and a polynomial energy gap.
Thus our FF model reproduces some of the main signatures of critical spin chains.
In contrast, it was recently shown by Chen et al~\cite{Chen04} that any FF spin-$1/2$ chain has an unentangled ground state.
Our work may also offer valuable insights for the problem of realizing
long-range entanglement in open quantum systems with an engineered dissipation.
Indeed, it was shown by Kraus et al~\cite{Kraus08} and Verstraete et al~\cite{Verstraete08}
that the ground state of a FF Hamiltonian can be represented as a unique steady state of a
dissipative process described by the Lindblad equation with local quantum jump operators. A proposal for realizing
such dissipative processes in cold atom systems has been made by Diehl et al~\cite{Diehl08}.

{\em Main results. \hspace{2mm}}
We begin by describing the ground state of our model.
The three basis states of a single spin will be identified
with a left bracket $l\equiv [$, right bracket $r\equiv \, ]$, and
an empty space represented by $0$. Hence a state of a single
spin can be written as $\alpha |0\ra + \beta |l\ra + \gamma |r\ra$
for some complex coefficients $\alpha,\beta,\gamma$.
For a chain of $n$ spins, basis states $|s\ra$ correspond to strings
$s\in \{0,l,r\}^n$.
A string $s$ is called a {\em Motzkin path}~\cite{Motzkin2,*Motzkin1}
iff (i) any initial  segment of $s$ contains at least as many $l$'s as $r$'s,
and (ii) the total number of $l$'s is equal to the total number of $r$'s.
For example, a string $lllr0rl0rr$ is a Motzkin path while $l0lrrrllr$ is not
since its initial segment $l0lrrr$ has more $r$'s than $l$'s.
By ignoring all $0$'s one can view Motzkin paths as  balanced strings of
left and right brackets.  We shall be interested in the {\em Motzkin state}
$|\calM_n\ra$ which is the uniform superposition of all
Motzkin paths of length $n$.
For example, $|\calM_2\ra \sim |00\ra+|lr\ra$,
$|\calM_3\ra \sim |000\ra+|lr0\ra + |l0r\ra + |0lr\ra$, and
\bea
|\calM_4\ra &\sim & |0000\ra + |00lr\ra + |0l0r\ra + |l00r\ra \nn \\
&& + |0lr0\ra + |l0r0\ra + |lr00\ra + |llrr\ra + |lrlr\ra. \nn
\eea
Let us first ask how entangled is the Motzkin state.
For a contiguous block of spins $A$,
let $\rho_A =\trace_{j\notin A} |\calM_n\ra\la \calM_n|$
be the reduced density matrix of $A$.
Two important measures of entanglement are
the Schmidt rank $\chi(A)$ equal to the number of non-zero
eigenvalues of $\rho_A$, and the
 entanglement entropy
$S(A)=-\trace \rho_A \log_2{\rho_A}$.
We will choose $A$ as the left half of the chain,
$A=\{1,\ldots,n/2\}$. We show that
\be
\label{ent}
\chi(A)=1+n/2 \quad \mbox{and} \quad S(A)= \frac12 \log_2{n} + c_n
\ee
where $\lim_{n\to \infty} c_n= 0.14(5)$.
The linear scaling of the Schmidt rank stems from the presence of locally unmatched
left brackets in $A$ whose matching right brackets belong to the complementary region $B=[1,n]\backslash A$.
The number of the locally unmatched brackets $m$ can vary from $0$ to $n/2$ and must
be the same in $A$ and $B$ leading to long-range
entanglement between the two halves of the chain.

Although the definition of Motzkin paths may seem very non-local,
we will show that the state $|\calM_n\ra$ can be specified by
imposing  local constraints on nearest-neighbor spins.
Let $\Pi$ be a projector onto the three-dimensional
subspace of $\CC^3\otimes \CC^3$ spanned by
states $|0l\ra-|l0\ra$, $|0r\ra-|r0\ra$, and $|00\ra-|lr\ra$.
Our main result is the following.
\begin{theorem}
\label{thm:main}
The Motzkin state $|\calM_n\ra$ is a unique
ground state with zero energy of a  frustration-free Hamiltonian
\be
\label{H}
H=|r\ra\la r|_1 + |l\ra\la l|_n+\sum_{j=1}^{n-1} \Pi_{j,j+1},
\ee
where  subscripts indicate spins acted upon by a projector.
The spectral gap\footnote{Here and below the spectral gap of a Hamiltonian
means the difference between the smallest and the second smallest eigenvalue.}
 of $H$ scales  polynomially with $1/n$.
 \end{theorem}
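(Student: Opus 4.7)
The plan is to prove Theorem \ref{thm:main} in three stages: (i) verify that $|\mathcal{M}_n\rangle$ lies in the kernel of $H$, (ii) show uniqueness by a stoquastic/equivalence-class argument, and (iii) lower bound the spectral gap by a reduction to a random walk on Dyck paths.

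For part (i), I would check that each summand of $H$ annihilates $|\mathcal{M}_n\rangle$. The boundary term $|r\rangle\langle r|_1$ kills it because no Motzkin path begins with a right bracket; similarly $|l\rangle\langle l|_n$ because none ends with a left bracket. For $\Pi_{j,j+1}$, note that its three spanning vectors $|0l\rangle-|l0\rangle$, $|0r\rangle-|r0\rangle$, $|00\rangle-|lr\rangle$ are precisely the differences associated with the local moves in Eq.\ \eqref{eq:moves}. Since any two strings related by a local move are either both Motzkin paths or both not, and all Motzkin paths appear in $|\mathcal{M}_n\rangle$ with equal amplitude, $\Pi_{j,j+1}|\mathcal{M}_n\rangle=0$. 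Hence $H$ is frustration-free with $|\mathcal{M}_n\rangle$ in its kernel.

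For part (ii), I would invoke the equivalence-class analysis already established before Eq.\ \eqref{eq:Hprop}: the local moves partition $\{0,l,r\}^n$ into classes $[u_{p,q}]$, and any state annihilated by every $\Pi_{j,j+1}$ must have amplitudes that are constant on each class. This gives a basis $\{|\psi_{p,q}\rangle\}$ of $\ker(H^{\mathrm{prop}})$. Finally, the boundary projector $|r\rangle\langle r|_1$ annihilates $|\psi_{p,q}\rangle$ only when $p=0$, and $|l\rangle\langle l|_n$ only when $q=0$, leaving $|\psi_{0,0}\rangle=|\mathcal{M}_n\rangle$ as the unique ground state.

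For part (iii), which is the main obstacle, I would follow a perturbative reduction to a Markov chain. Decompose $H=H_{\mathrm{move}}+H_{\mathrm{int}}$, where $H_{\mathrm{move}}$ contains the projectors onto $|0l\rangle-|l0\rangle$ and $|0r\rangle-|r0\rangle$ (together with the boundary terms) and $H_{\mathrm{int}}$ contains the projectors onto $|00\rangle-|lr\rangle$. Since both pieces are positive semidefinite sums of projectors, $H\ge H_\epsilon\equiv H_{\mathrm{move}}+\epsilon H_{\mathrm{int}}$ for any $0\le\epsilon\le 1$, and the ground space is independent of $\epsilon>0$. The Hamiltonian $H_{\mathrm{move}}$ is block-diagonal with respect to the decomposition of the balanced subspace indexed by Dyck paths $s\in\mathcal{D}_k$: within each block $\mathcal{L}_s$ of dimension $\binom{n}{2k}$, the map replacing each bracket by $1$ and each $0$ by $0$ identifies $H_{\mathrm{move}}|_{\mathcal{L}_s}$ with the spin-$1/2$ Heisenberg ferromagnet restricted to fixed Hamming weight $2k$, whose gap $1-\cos(\pi/n)=\Omega(n^{-2})$ is given by Koma--Nachtergaele.

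Then I would apply degenerate perturbation theory à la Kempe--Kitaev--Regev. Choosing $\epsilon$ small enough that $\epsilon\|H_{\mathrm{int}}\|<\tfrac12\,\mathrm{gap}(H_{\mathrm{move}})$, the low-energy effective Hamiltonian $H_{\mathrm{eff}}=\epsilon\,\Pi_0 H_{\mathrm{int}}\Pi_0$ acts on the zero-eigenspace $\Pi_0$ of $H_{\mathrm{move}}$, spanned by the uniform superpositions over each $\mathcal{M}_n[s]$. I would map $H_{\mathrm{eff}}$ to a reversible Markov chain $P$ on the disjoint union $\mathcal{D}_{\le n/2}$ whose steady distribution is $\pi(s)=\langle\psi_s|\mathcal{M}_n\rangle^2$ and whose transitions correspond to inserting or deleting a single $lr$ pair; the identity $\mathrm{gap}(H_{\mathrm{eff}})=\epsilon(n-1)(1-\lambda_2(P))$ reduces everything to bounding $1-\lambda_2(P)$. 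The hardest step is bounding this spectral gap of the Dyck-path walk. I would do so by the canonical-path method: using the peak-removal map of the lemma before Eq.\ \eqref{constraint} to organize Dyck paths into a supertree with branching factor $\le 4$, for each pair $s,t$ I would build a canonical path that alternately contracts a partial copy of $s$ toward the root and expands a partial copy of $t$ away from the root, maintaining $\min(|s|,|t|)\le |u|+|v|\le \max(|s|,|t|)$. Combined with the monotonicity of $\pi_k$ about its maximum at $k_0\approx 2n/3$, this would yield an edge load $\rho\le n^{O(1)}$, hence $1-\lambda_2(P)\ge n^{-O(1)}$, and finally, after a polynomial choice of $\epsilon$, $\mathrm{gap}(H)\ge n^{-O(1)}$. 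The technical crux is verifying that the supertree bounds and the monotonicity of $\pi_k$ really combine to give a polynomial, rather than merely subexponential, bound on the edge load in both regimes $l'+l''\le k_0$ and $l'+l''>k_0$.
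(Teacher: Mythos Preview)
Your treatment of parts (i), (ii), and the balanced-subspace portion of (iii) matches the paper's argument closely: the equivalence-class analysis, the Koma--Nachtergaele gap for $H_{\mathrm{move}}$, the Projection Lemma reduction to $H_{\mathrm{eff}}$, the similarity transform to a reversible walk on Dyck paths, and the supertree canonical-path bound are all exactly what the paper does.

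However, there is a genuine gap. The Hamiltonian $H$ is block-diagonal with respect to the decomposition $\bigoplus_{p,q}\mathcal{H}_{p,q}$, and the unique ground state lives in $\mathcal{H}_{0,0}$. Your argument bounds the gap of $H$ \emph{restricted to the Motzkin subspace} $\mathcal{H}_{0,0}$, but the full spectral gap of $H$ is the minimum over all sectors of the smallest nonzero eigenvalue. You must therefore also show that the \emph{ground state energy} of $H$ on every unbalanced sector $\mathcal{H}_{p,q}$ with $(p,q)\neq(0,0)$ is at least $n^{-O(1)}$. This is not automatic: on $\mathcal{H}_{p,q}$ the bulk projectors $\sum_j\Pi_{j,j+1}$ are still frustration-free with ground state $|C_{p,q}\rangle$, so the required energy must come from the interplay between the boundary penalties and the bulk, and the boundary terms contribute only $O(1/n)$ overlap with a typical $|C_{p,q}\rangle$. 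The paper devotes a separate argument to this (treating the unmatched brackets as inert walls, reducing to a single ``$x$-particle'' hopping Hamiltonian with a boundary repulsion, and applying the Projection Lemma twice more). Without this piece your lower bound on $\mathrm{gap}(H)$ is incomplete.

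A secondary omission: the theorem asserts that the gap \emph{scales} polynomially with $1/n$, which in the paper includes an upper bound $\Delta=O(n^{-1/2})$ obtained by a variational argument exploiting the Schmidt decomposition of $|\mathcal{M}_n\rangle$. Your proposal gives only the lower bound.
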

The theorem remains true if $H$
is modified by introducing arbitrary weights $g_j\ge 1$
for every projector in Eq.~(\ref{H}).
A polynomial lower bound on the spectral  gap of $H$
is, by far, the most difficult  part of Theorem~\ref{thm:main}.
Our proof consists of several steps.
First, we use a perturbation theory to relate the
spectrum of $H$ to the one of
an effective Hamiltonian $H_{\eff}$ acting on
Dyck paths --- balanced strings of left and right brackets~\footnote{One can regard Dyck paths
as a special case of Motzkin paths in which no `$0$' symbols are allowed.}.
This step involves successive applications of the Projection Lemma due to Kempe et al~\cite{KKR04}.
Secondly, we map $H_{\eff}$ to a stochastic matrix $P$ describing a random
walk on Dyck paths in which transitions correspond to insertions/removals of
consecutive $lr$ pairs.
The key step of the proof is to show that the random walk on Dyck paths is rapidly mixing.
Our method of proving the desired rapid mixing property employs the polyhedral description of matchings in bipartite
graphs~\cite{Schrijver}. This method appears to be new and might be interesting on its own right.
Exact diagonalization performed for short chains suggests that
the spectral gap of $H$ scales as $\Delta\sim 1/n^3$, see Fig.~\ref{fig:gap}.
Our proof gives an upper bound $\Delta=O(n^{-1/2})$ and a lower bound $\Delta=\Omega(n^{-c})$
for some $c\gg 1$.

\begin{center}
\begin{figure}[t]
\centerline{
\mbox{
 \includegraphics[height=5cm]{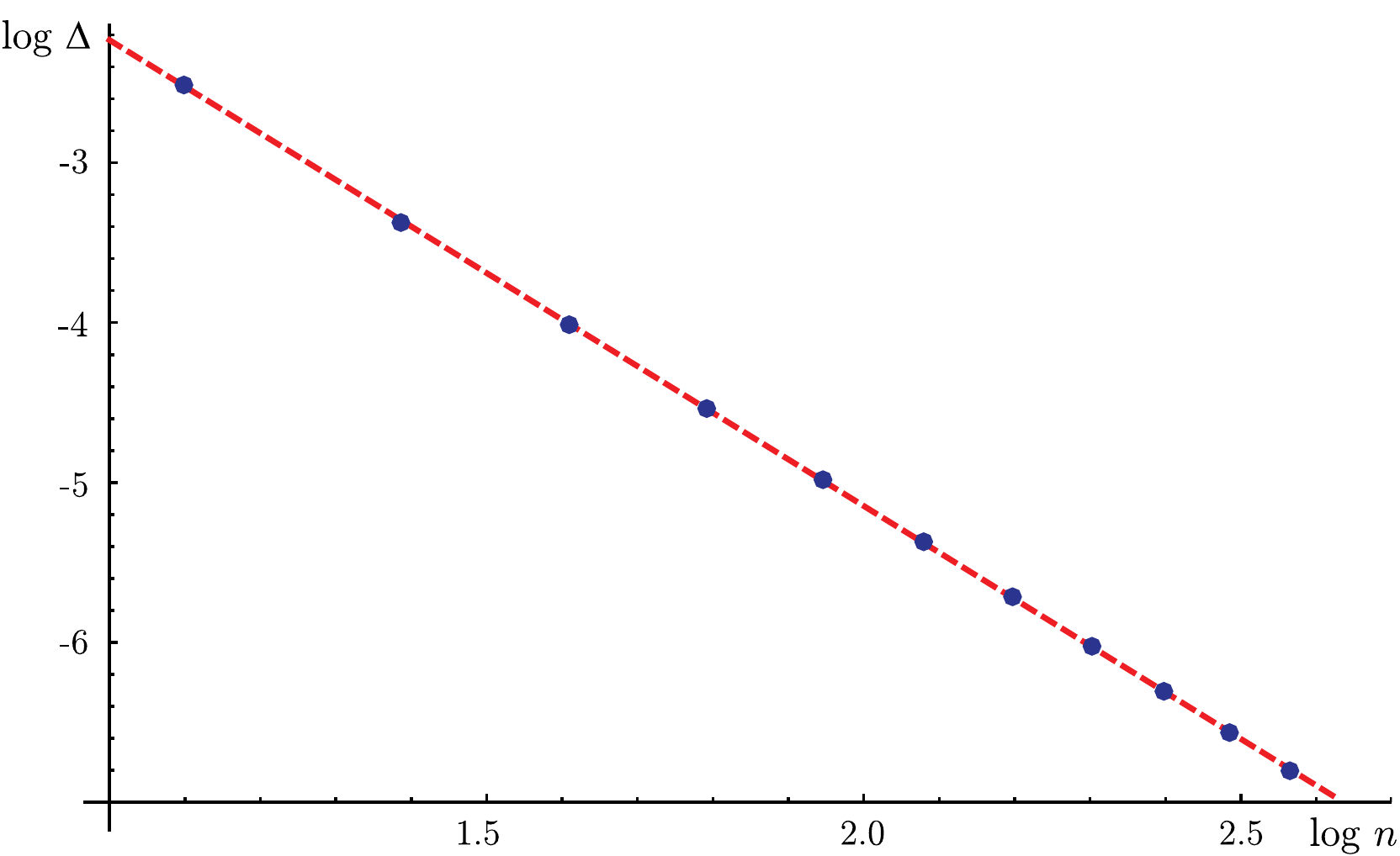}
 }}
\caption{The spectral gap $\Delta$ of the Hamiltonian $H$ defined in Eq.~(\ref{H})
for $3\le n\le 13$ obtained by the exact diagonalization. The dashed line shows a linear fit $\log{\Delta} = 0.68-2.91 \log{n}$.
Our numerics suggests that the first excited state of $H$ belongs to the subspace spanned by
strings with exactly one unmatched bracket.
}
 \label{fig:gap}
\end{figure}
\end{center}

{\em Previous work.  \hspace{2mm}}
%SBB3: clarification regarding Irani09 and Gottesman09 + some reordering.
%Also I realized that in the previous version d was used without definition.
Examples of spin chain Hamiltonians with highly entangled ground states
have been constructed by Gottesman and Hastings~\cite{Gottesman09},
and Irani~\cite{Irani09} for local dimension $d=9$ and $d=21$ respectively
(here and below $d\equiv 2s+1$).
 These models exhibit a linear scaling of the entropy $S(L)$ for some blocks of spins
while the spectral gap is polynomial in $1/n$. The model found in~\cite{Irani09}
is FF and translation-invariant.
Ref.~\cite{MFGNOS} focused on `generic' spin chains with a Hamiltonian $H=\sum_j \Pi_{j,j+1}$
where the projectors  $\Pi_{j,j+1}$ are chosen randomly with a fixed rank $r$~\footnote{Though the results
of Ref.~\cite{MFGNOS} are applicable to more general Hamiltonians, the
convenient restriction to random projectors is sufficient for
addressing the degeneracy and frustration condition.}.
The authors of~\cite{MFGNOS} identified three important regimes: (i) frustrated chains, $r >d^2/4$,
(ii) FF chains, $d\le r\le d^2/4$, and (iii) FF chains with product ground states, $r<d$.
%SBB3: added `generic' and ``with probability one'', as Ramis suggested.
It was conjectured in~\cite{MFGNOS} that generic FF chains in the regime $d\le r\le d^2/4$
have only highly entangled ground states with probability one.
This regime however requires local dimension $d\ge 4$.
The new model based on Motzkin paths corresponds to the case $d=r=3$ (ignoring the boundary terms)
and thus it can be frustrated by arbitrarily small deformations of the projectors making them generic.
In addition, results of~\cite{MFGNOS} imply that
examples of FF spin-$1$ chains with highly entangled ground states have measure zero
in the parameter space.
The question of whether matrix product states specified by FF parent Hamiltonians can exhibit
quantum phase transitions has been studied by Wolf et al~\cite{Wolf06}.
However, the models studied
in~\cite{Wolf06} have bounded entanglement entropy, $S(L)=O(1)$.
%SBB3:  I changed ``do not have highly entangled ground states"
%because it sounds a bit negative.

{\em Hamiltonian.  \hspace{2mm}}
Let us now construct a FF Hamiltonian $H$ whose unique ground state is $|\calM_n\ra$.
First we need to find a more local description of Motzkin paths.
Let $\Sigma=\{0,l,r\}$. We will say that a pair of strings $s,t\in \Sigma^n$
is equivalent, $s\sim t$, if $s$ can be obtained from $t$ by a sequence of local
moves
\be
\label{moves}
00\leftrightarrow lr, \quad 0l\leftrightarrow l0, \quad 0r\leftrightarrow r0.
\ee
These moves can be applied to any consecutive pair of letters.
For any integers $p,q\ge 0$ such that $p+q\le n$ define a string
\[
c_{p,q}\equiv \underbrace{r\ldots r}_p \underbrace{0\ldots 0}_{n-p-q} \underbrace{l\ldots l}_q.
\]
\begin{lemma}
Any string $s\in \Sigma^n$ is equivalent to one and only one string $c_{p,q}$.
A string $s\in \Sigma^n$ is a Motzkin path iff it is equivalent to the all-zeros string,
$s\sim c_{0,0}$.
\end{lemma}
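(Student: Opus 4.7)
The plan is to mirror the argument given in Chapter 6 for the analogous proposition about $u_{p,q}$, since the local moves in Eq.~(\ref{moves}) are identical to those considered there (only the naming of the canonical forms has changed from $u_{p,q}$ to $c_{p,q}$). The proof naturally splits into three parts: existence of an equivalent canonical form, uniqueness via invariants under the moves, and the identification of $c_{0,0}$ with the Motzkin condition.

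For existence, I would give a constructive reduction. Using $00\leftrightarrow lr$, $0l\leftrightarrow l0$, and $0r\leftrightarrow r0$, one can freely transport zeros past any bracket, and annihilate any adjacent $lr$ pair into $00$ (and hence, combined with transport, annihilate any $l0\cdots 0 r$ pattern). I would argue by induction on the number of brackets: repeatedly locate the leftmost $l$ that has some $r$ to its right, slide zeros out of the way so that this $l$ sits immediately left of an $r$, and annihilate the pair $lr\to 00$. When no such pair remains, every remaining $l$ lies to the right of every remaining $r$, so after transporting all zeros between them we arrive at the canonical form $c_{p,q}$ for some $p,q\ge 0$ with $p+q\le n$.

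For uniqueness, I would introduce two invariants that are manifestly preserved by each local move:
\[
p(s)\;=\;\max_{0\le i\le n}\bigl(R_i(s)-L_i(s)\bigr),\qquad q(s)\;=\;\max_{0\le i\le n}\bigl(L_n(s)-L_i(s)-R_n(s)+R_i(s)\bigr),
\]
where $L_i(s),R_i(s)$ count left/right brackets in the prefix of length $i$, as in the Chapter~6 definition. A direct case check on each of the three moves shows neither prefix quantity inside the $\max$ changes, so $p(\cdot)$ and $q(\cdot)$ are class invariants. Evaluating on $c_{p,q}$ gives $p(c_{p,q})=p$ and $q(c_{p,q})=q$, so distinct canonical forms lie in distinct equivalence classes. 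Combined with existence this gives a bijection between equivalence classes and admissible pairs $(p,q)$.

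For the Motzkin characterization, a string $s$ is a Motzkin path exactly when $R_i(s)\le L_i(s)$ for all $i$ and $L_n(s)=R_n(s)$; the first condition says $p(s)=0$ and together with the second it forces $q(s)=0$. By the invariance established above this is equivalent to $s\sim c_{0,0}=0^n$, which is the second claim. I do not anticipate any serious obstacle here since the arguments are purely combinatorial and the moves act very locally; the only place requiring care is verifying invariance of $p(\cdot)$ and $q(\cdot)$ under the $00\leftrightarrow lr$ move, where one must check that inserting a matched $lr$ pair at any position shifts the prefix counts $L_i,R_i$ by the same amount beyond the insertion point and thus leaves both invariants unchanged.
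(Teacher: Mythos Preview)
Your proposal is correct and follows essentially the same line as the paper's proof: reduce to canonical form via the local moves, then separate the classes using the prefix-count invariant $\max_i(R_i-L_i)$ (and its mirror for $q$). The paper phrases uniqueness as a contradiction (``if $c_{p,q}\sim c_{p',q'}$ with $p\ge p'$ then $R_p(s)-L_p(s)\le p'$ for every $s$ in the class of $c_{p',q'}$''), which is exactly your invariant $p(\cdot)$ evaluated implicitly; your version just names the invariant explicitly.

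One wording to tighten: it is not true that ``neither prefix quantity inside the $\max$ changes'' under the moves. Each swap move $0l\leftrightarrow l0$, $0r\leftrightarrow r0$ alters $R_j-L_j$ at exactly one index $j$, and $00\leftrightarrow lr$ alters it at $i=j$ as well; what is true is that the altered value never exceeds the value at an adjacent unaffected index, so the \emph{maximum} is unchanged. Make that the actual claim when you write the case check, and the proof goes through.
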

\begin{proof}
Indeed, applying the local moves Eq.~(\ref{moves}) one can make sure that $s$
does not contain substrings $lr$ or $l0\ldots 0r$. If this is the case and $s$ contains at least one $l$,
then all letters to the right of $l$ are $l$ or $0$. Similarly, if $s$ contains at least one $r$, then all
letters to the left of $r$ are $r$ or $0$. Since we can swap $0$ with any other letter by the local moves,
$s$ is equivalent to $c_{p,q}$ for some $p,q$.
It remains to show that different strings $c_{p,q}$ are not equivalent to each other.
Let $L_j(s)$ and $R_j(s)$ be the number of $l$'s and $r$'s among the first $j$ letters of $s$.
Suppose $c_{p,q}\sim c_{p',q'}$ and $p\ge p'$.
Then $R_p(s)-L_p(s)\le p'$ for any string $s$ equivalent to $c_{p',q'}$.
This is a contradiction unless $p=p'$.  Similarly one shows that $q=q'$.
\end{proof}
The lemma shows that the set of all strings $\Sigma^n$ can be partitioned
into equivalence classes $C_{p,q}$, such that $C_{p,q}$ includes
all strings equivalent to $c_{p,q}$.
In other words, $s\in C_{p,q}$ iff $s$ has $p$ unmacthed right brackets
and $q$ unmatched left brackets. Accordingly, the set of Motzkin paths
$\calM_n$ coincides with the equivalence class $C_{0,0}$.

Let us now define projectors `implementing' the local moves in Eq.~(\ref{moves}).
Define normalized states
\[
|\phi\ra\sim |00\ra-|lr\ra, \quad |\psi^l\ra\sim |0l\ra -|l0\ra, \quad |\psi^r\ra\sim |0r\ra-|r0\ra
\]
and a projector $\Pi=|\phi\ra\la \phi|+|\psi^l\ra \la \psi^l| + |\psi^r\ra \la \psi^r|$.
Application of $\Pi$ to  a pair of spins $j,j+1$ will be denoted $\Pi_{j,j+1}$.
If some state $\psi$ is annihilated by every projector $\Pi_{j,j+1}$,
it must have the same amplitude on any pair of equivalent strings, that is,
$\la s|\psi\ra=\la t|\psi\ra$ whenever $s\sim t$. It follows that a Hamiltonian
$H_\sim=\sum_{j=1}^{n-1} \Pi_{j,j+1}$ is FF and the ground subspace of $H_\sim$
is spanned by pairwise orthogonal states
$|C_{p,q}\ra$, where $|C_{p,q}\ra$ is the uniform superposition of all
strings in $C_{p,q}$.
The desired Motzkin state $|\calM_n\ra=|C_{0,0}\ra$ is thus a ground state of $H_\sim$.
(It is worth mentioning that not all states $|C_{p,q}\ra$ are highly entangled.
For example, $|C_{n,0}\ra=|r\ra^{\otimes n}$ is a product state.)
How can we exclude the unwanted ground states $|C_{p,q}\ra$ with $p\ne 0$ or $q\ne 0$?
We note that $C_{0,0}$ is the only
class in which strings never start from $r$ and never end with $l$.
 Hence a modified Hamiltonian
$H=|r\ra\la r|_1 + |l\ra\la l|_n + H_{\sim}$
that penalizes strings starting from $r$ or ending with $l$ has
a unique ground state $|C_{0,0}\ra$. This proves the first part of Theorem~\ref{thm:main}.
We can also consider weighted Hamiltonians $H_\sim(g) =\sum_{j=1}^{n-1} g_j \Pi_{j,j+1}$
and $H(g)=g_0 |r\ra\la r|_1 + g_n |l\ra\la l|_n + H_{\sim}(g)$, where $g_0,\ldots,g_n\ge 1$
are arbitrary coefficients. One can easily check that the ground state of $H(g)$ does not
depend on $g$ and $H(g)\ge H$. It implies that the spectral gap of $H(g)$ is lower bounded
by the one of $H$.

{\em Entanglement entropy.  \hspace{2mm}}
We can now construct the Schmidt decomposition of the Motzkin state.
Let $A=\{1,\ldots,n/2\}$ and $B=\{n/2+1,\ldots,n\}$ be the two halves of the chain
(we assume that $n$ is even). For any string $s\in \Sigma^n$ let $s_A$ and $s_B$
be the restrictions of $s$ onto $A$ and $B$. We claim that $s$ is a Motzkin path iff
$s_A\sim c_{0,m}$ and $s_B\sim c_{m,0}$ for some $0\le m \le n/2$.
Indeed, $s_A$ ($s_B$) cannot have
unmatched right (left) brackets, while each unmatched left bracket in $s_A$ must be matched
with some unmatched right bracket in $s_B$. It follows that the Schmidt decomposition
of $|\calM_n\ra$ can be written as
\be
\label{Schmidt}
|\calM_n\ra = \sum_{m=0}^{n/2} \sqrt{p_m} \, |\hat{C}_{0,m}\ra_A  \otimes |\hat{C}_{m,0}\ra_B,
\ee
where $|\hat{C}_{p,q}\ra$ is the normalized uniform superposition
of all strings in $C_{p,q}$ and $p_m$
are the Schmidt coefficients  defined by
\be
\label{Schmidt1}
p_m=\frac{|C_{0,m}(n/2)|^2}{|C_{0,0}(n)|}.
\ee
Here we added an explicit dependence of the classes $C_{p,q}$ on $n$.
For large $n$ and $m$ one can use an
approximation  $p_m\sim m^2 \exp{(-3m^2/n)}$, see the Supplementary Material
for the proof. Note that $p_m$ achieves its maximum at $m^*\approx \sqrt{n/3}$.
Approximating the sum $\sum_m p_m=1$ by an integral over $\alpha=m/\sqrt{n}$
one gets  $p_m\approx n^{-1/2} q_{\alpha(m)}$, where $q_\alpha$ is a normalized pdf defined as
\[
q_\alpha=Z^{-1} \alpha^2 e^{-3\alpha^2}, \quad
Z\equiv \int_0^\infty d\alpha \alpha^2 e^{-3\alpha^2}=\frac{\sqrt{\pi}}{4\cdot 3^{3/2}}.
\]
It gives
\[
S(A)=-\sum_m p_m \log_2{p_m} \approx \log{\sqrt{n}} -\int_0^\infty d\alpha \, q_\alpha \log_2{q_\alpha}.
\]
Evaluating the integral over $\alpha$ yields Eq.~(\ref{ent}).
The approximation $p_m\approx n^{-1/2} q_{\alpha(m)}$ also implies
that $\max_m p_m=O(n^{-1/2})$. This bound will be used below in our spectral gap analysis.
We conjecture  that one can achieve a power law
scaling of $S(A)$ in Eq.~(\ref{ent}) by introducing two types of brackets, say $l\equiv [$, $r\equiv ]$,
$l'\equiv \{$, and $r'\equiv \}$, such that bracket pairs $lr$ and $l'r'$ are created
from the `vacuum' $00$ in a maximally entangled state $(|lr\ra+|l'r'\ra)/\sqrt{2}$.
The local moves Eq.~(\ref{moves}) must be modified as
$0x \leftrightarrow x0$, where $x$ can be either of $l,r,l',r'$, and
$00\leftrightarrow (lr+l'r')/\sqrt{2}$.
We expect the modified model with two types of brackets to obey a scaling
$S(A)\sim \sqrt{n}$, while its gap will remain lower bounded by an inverse  polynomial.

{\em Spectral gap: upper bound. \hspace{2mm}} Let $\lambda_2>0$ be the smallest non-zero eigenvalue of the Hamiltonian defined in Eq.~(\ref{H}).
We shall use the fact that the ground state $|\calM_n\ra$ is highly entangled to
prove an upper bound  $\lambda_2\le O(n^{-1/2})$.
Fix any $k\in [0,n/2]$ and define a `twisted' version of the ground state:
\[
|\phi\ra =\sum_{m=0}^{n/2} e^{i\theta_m} \, \sqrt{p_m} \, |\hat{C}_{0,m}\ra_A  \otimes |\hat{C}_{m,0}\ra_B,
\]
where $\theta_m=0$ for $0\le m\le k$ and $\theta_m=\pi$ otherwise.
Note that $|\phi\ra$ and $|\calM_n\ra$ have the same reduced density matrices on $A$ and $B$.
Hence $\la \phi|H|\phi\ra =\la \phi|\Pi_{cut}|\phi\ra$, where $\Pi_{cut}\equiv \Pi_{n/2,n/2+1}$.
Since $\max_m p_m =O(n^{-1/2})$ and $\sum_m p_m=1$,  there must exist $k\in [0,n/2]$
such that $\sum_{0\le m\le k} p_m =1/2 \pm \epsilon$ for some $\epsilon=O(n^{-1/2})$.
This choice of $k$ ensures that
 $\la \phi|\calM_n\ra = \sum_m p_m e^{i\theta_m} \le 2\epsilon$,
 that is $\phi$ is almost orthogonal to the ground state.
Define a normalized state $|\tilde{\phi}\ra\sim |\phi\ra- \la \calM_n|\phi\ra \cdot |\calM_n\ra$.
Then $\la \tilde{\phi}|\calM_n\ra=0$ and
$\la \tilde{\phi}|H|\tilde{\phi}\ra =\la \tilde{\phi}|\Pi_{cut}|\tilde{\phi}\ra \le \la \phi|\Pi_{cut}|\phi\ra+O(\epsilon)$.
The difference $\la \phi|\Pi_{cut}|\phi\ra-\la \calM_n|\Pi_{cut}|\calM_n\ra$ gets contributions
only from the terms $m=k,k\pm 1$  in the Schmidt decomposition,  since $\Pi_{cut}$ can change the number of unmatched brackets in $A$
and $B$ at most by one. Since  $\la \calM_n|\Pi_{cut}|\calM_n\ra=0$, we get
\[
\la \phi|\Pi_{cut}|\phi\ra \le O(p_k+p_{k-1}+p_{k+1})=O(n^{-1/2}).
\]
We arrive at $\la \tilde{\phi}|H|\tilde{\phi}\ra=O(n^{-1/2})$.
Therefore $\lambda_2$ is at most $O(n^{-1/2})$.

{\em Spectral gap: lower bound.  \hspace{2mm}}
It remains to prove a lower bound $\lambda_2\ge n^{-O(1)}$.
Let $\calH_{p,q}$ be the subspace spanned by strings $s\in C_{p,q}$
and $\calH_M\equiv \calH_{0,0}$ be the Motzkin space spanned by Motzkin paths.
Note that $H$ preserves any subspace $\calH_{p,q}$ and the unique
ground state of $H$ belongs to $\calH_M$. Therefore it suffices
to derive a lower bound $n^{-O(1)}$ for two quantities: (i) the gap of $H$ inside the Motzkin space $\calH_M$,
and (ii) the ground state energy of $H$ inside any `unbalanced' subspace $\calH_{p,q}$
with $p\ne 0$ or $q\ne 0$. Below we shall focus on part~(i)
since it allows us to introduce all essential ideas. The proof of part~(ii) can be found in the Supplementary Material.

Recall that a string $s\in \{l,r\}^{2m}$
is called a {\em Dyck path} iff any initial  segment of $s$ contains at least as many $l$'s as $r$'s,
and the total number of $l$'s is equal to the total number of $r$'s.
For example, Dyck paths of length $6$ are $lllrrr$, $llrlrr$, $llrrlr$, $lrlrlr$, and $lrllrr$.
The proof of part~(i) consists of the following steps:\\
{\em Step~1.} Map the original Hamiltonian $H$ acting on Motzkin paths
to an effective Hamiltonian $H_{\eff}$ acting on Dyck paths using perturbation
theory. \\
{\em Step~2.} Map $H_{\eff}$ to a stochastic matrix $P$ describing a random walk on Dyck paths
in which transitions correspond to insertions or removals of consecutive $lr$ pairs.\\
{\em Step~3.} Bound the spectral gap of $P$
using the canonical paths method~\cite{Diaconis91,*Sinclair92}.

To construct a good family of canonical paths in Step~3 we will
organize Dyck paths into a rooted tree in which
level-$m$ nodes represent Dyck paths of length $2m$,
edges correspond to insertion of $lr$ pairs, and each node has at most four children.
Existence of such a tree will be proved using  the fractional matching method~\cite{Schrijver}.

Let $\calD_m$ be the set of Dyck paths of length $2m$,
$\calD$ be the union of all $\calD_m$ with $2m\le n$,
and $\calM_n$ be the set of Motzkin paths of length $n$.
Define a Dyck space $\calH_D$ whose basis vectors are Dyck paths $s\in \calD$.
Given a Motzkin path $u$ with $2m$ brackets, let $\mathrm{Dyck}(u)\in \calD_{m}$
be the Dyck path obtained from $u$ by removing zeros. We shall use an embedding
$V\, : \, \calH_D \to \calH_{M}$ defined as
\[
V\, |s\ra = \frac1{\sqrt{{n \choose 2m}}}\, \sum_{\substack{u\in \calM_n\\ \mathrm{Dyck}(u)=s\\}}\, |u\ra, \quad \quad s\in
\calD \cap \calD_m.
\]
One can easily check that $V^\dag V=I$, that is, $V$ is an isometry.
For any Hamiltonian $H$, let $\lambda_2(H)$ be the second smallest
eigenvalue of $H$.

\noindent
{\em Step~1.} The restriction of the Hamiltonian Eq.~(\ref{H}) onto the Motzkin space $\calH_M$
can be written as $H=H_{move}+H_{int}$,
where $H_{move}$ describes freely moving left and right brackets, while
$H_{int}$ is an `interaction term' responsible for pairs creation.
More formally, $H_{move}=\sum_{j=1}^{n-1} \Pi_{j,j+1}^{move}$ and
$H_{int}=\sum_{j=1}^{n-1} \Pi_{j,j+1}^{int}$, where
$\Pi^{move}$ projects onto the subspace spanned by $|0l\ra-|l0\ra$
and $|0r\ra - |r0\ra$, while $\Pi^{int}$
projects onto the state $|00\ra-|lr\ra$.
Note that the boundary terms in Eq.~(\ref{H}) vanish on $\calH_M$.
 We shall treat $H_{int}$ as a small
perturbation of $H_{move}$.  To this end define a modified FF Hamiltonian
$H_{\epsilon}=H_{move} + \epsilon H_{int}$,
where $0<\epsilon\le 1$ will be chosen later.
One can easily check that $|\calM_n\ra$ is the unique ground state of $H_\epsilon$
and $\lambda_2(H)\ge \lambda_2(H_\epsilon)$ (use the operator inequality $H\ge H_\epsilon$).
Note that $H_{move}\psi=0$ iff $\psi$ is symmetric under the moves $0l\leftrightarrow l0$
and $0r\leftrightarrow r0$. It follows that the ground subspace of $H_{move}$
is spanned by states $V\, |s\ra$ with $s\in \calD$.
To compute the spectrum of $H_{move}$, we can ignore the difference between
$l$'s and $r$'s since $H_{move}$ is only capable of swapping zeros with
non-zero letters. It follows that the spectrum of $H_{move}$
must coincide with the spectrum of the Heisenberg ferromagnetic spin-$1/2$ chain,
that is, $\Pi^{move}$ can be replaced by the projector onto the singlet state $|01\ra-|10\ra$,
where $1$ represents either $l$ or $r$. Using the exact formula for
the spectral gap of the Heisenberg chain found by Koma and Nachtergaele~\cite{Koma95} we arrive at
$\lambda_2(H_{move})=1-\cos{\left(\frac{\pi}{n}\right)}=\Omega(n^{-2})$.
Let
\[
H_{\eff}=V^\dag H_{int} V
\]
be the first-order effective Hamiltonian acting on the Dyck space $\calH_D$.
Applying the Projection Lemma of~\cite{KKR04} to the orthogonal complement of $|\calM_n\ra$ in $\calH_M$ we infer that
%SBB4: a typo is fixed in Eq. (6). The second term in the rhs must contain \|H_{int}\|^2 in the numerator.
%Eq. (7). is updated accordingly
\be
\label{KKR}
\lambda_2(H_\epsilon)\ge \epsilon \lambda_2(H_{\eff})-\frac{O(\epsilon^2) \|H_{int}\|^2}{\lambda_2(H_{move})-2\epsilon\|H_{int}\| }.
\ee
Choosing $\epsilon\ll n^{-3}$ guarantees that $\epsilon\|H_{int}\|$ is small compared with $\lambda_2(H_{move})$.
For this choice of $\epsilon$ one gets
\be
\label{KKR1}
\lambda_2(H)\ge \lambda_2(H_\epsilon)\ge \epsilon \lambda_2(H_{\eff})- O(\epsilon^2 n^4).
\ee
Hence it suffices to prove that $\lambda_2(H_{\eff})\ge n^{-O(1)}$.

\noindent
{\em Step~2.} Recall that $H_{\eff}=V^\dag H_{int} V$ acts on the Dyck space $\calH_D$. Its unique
ground state $|\calD\ra\in \calH_D$ can be found by solving $|\calM_n\ra=V\,|\calD\ra$. It yields
\be
\label{Dstate}
|\calD\ra=\frac1{\sqrt{|\calM_n|}} \sum_{2m\le n} \sqrt{{n \choose 2m}} \sum_{s\in \calD_m} |s\ra.
\ee
Let $\pi(s)=\la s|\calD\ra^2$ be the induced probability distribution on $\calD$.
Given a pair of Dyck paths $s,t\in \calD$, define
\be
\label{walk}
P(s,t)=\delta_{s,t}-\frac1{n}\la s |H_{\eff} |t\ra \sqrt{\frac{\pi(t)}{\pi(s)}}.
\ee
We claim that $P$ describes a random walk on the set of Dyck paths $\calD$
such that $P(s,t)$ is a transition probability from $s$ to $t$,
and $\pi(s)$ is the unique steady state of $P$.
Indeed, since $\sqrt{\pi(t)}$ is a zero eigenvector of $H_{\eff}$, one has $\sum_t P(s,t)=1$
and $\sum_s \pi(s) P(s,t)=\pi(t)$.
Off-diagonal matrix elements $\la s|H_{\eff}|t\ra$ get contributions only from
terms $-|00\ra\la lr|$ and $-|lr\ra\la 00|$ in $H_{int}$. It implies that $\la s|H_{\eff}|t\ra\le 0$
for $s\ne t$ and hence $P(s,t)\ge 0$. Furthermore, $P(s,s)\ge 1/2$ since $\la s|H_{\eff}|s\ra \le n/2$.
In the Supplementary Material we shall prove the following.
\begin{lemma}
\label{lemma:transitions}
Let $s,t\in \calD$ be any Dyck paths such that
$t$ can be obtained from $s$  by adding or removing a single $lr$ pair. Then
$P(s,t)=\Omega(1/n^3)$. Otherwise $P(s,t)=0$.
\end{lemma}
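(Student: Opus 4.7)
The plan is to expand the definition of $P(s,t)$ and reduce the estimate to a combinatorial count over Motzkin paths. The first observation is that $\la s|H_{\eff}|t\ra = \la V s|H_{int}|V t\ra$ is nonzero only when $s$ and $t$ differ by a single $lr$ insertion/removal, because each projector $\Pi^{int}_{j,j+1}$ in $H_{int}$ can only swap $00 \leftrightarrow lr$ on two consecutive sites, which changes the Dyck-extracted string by exactly one bracket pair. This immediately disposes of the ``otherwise'' clause of the lemma.

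Assume next that $t\in\calD_{m+1}$ is obtained from $s\in\calD_m$ by inserting one $lr$ pair at some fixed gap $g\in\{0,1,\ldots,2m\}$ of $s$. Substituting the definition of $V$ and using $\la u|\Pi^{int}_{j,j+1}|v\ra=-1/2$ whenever $u$ and $v$ agree outside positions $j,j+1$ and differ there by $00\leftrightarrow lr$, I would obtain
\[
\la Vs|H_{int}|Vt\ra \;=\; -\frac{N(s,t)}{2\sqrt{\binom{n}{2m}\binom{n}{2m+2}}},
\]
where $N(s,t)$ counts triples $(u,v,j)$ with $u,v\in\calM_n$, $\mathrm{Dyck}(u)=s$, $\mathrm{Dyck}(v)=t$, and $v$ obtained from $u$ by replacing two consecutive zeros at positions $j,j+1$ by $lr$. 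Since $\pi(s)=\binom{n}{2m}/|\calM_n|$, combining with $\sqrt{\pi(t)/\pi(s)}=\sqrt{\binom{n}{2m+2}/\binom{n}{2m}}$ and the prefactor $1/n$ in Eq.~\eqref{walk} yields
\[
P(s,t)\;=\;\frac{N(s,t)}{2n\binom{n}{2m}},\qquad P(t,s)\;=\;\frac{N(s,t)}{2n\binom{n}{2m+2}},
\]
the second identity following from the reversibility relation $\pi(s)P(s,t)=\pi(t)P(t,s)$ or by direct computation.

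The remaining task is to lower bound $N(s,t)$. Restricting the count to those $u$ with $\mathrm{Dyck}(u)=s$ that place exactly two zeros in gap $g$ (and distribute the remaining $n-2m-2$ zeros arbitrarily among the other $2m$ gaps of $s$) gives, by a stars-and-bars count, $N(s,t)\ge \binom{n-2}{2m}$, since each such $u$ has a unique choice of consecutive zero pair in gap $g$, and replacing it by $lr$ produces a Motzkin path $v$ with $\mathrm{Dyck}(v)=t$ (automatically balanced because $t$ is a Dyck path). Using $n-2m\ge 2$ (forced by $2m+2\le n$) one gets
\[
\frac{N(s,t)}{\binom{n}{2m}}\;\ge\;\frac{(n-2m)(n-2m-1)}{n(n-1)}\;\ge\;\frac{2}{n^{2}},
\]
while $N(s,t)/\binom{n}{2m+2}\ge (2m+2)(2m+1)/(n(n-1))$ is also $\Omega(1/n^{2})$, with the boundary case $m=0$ checked by hand. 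Either way, $P(s,t),P(t,s)=\Omega(1/n^{3})$.

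The main obstacle I anticipate is the bookkeeping around potential overcounting: a given Dyck path $t$ might be obtainable from $s$ by inserting $lr$ at several distinct gaps, and a given pair $(u,v)$ might correspond to several admissible positions $j$. For a lower bound this only helps, so it suffices to fix one canonical gap $g$ (e.g.\ the leftmost one where $s$ and $t$ differ) and count within it; no tight constant is needed, only the $\Omega(1/n^{3})$ scaling required to feed into the subsequent canonical-path bound on $1-\lambda_{2}(P)$.
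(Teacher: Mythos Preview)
Your proof is correct and follows essentially the same approach as the paper: both unwind the definition of $P(s,t)$ to a sum over Motzkin-path pairs $(u,v)$ with $\la u|H_{int}|v\ra=-1/2$, then lower-bound the count by fixing one insertion gap and estimating the fraction of $u$'s with enough zeros there (the paper gets $P(s,t)\ge 1/(2n^3)$ for insertion and $1/(2n^2)$ for removal, matching your $\Omega(1/n^3)$). One small slip: the stars-and-bars count $\binom{n-2}{2m}$ is the number of $u$'s with \emph{at least} two zeros in gap $g$, not exactly two as you write, but since for the lower bound you only need one choice of consecutive zero pair per such $u$, the argument goes through unchanged.
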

Let $\lambda_2(P)$ be the second largest eigenvalue of $P$. From Eq.~(\ref{walk})
one gets $\lambda_2(H_{\eff})=n(1-\lambda_2(P))$.
Hence it suffices to prove that the random walk
$P$ has a polynomial spectral gap, that is, $1-\lambda_2(P)\ge n^{-O(1)}$.

\noindent
{\em Step~3.} Lemma~\ref{lemma:transitions} tells us that $P$ describes
a random walk on a graph $G=(\calD,E)$ where two Dyck
paths are connected by an edge, $(s,t)\in E$,
iff  $s$ and $t$ are related by insertion/removal of a single $lr$ pair.
To bound the spectral gap of $P$ we shall connect any pair
of Dyck paths $s,t\in \calD$ by a {\em canonical path} $\gamma(s,t)$,
that is, a sequence $s_0,s_1,\ldots,s_l\in \calD$ such that
$s_0=s$, $s_l=t$, and $(s_i,s_{i+1})\in E$ for all $i$.
The canonical paths theorem~\cite{Diaconis91,Sinclair92} shows that
$1-\lambda_2(P)\ge 1/(\rho l)$, where $l$ is the maximum length
of a canonical path and $\rho$ is the maximum edge load defined as
\be
\label{edge_load}
\rho=\max_{(a,b)\in E} \; \frac1{\pi(a) P(a,b)} \sum_{s,t\, : \, (a,b)\in \gamma(s,t)} \; \; \pi(s) \pi(t).
\ee
The key new result that allows us to choose a good family of canonical paths is the following.
\begin{lemma}
\label{lemma:map}
Let $\calD_k$ be the set of Dyck paths of length $2k$.
For any $k\ge 1$ there exists a map $f\, : \, \calD_k \to \calD_{k-1}$ such that
(i)  the image of any path $s\in \calD_k$ can be obtained from $s$ by
removing a single $lr$ pair,
(ii) any path $t\in \calD_{k-1}$ has at least one  pre-image in $\calD_k$, and
(iii) any path $t\in \calD_{k-1}$ has at most four  pre-images in $\calD_k$.
\end{lemma}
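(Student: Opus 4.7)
The approach I would take is to recast the lemma as the existence of an integer $b$-matching in a bipartite graph and then invoke fractional matching via total unimodularity, as the text suggests. Define a bipartite graph $G$ with parts $\calD_k$ and $\calD_{k-1}$, placing an edge between $s \in \calD_k$ and $t \in \calD_{k-1}$ exactly when $t$ is obtained from $s$ by deleting an $lr$ peak. A map $f$ satisfying (i)--(iii) is exactly a spanning subgraph $F \subseteq E(G)$ with $\deg_F(s)=1$ for every $s \in \calD_k$ and $1 \le \deg_F(t) \le 4$ for every $t \in \calD_{k-1}$. Because the bipartite incidence matrix is totally unimodular and the degree bounds are integer-valued, the polytope of fractional subgraphs satisfying these constraints is integral, so an integer $F$ exists iff the associated LP is feasible. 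The problem thus reduces to producing a feasible fractional solution.

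Feasibility decouples into two pieces. First, surjectivity ($\deg_F(t)\ge 1$) is easy: for any $t \in \calD_{k-1}$, the path $s = lr\cdot t$ lies in $\calD_k$ and is a neighbor of $t$ in $G$, so every $t$ has at least one preimage; a standard alternating-path argument on the matching polytope then enforces the covering constraint globally. Second, by the defect form of Hall's theorem for $b$-matchings with uniform capacity $4$, the upper bound $\deg_F(t)\le 4$ is equivalent to the combinatorial inequality
\[
|N_G(S)| \; \ge \; \frac{|S|}{4} \qquad \mbox{for every } S \subseteq \calD_k.
\]
This is what has to be verified. A natural candidate for the fractional certificate is $x(s,t)=1/p(s)$, where $p(s)$ is the number of distinct Dyck paths obtainable by removing a peak of $s$: row sums are $1$ by construction, and the column sums $\sum_{s \sim t} 1/p(s)$ are to be shown to lie in $[1,4]$.

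To prove the Hall inequality I would exploit the return decomposition of Dyck paths: any $s\in \calD_k$ factors uniquely as $s = l\,u\,r\,v$ with $u \in \calD_i$ and $v \in \calD_{k-1-i}$, and under the standard bijection with rooted plane trees peak-removal corresponds to pruning a leaf. The inequality then becomes a lower bound on the number of distinct $(k-1)$-edge plane trees obtainable by removing a leaf from a fixed family of $k$-edge plane trees. The peak-counting identities for plane trees (the average number of peaks of a uniformly random element of $\calD_k$ equals $(k+1)/2$) combine with the global ratio $|\calD_k|/|\calD_{k-1}|=2(2k-1)/(k+1)<4$ to give the estimate on average; for worst-case $t$ I would redistribute mass from overloaded to underloaded columns via augmenting paths in $G$, or equivalently prove $|N_G(S)|\ge |S|/4$ by induction on $k$ along the return decomposition.

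The main obstacle is that the constant $4$ is tight: since $|\calD_k|/|\calD_{k-1}| \nearrow 4$, there is no asymptotic slack and the fractional certificate must be engineered carefully for extremal targets such as $t=(lr)^{k-1}$ and $t=l^{k-1}r^{k-1}$, where the naive choice $x=1/p$ overloads one column while starving others. Once the Hall inequality is established, total unimodularity yields an integer $F$; reading it as a function $f:\calD_k\to \calD_{k-1}$ immediately gives (i) from the definition of edges in $G$, (ii) from the lower-bound constraints, and (iii) from the upper-bound constraints. The map $f$ is then used as the parent map defining the supertree $\mathcal{T}$ on which the canonical paths for the random walk $P$ are constructed.
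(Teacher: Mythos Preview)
Your high-level framework matches the paper's: set up the bipartite graph $G$ on $\calD_k \cup \calD_{k-1}$, observe that the existence of $f$ is equivalent to an integer point of the polytope
\[
\calP=\Bigl\{x\ge 0:\;\sum_{e\in\delta(s)}x_e=1\ \forall s\in\calD_k,\quad 1\le\sum_{e\in\delta(t)}x_e\le 4\ \forall t\in\calD_{k-1}\Bigr\},
\]
and use integrality of the bipartite matching polytope (you invoke total unimodularity; the paper gives a direct extremal-point argument) to reduce the problem to exhibiting \emph{any} fractional point of $\calP$.

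The gap is precisely that step. Your candidate $x(s,t)=1/p(s)$ has row sums $1$ but, as you already note, its column sums are not controlled at the extremal targets; ``redistribute mass via augmenting paths'' and ``prove the Hall inequality $|N_G(S)|\ge|S|/4$ by induction'' are not carried out, and since $|\calD_k|/|\calD_{k-1}|\to 4$ there is no slack to absorb a loose argument. Moreover, the two constraints $\deg_F(t)\le 4$ and $\deg_F(t)\ge 1$ do not decouple in the way you suggest: you need a single point of $\calP$ meeting both simultaneously, so treating surjectivity by a separate alternating-path fix is not enough.

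The paper resolves this by \emph{constructing} the fractional point explicitly, inductively along the very return decomposition you mention. Writing $b=l\,s\,r\,t$ with $s\in\calD_i$, $t\in\calD_{k-1-i}$, it defines a stochastic map $f$ by
\[
f(l\,s\,r\,t)=\begin{cases} l\,f(s)\,r\,t & \text{with probability }p_i,\\ l\,s\,r\,f(t) & \text{with probability }1-p_i,\end{cases}
\]
(with the obvious degenerations when $s$ or $t$ is empty), and shows that one can choose the $p_i$ so that every column sum equals exactly $X_k=C_k/C_{k-1}=\tfrac{4(k-1/2)}{k+1}\in[1,4]$. Writing out the column-sum constraints for the three types of targets $a=lrt'$, $a=ls'r$, $a=ls'rt'$ gives a linear system in the $p_i$ which, with the symmetry ansatz $p_i+p_{k-1-i}=1$, has the closed-form solution
\[
p_i=\frac{i(i+1)(3k-2i-1)}{k(k+1)(k-1)}\in[0,1].
\]
This is the missing ingredient in your argument: rather than bounding column sums of a natural but uncontrolled certificate, the paper engineers the certificate so that all column sums are \emph{equal} to the Catalan ratio, after which the integrality of $\calP$ (your total-unimodularity step or the paper's Lemma~\ref{lemma:matching}) finishes the proof.
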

The lemma allows one to organize the set of all Dyck paths $\calD$ into a
{\em supertree}  $\calT$ such that the root of $\calT$ represents the empty path and such that children
of any node $s$ are elements of  $f^{-1}(s)$. The properties of $f$ imply that Dyck paths of length $2m$
coincide with level-$m$ nodes of $\calT$,
any step away from the root on $\calT$
corresponds to insertion of a single $lr$ pair,  and any node of $\calT$ has at most four children.
Hence the lemma provides a recipe for growing long Dyck paths
from short ones without overusing any intermediate Dyck paths. It should be noted
that restricting the maximum number of children to four is optimal
since $|\calD_k|=C_k\approx 4^k/\sqrt{\pi} k^{3/2}$, where $C_k$ is the $k$-th Catalan number.
Our proof of Lemma~\ref{lemma:map} based on the fractional matching method can be found in
the Supplementary Material.
Five lowest levels of the supertree $\calT$ are shown on Fig.~\ref{fig:supertree}.

\begin{center}
\begin{figure*}[t]
\centerline{
\mbox{
 \includegraphics[height=4.5cm]{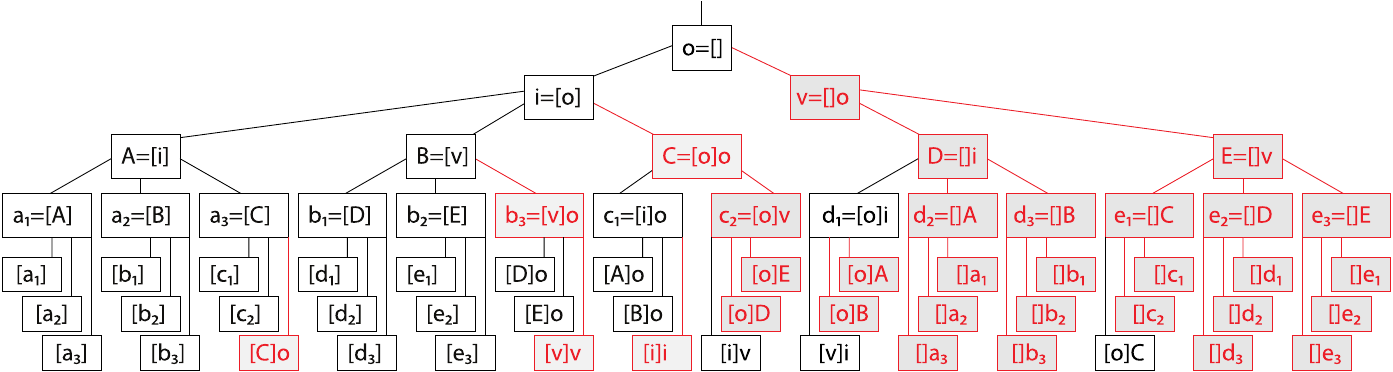}
 }}
\caption{(Color Online) Five lowest levels of the supertree $\calT$.
Nodes of $\calT$ are Dyck paths --- balanced strings of left and right brackets.
Depth-$k$ nodes are in one-to-one correspondence with Dyck paths
of length $2k$ (the set $\calD_k$). Any step towards the root requires removal of a consecutive $[\, ]$ pair.
Any node has at most four children.
The supertree can be described a family of maps $f\, :\, \calD_k\to \calD_{k-1}$
such that $f(s)$ is the parent of $s$. The maps $f$ are defined inductively such that
$f([X])=[f(X)]$, $f([\, ]Y)=[\, ]f(Y)$ for any node,  $f([X]Y)=[f(X)]Y$ for black nodes,
and $f([X]Y)=[X]f(Y)$ for red (shaded) nodes.
See the proof of Lemma~\ref{lemma:map} in the Supplementary Material for more details.
}
 \label{fig:supertree}
\end{figure*}
\end{center}

We can now define the canonical path $\gamma(s,t)$ from
$s\in \calD_m$ to $t\in \calD_k$.
Any intermediate state in $\gamma(s,t)$ will be represented as $uv$ where $u\in \calD_{l'}$ is
%SBB3: removed ``partial implementation"
an ancestor of $s$ in the supertree  and $v\in \calD_{l''}$ is an ancestor of $t$.
The canonical path starts from $u=s$, $v=\emptyset$ and
alternates between shrinking $u$ and growing $v$ by making steps towards the root  (shrink)
and away from the root (grow) on the supertree. The path terminates as soon as $u=\emptyset$ and $v=t$.
The shrinking steps are skipped whenever $u=\emptyset$, while the growing steps
are skipped whenever $v=t$.
Note that any intermediate state $uv$ obeys
\be
\label{constraint}
\min{(|s|,|t|)} \le |u|+|v| \le \max{(|s|,|t|)}.
\ee
Since any path $\gamma(s,t)$ has length at most $2n$, it suffices to bound
the maximum edge load $\rho$.
Fix the edge $(a,b)\in E$ with the maximum load.
Let $\rho(m,k,l',l'')$ be the contribution to $\rho$ that comes from canonical paths
$\gamma(s,t)$ such that $a=uv\in \calD_{l'+l''}$, where
\[
s\in \calD_{m}, \quad t\in \calD_{k}, \quad u\in \calD_{l'}, \quad v\in \calD_{l''},
\]
and such that $b$ is obtained from $a$ by growing $v$ (the case when $b$ is obtained from
$a$ by shrinking  $u$ is analogous).
The number of possible source strings $s\in \calD_m$ contributing to $\rho(m,k,l',l'')$
is at most $4^{m-l'}$ since $s$ must be a descendant of $u$ on the supertree.
The number of possible target strings $t\in \calD_k$ contributing to $\rho(m,k,l',l'')$
is at most $4^{k-l''}$ since $t$ must be a descendant of $v$ on the supertree.
Taking into account that $\pi(s)$ and $\pi(t)$ are the same for all $s\in \calD_m$
and $t\in \calD_k$ we arrive at
\[
\rho(m,k,l',l'') \le 4^{m+k-l'-l''} \frac{\pi(s)\pi(t)}{\pi(a) P(a,b)}=\frac{\pi_m \pi_k}{\pi_{l'+l''} P(a,b)}
\]
with $\pi_l=4^l {n \choose 2l}/|\calM_n|$.
 Here we used the identity
$\pi(w)=\la w|\calD\ra^2$ and Eq.~(\ref{Dstate}).
Lemma~\ref{lemma:transitions} implies that $1/P(a,b)\le n^{O(1)}$.
Furthermore, the fraction of Motzkin paths of length $n$
that have exactly $2l$ brackets is
$\sigma_l=C_l {n \choose 2l}/|\calM_n|$.
However $C_l\approx 4^l/\sqrt{\pi} l^{3/2}$
coincides with $4^l$ modulo factors polynomial in $1/n$. Hence
\[
\rho(m,k,l',l'') \le n^{O(1)} \cdot \frac{\sigma_m \sigma_k}{\sigma_{l'+l''}}.
\]
By definition, $\sigma_l\le 1$ for all $l$.
Also, one can easily check that $\sigma_l$ as a function of $l$ has a unique
maximum at $l\approx n/3$ and decays monotonically away from the maximum.
Consider two cases. {\em Case~(1):}  $l'+l''$ is on the left from the maximum of $\sigma_l$.
From Eq.~(\ref{constraint}) one gets
$\min{(m,k)}\le l'+l''$ and thus
$\sigma_m\sigma_k\le \sigma_{\min{(m,k)}} \le \sigma_{l'+l''}$.
{\em Case~(2):} $l'+l''$ is on the right from the maximum of $\sigma_l$.
From Eq.~(\ref{constraint}) one gets
$\max{(m,k)}\ge l'+l''$ and thus
$\sigma_m\sigma_k\le \sigma_{\max{(m,k)}} \le \sigma_{l'+l''}$.
In both cases we get a bound $\rho(m,k,l',l'')\le n^{O(1)}$.
Since the number of choices for $m,k,l',l''$ is at most $n^{O(1)}$,
we conclude  that $\rho\le n^{O(1)}$ and thus $1-\lambda_2(P)\ge n^{-O(1)}$.

{\em Open problems.  \hspace{2mm}}
Our work raises several questions. First, one can ask
what is the upper bound on the ground state entanglement
of FF spin-$1$ chains and whether the Motzkin state achieves this bound.
For example, if the Schmidt rank $\chi(L)$ for a block of $L$ spins can only grow
polynomially with $L$, as it is the case for the Motzkin state, ground states
of FF spin-$1$ chains could be efficiently represented by Matrix Product States~\cite{Verstraete05}
%SBB3: a new remark
(although finding such representation might be a computationally hard problem~\cite{Schuch08}).
One drawback of the model based on Motzkin paths is the need for boundary conditions
and the lack of the thermodynamic limit. It would be interesting to find examples of FF spin-$1$
chains with highly entangled ground states that are free from this drawback.
We also leave open the question of whether our model can indeed be regarded as a critical spin chain
in the sense that its continuous limit can be described by a conformal field theory.
Finally, an intriguing open question is whether long-range ground state entanglement (or steady-state
entanglement in the case of dissipative processes)
in 1D spin chains can be stable against more general local perturbations, such as external magnetic fields.

\section{Supplementary Material}

\subsection{Schmidt coefficients of the Motzkin state}

In this section we compute the Schmidt coefficients $p_m$ defined in Eq.~(\ref{Schmidt1})
and show that for large $n$ and $m$ one can use an approximation
\be
\label{approx}
p_m\sim m^2 \exp{(-3m^2/n)}.
\ee
Let $\calD_{n,k}\subseteq \{l,r\}^{2n+k}$ be the set of balanced
strings of left and right brackets of length $2n+k$ with $k$ extra left
brackets. More formally, $s\in \calD_{n,k}$ iff
any initial  segment of $s$ contains at least as many $l$'s as $r$'s,
and the total number of $l$'s is equal to $n+k$.
\begin{lemma}[\bf Andr\'e's reflection method]
\label{lemma:reflection}
The total number of strings in $\calD_{n,k}$ is
\[
D_{n,k}=\frac{k+1}{n+k+1} {2n+k \choose n}.
\]
\end{lemma}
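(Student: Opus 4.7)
The plan is to prove the formula by the classical reflection argument that gives the lemma its name. First I observe that the number of strings in $\{l,r\}^{2n+k}$ with exactly $n+k$ left brackets and $n$ right brackets, without any prefix condition, is $\binom{2n+k}{n}$. The elements of $\calD_{n,k}$ are precisely those in which every initial segment contains at least as many $l$'s as $r$'s, so the strategy is to count the ``bad'' strings that violate this condition somewhere and subtract. It is convenient to view each string as a lattice path in $\mathbb{Z}$ with $l=+1$ and $r=-1$: good strings are paths from height $0$ to height $k$ that stay $\ge 0$, and bad strings are those that at some point reach height $-1$.

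The core step is to construct a bijection between bad strings and all strings of length $2n+k$ having $n+k+1$ left brackets and $n-1$ right brackets, of which there are $\binom{2n+k}{n-1}$. Given a bad string $s$, I would locate the first position $j$ at which the prefix $s_1\cdots s_j$ has one more right bracket than left bracket; such a $j$ exists by badness, and at that point $s_j=r$ while $s_1\cdots s_{j-1}$ has equal numbers of $l$ and $r$. The map then reflects that prefix, swapping every $l\leftrightarrow r$ in $s_1\cdots s_j$ and leaving $s_{j+1}\cdots s_{2n+k}$ unchanged. If the prefix contained $a$ left brackets and $a+1$ right brackets, after reflection it contains $a+1$ left and $a$ right, and so the image string has $n+k+1$ left and $n-1$ right brackets overall, as required.

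For the inverse map I would start with any string $t$ of length $2n+k$ with $n+k+1$ left and $n-1$ right brackets, i.e. a lattice path from height $0$ to height $k+2$. Such a path must cross the line of height $-1$ never; however, if I instead consider the path beginning at height $-2$ and ending at height $k$ (equivalently, imagine prepending two phantom $r$'s), it begins below $-1$ and ends above, so it crosses $-1$ at a well-defined first position $j$. Reflecting the corresponding prefix of $t$ about that crossing recovers a bad string in $\{l,r\}^{2n+k}$ with $n+k$ left and $n$ right brackets, and the two constructions are inverse to each other by the usual reflection-principle argument. The only slightly delicate point, and the main thing to check carefully, is that the ``first crossing'' position is unambiguous and that the two maps are mutual inverses on the nose; this is the single point where one has to track the parity of the prefix lengths and the fact that the first bad position always has $s_j=r$.

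Combining the counts gives $|\calD_{n,k}|=\binom{2n+k}{n}-\binom{2n+k}{n-1}$, and the elementary identity $\binom{2n+k}{n-1}=\tfrac{n}{n+k+1}\binom{2n+k}{n}$ yields
\[
|\calD_{n,k}|=\binom{2n+k}{n}\left(1-\frac{n}{n+k+1}\right)=\frac{k+1}{n+k+1}\binom{2n+k}{n},
\]
which is the claimed formula.
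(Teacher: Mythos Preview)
Your argument is correct and follows essentially the same reflection method as the paper: both identify the first unmatched right bracket in a bad string and set up a bijection with strings having $\binom{2n+k}{n-1}$ elements, then subtract. The only cosmetic difference is that the paper reflects the \emph{suffix} after that bracket (landing on strings with $n-1$ left and $n+k+1$ right brackets, where the inverse is simply ``find the first unmatched right bracket again''), whereas you reflect the \emph{prefix} through it; your inverse is then most cleanly described as ``find the first time the path reaches $+1$ and swap that prefix,'' which is what your shifted-baseline description amounts to, though the sentence ``such a path must cross the line of height $-1$ never'' is not literally true and should be replaced by ``need not cross.''
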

\begin{proof}
For any bracket string $s$ let $L(s)$ and $R(s)$ be the number of left and right brackets in $s$.
Any $s\in  \{l,r\}^{2n+k}$ such that $s\notin \calD_{n,k}$ can be uniquely
represented as $s=u r v$, where $r$ corresponds to the first unmatched right bracket
in $s$, while $u$ is a balanced string (Dyck path). Let $v'$ be a string obtained from $v$ by a reflection $r\leftrightarrow l$
and $s'=urv'$. Then
\[
L(s')=L(u)+L(v')=R(u)+R(v)=R(s)-1=n-1
\]
and
\bea
R(s')&=& R(u)+1+R(v')=L(u)+1+L(v)=L(s)+1 \nn \\
&=& n+k+1. \nn
\eea
Furthermore, any string $s'$ with $n-1$ left brackets and $n+k+1$ right bracket
can be uniquely represented as $s'=urv'$. Hence the number of strings in $\calD_{n,k}$ is
\[
D_{n,k}={2n+k \choose n} -{2n+k \choose n-1}=\frac{k+1}{n+k+1} {2n+k \choose n}.
\]
\end{proof}
One can easily check that $|C_{p,q}(n)|=|C_{0,p+q}(n)|$ since the identity of unmatched brackets
does not matter for the counting.
Let
\[
M_{n,m}\equiv |C_{0,m}(n)|.
\]
Lemma~\ref{lemma:reflection} implies that
\[
M_{n,m}=
\sum_{\substack{i\ge 0 \\ 2i+m\le n\\}} \;  \frac{m+1}{i+m+1} {n \choose 2i+m} {2i+m \choose i}.
\]
It can be rewritten as
\be
\label{Mnm}
M_{n,m}=\sum_{\substack{i\ge 0 \\ 2i+m\le n\\}} M_{n,m,i},
\ee
where
\be
M_{n,m,i}=\frac{(m+1)\cdot n!}{(i+m+1)!i!(n-2i-m)!}.
\ee
Let $\alpha=m/\sqrt{n}$ and $\beta=(i-n/3)/\sqrt{n}$.
Using Stirling's formula one can get
\be
M_{n,m,i}\approx
\frac{3\sqrt{3}}{2\pi n^{3/2}}3^{n+1}\alpha\exp\left(-3\alpha^{2}-9\alpha\beta-9\beta^{2}\right).
\ee
We approximate the sum in Eq.~(\ref{Mnm}) by integrating over $i$. Since  $i=\frac{n}{3}+\beta\sqrt{n}$,
we get $di=\sqrt{n}d\beta$, Since the maximum is near $i=\frac{n}{3}$,
we can turn this sum into an integral from $-\infty$ to $\infty$.
The integral we need to evaluate is thus
\begin{eqnarray*}
M_{n,m} & \approx & \frac{3\sqrt{3}}{2\pi n^{3/2}}3^{n+1}\alpha\int_{-\infty}^{\infty}e^{-3\alpha^{2}-9\alpha\beta-9\beta^{2}}\sqrt{n}d\beta\\
 & = & \frac{3\sqrt{3}}{2\pi n}3^{n+1}\alpha\int_{-\infty}^{\infty} e^{-9\left(\beta-\alpha/2\right)^{2}-\frac{3}{4}\alpha^{2}}\sqrt{n}d\beta\\
 & = & \frac{\sqrt{3}}{2\sqrt{\pi}n}3^{n+1}\alpha e^{-\frac{3}{4}\alpha^{2}}
 \sim m \exp{(-3m^2/4n)}.
\end{eqnarray*}
Recalling that
\be
p_m = \frac{|C_{0,m}(n/2)|^2}{|C_{0,0}(n)|} \sim M_{n/2,m}^2
\ee
 we arrive at Eq.~(\ref{approx}).

\subsection{Proof of Lemma~\ref{lemma:transitions}}
Suppose $s\in \calD_k$ and $t\in \calD_{k\pm1}$.
Using the definition of $P(s,t)$ one can easily get
\[
P(s,t)=-\frac1{n}{n \choose 2k}^{-1} \sum_{u\in \calM_n[s]}\; \;  \sum_{v\in \calM_n[t]} \la u|H_{int} |v\ra.
\]
Here $\calM_n[s]=\{ u\in \calM_n\, : \,  \mathrm{Dyck}(u)=s\}$.
Note that $\la u|H_{int} |v\ra=-1/2$ if $u$ and $v$ differ exactly at two consecutive positions
where $u$ and $v$ contain $00$ and $lr$ respectively or vice versa.
In all other cases one has $\la u|H_{int} |v\ra=0$.

Suppose $t\in \calD_{k+1}$ and $P(s,t)>0$.
Let us fix some $j\in [0,2k]$ such that
$t$ can be obtained from $s$ by inserting a pair $lr$ between $s_j$ and $s_{j+1}$.
For any string $u\in \calM_n[s]$ in which $s_j$ and $s_{j+1}$ are separated by at least two
zeros one can find at least one $v\in \calM_n[t]$ such that
$\la u|H_{int} |v\ra=-1/2$. The fraction of strings $u\in \calM_n[s]$
in which $s_j$ and $s_{j+1}$ are separated by two or more
zeros is at least $1/n^2$ which implies
\[
P(s,t)\ge \frac1{2n^3}.
\]
Suppose now that $t\in \calD_{k-1}$.
Let us fix some $j\in [1,2k-1]$ such that $t$ can be obtained from $s$
by removing the pair $s_j s_{j+1}=lr$. For any string $u\in \calM_n[s]$ in which $s_j$ and $s_{j+1}$
are not separated by zeros one can find at least one $v\in \calM_n[t]$ such that
$\la u|H_{int} |v\ra=-1/2$. The fraction of strings $u\in \calM_n[s]$
in which $s_j$ and $s_{j+1}$ are not separated by
zeros is at least $1/n$ which implies
\[
P(s,t)\ge \frac1{2n^2}.
\]

\subsection{Proof of Lemma~\ref{lemma:map}}

Let us first prove a simple result concerning fractional matchings.
Consider a bipartite graph $G=(A\cup B,E)$.
Let $x=\{x_e\}_{e\in E}$ be a vector of real variables
associated with edges of the graph. For any vertex $u$ let $\delta(u)$ be the set
of edges incident to $u$.
Define a {\em matching polytope}~\cite{Schrijver}
%SBB4: definition of the polytope is changed - a lower bound for A-vertices is added
%the old proof applies to the new definition without changes; please double check this.
\bea
\calP&=& \{ x\, : \, x_e\ge 0 \quad \mbox{for all $e\in E$}, \nn \\
&& 1\le \sum_{e\in \delta(a)} x_e \le 4,  \quad
\sum_{e\in \delta(b)} x_e=1 \nn \\
&& \mbox{for all $a\in A$ and $b\in B$}\}. \nn
\eea
\begin{lemma}
\label{lemma:matching}
Suppose $\calP$ is non-empty. Then there exists a map $f\, : \, B\to A$ such that
(i) $f(b)=a$ implies $(a,b)\in E$, (ii) any vertex $a\in A$ has at least one pre-image in $B$,
and (iii) any vertex $a\in A$ has at most four pre-images in $B$.
\end{lemma}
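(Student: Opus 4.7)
The plan is to reduce the existence of the desired combinatorial map $f$ to the existence of an integral point in $\calP$, and then to extract an integral point from an arbitrary fractional one via the total unimodularity of the bipartite incidence matrix.

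First I would observe that an integral point $x\in\calP\cap\{0,1\}^{E}$ is exactly equivalent to a map $f:B\to A$ with the three required properties. Indeed, $\sum_{e\in\delta(b)}x_{e}=1$ together with $x_{e}\in\{0,1\}$ forces each $b\in B$ to have a unique edge $e=(a,b)$ with $x_{e}=1$, and we set $f(b)=a$; then property~(i) is automatic, while $1\le\sum_{e\in\delta(a)}x_{e}\le 4$ translates into properties~(ii) and~(iii) since $\sum_{e\in\delta(a)}x_{e}$ is precisely $|f^{-1}(a)|$. So the entire content of the lemma is: if $\calP\neq\emptyset$, then $\calP$ contains an integral point.

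Next I would argue that $\calP$ is a non-empty bounded polyhedron, hence has at least one vertex. Boundedness follows immediately from $x_{e}\ge 0$ and $\sum_{e\in\delta(b)}x_{e}=1$, which together give $0\le x_{e}\le 1$ for every edge $e$ (each edge is incident to some $b\in B$). So it suffices to exhibit an integral vertex, and for this I would invoke the standard fact that every vertex of $\calP$ is integral. The defining constraints of $\calP$ can be written as $Mx\le c$, $M'x=c'$, $x\ge 0$ with $c,c'$ integer vectors and $M,M'$ consisting of rows of the edge-vertex incidence matrix of the bipartite graph $G$ (possibly negated, to encode the lower bound $1\le\sum_{e\in\delta(a)}x_{e}$). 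The vertex-edge incidence matrix of any bipartite graph is totally unimodular, and this property is preserved under row negation and duplication; hence the full constraint matrix is totally unimodular. A classical theorem (see e.g.\ Schrijver's \emph{Theory of Linear and Integer Programming}) then guarantees that every vertex of the polyhedron defined by such a system with integer right-hand sides is integral.

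The main (and only nontrivial) obstacle is making sure the constraint system really is totally unimodular once we rewrite the two-sided inequality $1\le\sum_{e\in\delta(a)}x_{e}\le 4$ as two separate inequalities with the lower bound negated; but this is a routine check, since negating rows of a totally unimodular matrix keeps it totally unimodular. Once that is in place, picking any vertex $x^{*}$ of $\calP$ produces an integer $\{0,1\}$-vector, and the map $f$ defined above has exactly the properties claimed in the lemma. This completes the proof.
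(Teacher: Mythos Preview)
Your proof is correct and reaches the same conclusion as the paper---that every vertex of $\calP$ is integral---but by a different route. The paper argues directly and combinatorially: it fixes an extreme point $x^{*}$, shows that the support $E^{*}=\{e:x^{*}_{e}>0\}$ must be a forest (an even cycle in $E^{*}$ would allow perturbing $x^{*}$ along the cycle while staying in $\calP$, contradicting extremality), and then shows that within each tree all edge values must equal $1$ (otherwise a path between two degree-one $A$-vertices would again allow a perturbation). From $x^{*}\in\{0,1\}^{E}$ the map $f$ is read off exactly as you do.

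Your approach instead invokes total unimodularity of the bipartite incidence matrix to conclude vertex-integrality in one stroke. This is cleaner and more general---it would immediately handle any integer bounds in place of $1$ and $4$---at the cost of importing a nontrivial theorem from combinatorial optimization. The paper's argument is fully self-contained and, in effect, reproves the relevant special case of the TU theorem by hand via the standard cycle-cancellation argument. Both approaches are valid; yours is the textbook polyhedral route, the paper's is the elementary one.
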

\begin{proof}
Since $\calP$ is non-empty, it must have at least one extremal point $x^*\in \calP$.
Let $E^*\subseteq E$ be the set of edges such that $x^*_e>0$. We claim that $E^*$ is a forest
(disjoint union of trees). Indeed,
suppose $E^*$ contains a cycle $Z$ (a closed path). Then $x^*_{a,b}<1$ for
all $(a,b)\in Z$ since otherwise the cycle would terminate at $b$.
Hence $0<x^*_e<1$ for all $e\in Z$.
Since the graph is bipartite, one can label edges of $Z$ as even and odd in alternating order.
There exists $\epsilon\ne 0$ such that $x^*$ can
be shifted by $\pm \epsilon$ on even and odd edges of $Z$  respectively without leaving
$\calP$. Hence $x^*$ is a convex mixture of two distinct vectors from $\calP$.
This is a  contradiction since $x^*$ is an extreme point. Hence $E^*$ contains no cycles, that is,
$E^*$ is a forest. We claim that $x_e^*=1$ for all $e\in E^*$. Indeed, let $T\subseteq E^*$
be the subset of edges with $0<x_e^*<1$. Obviously, $T$ itself is a forest. Degree-$1$ nodes
of $T$ must be in $A$ and there must exist a path $\gamma\subseteq T$ starting
and ending at degree-$1$ nodes $u,u'\in A$. Since $0<x^*_e<1$ for all $e\in \gamma$,
there exists $\epsilon\ne 0$ such that $x^*$ can be shifted by $\pm \epsilon$ on
even and odd edges of $\gamma$ respectively without leaving $\calP$.
This is a  contradiction since $x^*$ is an extreme point. Hence $x_e^*=1$ for all $e\in E^*$.
We conclude that $x^*_e\in \{0,1\}$ for all edges of $G$.
The desired map can now be defined as $f(b)=a$ iff $x^*_{a,b}=1$.
\end{proof}

We are interested in the case where
\[
A=\calD_{n-1} \quad \mbox{and} \quad  B=\calD_n
\]
are Dyck paths of semilength $n-1$ and $n$ respectively. Paths $a\in \calD_{n-1}$ and $b\in \calD_n$ are
connected by an edge iff $a$ can be obtained from $b$ by removing a single $lr$ pair.
Our goal is to construct a map $f\,: \, \calD_n\to \calD_{n-1}$
with the properties (i),(ii),(iii) stated in Lemma~\ref{lemma:matching}.
According to the lemma, it suffices to choose
$f$ as a stochastic map. Namely, for any $b\in \calD_n$ we shall define a random
variable $f(b)\in \calD_{n-1}$ with some normalized probability distribution.
It suffices  to satisfy two conditions:
\be
\label{goal1}
\mathrm{Pr}[f(b)=a]>0 \quad \parbox[t]{5cm}{only if $a$ can be obtained from $b$ by removing a single $lr$ pair,}
\ee
and
%SBB4: here and below all inequalities are changed to equalities; lower bound on X_n is added
\be
\label{goal2}
\sum_{b\in \calD_n} \mathrm{Pr}[f(b)=a]=X_n \quad \quad \mbox{for all $a\in \calD_{n-1}$}.
\ee
Here $1\le X_n\le 4$ is some function of $n$ that we shall choose later.
We shall define $f$ using induction in $n$.
\begin{lemma}
\label{lemma:small}
For every $n\ge 1$ there exists $1\le X_n\le 4$ and a stochastic map $f\, : \, \calD_n \to \calD_{n-1}$
satisfying Eqs.~(\ref{goal1},\ref{goal2}).
\end{lemma}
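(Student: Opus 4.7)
I would prove this by induction on $n$. The two marginal conditions in Eqs.~(\ref{goal1},\ref{goal2}) force $X_n = |\calD_n|/|\calD_{n-1}| = C_n/C_{n-1} = 2(2n-1)/(n+1)$, and an elementary algebraic check gives $X_n \in [1,4)$ for every $n\ge 1$. The base case $n=1$ is immediate: $f_1([\,])=\emptyset$ with $X_1=1$.

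For the inductive step I use the first-return factorization: every $b\in\calD_n$ decomposes uniquely as $b=[X]Y$ with $X\in\calD_k$ and $Y\in\calD_{n-1-k}$ for some $k\in\{0,1,\dots,n-1\}$. The map $f_n$ will randomize between two natural $lr$-removal moves, each invoking an inductively constructed map:
\[
f_n([X]Y)\;=\;\begin{cases}[f_k(X)]\,Y & \text{with probability } \alpha_k,\\[2pt] [X]\,f_{n-1-k}(Y) & \text{with probability } 1-\alpha_k,\end{cases}
\]
with the natural boundary conventions $\alpha_0:=0$ (the outer $[\,]$ of $[\,]Y$ is never stripped) and $\alpha_{n-1}:=1$ (when $Y=\emptyset$ one must shrink $X$). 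Computing the inflow to $a=[X']Y'\in\calD_{n-1}$ with $|X'|=k'$, only the $b$'s of the form $[X]Y'$ with $f_{k'+1}(X)=X'$ (contributing $\alpha_{k'+1}X_{k'+1}$ by the inductive hypothesis) and of the form $[X']Y$ with $f_{n-1-k'}(Y)=Y'$ (contributing $(1-\alpha_{k'})X_{n-1-k'}$) can reach $a$; setting the total to $X_n$ for every $k'\in\{0,\dots,n-2\}$ yields the three-term recurrence
\[
\alpha_j X_j + (1-\alpha_{j-1})X_{n-j} \;=\; X_n, \qquad j=1,\dots,n-1, \qquad \alpha_0=0.
\]
Forward iteration determines $\alpha_1,\dots,\alpha_{n-1}$ uniquely, and the task reduces to showing $\alpha_j\in[0,1]$ for every $j$ and $\alpha_{n-1}=1$.

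Non-negativity is immediate by induction: since $X_n-X_k=6(n-k)/[(n+1)(k+1)]\ge 0$ (so $X_k$ is strictly increasing in $k$), the recurrence gives $\alpha_j X_j\ge X_n-X_{n-j}\ge 0$ whenever $\alpha_{j-1}\ge 0$. For the upper bound I invoke a \emph{time-reversal} symmetry: setting $\tilde\alpha_j := 1-\alpha_{n-1-j}$ and substituting $j\mapsto n-j$ in the recurrence shows that $\tilde\alpha$ satisfies the identical three-term relation. Hence, once the terminal identity $\alpha_{n-1}=1$ is established, uniqueness of the forward recursion forces $\tilde\alpha=\alpha$, i.e., $\alpha_j=1-\alpha_{n-1-j}$, from which $\alpha_j\le 1$ follows from non-negativity of $\alpha_{n-1-j}$.

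The nontrivial step, and where the bulk of the work lies, is verifying the terminal identity $\alpha_{n-1}=1$. Expanding the recursion in closed form and using $X_k=C_k/C_{k-1}$ to telescope the products of consecutive ratios, this condition reduces to the Catalan-convolution identity
\[
C_{n-1}\;=\;\frac{6}{n+1}\sum_{l=0}^{n-2}\frac{n-1-l}{l+2}\,C_l\,C_{n-l-2},
\]
which I have verified directly for small $n$; I would prove it either by generating-function manipulation of $C(x)=\sum_l C_l x^l$ using $C(x)=1+xC(x)^2$ (differentiate, then read off the coefficient of $x^{n-2}$), or by a direct induction on $n$ that exploits Segner's recurrence $C_n=\sum_{l=0}^{n-1}C_l C_{n-1-l}$. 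Once this identity is secured, the symmetry $\alpha_j=1-\alpha_{n-1-j}$ is complete, every $\alpha_j\in[0,1]$, the stochastic map $f_n$ is well-defined and satisfies Eqs.~(\ref{goal1},\ref{goal2}), and $X_n\in[1,4]$, completing the induction.
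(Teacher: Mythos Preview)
Your construction is exactly the paper's: the same first-return decomposition $b=[X]Y$, the same two-branch stochastic rule (your $\alpha_k$ is the paper's $p_k$), the same boundary conventions $\alpha_0=0$, $\alpha_{n-1}=1$, the same choice $X_n=C_n/C_{n-1}$, and the same recurrence. The paper also invokes the symmetry $p_i+p_{n-1-i}=1$ that you derive via your time-reversal argument.

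The only difference is in how the solution is exhibited. The paper simply writes down the closed form
\[
p_i \;=\; \frac{i(i+1)(3n-2i-1)}{n(n+1)(n-1)},
\]
from which the recurrence, the symmetry, the boundary values $p_0=0$, $p_{n-1}=1$, and the bounds $p_i\in[0,1]$ are all one-line substitutions. You instead propose to establish $\alpha_{n-1}=1$ through a Catalan-convolution identity and then pull the rest through by symmetry and forward-induction non-negativity. Your logic is sound and the identity is correct, but this is the long way around: once you notice that the recurrence is linear in $\alpha_j$ with coefficients that are simple rational functions of $j$ and $n$, guessing a low-degree polynomial solution in $j$ is natural, and the cubic above falls out immediately. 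That shortcut replaces your generating-function step entirely.
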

\begin{proof}
Any Dyck path $b\in \calD_n$ can be uniquely represented as $b=lsrt$
for some $s\in \calD_i$, $t\in \calD_{n-i-1}$, and $i\in [0,n-1]$.
We shall specify the map $f\, : \, \calD_n\to \calD_{n-1}$ by the following rules:
\begin{center}
\begin{tabular}{c|c|c}
$b\in \calD_n$ & $f(b)\in \calD_{n-1}$ & probability \\
\hline
$lsrt$, $s\in \calD_i$, $1\le i\le n-2$ & $lf(s) r t$ & $p_i$ \\
\hline
$lsrt$, $s\in \calD_i$, $1\le i\le n-2$ & $lsr f(t)$ & $1-p_i$ \\
\hline
$lrt$, $t\in \calD_{n-1}$ & $lrf(t)$ & $1$ \\
\hline
$lsr$, $s\in \calD_{n-1}$ & $lf(s)r$ & $1$ \\
\hline
\end{tabular}
\end{center}
Here we assumed that $f$ has been already defined for strings of semilength up to $n-1$
such that Eqs.~(\ref{goal1},\ref{goal2}) are satisfied.
By abuse of notation, we ignore the index $n$ in $f$, so we regard $f$ as a family of maps defined for all $n$.
It is clear that our inductive definition of $f$ on $\calD_n$ satisfies Eq.~(\ref{goal1}).
The probabilities $p_1,\ldots,p_{n-2}\in [0,1]$ are free parameters that must be chosen to satisfy
Eq.~(\ref{goal2}). Note that these probabilities also implicitly depend on $n$.
%SBB3: new sentence
The  choices of $f(b)$ in the first two lines of the above table are represented by
black and red nodes in the example shown on Fig.~\ref{fig:supertree}.
Consider three cases:

{\em Case~1:} $a=lrt'$ for some $t'\in \calD_{n-2}$. Then $f(b)=a$ iff
$b=llrrt'$ or $b=lrt$ for some $t\in \calD_{n-1}$ such that $f(t)=t'$.
These possibilities are mutually exclusive.
Hence
\bea
 \sum_{b\in \calD_n} \mathrm{Pr}[f(b)=lrt'] &=& p_1 + \sum_{t\in \calD_{n-1}} \; \mathrm{Pr}[f(t)=t'] \nn   \\
&=&   p_1+X_{n-1}. \nn
\eea
Substituting it into Eq.~(\ref{goal2}) gives a constraint
\be
\label{C1}
p_1= X_n- X_{n-1}.
\ee

{\em Case~2:} $a=ls'r$ for some $s'\in \calD_{n-2}$. Then $f(b)=a$ iff
$b=ls'rlr$ or $b=lsr$ for some $s\in \calD_{n-1}$ such that $f(s)=s'$.
These possibilities are mutually exclusive. Hence
\bea
\sum_{b\in \calD_n} \mathrm{Pr}[f(b)=ls'r] &=&  1-p_{n-2} +
\sum_{s\in \calD_{n-1}} \mathrm{Pr}[f(s)=s']    \nn \\
&= &   1- p_{n-2} + X_{n-1}.  \nn
\eea
Substituting it into Eq.~(\ref{goal2}) gives a constraint
\be
\label{C2}
p_{n-2}=1 -(X_n-X_{n-1}).
\ee
It says that $X_n$ must be a non-decreasing sequence.

{\em Case~3:} $a=ls'rt'$ for some $s'\in \calD_i$, $t'\in \calD_{n-i-2}$, and $i=1,\ldots,n-3$.
In other words, both $s'$ and $t'$ must be non-empty. Then $f(b)=a$ iff
$b=lsrt'$ for some $s\in \calD_{i+1}$ such that $f(s)=s'$,
or $b=ls'rt$ for some $t\in \calD_{n-i-1}$ such that $f(t)=t'$.
These possibilities are mutually exclusive. Hence
\bea
\sum_{b\in \calD_n} \mathrm{Pr}[f(b)=ls'rt'] &=&
p_{i+1}\sum_{s\in \calD_{i+1}} \mathrm{Pr}[f(s)=s'] \nn \\
&& + (1-p_i) \sum_{t\in \calD_{n-i-1}} \mathrm{Pr}[f(t)=t'] \nn \\
&= &   p_{i+1}\, X_{i+1}  +  (1-p_i) X_{n-i-1}.\nn
\eea
Substituting it into Eq.~(\ref{goal2}) gives a constraint
\be
\label{C3}
 p_{i+1} \, X_{i+1}  +  (1-p_i)  X_{n-i-1} =X_n
\ee
for each $i=1,\ldots,n-3$.
Let us choose
\be
\label{Xi}
X_i=\frac{C_i}{C_{i-1}} =\frac{4(i-1/2)}{i+1}.
\ee
Combining Eqs.~(\ref{C1},\ref{C2},\ref{C3})
we obtain a linear system with unknown variables $p_1,\ldots,p_{n-2}\in [0,1]$.
We shall look for a solution $\{p_i\}$ having an extra symmetry
\be
\label{symmetry}
p_i+p_{n-i-1}=1 \quad \mbox{for $i=1,\ldots,n-2$}.
\ee
One can check that the system defined by Eqs.~(\ref{C1},\ref{C2},\ref{C3},\ref{symmetry})
has a solution
\be
p_i=\frac{i(i+1)(3n-2i-1)}{n(n+1)(n-1)}, \quad i=1,\ldots,n-2.
\ee
Hence we have defined the desired stochastic map $f\, :\, \calD_n\to \calD_{n-1}$.
This proves the induction hypothesis.

%SBB3: we also have to consider the case n=1 because it is implicitly used in the inductive definition of f
It remains to note that for $n=1,2$ the map $f$ is uniquely specified by Eqs.~(\ref{goal1},\ref{goal2})
and our choice of $X_n$. Indeed, one has $\calD_2=\{llrr,lrlr\}$, $\calD_1=\{lr\}$, and $\calD_0=\emptyset$.
To satisfy Eq.~(\ref{goal1}),  we have to choose $f(llrr)=f(lrlr)=lr$ for $n=2$ and $f(lr)=\emptyset$ for $n=1$.
It also satisfies Eq.~(\ref{goal2}) since $X_2=2$ and $X_1=1$.
This proves the base of induction.
\end{proof}

\subsection{Ground state energy for unbalanced subspaces}

Recall that the unbalanced subspace $\calH_{p,q}$ is spanned by strings
$s\in C_{p,q}$ that have $p$ unmatched right and $q$ unmatched left brackets.
Our goal is to prove that the restriction of $H$ onto any subspace $\calH_{p,q}$
with $p>0$ or $q>0$ has ground state energy at least $n^{-O(1)}$.
By the symmetry, it suffices to consider the case $p>0$.
To simplify the analysis we shall omit the boundary term $|l\ra\la l|_n$.
Note that such omission can only decrease the ground state energy. Accordingly, our
simplified  Hamiltonian becomes
\be
\label{Hsimplified}
H=|r\ra \la r|_1 + \sum_{j=1}^{n-1} \Pi_{j,j+1}.
\ee
Recall that $\Pi$ is a projector  onto the subspace spanned by states
$|00\ra -|lr\ra$, $|0l\ra-|l0\ra$, and $|0r\ra-|r0\ra$.
Let $\lambda_1(H)$ be the ground state energy of $H$.

Any string $s\in C_{p,q}$ can be uniquely represented as
\[
s=u_0 r u_1 r u_2 \ldots r u_p l v_1 l v_2 \ldots l v_q
\]
where $u_i$ and $v_j$ are Motzkin paths (balanced strings of brackets).
The remaining $p$ right and $q$ left brackets are unmatched
and never participate in the move $00\leftrightarrow lr$.
It follows that the  unmatched brackets can be regarded
as ``solid walls'' that can be swapped with $0$'s but otherwise
do not participate in any interactions. In particular, the
spectrum of $H$ restricted to $\calH_{p,q}$ depends only on $p+q$ as long as $p>0$.
This allows us to focus on the case $q=0$, i.e.
assume that all unmatched brackets are right.

Given a string $s\in C_{p,0}$, let $\tilde{s}\in  \{0,l,r,x,y\}^n$ be the string
obtained from $s$ by the following operations: (i) replace the first unmatched right
bracket in $s$ by `$x$', and (ii) replace all other unmatched brackets in $s$ (if any) by `$y$'.
Define a new Hilbert space $\tilde{\calH}_p$ whose basis vectors are $|\tilde{s}\ra$, $s\in C_{p,0}$.
Consider a Hamiltonian
\be
\label{Htilde}
\tilde{H}=|x\ra\la x|_1 + \sum_{j=1}^{n-1} \Pi_{j,j+1} + \Theta^x_{j,j+1} + \Theta^y_{j,j+1},
\ee
where
$\Theta^x$ and $\Theta^y$ are  projectors onto the states
$|0x\ra-|x0\ra$ and $|0y\ra-|y0\ra$ respectively (with a proper normalization).
One can easily check that $\la s|H|t\ra =\la \tilde{s}|\tilde{H}|\tilde{t}\ra$ for any $s,t\in C_{p,0}$.
Hence the spectrum of $H$ on $\calH_{p,0}$ coincides with the spectrum of $\tilde{H}$
on $\tilde{\calH}_p$. Furthermore, if we omit all the terms $\Theta^y_{j,j+1}$ in $\tilde{H}$, the ground state energy
can only decrease. Hence it suffices to consider a simplified Hamiltonian
\be
\label{Hx}
H^x=|x\ra\la x|_1 + \sum_{j=1}^{n-1} \Pi_{j,j+1} + \Theta^x_{j,j+1}
\ee
which acts on $\tilde{\calH}_p$.
Note that positions of $y$-particles are integrals of motion for $H^x$.
Moreover, for fixed positions of $y$-particles,
any term in $H^x$ touching a $y$-particle vanishes. Hence
$H^x$ can be analyzed separately on each  interval between consecutive $y$-particles.
Since our goal is to get a lower bound on the ground state energy,
we can only analyze the interval between $1$ and the first $y$-particle.
Equivalently, we can redefine $n$ and focus on the case $p=1$, $q=0$, that is,
assume that there is only one unmatched right bracket.
The relevant Hilbert space $\tilde{\calH}_1$ is now spanned by states
\[
|s\ra \otimes |x\ra \otimes |t\ra, \quad \mbox{where} \quad s\in \calM_{j-1}, \quad t\in \calM_{n-j}.
\]
Recall that $\calM_k$ is the set of Motzkin paths (balanced strings of left and right brackets)
of length $k$.

We would like to treat the terms responsible for the motion and detection of
the $x$-particle as a small perturbation. To this end, choose any $0<\epsilon\le 1$ and define
the Hamiltonian
\[
H^x_\epsilon = \sum_{j=1}^{n-1} \Pi_{j,j+1} + \epsilon |x\ra\la x|_1 + \epsilon \sum_{j=1}^{n-1} \Theta^x_{j,j+1}.
\]
Clearly, $H^x_\epsilon\le H^x$, so it suffices to get a lower bound
on the ground state energy of $H^x_\epsilon$.

Let us first find the ground subspace and the spectral gap of the unperturbed Hamiltonian $H^x_0=\sum_{j=1}^{n-1} \Pi_{j,j+1}$.
Note that the position of the $x$-particle $j$ is an invariant of motion for $H^x_0$.
Moreover, any projector $\Pi_{i,i+1}$ touching the $x$-particle vanishes.
Hence we can analyze $H^x_0$ separately on the two disjoint intervals $A=[1,j-1]$ and
$B=[j+1,n]$. It follows that the ground subspace of $H^x_0$ is spanned by normalized states
\be
\label{fixed_j}
|\psi_j\ra = |\calM_{j-1}\ra \otimes |x\ra \otimes |\calM_{n-j}\ra, \quad j=1,\ldots,n.
\ee
The spectral gap of  $H^x_0$ can also be computed separately in $A$ and $B$.
Since we have already shown that the original Hamiltonian
Eq.~(\ref{H}) has a polynomial gap inside the Motzkin subspace, we conclude that
$\lambda_2(H^x_0)\ge n^{-O(1)}$.

Let us now turn on the perturbation.
The first-order effective Hamiltonian acting on the ground subspace
spanned by $\psi_1,\ldots,\psi_n$ describes a hopping of the $x$-particle
on a chain of length $n$ with a delta-like repulsive potential applied at site $j=1$.
Parameters of the hopping Hamiltonian can be found by calculating the matrix elements
\[
\la \psi_j|\Theta^x_{j,j+1} |\psi_j\ra =\frac{M_{n-j-1}}{2M_{n-j}}\equiv \alpha_j^2,
\]
\[
\la \psi_{j+1}|\Theta^x_{j,j+1} |\psi_{j+1}\ra =\frac{M_{j-1}}{2M_{j}}\equiv \beta_j^2,
\]
and
\[
\la \psi_j| \Theta^x_{j,j+1} |\psi_{j+1}\ra =-\frac12 \sqrt{\frac{M_{n-j-1}}{M_{n-j}} \cdot \frac{M_{j-1}}{M_{j}}}=-\alpha_j \beta_j,
\]
where $M_k=|\calM_k|$ is the $k$-th Motzkin number.
We arrive at the effective hopping Hamiltonian acting on $\CC^n$, namely,
\be
\label{Heff1D}
H_{\eff}=|1\ra\la 1| + \sum_{j=1}^{n-1} \Gamma_{j,j+1},
\ee
where
\bea
\Gamma_{j,j+1} &=& \alpha_j^2 \, |j\ra\la j| + \beta_j^2\,  |j+1\ra\la j+1| \nn \\
&& - \alpha_j\beta_j (|j\ra\la j+1| + |j+1\ra\la j|)
\eea
is a rank-$1$ projector.
Applying the Projection Lemma of~\cite{KKR04} we infer that
%SBB4: the same mistake as above (wrong power of \|V\| )
\[
\lambda_1(H^x_\epsilon)\ge \epsilon \lambda_1(H_{\eff}) - \frac{O(\epsilon^2) \| V\|^2}{\lambda_2(H^x_0) - 2\epsilon \|V\|},
\]
where $V=|x\ra\la x|_1+\sum_{j=1}^{n-1} \Theta^x_{j,j+1}$ is the perturbation operator. Since $\lambda_2(H^x_0) \ge n^{-O(1)}$,
we can choose $\epsilon$ polynomial in $1/n$ such that $2\epsilon \|V\|$ is small compared with $\lambda_2(H^x_0)$.
For this choice of $\epsilon$ one gets
\[
\lambda_1(H^x_\epsilon)\ge \epsilon \lambda_1(H_{\eff}) - O(\epsilon^2) n^{O(1)}.
\]
Hence it suffices to show that $\lambda_1(H_{\eff})\ge n^{-O(1)}$,
where $H_{\eff}$ is now the single $x$-particle hopping Hamiltonian Eq.~(\ref{Heff1D}).

Let us first focus on the hopping Hamiltonian without the repulsive potential:
\[
H_{move}\equiv \sum_{j=1}^{n-1} \Gamma_{j,j+1}.
\]
This Hamiltonian is FF and its unique ground state is
\be
\label{g}
|g\ra \sim \sum_{j=1}^n \sqrt{M_{j-1} M_{n-j}} \, |j\ra.
\ee
Our strategy will be to bound the spectral gap of $H_{move}$
and apply the Projection Lemma to $H_{\eff}$ by treating the repulsive potential $|1\ra\la 1|$
as a perturbation of $H_{move}$. First let us
map $H_{move}$ to a stochastic matrix describing a random
walk on the interval $[1,n]$ with the steady state $\pi(j)=\la j|g\ra^2$.
For any $a,b\in [1,n]$ define
\be
\label{walk2}
P(j,k)=\delta_{j,k} - \la j|H_{move}|k\ra \sqrt{\frac{\pi(k)}{\pi(j)}}.
\ee
Since $\sqrt{\pi(j)}$ is a zero eigenvector of $H_{move}$, we
infer that $\sum_k P(j,k)=1$ and $\sum_j \pi(j) P(j,k)=\pi(k)$.
A simple algebra shows that
\[
P(j,j+1)=\frac{M_{n-j-1}}{2M_{n-j}} \quad \mbox{and} \quad P(j+1,j)=\frac{M_{j-1}}{2M_{j}}
\]
are the only non-zero off-diagonal matrix elements of $P$.
We shall use the following property of the Motzkin numbers.
\begin{lemma}
For any $n\ge 1$ one has $1/3 \le M_n/M_{n+1}\le 1$.
Furthermore, for large $n$ one can use an approximation
\be
M_n \approx c\frac{3^n}{n^{3/2}}
\ee
where $c\approx 1.46$.
\end{lemma}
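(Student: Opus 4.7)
The plan is to prove the three claims (upper bound on the ratio, lower bound on the ratio, and the asymptotic) in that order, moving from elementary combinatorics to generating function asymptotics.

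First, for the upper bound $M_n/M_{n+1}\le 1$, I would exhibit an explicit injection $\iota\colon\calM_n\hookrightarrow\calM_{n+1}$, for instance $\iota(w)=0w$: prepending a $0$ to any Motzkin path of length $n$ yields a Motzkin path of length $n+1$, and distinct inputs give distinct outputs. This immediately shows $M_n\le M_{n+1}$.

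Second, for the lower bound $M_n/M_{n+1}\ge 1/3$, i.e.\ $M_{n+1}\le 3M_n$, I would invoke the standard three-term recurrence for Motzkin numbers
\[
(n+3)\,M_{n+1}=(2n+3)\,M_n+3n\,M_{n-1},\qquad n\ge 1,
\]
(easily checked against $M_0=1$, $M_1=1$, $M_2=2$, $M_3=4$, $M_4=9$) and argue by induction. Rewriting gives $M_{n+1}/M_n=\bigl[(2n+3)+3n(M_{n-1}/M_n)\bigr]/(n+3)$. Combined with the already-established upper bound $M_{n-1}/M_n\le 1\le 1+2/n$, this yields
\[
\frac{M_{n+1}}{M_n}\le\frac{(2n+3)+3n(1+2/n)}{n+3}=\frac{3(n+3)}{n+3}=3,
\]
closing the induction with base cases $M_1/M_0=1$ and $M_2/M_1=2$. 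The combination of the two inequalities is exactly $1/3\le M_n/M_{n+1}\le 1$.

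Third, for the asymptotic $M_n\approx c\,3^n/n^{3/2}$, I would use singularity analysis of the ordinary generating function $M(x)=\sum_{n\ge0}M_n x^n$. Solving the functional equation induced by the decomposition of a Motzkin path (either empty, or beginning with $0$, or beginning with a matched $lr$) gives $M(x)=1+xM(x)+x^2M(x)^2$, so
\[
M(x)=\frac{1-x-\sqrt{(1-3x)(1+x)}}{2x^2}.
\]
The dominant singularity is at $x=1/3$; expanding $\sqrt{(1-3x)(1+x)}=\sqrt{4/3}\sqrt{1-3x}+O\bigl((1-3x)^{3/2}\bigr)$ near $x=1/3$ yields $M(x)=3-3\sqrt{3}\,\sqrt{1-3x}+O\bigl((1-3x)^{3/2}\bigr)$. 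Applying the standard transfer theorem $[x^n](1-3x)^{1/2}\sim -3^n/\bigl(2\sqrt{\pi}\,n^{3/2}\bigr)$ (or Darboux's method), I obtain
\[
M_n\sim\frac{3\sqrt{3}}{2\sqrt{\pi}}\cdot\frac{3^n}{n^{3/2}},\qquad c=\frac{3}{2}\sqrt{\frac{3}{\pi}}\approx 1.466.
\]

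The first two steps are completely elementary; the only subtle point is the third, which requires justifying singularity analysis. The main obstacle is verifying that $M(x)-3$ has an analytic continuation past $x=1/3$ in a $\Delta$-domain (so the transfer theorem applies rather than a mere Tauberian argument), which follows from the explicit radical form above since the second singularity of $\sqrt{(1-3x)(1+x)}$ sits at $x=-1$, well outside any suitable $\Delta$-region around $1/3$.
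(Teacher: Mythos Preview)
The paper does not actually prove this lemma; it is stated as a standard fact about Motzkin numbers and used without justification. So there is nothing to compare against, and your task is simply to give a correct proof.

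Your injection argument for $M_n\le M_{n+1}$ is fine, and your singularity analysis for the asymptotic is correct and yields the right constant $c=3\sqrt{3}/(2\sqrt{\pi})\approx 1.466$.

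However, your argument for $M_{n+1}\le 3M_n$ contains an arithmetic error that breaks the induction. You write
\[
\frac{(2n+3)+3n(1+2/n)}{n+3}=\frac{3(n+3)}{n+3}=3,
\]
but $(2n+3)+3n(1+2/n)=(2n+3)+3n+6=5n+9$, not $3n+9$. Using only $M_{n-1}/M_n\le 1$ gives $r_n:=M_{n+1}/M_n\le(5n+3)/(n+3)$, which exceeds $3$ for $n\ge 4$, so the induction as written fails. The fix is to strengthen the hypothesis. One clean way: from the recurrence $(n+3)r_n=(2n+3)+3n/r_{n-1}$, the inductive assumption $r_{n-2}\le 3$ applied at the previous step gives $(n+2)r_{n-1}=(2n+1)+3(n-1)/r_{n-2}\ge(2n+1)+(n-1)=3n$, hence $r_{n-1}\ge 3n/(n+2)$. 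Feeding this \emph{lower} bound on $r_{n-1}$ back in yields
\[
(n+3)r_n\le(2n+3)+3n\cdot\frac{n+2}{3n}=(2n+3)+(n+2)=3n+5<3(n+3),
\]
so $r_n<3$. With base cases $r_0=1$, $r_1=2$ this two-step induction closes. The rest of your write-up stands.
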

The lemma implies that
\[
\frac16 \le P(j,j\pm 1)\le \frac12
\]
for all $j$. Hence the diagonal matrix elements $P(j,j)$ are non-negative, that is,
we indeed can regard $P(j,k)$ as a transition probability from $j$ to $k$.
Furthermore, using Eq.~(\ref{g}) and the above lemma we infer that the steady state $\pi$ is `almost uniform', that is,
\be
n^{-O(1)} \le \frac{\pi(k)}{\pi(j)} \le n^{O(1)} \quad \mbox{for all $1\le j,k\le n$}.
\ee
In particular, $\min_j{\pi(j)} \ge n^{-O(1)}$.
We can now easily bound the spectral gap of $P$.  For example, applying the canonical
paths theorem stated above we get $1-\lambda_2(P)\ge 1/(\rho l)$
where $\rho$ is defined in Eq.~(\ref{edge_load}) and the canonical path $\gamma(s,t)$ simply
moves the $x$-particle from $s$ to $t$. Since the denominator in Eq.~(\ref{edge_load})
is lower bounded by $n^{-O(1)}$, we conclude that $1-\lambda_2(P)\ge n^{-O(1)}$.
It shows that $\lambda_2(H_{move}) \ge n^{-O(1)}$.

To conclude the proof, it remains to apply the Projection Lemma to
$H_{\eff}$ defined in Eq.~(\ref{Heff1D}) by treating the
repulsive potential $|1\ra\la 1|$ as  a perturbation.
Now the effective first-order Hamiltonian will be simply a $c$-number $\la 1|g\ra^2=\pi(1)\ge n^{-O(1)}$
which proves the bound $\lambda_1(H_{\eff}) \ge n^{-O(1)}$.

%% This defines the bibliography file (main.bib) and the bibliography style.
%% If you want to create a bibliography file by hand, change the contents of
%% this file to a `thebibliography' environment.  For more information 
%% see section 4.3 of the LaTeX manual.

%\begin{singlespace}
%\begin{comment}
%\begin{singlespace}
\bibliography{main}
%\end{singlespace}
%\end{comment}
%
%\end{singlespace}

\bibliographystyle{plain}

\end{document}